%% file: main.tex
\documentclass{article}
\usepackage[margin=1in]{geometry}
\usepackage{amsmath, amssymb, amsfonts}
\usepackage{amsthm}
\usepackage{thmtools}
\usepackage{thm-restate}
\usepackage{authblk}

\usepackage{bbm}
\usepackage{diagbox}

\usepackage{graphicx}
\usepackage{mathtools} % For \coloneqq, etc.
\usepackage[dvipsnames]{xcolor}
\usepackage[hypertexnames=false]{hyperref}
\usepackage{cleveref}
\usepackage{svg}
\usepackage{enumitem}
\usepackage{quiver}
\usetikzlibrary{calc,shapes.geometric}
\input{macros.sty}
\usepackage{braket}

\hypersetup{
    colorlinks=true,
    linkcolor=NavyBlue,
    citecolor=Maroon,
    urlcolor=blue
}

\theoremstyle{plain}
\newtheorem{theorem}{Theorem}[section]
\newtheorem{proposition}[theorem]{Proposition}
\newtheorem{lemma}[theorem]{Lemma}
\newtheorem{corollary}[theorem]{Corollary}

\makeatletter
\newcommand{\BeginLinkedRestatable}[1]{%
    \def\linkedrestate@target{#1}%
    \let\linkedrestate@orig@begintheorem\@begintheorem
    \let\linkedrestate@orig@opargbegintheorem\@opargbegintheorem
    \def\@begintheorem##1##2{%
        \linkedrestate@orig@begintheorem{##1}{\hyperlink{\linkedrestate@target}{##2}}%
    }%
    \def\@opargbegintheorem##1##2##3{%
        \linkedrestate@orig@opargbegintheorem{##1}{\hyperlink{\linkedrestate@target}{##2}}{##3}%
    }%
}
\newcommand{\LinkedRestatableTarget}[1]{\hypertarget{#1}{}}
\newcommand{\EndLinkedRestatable}{%
    \let\@begintheorem\linkedrestate@orig@begintheorem
    \let\@opargbegintheorem\linkedrestate@orig@opargbegintheorem
}

\makeatother

\theoremstyle{definition}
\newtheorem{definition}[theorem]{Definition}

\theoremstyle{remark}
\newtheorem{remark}[theorem]{Remark} % Using "Note" as the environment name to match the source
\newtheorem{example}[theorem]{Example}

\numberwithin{equation}{section} %equation numbering with section numbers

\title{Bulk-boundary correspondence of (1+1)D symmetric gapped phases}

\date{\today}
\author[a]{Yizhou Ma}
\author[a]{Gen Yue}
\author[a,b]{Tian Lan\thanks{\href{mailto:tlan@cuhk.edu.hk}{tlan@cuhk.edu.hk}}}
\affil[a]{Department of Physics, The Chinese University of Hong Kong, Shatin, New Territories, Hong Kong, China}
\affil[b]{The State Key Laboratory of Quantum Information Technologies and Materials, The Chinese University of Hong Kong, Shatin, New Territories, Hong Kong, China}

\begin{document}

\maketitle

\begin{abstract}
We develop an operator-algebraic framework for boundary conditions and bulk-boundary correspondence in one-dimensional gapped phases with categorical symmetry. 
Working directly in the thermodynamic limit, we construct half-infinite fusion spin chains and commuting-projector boundary Hamiltonians from a unitary fusion category \(\C\), an indecomposable semisimple right \(\C\)-module category \(\M\), a Q-system \(Q\in\C\) specifying the bulk phase, and a right \(Q\)-module \(K\in\MQ\), regarded as an object of \(\MQ^\op\), specifying the boundary. 
We prove that these Hamiltonians have unique ground states and that the resulting realization functor \(\MQ^{\op}\to\BCond\) is an equivalence, so simple boundary conditions are classified by simple objects of \(\MQ\) and general boundary conditions by their finite direct sums.
We also give a microscopic formulation of the boundary symmetry topological field theory using DHR bimodules of the boundary quasi-local algebra. For a half-infinite fusion spin chain, the boundary DHR category is monoidally equivalent to \((\CMdual)^{\rev}\), and the canonical action of the bulk DHR category on it agrees with the categorical action of \(Z_1(\C^\rev)\). 
Finally, we identify the action of the boundary DHR category on boundary conditions with the categorical action of \((\CMdual)^{\rev}\) on \(\MQ^\op\). This yields a one-dimensional bulk-boundary correspondence: the enriched monoidal category describing the bulk is the enriched center of the enriched category describing the boundary.
\end{abstract}

\tableofcontents

\section{Introduction}

Topological phases of matter are a class of quantum matter beyond the Landau paradigm. They exhibit interesting behaviors, such as fractional braiding statistics and topologically protected quantum information. The study of topological phases has become one of the central themes of condensed matter physics in recent decades. 

The presence of symmetry makes the story richer. Even in one spatial dimension, where intrinsic topological order is absent for ordinary bosonic systems, gapped phases with symmetry can carry protected or symmetry-enriched data. 
For finite group symmetries, the classification of one-dimensional SPT phases by \(H^2(G,U(1))\) is well understood, both from matrix product state (MPS) techniques \cite{Chen_Gu_Wen_2011_SPTMPS,Cirac_2011_1DPhaseMPS} and from operator-algebraic methods that do not assume an MPS form \cite{Kapustin_Sopenko_2021_SPTClassification,Ogata_2021_ClassificationSPT}. 
More recent work has studied defects and symmetry fractionalization in related operator-algebraic settings \cite{Ogata_2024_SET,Kawagoe_Vadnerkar_2025_SET}.

\paragraph{Bulk-boundary correspondence and symmetry}
One of the most intriguing phenomena in topological physics is the \emph{bulk-boundary correspondence}: the boundaries uniquely determine the bulk. This phenomenon was first noticed in the context of quantum field theories \cite{Witten_1988_QFTJones} and later discovered in condensed matter systems. As a well-known example, let $\cA$ be the unitary fusion category of particles living on a certain gapped boundary of a (2+1)D topological order described by the unitary modular tensor category $\C$, then we have the bulk-boundary relation $\C=Z_1(\cA)$, where $Z_1$ here is the Drinfeld center \cite{KitaevKong_2012_DomainWall}; the relation can be verified in string-net models~\cite{Levin_2005}, and more generally, systems satisfying the local topological order (LTO) axioms~\cite{Penneys_Add_2025,Jones_Naaijkens_Penneys_Wallick_Izumi_2025}.

The bulk-boundary correspondence holds for a wide range of topological phases, though specific statements vary. In higher dimensions, the bulk-boundary relation can be formulated using the theory of higher fusion categories \cite{Kong_2022_QL1,Kong_2024_QL2,Johnson_Freyd_2022_OnClassificationTO,Lan_2024_CatSET}.
There is also a proposal of bulk-boundary correspondence for gapless boundaries of (2+1)D topological orders. The macroscopic observables on gapless boundaries of (2+1)D topological orders are proposed to form an \emph{enriched fusion category} ${}^{\Mod_V}\cS$ \cite{Kong_2018_GaplessEnrichedCat,Chen_2020_TopologicalPhaseTransition,Kong_2020_GaplessEdge}, where $\cS$ is the fusion category of topological defect lines on the (1+1)D world sheet, $V$ is the local quantum symmetry emergent in the IR limit that is described by a vertex operator algebra, and $\Mod_V$ is the category of spaces of $V$-symmetric local fields. It is proposed that the bulk topological order $\cC$ is the enriched center of the gapless boundary, $\C=Z_1({}^{\Mod_V}\cS)$.

It is proposed that all these bulk-boundary relations have closely related origins. In Ref.~\cite{Kong_2015_BbyBulk,Kong_2017_BdyBulk}, the authors argue that under certain physically reasonable assumptions such as uniqueness of the bulk and topological invariance, (not necessarily gapped) quantum liquid phases satisfy the bulk boundary relation: the bulk phase is the \emph{center} of the boundary phase, if we find their proper mathematical descriptions.

In the presence of symmetry, we can also investigate the bulk-boundary correspondence of symmetry-enriched topological phases. 
In~\cite{Kong_2022_1DEnrichedCat, Xu_2024_1DPhaseAbelianSym, LanZhou_2024_QuantumCurrent}, the authors studied (1+1)D gapped phases with symmetry and their bulk-boundary relations. These works use enriched monoidal categories ${}^{Z_1(\Rep(G))}\cS$ to describe the bulk (1+1)D gapped phases with finite group symmetry $G$, where $\cS$ is the category of topological defects that are transparent to symmetry defects (for example $\cS=\Rep(G)$ for $G$-SPT phases), and the enriched background $Z_1(\Rep(G))$ is interpreted as the category of sectors of certain non-local operators. The enriched background $Z_1(\Rep(G))$ is completely determined by the symmetry and doesn't depend on the phase under study. 

Note that $Z_1(\Rep(G))$ (or more generally $Z_1(\cC)$ if we consider generalized symmetries) is also the mathematical description of the (2+1)D topological order that can be realized by string-net model with input $\Rep(G)$ (or more generally $\cC$). This duality between symmetry and topological order in one higher dimension is known as \emph{topological holography}~\cite{KLW+2005.14178,Kong_2020Classi,Ji_2020_CategoricalSym, Kong_2020_GaplessEdge, 2021Gaiotto,Lichtman_2021, 
 Kong_2022_QL1, 
Chatterjee_2023,Moradi_2023, 2023SymTFTfromString,freed2024, Bhardwaj2024,Lan2412.07198,2026arXiv260506661Y}. The (2+1)D topological order $Z_1(\cC)$ is named as \emph{symmetry topological field theory} (SymTFT) in the literature. In this language, the bulk-boundary correspondence is expected to be a statement about centers of enriched categories: boundaries together with the boundary SymTFT form an enriched category, whose center is the category of bulk defects enriched in the bulk SymTFT.

However, there are gaps between the categorical description and the microscopic theory. In one-dimensional examples, the SymTFT has often been described in terms of ``symmetric non-local operators.'' This description is physically illuminating in simple examples, but awkward when we consider non-abelian or non-invertible symmetries. Nor is it conceptually satisfactory: non-local operators are not elements of the quasi-local algebra, and it is unknown how to understand non-invertible objects of the SymTFT or their direct sums.
Moreover, no general microscopic definition of boundary conditions was given in these works. Consequently, discussions of symmetries beyond nonabelian groups were avoided, as they necessarily involve \emph{composite} boundary conditions, whose description and formulation was unclear.

These developments suggest several basic questions to be answered: What is the correct microscopic notion of a boundary condition for a topological phase with symmetry? How should such boundary conditions be classified? What is the symmetry topological field theory, or SymTFT, seen by a boundary? Finally, what is the precise bulk-boundary correspondence in this one-dimensional symmetric setting?

All these suggest that a new framework is needed. In this paper, we propose a framework that bridges categorical description with microscopic theory and answer these basic questions.

\paragraph{The thermodynamic limit and operator algebras}

Our framework is built entirely on \emph{the thermodynamic limit}, i.e. with infinite-size spin systems, using the language of \emph{operator algebras}.

Why must we work in the thermodynamic limit? The answer is that many structures don't exist in finite systems and only emerge when the system size is infinite. 
First of all, phases and phase transitions only make sense in the thermodynamic limit. 
Also, as the dichotomy of ``local'' and ``non-local'' is only clear in infinite systems, superselection sectors are only well-defined in the thermodynamic limit as modules of the algebra of (quasi-) local operators.

If the reader still feels uneasy, we suggest thinking about Newton's first law. In reality there is no object free from any external force, but Newton's laws are nevertheless most naturally formulated by considering the ideal situation. We believe that the thermodynamic limit is the ideal situation to study topological phases.

Operator algebras are the natural language to work in the thermodynamic limit.
Finite spin chains have ordinary finite-dimensional Hilbert spaces, but there is no reasonable infinite tensor product of on-site Hilbert spaces for an infinite chain. In contrast, the algebra of quasi-local observables
\[
    \Loc=\varinjlim_I \Loc_I
\]
is naturally defined as a unital approximately finite (AF) \(C^*\)-algebra. A state is then just a positive unital functional on \(\Loc\).

A key concept in studying topological phases is superselection sectors. They turn out to be formulated most naturally in the operator algebraic language: A superselection sector is a representation or Hilbert module of \(\Loc\). This captures the physical idea that states differing by a local operator belong to the same sector. Also, direct sums of charges, whose interpretation was a constant source of confusion, obviously correspond to direct sums of modules.

This operator-algebraic viewpoint has already proved useful in topological phases. Naaijkens' analysis of Kitaev's toric code describes excitations as localized endomorphisms, or equivalently representations, of the observable algebra and computes their fusion and braiding within a DHR/BF-type superselection theory \cite{Naaijkens_2011_ToricCode}. Related sector theories for non-abelian quantum double models were studied in \cite{Naaijkens_2025_SectorsQD}. Cha, Naaijkens, and Nachtergaele proved stability of the resulting braided tensor \(C^*\)-category of charges under gapped perturbations by replacing strict localization with almost-localization in cone-like regions \cite{Naaijkens_Cha_2019_StabilityCharges}. Ogata derived braided \(C^*\)-tensor categories from two-dimensional gapped ground states satisfying approximate Haag duality and showed that the resulting anyon category is a phase invariant \cite{Ogata_2022_BraidedCStarTensorCat}; a related operator-algebraic framework has also been proposed for mixed-state topological order \cite{Ogata_2025_MixedState}. Our work follows this philosophy in the one-dimensional symmetric setting, using quasi-local algebras and Hilbert bimodules as the microscopic foundation for boundary conditions, SymTFTs, and their bulk-boundary relation.

\paragraph{Summary of results}
This paper classifies boundary conditions and establishes bulk-boundary correspondence for one-dimensional phases with categorical symmetry. 
Before stating our results, we would like to remind the reader that conventions in this paper are different in various ways from conventions in the literature; these differences are discussed in Appendix~\ref{app:orientation}.

First, we construct the corresponding half-infinite fusion spin chain and commuting-projector boundary Hamiltonian directly in the thermodynamic limit, from the following data: a unitary fusion category \(\C\), interpreted as the category of local symmetry charges; an indecomposable semisimple right \(\C\)-module category \(\M\), interpreted as the category of boundary local quantum charges; a bulk gapped phase is specified by a Q-system \(Q\) in \(\C\), and a microscopic boundary condition is specified categorically by a right \(Q\)-module \(K\in \M_Q\). We prove that the Hamiltonian has a unique ground state (Theorem~\ref{thm:boundary_unique_GS}).

Furthermore, we proved that all possible boundaries are realized by our construction:
\BeginLinkedRestatable{target:bcondclassification}
\begin{restatable*}[Classification of boundary conditions]{theorem}{bcondclassification}
\label{thm:intro-bcond-classification}
The realization functor $\ReaBCond: \MQ^\op \to \BCond$ is an equivalence of categories.
\end{restatable*}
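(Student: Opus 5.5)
We will prove the equivalence by showing that the realization functor \(\ReaBCond\) is fully faithful and essentially surjective. The plan relies on three ingredients: both categories are semisimple, the functor is additive, and simple objects correspond under it. For full faithfulness, it suffices to check two things on simple objects \(K, K' \in \MQ\): that \(\ReaBCond(K)\) is simple in \(\BCond\), and that \(\Hom_{\BCond}(\ReaBCond(K),\ReaBCond(K'))\) is one-dimensional when \(K\cong K'\) and zero otherwise. Additivity of \(\ReaBCond\) with respect to finite direct sums is immediate from the construction, because the boundary Hamiltonian for \(K\oplus K'\) decomposes into blocks. Together with semisimplicity of \(\MQ\), this reduces everything to simple objects.

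\textbf{Step 1 (simples go to simples).} By Theorem~\ref{thm:boundary_unique_GS}, for simple \(K\) the boundary Hamiltonian has a unique ground state \(\omega_K\) on the half-infinite quasi-local algebra. Uniqueness implies that \(\omega_K\) is pure. The morphisms in \(\BCond\) are intertwiners of the associated GNS bimodules (or representations), so Schur's lemma gives \(\End(\ReaBCond(K))\cong\mathbb{C}\).

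\textbf{Step 2 (distinct simples give inequivalent sectors).} We will show that non-isomorphic simple \(K,K'\) give disjoint (inequivalent) states. The plan is to exhibit a local boundary observable that separates them. The natural candidate is the projection, near the boundary, onto the \(K\)-isotypic part of the boundary label space, built from the idempotent in \(\End_{\MQ}\). Its expectation is \(1\) in \(\omega_K\) and \(0\) in \(\omega_{K'}\), uniformly under translation into the bulk. This uniformity forces the two states to be disjoint. An alternative route is to compare the restrictions of the two states to the boundary algebra directly.

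\textbf{Step 3 (essential surjectivity).} This is the main obstacle. We must show that every simple boundary condition, that is, every pure boundary state in the bulk phase \(Q\) satisfying the defining axioms of \(\BCond\), is equivalent to some \(\omega_K\). The strategy is as follows.
\begin{enumerate}
\item Since the state restricts far from the boundary to the bulk \(Q\)-phase ground state, it is quasi-equivalent there to the bulk state.
\item We then look at the algebra of operators localized near the boundary that commute with the bulk projectors. By the commuting-projector structure, this algebra is finite-dimensional and isomorphic to \(\End_{\MQ}(\bigoplus_i K_i)\), where the sum runs over a generating family of simple \(Q\)-modules. Equivalently, the space of boundary ground states of the open-boundary Hamiltonian without boundary term realizes the internal Hom data of \(\MQ\).
\item A pure boundary state then selects a minimal projection in this algebra. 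That projection corresponds to a simple \(K\).
\item Finally, a standard argument shows the state equals \(\omega_K\): it is supported in the ground space, and it is unique there by Theorem~\ref{thm:boundary_unique_GS}.
\end{enumerate}
The key technical point is justifying that the bulk Q-system projectors, pushed toward the boundary, carve out exactly a \(\MQ\)-indexed decomposition. We would first prove this on finite intervals \([0,n]\) and then pass to the inductive limit.

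General (non-simple) boundary conditions then follow by decomposing into finite direct sums, which completes the equivalence.
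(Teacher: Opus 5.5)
Your Step~3 follows the same route as the paper: localize with the bulk projectors, pass to the fixed-point algebra $A_n^{\mathrm{fp}}=\MQ(l\otimes x^{n+1},l\otimes x^{n+1})$, and read off a block $K$. Its last step, however, is wrong.

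\textbf{The last step of Step~3.} A pure state $\sigma$ that agrees with the bulk ground state on $\Loc_{(n,\infty)}$ only satisfies $\sigma(m_i^\dagger m_i)=1$ for $i\ge n$. Nothing forces $\sigma(1-\mu^\dagger\mu)=0$, or the bulk terms near the boundary, to vanish. So $\sigma$ is not a ground state of $\Phi_{K,Q}$, and Theorem~\ref{thm:boundary_unique_GS} does not apply. In fact $\sigma\neq\omega_K$ in general: $\omega_K\circ\operatorname{Ad}(u)$, for any local unitary $u$ near the boundary, is another boundary condition in the same sector. A concrete case is the two embeddings $p_0=(1\pm X_0)/2$ of $K=\mathbb{C}$ in the trivial Ising phase with $\M=\ve$.

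What you must prove is $\operatorname{GNS}(\sigma)\cong\operatorname{GNS}(\omega_K)$. The missing ingredient is the partial screening map $\operatorname{scr}_n$:
\begin{itemize}
\item Every state satisfying the projector conditions beyond $n$ obeys $\sigma=\sigma|_{A_n^{\mathrm{fp}}}\circ\operatorname{scr}_n$. Hence restriction to $A_n^{\mathrm{fp}}$ is a convex bijection. This is also what makes $\sigma|_{A_n^{\mathrm{fp}}}$ pure, which you take for granted.
\item Two pure states on the same block of a multi-matrix algebra differ by conjugation by a unitary $u\in A_n^{\mathrm{fp}}$.
\item Because $\operatorname{scr}_n$ is an $A_n^{\mathrm{fp}}$-bimodule map, this lifts to $\sigma=\omega_K\circ\operatorname{Ad}(u)$ on all of $\Loc$.
\end{itemize}

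The same reduction is needed for the step you pass over at the end. An object of $\BCond$ is a module generated by finitely many localized vectors, not a pure state. ``Decomposing into finite direct sums'' presupposes that each cyclic submodule is a finite sum of simple sectors. The paper obtains this by decomposing the vector state on $A_n^{\mathrm{fp}}$ into finitely many pure states.

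\textbf{Step~2.} This needs a different justification from the one you give. A local observable with expectation $1$ in one pure state and $0$ in another does not separate sectors; the $p_0=(1\pm X_0)/2$ example is exactly such a case. Moving the observable into the bulk cannot help, because on $\Loc_{(n,\infty)}$ both $\omega_K$ and $\omega_{K'}$ equal the bulk ground state.

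What works is the family of central projections $P_n$ of $\MQ(l\otimes x^{n+1},l\otimes x^{n+1})$ onto the $K$-isotypic part. These always touch the boundary. They satisfy $\omega_K(P_n)=1$ and $\omega_{K'}(P_n)=0$, since $\mu_{n+1}^\dagger$ is a right $Q$-module map out of a simple module. Crucially, they commute with $\Loc_{[0,n]}$. Now approximate a would-be intertwined vector by $\pi_K(b)\Omega_K$ with $b$ local, and let $n\to\infty$; this gives the contradiction.

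That is a valid alternative to the paper's argument, which instead sends the cyclic vector of $\operatorname{GNS}(\rho_{K_2})$ through an intertwiner and screens it into $\MQ(K_1,K_2)=0$. But the property you must use is the asymptotic centrality of $P_n$, not ``translation into the bulk''.
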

\EndLinkedRestatable
Therefore, simple boundary conditions are classified by the simple objects of \(\M_Q\), and non-simple boundary conditions are their finite direct sums. This statement is stronger than merely constructing examples: every microscopic boundary condition satisfying the definition in this paper is obtained from a right \(Q\)-module in \(\M\). However, we find it mysterious that one must take the \emph{opposite} category in microscopic constructions to obtain reasonable morphisms between boundary conditions, whose explanation is left to future work.

The boundary SymTFT is formulated using DHR bimodules of fusion spin chains with boundary. 
Jones showed that for a fusion spin chain without boundary, the DHR category of the quasi-local algebra recovers the Drinfeld center \(Z_1(\C^\rev)\)
\footnote{The original reference proves that the DHR category is equivalent to $Z_1(\C)$, as their convention is different from ours. See Appendix~\ref{app:orientation:dhr} for a detailed discussion.}, 
giving a rigorous operator-algebraic formulation of the bulk SymTFT \cite{Jones_2024_DHR}. 
We extend this viewpoint to half-infinite fusion spin chains, and it turns out the boundary DHR category recovers \((\CMdual)^\rev\), the monoidal reverse of the dual tensor category of $\C$ with respect to $\M$:
\BeginLinkedRestatable{target:boundarysymtft}
\begin{restatable*}{theorem}{boundarysymtft}
\label{thm:intro-boundary-symtft}
The standard action functor $\ReaDHR \colon (\CMdual)^{\rev} \to \DHR(\Loc^{\bdy}_\bullet)$ is a monoidal equivalence.
\end{restatable*}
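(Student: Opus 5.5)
We will follow the strategy of Jones' computation of the DHR category for fusion spin chains \cite{Jones_2024_DHR}, adapted to the half-infinite chain \(\Loc^{\bdy}_\bullet\) whose local algebras near the boundary are endomorphism algebras of the form \(\End_{\M}(m\triangleleft X^{\otimes n})\), with \(X\) the generating object of \(\C\). The argument splits into showing that \(\ReaDHR\) is well-defined and monoidal, fully faithful, and essentially surjective.

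\textbf{Step 1: the functor.} For an object \(F\in\CMdual\), i.e.\ a \(\C\)-module endofunctor of \(\M\), we build a Hilbert bimodule over \(\Loc^{\bdy}\) by inserting \(F\) at the boundary. The local pieces are \(\Hom_{\M}(m\triangleleft X^{\otimes n}, F(m)\triangleleft X^{\otimes n})\), and the \(\C\)-module structure \(F(m\triangleleft X)\cong F(m)\triangleleft X\) makes these compatible with the inclusions. We then check that this bimodule is finitely generated projective on both sides and has a localized projective basis supported near the boundary, so that it is DHR. Monoidality should follow from \(F\circ G\) corresponding to the relative tensor product of bimodules, and the reversal \(^{\rev}\) comes from our orientation conventions (Appendix~\ref{app:orientation}). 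Faithfulness is immediate from the action on the first local algebra.

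\textbf{Step 2: full faithfulness.} A bimodule map between DHR bimodules is determined by its restriction to a localized region. On a large interval \([0,n]\) such a map gives a family of morphisms \(F(m)\triangleleft X^{\otimes n}\to G(m)\triangleleft X^{\otimes n}\) that commute with the local algebra action. Using commutation with \(\End_\M(m\triangleleft X^{\otimes n})\) together with the fact that \(X\) generates \(\C\), we show these morphisms come from a single natural \(\C\)-module transformation \(F\Rightarrow G\). This is the boundary analogue of Jones' ``half-braiding from commutant'' argument, carried out with the module constraints in place of the half-braiding.

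\textbf{Step 3: essential surjectivity.} This is the main obstacle. Given an arbitrary DHR bimodule \(\mathcal{X}\), we first use localization to move its projective basis into a boundary interval \([0,n]\). Then, following Jones, we translate \(\mathcal{X}\) into a module over the net of boundary algebras, equivalently a module over the (ladder/tube) category whose idempotent completion is \(\Fun_\C(\M,\M)\); Morita theory identifies the relevant boundary tube algebra with \(\CMdual\). The delicate point is showing that the bimodule is determined by its restriction to one interval, and that the data obtained are compatible as \(n\) grows, i.e.\ define a \(\C\)-module functor rather than just a family of module maps. As a fallback, we plan a dimension count: compare the global dimensions of \(\DHR(\Loc^{\bdy}_\bullet)\) and \(\CMdual\), the latter equal to \(\dim\C\), through the faithful image of the simples. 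Since \(\DHR\) is semisimple and every simple DHR bimodule appears in some \(\ReaDHR(F)\) (it is a summand of a bimodule generated from the boundary), this would force surjectivity.
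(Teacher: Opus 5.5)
Your Step~1 and the last part of Step~2 (turning finite-level intertwiners into a $\C$-module natural transformation because $x$ generates $\C$) match the paper. Step~3, however, has a genuine gap: neither of your routes proves essential surjectivity.

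The tube-algebra route only names the difficulty; it gives no mechanism that turns an arbitrary DHR bimodule into a module functor. The fallback is circular. Since $\ReaDHR$ is fully faithful and $\CMdual$ is semisimple, the claim ``every simple DHR bimodule is a summand of some $\ReaDHR(F)$'' \emph{is} essential surjectivity, and ``it is a summand of a bimodule generated from the boundary'' gives no reason for it. You also have no independent handle on the global dimension (or even finiteness and rigidity) of $\DHR(\Loc^{\bdy}_\bullet)$ to compare with $\dim\C$.

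The missing idea is to view $\Loc^{\bdy}$ as an extension of the bulk half-chain $\Loc^{\bdy}_{(r,\infty)}$, where $r$ is such that $M$ has a projective basis $\{s_k\}$ localized in $[0,r]$. This subalgebra carries the standard $\C$-action $G$ (the case $\M=\C$). Moreover $\Loc^{\bdy}\cong|Q|$ for the Q-system $Q=[l\otimes x^r,l\otimes x^r]\in\C$, which is checked on finite pieces via $Q\cong\bigoplus_y\M(l\otimes x^r\otimes y,l\otimes x^r)\otimes y$. In particular, $\Loc^{\bdy}\cong\bigoplus_y G(y)^{\oplus N_y}$ as an $\Loc^{\bdy}_{(r,\infty)}$-bimodule. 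Since each $s_k$ commutes with $\Loc^{\bdy}_{(r,\infty)}$, each $s_k\triangleleft G(y)_i$ is a sub-bimodule of $M$ that is a quotient of the simple bimodule $G(y)$ via $\xi\mapsto s_k\triangleleft\xi$. These sub-bimodules generate $M$, so the restriction of $M$ lies in the image of $G$. Jones' characterization of that image (Theorem~\ref{ImageOfG}) then puts $M$ in the image of ${}_Q\C_Q$, which is identified with $\CMdual$ (as $\M\simeq{}_Q\C$), with the realization functor becoming $\ReaDHR$.

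There is also a smaller gap in Step~2. You tacitly assume that a bimodule map $T\colon\Std(F_1)\to\Std(F_2)$ sends a finite piece into a finite piece, whereas a priori $T(s)$ is only some vector of the inductive limit. Closing this needs algebraic Haag duality for the half-chain (Theorem~\ref{thm:algebraic-haag-duality-boundary}). The vector $T(s)$ is again localized, so in a localized projective basis $\{t_j\}$ of $\Std(F_2)$ the coefficients $\langle t_j\mid T(s)\rangle$ commute with $\Loc^{\bdy}_{(N,\infty)}$ and hence lie in $\Loc^{\bdy}_{[0,N]}$ for large $N$. The paper does this after first using rigidity to reduce to central vectors of $\Std(L_2\otimes L_1^\vee)$. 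Saying the map is ``determined by its restriction to a localized region'' does not by itself give you the finite-level intertwiners your argument then uses.
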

\EndLinkedRestatable
Thus the boundary SymTFT is realized microscopically as a DHR category of the boundary observable algebra. We have also found that there is a canonical action of the bulk SymTFT on the boundary SymTFT, which agrees with the canonical action of $Z_1(\C^\rev)$ on $(\CMdual)^{\rev}$:
\BeginLinkedRestatable{target:symtftbulkbdy}
\begin{restatable*}{theorem}{symtftbulkbdy}
\label{thm:SymTFT-bulk-bdy}
    The action of $\DHR(\Loc^{\bulk}_{\bullet})$ on $\DHR(\Loc^{\bdy}_\bullet)$ is canonically equivalent to the canonical action of $Z_1(\C^\rev)$ on $(\CMdual)^{\rev}$.
\end{restatable*}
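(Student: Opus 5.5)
The plan is to compare the two actions through the equivalences already in hand. On the bulk side, Jones' result identifies $\DHR(\Loc^{\bulk}_\bullet)$ with $Z_1(\C^\rev)$ via a standard action functor, which I will call $\ReaDHR^{\bulk}$. On the boundary side, Theorem~\ref{thm:intro-boundary-symtft} identifies $\DHR(\Loc^{\bdy}_\bullet)$ with $(\CMdual)^{\rev}$. What remains is to show that these two equivalences intertwine the actions, as a monoidal functor between module categories together with its coherence.

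The first step is to make the microscopic action concrete. A bulk DHR bimodule $X$ over $\Loc^{\bulk}$ is localized in some interval. Translating it deep into the half-infinite region, away from the boundary, and restricting along the inclusion $\Loc^{\bulk}_{I}\hookrightarrow\Loc^{\bdy}$ for intervals $I$ far from the boundary, it becomes a DHR bimodule over $\Loc^{\bdy}_\bullet$. Equivalently, the functor is induction along the net inclusion. The action is then $X\triangleright Y \coloneqq \operatorname{Ind}(X)\boxtimes_{\Loc^{\bdy}} Y$. I will use the same half-braiding and unitary-transport description as in the paper's proof of Theorem~\ref{thm:intro-boundary-symtft}, so that everything is computed on standard representatives.

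The second step evaluates this on standard objects. Take $(z,\sigma)\in Z_1(\C^\rev)$, with bulk bimodule built from $z$ inserted on a site together with the half-braiding $\sigma$ used to commute past local operators. Take $F\in(\CMdual)^{\rev}$, with boundary bimodule built from the module functor $F$ inserted at the boundary. I will show that the induced bimodule of $(z,\sigma)$, localized far from the boundary, is unitarily isomorphic to the standard boundary bimodule of the object $-\triangleleft z\in\CMdual$. The module structure of $-\triangleleft z$ comes from $\sigma$, and this is exactly the image under the canonical central functor $Z_1(\C^\rev)\to(\CMdual)^{\rev}$.

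The isomorphism is obtained by transporting $z$ site by site toward the boundary using $\sigma$ and the right $\C$-action on $\M$. The transport unitaries are built from local fusion-category morphisms, so they are $\Loc^{\bdy}$-bimodule maps. Next I check that this transport is compatible with the fusion $\boxtimes_{\Loc^{\bdy}}$, i.e.\ $\ReaDHR(z\triangleright F)\cong \operatorname{Ind}(\ReaDHR^{\bulk}(z))\boxtimes\ReaDHR(F)$, naturally in $z$ and $F$. Finally I verify the module-associator and unitor coherence diagrams. These reduce to the already established monoidality of both $\ReaDHR$ functors.

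I expect the main obstacle to be matching conventions, namely the $\rev$ twists and which half-braiding is used, so that the transported bimodule really realizes the canonical central functor rather than its reverse. There is a related concern about independence of the localization region chosen for the induction. I would settle both first, by checking the transport on a single-site tensor network diagram. After that, the naturality and coherence checks should follow from the locality (commuting-support) arguments used in the proof of Theorem~\ref{thm:intro-boundary-symtft}.
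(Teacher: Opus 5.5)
This is essentially the paper's argument. There the action is the monoidal functor $U$ sending a bulk DHR bimodule $X$ to the sub-bimodule $\Loc^{\bdy}S\Loc^{\bdy}$ generated by a projective basis $S$ of $X$ localized in some $[p,q]$ with $0<p<q$. Bulk algebraic Haag duality is what makes this a right-finite Hilbert $\Loc^{\bdy}$-bimodule, independent of $S$; that is the precise content of your ``restrict/induce'' step, since literal restriction would not give a right-finite bimodule with $\Loc^{\bdy}$-valued inner product. The paper then builds the comparison $U\circ\ReaDHR^{\bulk}\Rightarrow\ReaDHR^{\bdy}\circ K$ on finite pieces by $\xi\mapsto(\mathrm{id}_m\otimes\sigma_{z,x^n}\otimes\mathrm{id}_{x^n})\circ\xi$, which is exactly your half-braiding transport of $z$ toward the boundary realizing $K_{z,\sigma}=-\otimes z$. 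Your module-action formulation $X\triangleright Y=U(X)\boxtimes_{\Loc^{\bdy}}Y$ and its coherence checks are equivalent to this monoidal natural isomorphism, given the monoidality of $\ReaDHR^{\bdy}$.
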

\EndLinkedRestatable

The action of the boundary SymTFT on boundary conditions is realized rigorously using the language of operator algebras:
\BeginLinkedRestatable{target:bulkbdyaction}
\begin{restatable*}{theorem}{bulkbdyaction}
\label{thm:intro-bulk-boundary-action}
The action of the boundary DHR category on boundary conditions by relative tensor product of Hilbert bimodules,
\[
    \DHR(\Loc^{\bdy}_\bullet) \times \BCond \longrightarrow \BCond
\]
is identified with the categorical action of \((\CMdual)^{\rev}\) on \(\MQ^\op\) under the equivalences of Theorems~\ref{thm:intro-bcond-classification} and~\ref{thm:intro-boundary-symtft}, with the duality/bending isomorphism accounting for the contravariance of \(\ReaBCond\).
\end{restatable*}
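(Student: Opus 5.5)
The plan is to build the comparison natural isomorphism
\[
    \ReaDHR(X)\boxtimes_{\Loc^{\bdy}} \ReaBCond(K) \;\cong\; \ReaBCond\bigl(K \triangleleft X^{\vee}\bigr)
\]
on generators, check that it is natural, and then check that it is compatible with the monoidal structures. Here $K\in\MQ$, $X\in(\CMdual)^{\rev}$, and $\triangleleft$ is the action of $(\CMdual)^{\rev}$ on $\MQ^\op$. Since $\ReaBCond$ is contravariant, the action on $\MQ^\op$ is the categorical action precomposed with duality, $X\mapsto X^\vee$. The duality/bending isomorphism mentioned in the statement is exactly what converts $\Hom(K\triangleleft X^\vee,-)$ into $\Hom(K,-\triangleleft X)$.

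\textbf{Step 1: unpack the objects.} A boundary condition $\ReaBCond(K)$ is the GNS Hilbert module (or representation) of the unique ground state of the boundary Hamiltonian of Theorem~\ref{thm:boundary_unique_GS}. I expect this ground state to have the standard "fusion tree" description: the space of the localized region $[0,n]$ is a Hom space $\M(m, K\triangleleft X_0\cdots)$ compressed by the $Q$-projectors. The DHR bimodule $\ReaDHR(X)$ is modeled on Jones' construction: for $X\in\CMdual$ it is the completion of the inductive limit of $\Hom$-spaces in which $X$ is inserted at the boundary site.

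\textbf{Step 2: compute the relative tensor product at finite level.} At finite level the tensor product over $\Loc^{\bdy}_{[0,n]}$ is compression of a tensor product of Hom spaces. By semisimplicity of $\M$ and $\M_Q$, the composition-and-sum map
\[
    \bigoplus_m \M(m,m'\triangleleft X)\otimes \M(m',K\triangleleft\cdots)
\]
is an isomorphism onto the corresponding Hom space for the module obtained by acting with $X$ on $K$. Moving $X$ from the source side to the target side uses the duality (bending) map. Each such map is unitary after normalization by quantum dimensions, and it is compatible with the connecting maps of the inductive system. Passing to completions then yields a unitary bimodule isomorphism. Because the boundary condition is irreducible for simple $K$, Theorem~\ref{thm:intro-bcond-classification} lets me identify the target with $\ReaBCond(K\triangleleft X^\vee)$, up to direct sums.

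\textbf{Step 3: naturality and coherence.} Naturality in $K$ and $X$ holds because morphisms act by postcomposition on the Hom spaces, and the bending maps are natural. For compatibility with the tensorator of $\ReaDHR$ from Theorem~\ref{thm:intro-boundary-symtft}, the relevant square reduces to associativity of composition in $\M$ together with the identity $(X\otimes Y)^\vee\cong Y^\vee\otimes X^\vee$. The $\rev$ is precisely what makes the orders match.

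\textbf{Main obstacle.} I expect the hard step to be the analytic part of Step 2: showing that the finite-level unitaries are compatible with the inductive limits and with the $Q$-projection defining the ground state, so that they extend to the completed relative tensor product (Connes fusion) rather than only its algebraic part. My first approach would be to use the Pimsner--Popa basis of $\ReaDHR(X)$, which is finite because $X$ has finite index, to reduce the relative tensor product to a finite algebraic one. I would then verify inner-product preservation via the unitary trace and the Q-system normalization.
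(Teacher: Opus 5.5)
Your plan is essentially the paper's proof. The paper builds finite-level isomorphisms $\ReaBCond_n(X^\vee(K))\cong\ReaDHR_n(X)\boxtimes_{\Loc_n}\ReaBCond_n(K)$ from the bending isomorphism $\M(K,X(l\otimes x^n))\simeq\M(X^\vee(K),l\otimes x^n)$, passes to the inductive limit, and gets the coherence conditions by treating both monoidal categories as strict, exactly as in your Steps 2--3; your finite projective basis argument is a fine substitute for the paper's appeal to Theorem~\ref{thm:tensor-product-inductive-limit}, which is stated there without proof.

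One correction to Step 1: the finite pieces are $\ReaBCond_n(K)=\M(K,l\otimes x^n)$, i.e.\ morphisms out of $K$, which is where the $\op$ comes from. So the comparison map is simply $\xi\otimes h\mapsto\xi\circ h$ followed by bending, valid once $l\otimes x^n$ contains every simple of $\M$. You should not invoke the classification theorem to identify the target: that only yields abstract isomorphisms, not a natural, coherent family.
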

\EndLinkedRestatable
Consequently, the full boundary datum is not merely the linear category of boundary conditions; it is the category of boundary conditions equipped with its boundary SymTFT action. 

This implies the main theorem of the paper, the bulk-boundary correspondence of 1D phases with symmetry:
\BeginLinkedRestatable{target:bulkbdycorrespondence}
\begin{restatable*}[Bulk-boundary correspondence]{theorem}{bulkbdycorrespondence}
\label{thm:bulk-boundary-correspondence}
The enriched monoidal category describing the bulk of a (1+1)D symmetric gapped phase is the enriched center of the enriched category describing the boundary,
\[
     Z_0({}^{\DHR(\Loc_\bullet^\bdy)}\BC) \simeq {}^{\DHR(\Loc_\bullet^\bulk)}\mathrm{TopDef}
\]
\end{restatable*}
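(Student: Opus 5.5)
We reduce the statement to a purely categorical identity by transporting both sides along the equivalences established earlier, and then invoke the known computation of enriched centers of enriched categories obtained by the canonical construction. The first step is to make precise the left-hand side. By Theorem~\ref{thm:intro-bcond-classification}, $\BC$ is equivalent (contravariantly via $\ReaBCond$) to $\MQ^\op$. By Theorem~\ref{thm:intro-boundary-symtft}, $\DHR(\Loc^\bdy_\bullet)\simeq(\CMdual)^\rev$ monoidally. By Theorem~\ref{thm:intro-bulk-boundary-action}, the left action of $\DHR(\Loc^\bdy_\bullet)$ on $\BC$ matches the action of $(\CMdual)^\rev$ on $\MQ^\op$, compatibly with the bending isomorphisms. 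Since the enriched category ${}^{\DHR(\Loc^\bdy_\bullet)}\BC$ is defined by the canonical construction (internal homs of a module category), an equivalence of module categories over equivalent monoidal categories induces an equivalence of enriched categories. Hence ${}^{\DHR(\Loc^\bdy_\bullet)}\BC\simeq{}^{(\CMdual)^\rev}\MQ^\op$.

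The second step computes the enriched center of the right-hand side. Here we would use the theorem of Kong--Zheng (and its use in \cite{Kong_2022_1DEnrichedCat}): for a fusion category $\mathcal{D}$ and a $\mathcal{D}$-module $\mathcal{N}$, the enriched center is $Z_0({}^{\mathcal{D}}\mathcal{N})\simeq {}^{Z_1(\mathcal{D})}\mathrm{Fun}_{\mathcal{D}}(\mathcal{N},\mathcal{N})$. Apply this with $\mathcal{D}=(\CMdual)^\rev$ and $\mathcal{N}=\MQ^\op$. Then:
\begin{itemize}
\item Morita invariance gives $Z_1((\CMdual)^\rev)\simeq Z_1(\C^\rev)$ as braided categories.
\item $\mathrm{Fun}_{(\CMdual)^\rev}(\MQ^\op,\MQ^\op)$ is identified, by double-commutant / Morita duality, with the category ${}_Q\C_Q$ of $Q$-bimodules (up to $\op$/$\rev$ conventions).
\end{itemize}
The third step identifies the bulk side. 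We need that ${}^{\DHR(\Loc^\bulk_\bullet)}\mathrm{TopDef}$ is, via Jones' theorem $\DHR(\Loc^\bulk_\bullet)\simeq Z_1(\C^\rev)$ and the (earlier) description of topological defects of the bulk phase $Q$ as $Q$-bimodules, the canonical enriched monoidal category ${}^{Z_1(\C^\rev)}({}_Q\C_Q)$ built from the central functor $Z_1(\C^\rev)\to{}_Q\C_Q$, $z\mapsto Q\otimes z$. Finally, Theorem~\ref{thm:SymTFT-bulk-bdy} guarantees that the bulk DHR action used in the enrichment is compatible with the Morita identification of the centers. This ensures the two enrichments agree, not merely the underlying categories.

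The main obstacle is the bookkeeping of conventions: the opposite category $\MQ^\op$, the monoidal reverse, and the contravariance of $\ReaBCond$ must combine so that the functor category is ${}_Q\C_Q$ with the correct tensor product order, and so that the central structure matches $z\mapsto Q\otimes z$ rather than its reverse. I would first verify this at the level of underlying monoidal categories, checking that the half-braiding induced on $Q\otimes z$ agrees with the one from $\DHR(\Loc^\bulk_\bullet)$ acting on defects. Only then would I upgrade to enriched equivalence, using that both sides are determined by the same central functor.
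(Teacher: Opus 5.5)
Your proposal follows the paper's proof essentially step for step: you transport ${}^{\DHR(\Loc_\bullet^\bdy)}\BC$ to ${}^{(\CMdual)^\rev}\MQ^\op$ using the three realization theorems and the canonical-construction 2-equivalence, apply the enriched-center formula $Z_0({}^{\mathcal D}\mathcal N)\simeq{}^{Z_1(\mathcal D)}\mathrm{Fun}_{\mathcal D}(\mathcal N,\mathcal N)$ (the paper cites Corollary~4.41 of \cite{Kong_Yuan_2024_EnrichedMonoidalCat}), and use Morita invariance $Z_1((\CMdual)^\rev)\simeq Z_1(\C^\rev)$ together with Jones' theorem on the bulk side. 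The op/rev bookkeeping you defer is carried out in the paper as $\mathrm{Fun}_{(\CMdual)^\rev}(\MQ^\op,\MQ^\op)\simeq\mathrm{Fun}_{\CMdual}(\MQ,\MQ)^\op\simeq(\QCQ)^{\rev,\op}$, which is why the paper fixes $\mathrm{TopDef}\simeq(\QCQ)^{\rev,\op}$ rather than ${}_Q\C_Q$; the two are equivalent by rigidity.
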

\EndLinkedRestatable

\paragraph{Organization.}
The paper is organized as follows. In Section~\ref{section: Preliminaries}, we review the operator-algebraic background used throughout the paper, including states, representations, AF algebras, and commuting-projector interactions in the thermodynamic limit. In Section~\ref{section: fusion spin chains with bdy}, we introduce fusion spin chains with and without boundary, explain how they generalize spin chains with finite group symmetry, and establish structural properties such as covering and algebraic Haag duality. In Section~\ref{Q-system}, we construct Q-system commuting-projector models for one-dimensional phases and their boundary analogues. In Section~\ref{subsec: Q-system models with bdy} and the following section on boundary conditions, we construct boundary Hamiltonians and prove the classification theorem of boundaries. In Section~\ref{ComputationDHRB}, we define the boundary DHR category and show that it is equivalent to \((\CMdual)^{\rev}\), and study the action of the bulk DHR category on the boundary DHR category. Finally, in Section~\ref{section: bulk-boundary correspondence}, we identify the DHR action on boundary conditions with the canonical categorical action and establish the bulk-boundary correspondence as an enriched-center statement.

% ----------------------------------------

\section{Preliminaries}
\label{section: Preliminaries}
In this section, we review the operator-algebraic and categorical preliminaries used later in the paper, emphasizing the physical meaning of the constructions while keeping the mathematical statements precise.

\begin{table}
    \centering
    \small
    \begin{tabular}{|p{0.25\textwidth}|p{0.39\textwidth}|p{0.25\textwidth}|}\hline
        Notation & Meaning & Reference\\\hline

        $\C$ & The category of local quantum charges, which is a unitary fusion category & Definition~\ref{def:fusion-spin-chain}\\\hline
        $\M$ & The category of boundary local quantum charges, which is a \emph{right} module category of $\C$ & Definition~\ref{def:fusion-spin-chain-with-boundary}\\\hline
        $x \in \C$ & Degree of freedom on bulk sites& \\\hline
        $l \in \M$ & Degree of freedom on the boundary site& \\\hline
        
        $Q$ & The Q-system labelling the bulk phase& Definition~\ref{def:q-system-model}\\ \hline
        $m: Q \otimes Q \to Q$ & The multiplication morphism of the Q-system& Definition~\ref{def:q-system-model}\\\hline
        $\mu: K \otimes Q \to K$ & The action morphism of the right $Q$-module $K$& Definition~\ref{def:q-system-model-with-boundary}; Eq.~\eqref{eq.BdyFrob}\\\hline

        $\Loc_\bullet$ & The net of local algebras& Definition~\ref{def:abstract-spin-chain}\\\hline
        $\Loc$ & The quasi-local algebra& Definition~\ref{def:abstract-spin-chain}\\\hline
        $\DHR(\Loc^{\bdy}_\bullet)$ & The boundary DHR category& Definition~\ref{def:boundary-dhr-bimodule}\\\hline

        $\BCond$ & The category of boundary conditions& Definition~\ref{def:boundary-condition}\\\hline
        $\CMdual$ & The Morita dual $\mathrm{Fun}_{\C^\rev}(\M,\M)$, using right module categories& Eq.~\eqref{eq:morita-dual}; Appendix~\ref{app:orientation:right-modules}\\\hline
        $\ReaDHR: (\CMdual)^{\rev} \to \DHR(\Loc^{\bdy}_\bullet)$ & The realization functor for DHR bimodules& Eq.~\eqref{eq:std-finite-piece}; Section~\ref{ComputationDHRB}; Appendix~\ref{app:orientation:dhr}\\\hline
        $\ReaBCond: \MQ^{\op} \to \BCond$ & The realization functor for boundary conditions& Definition~\ref{Def. The realization functor for boundary conditions}; Appendix~\ref{app:orientation:defects}\\\hline
    \end{tabular}
    \caption{Table of notations}
\end{table}

\subsection[C*-algebras and states]{\(C^*\)-algebras and states}
It has long been known that there are at least two formulations of quantum mechanics: The formulation via Hilbert spaces, and the formulation via operator algebras.
The latter one is known as the algebraic approach to quantum theory, which is the cornerstone of algebraic quantum field theory (AQFT); in this paper we exclusively use this approach.

The reader might ask: What's the point of abandoning familiar Hilbert spaces and fumbling with \(C^*\)-algebras? The answer is, as the notion of phases only makes sense in the thermodynamic limit, we hope to work with infinite lattice systems.
The Hilbert space approach would encounter some problems in this setting. For example, if we try to study the toric code on an infinite plane using the Hilbert space approach, we will find that the ground state cannot be normalized: the number of closed-string configurations is infinite, therefore the amplitude of each configuration has to vanish.

Mathematically, the problem is related to the definition of infinite products of Hilbert spaces. The main issue is that there is no canonical linear map $V_i \hookrightarrow V_i \otimes V_j$, which is required to define an inductive limit. One approach is to specify a unit vector $w_j \in V_j$ for each $j$ and define $V_i \hookrightarrow V_i \otimes V_j$ by $v \mapsto v \otimes w_j$ to define the infinite product of local Hilbert spaces as the inductive limit, however the resulting space could be strange; for example, different choices of special vectors $w_j$ could give rise to very different results \cite{Naaijkens2017_book}.

On the other hand, there is a good definition of the algebra of operators on infinite lattice systems: One first defines the algebra of operators on finite regions as usual, then defines the \emph{quasi-local algebra} as the inductive limit of algebras of operators on finite regions.
The algebraic approach also provides natural definitions of states and superselection sectors.
Therefore, we would briefly review basic notions of the algebraic approach in this subsection.
We refer the reader to \cite{Blackadar_OperatorAlg_2017,Murphy90} for detailed accounts of the theory of operator algebras.

The basic concept is \emph{$C^*$-algebras}, which are complete normed algebras equipped with a *-operation satisfying the $C^*$ axiom:
\begin{definition}[{\cite[Definition II.1.1.1]{Blackadar_OperatorAlg_2017}}]
A \emph{Banach algebra} is a complex algebra equipped with a submultiplicative norm (which means $\|xy\| \leq \|x\| \|y\|$), and the norm makes the algebra a Banach space.
An \emph{involution} on a Banach algebra $A$ is an anti-linear isometric anti-automorphism of order two, usually denoted $x \mapsto x^*$.
In other words, $(x+y)^*= x^*+y^*$, $(xy)^*=y^*x^*$, $(\lambda x)^*=\bar{\lambda} x^*$, $(x^*)^*=x$, $\|x^*\|=\|x\|$ for all $x, y \in A$, $\lambda \in \mathbb{C}$.
Finally, a \emph{\(C^*\)-algebra} is a Banach algebra $A$ equipped with an involution satisfying the \(C^*\)-axiom
\[
    \left\|x^* x\right\|=\|x\|^2.
\]
\end{definition}

\begin{example}[Equivalence of the algebraic approach and the Hilbert-space approach]
Given any Hilbert space $\mathcal{H}$, the algebra of bounded operators $\mathcal{L}(\mathcal{H})$ is a \(C^*\)-algebra, with the norm given by the usual operator norm ($\|T\| := \sup_{v \neq 0} \{ \|T(v)\| / \|v\| \}$) and the involution given by taking adjoint.
When $\mathcal{H}$ is $n$-dimensional, the set of $n \times n$ matrices form a \(C^*\)-algebra with the usual involution (conjugate transpose) and operator norm.
\end{example}

\subsection{States and representations}\label{subsec:states-and-reps}

Density matrices are translated to linear functionals in the algebraic formulation of quantum mechanics:

\begin{definition}
A linear functional $\phi$ on a \(C^*\)-algebra $A$ is \emph{positive} if for any positive element $a \in A$, $\phi(a) \geq 0$.
A \emph{state} is a positive linear functional satisfying $\phi(1_A) = 1$ where $1_A$ is the unit of $A$.
$\phi$ is a \emph{mixed state} if it can be written as a proper convex combination, i.e., $\phi = t \phi_1 + (1-t) \phi_2$, where $\phi_1, \phi_2$ are two different states and $t \in (0,1)$.
$\phi$ is a \emph{pure state} if it is not a mixed state.
\end{definition}

The physical meaning is, a state is a function sending operators to their expectation values.
For example, given a Hilbert space $\mathcal{H}$ and a unit vector $| \psi \rangle \in \mathcal{H}$, the linear functional $\phi \colon \mathcal{L}(\mathcal{H}) \to \mathbb{C}$ defined by $O \mapsto \langle \psi| O |\psi \rangle$ is a state.

We note that this formulation is equivalent to the density matrix formulation in the case of finite-dimensional Hilbert spaces.
However, when we are dealing with infinite-dimensional Hilbert spaces or spin systems with anomalous symmetries, the operator algebraic formulation is a lot more useful.

\begin{definition}
A \emph{representation} of a \(C^*\)-algebra $A$ is a *-homomorphism from $A$ to $\mathcal{L}(\mathcal{H})$ for some Hilbert space $\mathcal{H}$.
We denote the category of representations of $A$ and intertwiners as $\Mod(A)$.
\end{definition}

The physical meaning for a representation is a superselection sector.
Naturally, a simple representation is a simple superselection sector, and direct sums of representations correspond to direct sums of superselection sectors.

\begin{remark}
This provides a clarification of the long-standing confusion about the physical meaning of direct sums of anyons: An anyon should really be understood as a superselection sector, or a representation of the quasi-local algebra, rather than a quantum state.
\end{remark}

\begin{lemma}
Given a representation $\pi: A \to \mathcal{L}(\mathcal{H})$, if $V \subset \mathcal{H}$ is a closed invariant subspace, then its orthogonal complement $V^\perp$ is also an invariant subspace.
\end{lemma}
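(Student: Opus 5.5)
The argument uses only that $\pi$ is a $*$-homomorphism, so that the image $\pi(A)$ is closed under adjoints. Take $v \in V^\perp$ and $a \in A$. We must show $\pi(a)v \in V^\perp$, that is, $\langle w, \pi(a)v\rangle = 0$ for every $w \in V$.

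Move the operator to the other side of the inner product. Since $\pi$ preserves the involution, $\pi(a)^* = \pi(a^*)$, and therefore
\[
\langle w, \pi(a) v\rangle = \langle \pi(a)^* w, v\rangle = \langle \pi(a^*) w, v\rangle .
\]
By invariance of $V$ applied to the element $a^* \in A$, we have $\pi(a^*)w \in V$. Because $v \in V^\perp$, the right-hand side vanishes. Hence $\pi(a)v \perp V$ for every $w \in V$, which gives $\pi(a)V^\perp \subset V^\perp$.

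There is no real obstacle here. The step that carries the argument is the identity $\pi(a)^* = \pi(a^*)$, which is exactly where the $*$-structure is used; for a general algebra homomorphism the statement would fail. Closedness of $V$ is not needed for this invariance argument. It is only relevant afterwards, to ensure $\mathcal{H} = V \oplus V^\perp$, so that the representation decomposes as a direct sum of subrepresentations. That decomposition is presumably the use this lemma is put to next, namely semisimplicity-type statements for $\Mod(A)$.

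Equivalently, one can argue with the orthogonal projection $P$ onto $V$: invariance gives $P\pi(a)P = \pi(a)P$ for all $a$; applying this to $a^*$ and taking adjoints gives $P\pi(a)P = P\pi(a)$, so $P$ commutes with $\pi(A)$ and $1-P$ projects onto an invariant subspace.
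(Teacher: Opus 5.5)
Your proof is correct. The paper states this lemma without proof, as a standard fact, and your adjoint argument $\langle w, \pi(a)v\rangle = \langle \pi(a^*)w, v\rangle = 0$ is exactly the standard justification. Your side remarks are also accurate: closedness of $V$ is needed only for the decomposition $\mathcal{H} = V \oplus V^\perp$, and the projection formulation $P\pi(a) = \pi(a)P$ is an equivalent restatement.
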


There is a canonical way to obtain a representation from a state, called the GNS construction:

\begin{definition}[GNS construction]
Given a \(C^*\)-algebra $A$ and a state $\phi: A \to \mathbb{C}$, we define a representation $\pi_\phi: A \to \mathcal{L}(\mathcal{H}_\phi)$ as follows:

Step 1.
Take $A$ as a vector space and assign the pre-inner product $\langle a,b \rangle = \phi(a^*b)$ on it.
Note that it is not positive definite; the null subspace, $N_\phi = \{ x \in A: \phi(x^* x) = 0 \}$, is nonzero in general.

Step 2.
Perform Cauchy completion on the quotient space $A/N_\phi$ to obtain the Hilbert space $\mathcal{H}_\phi$.
(It is often denoted as $L^2(A,\phi)$ in the literature.) $1 + N_\phi$ is called the distinguished cyclic vector in $\mathcal{H}_\phi$.

Step 3.
Let $\pi_\phi(a)$ be the left-multiplication operator on $A/N_\phi$, i.e., $\pi_\phi(a)\left(x+N_\phi\right)=a x+N_\phi$.
Note that $\pi_\phi(a)$ is a bounded operator, so it extends to a bounded operator on $\mathcal{H}_\phi$.
In other words, $\pi_\phi$ defines a representation $A \to \mathcal{L}(\mathcal{H}_\phi)$.
\end{definition}

We would use $\GNS(\phi)$ to denote the module obtained by the GNS construction. The physical meaning of $\GNS(\phi)$ is the superselection sector that the state $\phi$ lives in.

\begin{example}[GNS constructions in the Ising chain]
    Consider the infinite Ising chain with $Z_2$ symmetry imposed. Let $A$ be the algebra of symmetric quasi-local operators, which is generated by $X_i$ and $Z_i Z_j$.

    Take the ground state $\rho_1$ of the trivial phase, which satisfies $\rho_1(X_i) = 1, \rho_1(Z_i Z_j) = 0$.
    Note that $\rho_1((X_i - 1)^* (X_i - 1)) = 0$, thus $X_i - 1 \in N_{\rho_1}$. This means the vector $X_i$ is identified with the vector $1$ in $\mathcal{H}_{\rho_1}$, which means the distinguished cyclic vector is in the $+1$ eigenspace of $X_i$. On the other hand, $Z_i Z_j$ is identified with neither $0$ nor $1$.

    Note that every $a \in A$ is a linear combination of products of $X_i$ and $Z_i Z_j$, and we could always commute all $X$ to the rightmost by the anti-commutation relation.
    On the rightmost, $X_i$ is identified with $1$.
    Therefore, we have found a basis of $\mathcal{H}_{\rho_1}$: $\{ 1, Z_i Z_j, Z_i Z_j Z_k Z_l, \dots \}$.
    Intuitively, it is the Hilbert space spanned by vector states with even numbers of spin flips.
    
    Similarly, if we take the ground state $\rho_2$ of the symmetry breaking phase, which satisfies $\rho_2(X_i) = 0, \rho_2(Z_i Z_j) = 1$, we could find a basis of $\mathcal{H}_{\rho_2}$: $\{ 1, X_i, X_i X_j, \dots \}$.
    Intuitively, it is the Hilbert space spanned by vector states with even numbers of domain walls.

    Finally, we may also take states with defects. 
    For example, if we create a single spin flip at site $0$ on the trivial phase ground state and denote the resulting state as $\phi_1$, it can be shown $\mathcal{H}_{\phi_1}$ again has a basis $\{ 1, Z_i Z_j, Z_i Z_j Z_k Z_l, \dots \}$, but now $X_0$ is identified with $-1$ instead of $1$.
    Using techniques in Section~\ref{subsec: ReaBCond is an equivalence of categories}, we could show that $\mathcal{H}_{\rho_1}$ and $\mathcal{H}_{\phi_1}$ are non-isomorphic simple modules, which confirms the physical intuition that a single spin flip in the trivial phase is a simple topological defect.
\end{example}

\begin{theorem}[{\cite[II.6.4.8.]{Blackadar_OperatorAlg_2017}}]
If $\phi$ is a state of $A$, then $\pi_\phi$ is irreducible if and only if $\phi$ is a pure state.
\end{theorem}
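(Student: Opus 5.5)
The plan is to prove both directions through the correspondence between positive functionals dominated by \(\phi\) and positive operators in the commutant \(\pi_\phi(A)'\). Write \(\Omega\) for the distinguished cyclic vector \(1+N_\phi\) of \(\mathcal{H}_\phi\), so that \(\phi(a)=\langle \Omega,\pi_\phi(a)\Omega\rangle\) and \(\pi_\phi(A)\Omega\) is dense. Two standard facts will be used. The first is Schur's lemma for \(C^*\)-algebras: \(\pi_\phi\) is irreducible if and only if \(\pi_\phi(A)'=\mathbb{C}1\). This holds because the commutant is a von Neumann algebra, and any nontrivial one contains a nontrivial projection whose range is a closed invariant subspace; conversely, the projection onto a closed invariant subspace commutes with \(\pi_\phi(A)\) by the preceding lemma on orthogonal complements. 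The second is a Radon--Nikodym type statement: if \(\psi\) is a positive functional with \(\psi\le\phi\), then there is a unique \(T\in\pi_\phi(A)'\) with \(0\le T\le 1\) and \(\psi(a)=\langle\Omega,T\pi_\phi(a)\Omega\rangle\).

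\emph{Irreducible implies pure.} Suppose \(\phi=t\phi_1+(1-t)\phi_2\) with \(t\in(0,1)\). Then \(t\phi_1\le\phi\). Apply the Radon--Nikodym statement as follows.
\begin{itemize}
\item Define a sesquilinear form on \(\pi_\phi(A)\Omega\) by \(B(\pi_\phi(a)\Omega,\pi_\phi(b)\Omega)=t\phi_1(a^*b)\). Cauchy--Schwarz together with \(t\phi_1(a^*a)\le\phi(a^*a)=\|\pi_\phi(a)\Omega\|^2\) shows that \(B\) is well defined, that is, it vanishes on \(N_\phi\).
\item The same estimate shows \(B\) is bounded by \(1\). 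It therefore extends to \(\mathcal{H}_\phi\) and is represented by an operator \(T\) with \(0\le T\le 1\).
\item \(T\) commutes with \(\pi_\phi(A)\), because \(B(\pi_\phi(c)\xi,\eta)=B(\xi,\pi_\phi(c^*)\eta)\) on the dense subspace.
\end{itemize}
Irreducibility gives \(T=\lambda 1\), hence \(t\phi_1=\lambda\phi\). Evaluating at \(1_A\) gives \(\lambda=t\), so \(\phi_1=\phi\), and likewise \(\phi_2=\phi\). This contradicts \(\phi_1\ne\phi_2\), so \(\phi\) is pure.

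\emph{Pure implies irreducible.} Suppose \(\pi_\phi\) is reducible. Then \(\pi_\phi(A)'\) contains a nontrivial projection \(P\), and we aim to decompose \(\phi\) along it.
\begin{itemize}
\item Set \(\psi_1(a)=\langle\Omega,P\pi_\phi(a)\Omega\rangle\) and \(\psi_2(a)=\langle\Omega,(1-P)\pi_\phi(a)\Omega\rangle\). Both are positive, since \(P\pi_\phi(a^*a)=\pi_\phi(a)^*P\pi_\phi(a)\), and \(\psi_1+\psi_2=\phi\).
\item Let \(t=\|P\Omega\|^2\). If \(P\Omega=0\), then \(P\pi_\phi(a)\Omega=\pi_\phi(a)P\Omega=0\) for all \(a\), so \(P=0\) by cyclicity. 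Symmetrically \(P\Omega\ne\Omega\), so \(t\in(0,1)\).
\item Put \(\phi_1=\psi_1/t\) and \(\phi_2=\psi_2/(1-t)\). These are states with \(\phi=t\phi_1+(1-t)\phi_2\).
\end{itemize}

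The main obstacle is checking \(\phi_1\ne\phi_2\). I would argue that if \(\phi_1=\phi_2\), then \(\phi_1=\phi\), so \(\langle\Omega,(P-t)\pi_\phi(a)\Omega\rangle=0\) for all \(a\). Density of \(\pi_\phi(A)\Omega\) then forces \((P-t)\Omega=0\), i.e. \(P\Omega=t\Omega\). Since \(P\) is a projection, \(t\in\{0,1\}\), contradicting \(t\in(0,1)\). Hence \(\phi\) is mixed, which proves the contrapositive.

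In summary, the only genuine input is the existence of a nontrivial projection in a nontrivial commutant, which comes from the von Neumann algebra structure (e.g. via spectral projections of a self-adjoint element of the commutant). All other steps are direct manipulations with the GNS cyclic vector.
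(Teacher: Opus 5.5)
Your proof is correct. The paper gives no argument of its own here and simply cites Blackadar II.6.4.8; what you wrote is essentially that standard proof, combining Schur's lemma for $\pi_\phi(A)'$ with the Radon--Nikodym correspondence between positive functionals dominated by $\phi$ and positive contractions in $\pi_\phi(A)'$ (your direct splitting of $\phi$ by a projection $P$ in the converse direction is that correspondence applied with $T=P$).
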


We refer the reader to \cite[II.6.]{Blackadar_OperatorAlg_2017} for more details about states and representations of \(C^*\)-algebras.

\subsection{Hilbert modules and bimodules}
Representations are left modules of $C^*$ algebras. Right modules and bimodules are also important in the study of spin systems.

\begin{definition}
A \emph{right Hilbert $A$-module} is a vector space $X$ which is algebraically a right $A$-module, equipped with a sesquilinear map $\langle\cdot \mid \cdot\rangle: X \times X \rightarrow A$, satisfying the following conditions:
    \begin{enumerate}
        \item $\langle x \mid y\rangle^*=\langle y \mid x\rangle$;
        \item (Linearity) $\langle x \mid y a\rangle=\langle x \mid y\rangle a$;
        \item (Positivity) $\langle x \mid x\rangle \geq 0$, with equality if and only if $x=0$. Here $x \geq 0$ means $x$ is a positive operator;
        \item (Completeness) The norm $\|x\|:=\|\langle x \mid x\rangle\|^{\frac{1}{2}}$ is Cauchy complete.
    \end{enumerate}
\end{definition}

Morphisms of right Hilbert modules are called adjointable operators:
\begin{definition}
    Let $X$ and $Y$ be two Hilbert $A$-modules. An \emph{adjointable operator} from $X$ to $Y$ is an $A$-module intertwiner $T: X \rightarrow Y$ such that there is another $A$-module intertwiner $T^*: Y \rightarrow X$, called the adjoint of $T$, satisfying $\langle T(x) \mid y\rangle_Y=\left\langle x \mid T^*(y)\right\rangle_X$.
    The space of adjointable operators from $X$ to $Y$ is denoted $\mathcal{L}(X, Y)$.
\end{definition}
It is straightforward to see $\mathcal{L}(X, X)$ has a canonical structure of a unital $\mathrm{C}^*$-algebra. 

Finally, we have Hilbert bimodules, which are bimodules with appropriate structures compatible with $C^*$-algebras:
\begin{definition}
    A \emph{Hilbert $(A,B)$-bimodule} is a right Hilbert $B$-module $X$, together with a unital *-homomorphism $A \rightarrow \mathcal{L}(X, X)$.
\end{definition}

An important notion of Hilbert modules is projective basis:
\begin{definition}
Let $X$ be a right Hilbert $A$-module. A \emph{projective basis} is a finite subset $\left\{s_i\right\}_{i=1}^n \subseteq X$ such that
\[
    \sum_i s_i\left\langle s_i \mid x \right\rangle = x \quad \forall x \in X.
\]
A Hilbert (bi-)module is said to be \emph{right finite} if there exists a projective basis. 
\end{definition}
Note that a right Hilbert module has a projective basis if and only if it is projective and finitely generated.

\subsection{Toolkit for AF algebras}
In this section, we will introduce the language of \emph{approximately finite-dimensional \(C^*\)-algebras} (or \emph{AF algebras} for brevity) and basic tools for dealing with them.
This language is indispensable in describing physical systems at the thermodynamic limit, as the Hilbert space language becomes unusable in this setting.

Though the definition looks fancy, the physical intuition behind the definition is clear: As it doesn't make sense to define a \emph{global} Hilbert space of a physical system at the thermodynamic limit, we study information encoded in arbitrarily large \emph{finite} regions.

\begin{definition}
An \emph{inductive system} of \(C^*\)-algebras is a collection of data $\{\left(A_i, \phi_{i j}\right): i, j \in I, i \leq j\}$, where $I$ is a directed set, the $A_i$ are $\mathrm{C}^*$-algebras, and $\phi_{i j}: A_i \to A_j$ is a $*$-homomorphism, satisfying the consistency relation $\phi_{i k}=\phi_{j k} \circ \phi_{i j}$ for $i \leq j \leq k$.
The \emph{inductive limit} $\underset{\longrightarrow}{\lim }\left(A_i, \phi_{i j}\right)$ is the norm completion of the algebraic inductive limit (with zero-norm elements quotiented out), where the \(C^*\)-norm is defined by
\begin{equation}
\|a\|=\lim _{j>i}\left\|\phi_{i j}(a)\right\|=\inf _{j>i}\left\|\phi_{i j}(a)\right\|.
\end{equation}
\end{definition}

\begin{remark}
When all $\phi_{ij}$ are inclusion maps $\underset{\longrightarrow}{\lim }\left(A_i, \phi_{i j}\right)$ is the norm completion of the union $\bigcup_i A_i$; this is the case for most physical systems of interest.
\end{remark}

\begin{definition}[AF algebra]
An \emph{approximately finite-dimensional $C^*$-algebra}, or \emph{AF algebra}, is an inductive limit of finite-dimensional \(C^*\)-algebras.
\end{definition}

\begin{example}[Quasi-local algebra]
Any infinite spin system, regardless of its dimension, naturally gives rise to an AF algebra as follows.
We take $I$ to be the set of all finite regions in the system, with $X \leq Y$ if $X$ is a subregion of $Y$.
For any finite region $X$, define $A_X$ to be the algebra of observables supported on the region $X$.
$\phi_{X Y}: A_X \to A_Y$ is defined as tensoring identity operators on all sites in $Y \setminus X$.
The \emph{quasi-local algebra} is defined as the inductive limit $\underset{\longrightarrow}{\lim }\left(A_X, \phi_{X Y}\right)$.
It is the foundation of many works on the operator algebraic aspects of topological orders; see the introduction and references therein.
\end{example}

To supply some physical intuition, we show that a state defined on the quasi-local algebra is precisely a collection of compatible reduced density matrices defined on all finite regions:

\begin{lemma}
Given an AF algebra $A = \underset{\longrightarrow}{\lim }\left(A_i, \phi_{i j}\right)$, a state $\rho: A \to \mathbb{C}$ determines a set of states $\{\rho_i: A_i \to \mathbb{C}\}_{i \in I}$, which satisfies the compatibility condition $\rho_i = \rho_j \circ \phi_{ij} \quad \forall i \leq j$.
Conversely, any set of states $\{\rho_i: A_i \to \mathbb{C}\}_{i \in I}$ satisfying the compatibility condition uniquely determines a state $\rho: A \to \mathbb{C}$.
\end{lemma}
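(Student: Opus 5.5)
The forward direction comes from restriction. Given a state \(\rho\) on \(A\), I will take \(\rho_i := \rho \circ \phi_i\), where \(\phi_i \colon A_i \to A\) is the canonical \(*\)-homomorphism into the inductive limit. Each \(\rho_i\) is positive, because \(*\)-homomorphisms send positive elements to positive elements. It is unital, since in the AF setting the connecting maps are unital, so \(\phi_i(1_{A_i}) = 1_A\). Compatibility then follows at once from \(\phi_i = \phi_j \circ \phi_{ij}\), which gives \(\rho_i = \rho_j \circ \phi_{ij}\).

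For the converse, I will first define \(\rho\) on the algebraic inductive limit \(A_0 = \bigcup_i \phi_i(A_i)\). I set \(\rho(\phi_i(a)) := \rho_i(a)\) and check the following in turn.
\begin{enumerate}
\item \emph{Well-definedness.} Suppose \(\phi_i(a) = \phi_k(b)\). By directedness, choose \(j \geq i,k\) with \(\phi_{ij}(a) - \phi_{kj}(b)\) in the kernel of \(\phi_j\); this happens exactly when its norm tends to zero in the limit. Compatibility gives \(\rho_i(a) = \rho_j(\phi_{ij}(a))\) and \(\rho_k(b) = \rho_j(\phi_{kj}(b))\). Since \(\rho_j\) is a state, \(|\rho_j(x)| \le \|x\|\), and the limit-norm formula \(\|\phi_i(a)\| = \inf_{j} \|\phi_{ij}(a)\|\) shows the two values agree.
\item \emph{Linearity.} Any two elements of \(A_0\) can be pushed into a common \(A_j\), where \(\rho_j\) is linear.
\item \emph{Boundedness.} The same estimate \(|\rho_j(\phi_{ij}(a))| \le \|\phi_{ij}(a)\|\) for all \(j \ge i\), combined with the infimum formula, gives \(|\rho(x)| \le \|x\|\) on \(A_0\).
\end{enumerate}
Hence \(\rho\) extends uniquely by continuity to the norm completion \(A\). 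Uniqueness holds because any state with the prescribed restrictions agrees with \(\rho\) on the dense subalgebra \(A_0\), and states are continuous.

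It remains to show that the extension is a state. Unitality is immediate from \(\rho_i(1) = 1\). Positivity is the main obstacle, since a positive element of \(A\) need not lie in any \(\phi_i(A_i)\). I will handle it by approximation. Every positive \(x \in A\) has the form \(y^*y\). Choose \(y_n \in A_0\) with \(y_n \to y\); then \(y_n^* y_n \to x\), and each \(y_n^* y_n\) is the image of a positive element \(b_n^* b_n\) in some \(A_{i_n}\). Therefore \(\rho(y_n^* y_n) = \rho_{i_n}(b_n^* b_n) \ge 0\), and by continuity \(\rho(x) \ge 0\).

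Alternatively, one can use that a bounded functional \(\rho\) with \(\|\rho\| = \rho(1) = 1\) is automatically positive. The inequality \(\|\rho\| \le 1\) comes from the boundedness step, and \(\rho(1) = 1\) forces equality.
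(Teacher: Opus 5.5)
Your proposal is correct and follows the same route as the paper: for the forward direction you restrict along the canonical maps $A_i \to A$, and for the converse you define a bounded functional on the algebraic inductive limit, extend it by density, and check that it is a state. What you add are the details the paper calls straightforward: well-definedness and the bound $|\rho(x)|\le\|x\|$ via the infimum formula for the limit norm, and positivity by approximation or by $\|\rho\|=\rho(1)=1$. Your explicit assumption that the connecting maps are unital is one the paper also makes implicitly.
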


\begin{proof}
For the first part, note that there is a canonical $*$-homomorphism $\iota_i: A_i \to A$ in the definition of the inductive limit, therefore $\rho \circ \iota_i: A_i \to \mathbb{C}$ is a state on $A_i$; obviously this gives a set of states $\{\rho_i: A_i \to \mathbb{C}\}_{i \in I}$ which satisfies the compatibility condition.
For the second part, note that $\{\rho_i: A_i \to \mathbb{C}\}_{i \in I}$ naturally induces a bounded linear functional on the algebraic inductive limit.
On the other hand, the algebraic inductive limit (with zero-norm elements quotiented out) is dense in $A$, thus any linear bounded functional on the algebraic inductive limit extends uniquely to $A$.
It is straightforward to verify that this extension defines a state on $A$.
\end{proof}

\paragraph{Descending to finite pieces} An important strategy for dealing with AF algebras is \emph{descending to finite pieces}.
Namely, any state (resp. representation, Hilbert bimodule, etc.) on an AF algebra $A = \underset{\longrightarrow}{\lim} \left(A_i, \phi_{i j}\right)$ could be understood as the result of \emph{gluing} a collection of states (resp. representations, Hilbert bimodules, etc.) defined on the finite pieces $A_i$.
The problem is often quite easy on $A_i$, which are multi-matrix algebras.

Here is one example of the strategy: Given any state $\rho: A \to \mathbb{C}$, we would expect $\operatorname{GNS} (\rho)$ could also be obtained by gluing finite pieces $\operatorname{GNS} (\rho_i)$ together.
This is indeed the case:

\begin{lemma}
Consider an AF algebra $A = \underset{\longrightarrow}{\lim} \left(A_i, \phi_{i j}\right)$, a state $\rho: A \to \mathbb{C}$, and the set of states $\{\rho_i: A_i \to \mathbb{C}\}_{i \in I}$ defined by $\rho_i = \rho \circ \iota_i$, as in the previous lemma.
Then each $\phi_{ij}: A_i \to A_j$ induces an isometry $\tilde{\phi}_{ij}: \mathcal{H}_{\rho_i} \to \mathcal{H}_{\rho_j}$.
Moreover, $\tilde{\phi}_{ij}$ is compatible with ${\phi}_{ij}$, expressed by the following commutative diagram:
\[\input{diagrams/gns-isometry-0479.tex}\]

\end{lemma}

\begin{proof}
First, as $\rho_i(a) = \rho_j(\phi_{ij}(a))$, $\tilde{\phi}_{ij}: A_i/N_i \to A_j/N_j$ is a well-defined isometry; in particular it is a bounded map.
As $A_i/N_i$ is dense in $\mathcal{H}_{\rho_i}$, $A_j/N_j$ is dense in $\mathcal{H}_{\rho_j}$, $\tilde{\phi}_{ij}$ could be extended to $\mathcal{H}_{\rho_i} \to \mathcal{H}_{\rho_j}$.
Compatibility of $\tilde{\phi}_{ij}$ with ${\phi}_{ij}$ is straightforward to verify.
\end{proof}

\begin{theorem}[GNS construction and inductive limit commutes]
We have $\mathcal{H}_\rho = \underset{\longrightarrow}{\lim} (\mathcal{H}_{\rho_i}, \tilde{\phi}_{ij})$.
Moreover, the inductive limit of Hilbert spaces is compatible with the inductive limit of algebras.
\end{theorem}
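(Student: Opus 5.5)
The idea is to build a unitary from the Hilbert-space inductive limit $\varinjlim(\mathcal{H}_{\rho_i},\tilde{\phi}_{ij})$ onto $\mathcal{H}_\rho$, and then check that it intertwines the actions of $A$. By the preceding lemma, the maps $\tilde{\phi}_{ij}$ are isometries satisfying $\tilde{\phi}_{jk}\circ\tilde{\phi}_{ij}=\tilde{\phi}_{ik}$. This identity holds on the dense subspaces $A_i/N_i$ because $\phi_{ik}=\phi_{jk}\circ\phi_{ij}$, and it extends to the completions by continuity. So $(\mathcal{H}_{\rho_i},\tilde{\phi}_{ij})$ is an inductive system of Hilbert spaces with isometric connecting maps. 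Its inductive limit $\mathcal{K}$ is the completion of the algebraic limit $\bigcup_i \mathcal{H}_{\rho_i}/\!\sim$, with the inner product inherited from any stage; this is well defined precisely because the connecting maps are isometric.

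\textbf{Constructing the maps.} For each $i$, define $U_i:\mathcal{H}_{\rho_i}\to\mathcal{H}_\rho$ on the dense subspace by $a+N_i\mapsto \iota_i(a)+N_\rho$. It is well defined and isometric since $\rho(\iota_i(a)^*\iota_i(a))=\rho_i(a^*a)$, so it extends to an isometry. The relation $\iota_j\circ\phi_{ij}=\iota_i$ gives $U_j\circ\tilde{\phi}_{ij}=U_i$. The universal property then yields an isometry $U:\mathcal{K}\to\mathcal{H}_\rho$.

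\textbf{Surjectivity.} This is the only point that needs an approximation argument. The range of $U$ contains $\bigcup_i\iota_i(A_i)+N_\rho$. Since $\bigcup_i\iota_i(A_i)$ is norm-dense in $A$ and the quotient map $A\to A/N_\rho\subset\mathcal{H}_\rho$ is norm-contractive ($\|a+N_\rho\|^2=\rho(a^*a)\le\|a\|^2$), this image is dense in $\mathcal{H}_\rho$. An isometry has closed range, so $U$ is unitary.

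\textbf{Compatibility with the algebra actions.} Each $\mathcal{H}_{\rho_i}$ carries the representation $\pi_{\rho_i}$ of $A_i$, and the earlier commutative diagram says $\tilde{\phi}_{ij}\pi_{\rho_i}(a)=\pi_{\rho_j}(\phi_{ij}(a))\tilde{\phi}_{ij}$. So an element $a\in A_i$ acts compatibly on all later stages, hence on the algebraic limit, with norm at most $\|a\|$. The corresponding check on $\mathcal{H}_\rho$ is $U_i\pi_{\rho_i}(a)=\pi_\rho(\iota_i(a))U_i$, which is immediate on the dense vectors $b+N_i$ for $b\in A_i$. Thus $U$ intertwines the action of $\bigcup_i\iota_i(A_i)$ on $\mathcal{K}$ with $\pi_\rho$. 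Since both actions are contractive, this extends by continuity to all of $A$. Transporting along $U$ therefore identifies $\pi_\rho$ with the limit representation, which is the claimed compatibility of the two inductive limits.
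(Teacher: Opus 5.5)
Your proof is correct and takes the same route as the paper. Both arguments identify $\varinjlim(\mathcal{H}_{\rho_i},\tilde{\phi}_{ij})$ and $\mathcal{H}_\rho$ as completions of $\bigcup_i A_i/N_i$ through the isometries induced by $\iota_i$, with density coming from the density of $\bigcup_i\iota_i(A_i)$ in $A$; yours just makes explicit what the paper leaves implicit, namely the unitary $U$, the estimate $\rho(a^*a)\le\|a\|^2$ that transfers density, and the intertwining check for the algebra actions.
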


\begin{proof}
For the first part, our strategy is to prove that both $\underset{\longrightarrow}{\lim} (\mathcal{H}_{\rho_i}, \tilde{\phi}_{ij})$ and $\mathcal{H}_\rho$ are the norm completion of $\bigcup_i (A_i/N_i)$.
On the one hand, since each $\tilde{\phi}_{ij}: \mathcal{H}_{\rho_i} \to \mathcal{H}_{\rho_j}$ is an isometry (thus an embedding), $\underset{\longrightarrow}{\lim} (\mathcal{H}_{\rho_i}, \tilde{\phi}_{ij})$ is the completion of $\bigcup_i \mathcal{H}_{\rho_i}$.
It implies $\underset{\longrightarrow}{\lim} (\mathcal{H}_{\rho_i}, \tilde{\phi}_{ij})$ is the completion of $\bigcup_i (A_i/N_i)$, as $A_i/N_i$ is dense in $\mathcal{H}_{\rho_i}$.
On the other hand, note that there is an isometry $\tilde{\iota}_i: A_i/N_i \to A/N$ defined as $a_i + N_i \mapsto \iota(a_i) + N$.
It follows $\bigcup_i (A_i/N_i)$ is dense in $A/N$, as $A$ is the inductive limit of $\{A_i\}$.
Since $\mathcal{H}_\rho$ is the completion of $A/N$, density of $\bigcup_i (A_i/N_i)$ in $A/N$ implies $\mathcal{H}_\rho$ is the completion of $\bigcup_i (A_i/N_i)$.
This finishes the proof of the first part.
The second part follows the fact that $\tilde{\phi}_{ij}$ is compatible with ${\phi}_{ij}$.
\end{proof}

Also, tensor products of Hilbert bimodules between AF algebras could be obtained as the inductive limit of tensor products of finite pieces:

\begin{theorem}\label{thm:tensor-product-inductive-limit}
Let $A = \varinjlim A_i$ be an AF algebra, and let $M$, $N$ be Hilbert $A$-bimodules obtained as inductive limits $M = \varinjlim M_i$, $N = \varinjlim N_i$ of Hilbert $A_i$-bimodules.
Then the relative tensor product $M \otimes_A N$ is canonically isomorphic to $\varinjlim (M_i \otimes_{A_i} N_i)$.
\end{theorem}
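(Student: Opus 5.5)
The idea is to construct compatible maps at each finite level, glue them into a map out of the algebraic inductive limit, show it preserves inner products, and then show it has dense range. I will make the implicit hypotheses precise. The bimodules come with connecting maps $\psi_{ij}\colon M_i\to M_j$, $\chi_{ij}\colon N_i\to N_j$ that are isometric and compatible with the algebra maps $\phi_{ij}$:
\[
\psi_{ij}(a m b)=\phi_{ij}(a)\psi_{ij}(m)\phi_{ij}(b),\qquad \langle \psi_{ij}m\mid\psi_{ij}m'\rangle_{A_j}=\phi_{ij}(\langle m\mid m'\rangle_{A_i}),
\]
and similarly for $N$. The limits $M=\varinjlim M_i$, $N=\varinjlim N_i$ are completions of the algebraic unions.

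\textbf{Step 1: finite-level maps.} For $i\le j$ I define
\[
\Theta_{ij}\colon M_i\otimes_{A_i}N_i\to M_j\otimes_{A_j}N_j,\qquad m\otimes n\mapsto \psi_{ij}(m)\otimes\chi_{ij}(n).
\]
This map is well defined on the balanced algebraic tensor product, because $\psi_{ij}(ma)\otimes\chi_{ij}(n)=\psi_{ij}(m)\phi_{ij}(a)\otimes\chi_{ij}(n)=\psi_{ij}(m)\otimes\chi_{ij}(an)$. It preserves the $A_j$-valued inner product, computed via $\langle m\otimes n\mid m'\otimes n'\rangle=\langle n\mid \langle m\mid m'\rangle n'\rangle$:
\[
\langle \chi n\mid \phi(\langle m\mid m'\rangle)\chi n'\rangle=\langle\chi n\mid \chi(\langle m\mid m'\rangle n')\rangle=\phi(\langle n\mid\langle m\mid m'\rangle n'\rangle).
\]
So $\Theta_{ij}$ is isometric and passes to the completions. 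It intertwines the bimodule actions along $\phi_{ij}$, and the maps satisfy $\Theta_{jk}\Theta_{ij}=\Theta_{ik}$. Hence $\varinjlim(M_i\otimes_{A_i}N_i)$ is a well-defined Hilbert $A$-bimodule.

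\textbf{Step 2: comparison map.} The canonical maps $M_i\to M$ and $N_i\to N$ induce maps
\[
\Theta_i\colon M_i\otimes_{A_i}N_i\to M\otimes_A N,\qquad m\otimes n\mapsto \iota(m)\otimes\iota(n).
\]
By the same computation as in Step 1, with $\iota$ in place of $\phi_{ij}$, each $\Theta_i$ preserves $A$-valued inner products, and the family is compatible with the $\Theta_{ij}$. By the universal property of the inductive limit (completion of the union), the $\Theta_i$ assemble into an isometric bimodule map
\[
\Theta\colon \varinjlim(M_i\otimes_{A_i}N_i)\to M\otimes_A N.
\]
Being an isometry of Hilbert modules, $\Theta$ is injective and has closed range.

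\textbf{Step 3: surjectivity.} This is the main point. It suffices to show the range is dense. Elementary tensors $m\otimes n$ with $m\in M$, $n\in N$ span a dense subspace of $M\otimes_AN$. Approximate $m$ by $\iota(m_i)$ and $n$ by $\iota(n_i)$, after passing to a common index $i$ using directedness. The estimate
\[
\|x\otimes y\|\le\|x\|\,\|y\|
\]
then gives
\[
\|m\otimes n-\iota m_i\otimes\iota n_i\|\le \|m-\iota m_i\|\,\|n\|+\|\iota m_i\|\,\|n-\iota n_i\|,
\]
which can be made small. The right-hand side uses only that the left action of $A$ on $N$ is by bounded operators of norm at most $\|a\|$. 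Therefore the range of $\Theta$ contains every elementary tensor, hence is dense, and so $\Theta$ is a unitary isomorphism.

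Canonicity is immediate because $\Theta$ is determined on elementary tensors. Finally, I need to check that the null vectors quotiented out at each finite level are compatible with those in the limit. This also follows from Step 1: $\Theta_{ij}$ and $\Theta_i$ preserve inner products, so they send null vectors to null vectors and identify nothing that is not already null.
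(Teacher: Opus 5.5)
Your argument is correct. The paper does not prove this statement; it only remarks that ``the proof is tedious but straightforward,'' and your proposal is the standard argument that fills that gap. It has three parts: level-wise maps $\Theta_{ij}$ that preserve inner products along $\phi_{ij}$, an isometry $\Theta$ glued from the $\Theta_i$, and surjectivity from the density of $\bigcup_i\iota_i(M_i)$ and $\bigcup_i\iota_i(N_i)$, combined with $\|x\otimes y\|\le\|x\|\,\|y\|$ and the closedness of the range.

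One hypothesis is implicit. The word ``isometric'' in Step 1 relies on the $\phi_{ij}$ being injective; without that, $\Theta_{ij}$ is only contractive and you would work with the limit seminorm instead. Injectivity holds in the paper's setting, where the connecting maps are unital injective inclusions. Your proof also never uses the generation condition $G(x)=k^x(F(x))\triangleleft B$ from the AF-action definition, and it is not needed.
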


The proof is tedious but straightforward. This technique would be essential for establishing the bulk-boundary correspondence in Section~\ref{section: bulk-boundary correspondence}.

\subsection{Interactions and ground states}
We explain how to define an interaction for an infinite system.
On an infinite system, we cannot define the Hamiltonian as an ordinary operator, as the sum of infinitely many local terms is not defined.
Therefore, we adopt the following definition:

\begin{definition}
Given a net of local algebras $\Loc_\bullet$, an \emph{interaction} is a map $\Phi$, sending a finite interval $I \subset \mathbb{Z}$ to a self-adjoint operator $\Phi(I) \in \Loc_I$.
We define the local Hamiltonian associated to $\Phi$ on $I$ to be $H_\Phi(I) := \sum_{J \subset I} \Phi(J)$.
\end{definition}

We would use \emph{interaction} and \emph{Hamiltonian} interchangeably, but the reader should always remember there is no honest Hamiltonian on infinite systems. In this paper we would only deal with commuting-projector interactions, which are defined as follows:

\begin{definition}
A \emph{commuting-projector interaction} is an interaction $\Phi$ such that $\Phi(I)^2 = \Phi(I)$ for all $I$, and $\Phi(I) \Phi(J) = \Phi(J) \Phi(I)$ for any $I,J$.
\end{definition}

\begin{definition}
A \emph{ground state} of a commuting-projector interaction $\Phi$ is a state $\rho \colon \Loc \to \mathbb{C}$, satisfying the condition that for any interval $I$, $\rho(H_\Phi(I)) = 0$.
\end{definition}

\begin{example}
The commuting-projector interaction for the trivial phase of the Ising chain is defined by $\Phi([i,i]) = 1-X_i$ and $\Phi([i,j]) = 0$ if $j-i > 0$; the ground state $\rho$ is the unique state satisfying $\rho(X_n) = 1$ for all $n \in \mathbb{Z}$.
On the other hand, the commuting-projector interaction for the symmetry-breaking phase of the Ising chain is defined by $\Phi([i,i+1]) = 1- Z_i Z_{i+1}$ and $\Phi([i,j]) = 0$ if $j-i \neq 1$; the ground state $\rho$ is the unique state satisfying $\rho(Z_n Z_{n+1}) = 1$ for all $n \in \mathbb{Z}$.
\end{example}

We note that this definition of ground state, which is dubbed \emph{H-type ground state} in \cite{MYLG25}, could not be generalized to arbitrary interactions; the usual definition of ground state in the operator algebraic theory of quantum spin systems, dubbed \emph{O-type ground state} in \cite{MYLG25}, is the state with the lowest energy in a simple superselection sector.

We note that the ground state of a commuting-projector interaction is indeed an eigenstate of all local projectors, in the following sense:

\begin{proposition}\label{prop:ground_state_projector_eigenstate}
The ground state $\rho$ of a commuting-projector interaction $\Phi$ satisfies
\[
    \rho(a \Phi(I)) = \rho(\Phi(I) a) = 0
\]
for any interval $I$ and any operator $a \in \Loc$.
\end{proposition}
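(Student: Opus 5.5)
The plan is to derive this from the Cauchy--Schwarz inequality for positive functionals, using that $\Phi(I)$ is a projector with vanishing expectation. First, $\rho(\Phi(I)) = 0$ for every interval $I$. By definition of a ground state, $\rho(H_\Phi(I)) = \sum_{J \subset I} \rho(\Phi(J)) = 0$. Each $\Phi(J)$ is a self-adjoint idempotent, hence $\Phi(J) = \Phi(J)^*\Phi(J) \geq 0$, so $\rho(\Phi(J)) \geq 0$ for every $J$. A sum of nonnegative numbers vanishes only if every term vanishes; taking the term $J = I$ gives $\rho(\Phi(I)) = 0$.

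Next, recall the Cauchy--Schwarz inequality for a positive functional, $|\rho(b^* c)|^2 \leq \rho(b^* b)\,\rho(c^* c)$. This is just the Cauchy--Schwarz inequality for the positive semidefinite sesquilinear form $\langle b, c\rangle = \rho(b^* c)$ used in the GNS construction. Write $P = \Phi(I)$, so $P = P^* = P^2$. For $\rho(aP)$, set $b = a^*$ and $c = P$:
\[
|\rho(aP)|^2 \leq \rho(a a^*)\,\rho(P^* P) = \rho(aa^*)\,\rho(P) = 0 .
\]
For $\rho(Pa)$, set $b = P$ and $c = a$:
\[
|\rho(Pa)|^2 \leq \rho(P^*P)\,\rho(a^*a) = 0 .
\]
Alternatively, one can use $\rho(Pa) = \overline{\rho(a^* P)}$, which holds because positive functionals are hermitian, and apply the first case to $a^*$.

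No step is genuinely hard. The only point needing care is the first one: it uses that all interaction terms are positive, so the vanishing of the total energy forces each individual term $\rho(\Phi(J))$ to vanish. The argument never uses the commutativity of the projectors. In GNS language the conclusion says that the cyclic vector $\xi_\rho$ satisfies $\pi_\rho(\Phi(I))\xi_\rho = 0$, which is the sense in which $\rho$ is an eigenstate of every local projector.
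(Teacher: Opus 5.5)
Your proof is correct and is essentially the paper's argument: the Cauchy--Schwarz inequality for the state with $b=a^*$, $c=\Phi(I)$ (and the mirrored choice for $\rho(\Phi(I)a)$), combined with $\Phi(I)^2=\Phi(I)$ and $\rho(\Phi(I))=0$. You also derive $\rho(\Phi(I))=0$ explicitly from $\rho(H_\Phi(I))=0$ using positivity of each term, a step the paper simply reads off from the ground-state definition.
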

\begin{proof}
The Cauchy-Schwarz inequality for states says that $|\phi(x^*y)|^2 \leq \phi(x^*x) \phi(y^*y)$ for any state $\phi$ and any $x, y$ in the algebra \cite[II.6.2.4]{Blackadar_OperatorAlg_2017}.
Let $x=a^*$ and $y=\Phi(I)$.
Then $|\rho(a \Phi(I))|^2 \leq \rho(aa^*) \rho(\Phi(I)^2)$.
Since $\Phi(I)$ is a projector, $\Phi(I)^2 = \Phi(I)$, and since $\rho$ is a ground state, $\rho(\Phi(I))=0$.
Thus, $|\rho(a \Phi(I))|^2 \leq \rho(aa^*) \cdot 0 = 0$, which implies $\rho(a \Phi(I)) = 0$.
The other equality follows by letting $x=\Phi(I)$ and $y=a^*$.
\end{proof}

\section{Fusion spin chains with boundary}\label{section: fusion spin chains with bdy}
\subsection{From group symmetry to fusion spin chains}
\label{subsection: group sym to fusion spin chain}

In this section, we briefly review the operator algebraic description of spin chains with on-site group symmetry found in the literature \cite{Kapustin_Sopenko_2021_SPTClassification, Ogata_2021_ClassificationSPT}.
We then explain how fusion spin chains not only generalize this description, but also provide a framework that is arguably more intrinsic and natural.

In the operator algebraic framework, a one-dimensional quantum lattice system without global symmetry is defined by its net of local operator algebras $\Loc_\bullet$.
In the bosonic case, this is a net of \(C^*\)-algebras.
Explicitly, we consider a finite-dimensional Hilbert space $V$ describing the degrees of freedom on a single site (e.g., $V = \mathbb{C}^d$ for a spin-$s$ chain where $d=2s+1$).
The algebra of observables on the finite interval $[-n, n]$ is given by
\begin{equation}
  \Loc_{[-n,n]} = \bigotimes_{j \in [-n, n]} \operatorname{End}(V)_j \cong \mathbb{M}_{d^{2n+1}}(\mathbb{C}).
\end{equation}
The quasi-local algebra is the norm completion of the union of these finite pieces, $\Loc = \overline{\bigcup_n \Loc_{[-n,n]}}$.

An on-site symmetry given by a finite group $G$ is defined as a homomorphism from $G$ to the group of automorphisms of the quasi-local algebra.
Let $u: G \to U(V)$ be a unitary representation of $G$ on the single-site Hilbert space $V$.
The global symmetry action is defined by conjugation: $g \cdot a := \operatorname{Ad}_{U(g)}(a) = U(g) a U(g)^{-1}$, where formally $U(g)$ is the infinite tensor product of on-site representations,
\begin{equation}
  U(g) = \bigotimes_{j \in \mathbb{Z}} u_j(g), \quad g \in G .
\end{equation}
While the infinite product $U(g)$ is not an element of the algebra $\Loc$, the automorphism $\operatorname{Ad}_{U(g)}$ is well-defined on local operators.

A special class of interesting operators is the subalgebra of symmetric operators.
An operator $a \in \Loc$ is \emph{symmetric} if it is invariant under the group action, i.e., $g \cdot a = a$ for all $g \in G$.
This condition is equivalent to saying that $a$ commutes with the representation of the symmetry group on the relevant Hilbert space: $[a, U(g)] = 0$.
In the language of representation theory, symmetric operators are precisely the \emph{intertwiners} of the symmetry representations.

This observation provides the bridge to fusion spin chains.
Recall that the category of representations of a finite group, $\Rep(G)$, is a unitary fusion category.
The single-site Hilbert space $V$ equipped with the action $u$ is simply an object $x \in \Rep(G)$.
The Hilbert space on an interval of length $L$ corresponds to the tensor product object $x^{\otimes L}$.
Consequently, the algebra of symmetric operators supported on this interval is exactly the space of endomorphisms of this object in the category $\Rep(G)$:
\begin{equation}
    \Loc^{\text{sym}}_{[a, b]} \cong \operatorname{End}_{\Rep(G)}\left(x^{\otimes (b-a+1)}\right) \equiv \C(x^{\otimes (b-a+1)}, x^{\otimes (b-a+1)}).
\end{equation}
This is precisely the definition of the local algebras $\Loc_{[1,L]}$ for a fusion spin chain constructed from the category $\C = \Rep(G)$ and the object $x$, see the next subsection for details.

The discussion above encapsulates the key intuition behind our formalism: \emph{elements in the quasi-local algebra of a fusion spin chain are intertwiners of symmetry representations.}
Once this identification is made, it is natural to generalize the input category $\C$ from $\Rep(G)$ to a general unitary fusion category.
This allows us to describe systems with generalized symmetries on the same footing as group symmetries.

\begin{remark}
A crucial feature of the fusion spin chain is that it does not know how the symmetry acts on the underlying vector space.
In categorical terms, it does not depend on the choice of a fiber functor $F: \C \to \ve$.
This is physically significant because the fiber functor represents the "charged" Hilbert space, which is often unobservable in the context of topological phases where we only have access to symmetric observables.

To illustrate this, consider the phenomenon of \emph{isocategorical groups} \cite{Etingof_2000_IsocategoricalGroups}.
There exist distinct finite groups $G$ and $H$ whose representation categories are equivalent as tensor categories, $\Rep(G) \simeq \Rep(H)$, but whose underlying fiber functors are different.
If we construct fusion spin chains based on these two groups, the resulting algebras of symmetric observables are completely identical.

One might ask: does the fusion spin chain formalism miss physical information by ignoring the symmetry action?
We advocate the view that the traditional Hilbert space description contains redundancy, at least for the purpose of classifying and studying topological phases.
Recall that the equivalence relation of symmetry-protected and symmetry-enriched topological phases is defined by symmetric local unitary transformations \cite{Chen_Wen_2010_LocalUnitary}, which only concerns symmetric operators. In other words, we do not need to know the symmetry action to classify topological phases! 
In this aspect, fusion spin chains capture exactly the intrinsic data of phases defined on it.
\end{remark}

\subsection{Fusion spin chains}
In this section, we first give the general operator algebraic description of spin chains, then specialize to spin chains constructed from unitary fusion categories.
Physically, the unitary fusion category encodes local quantum charges, or charges (representations) of the global symmetry.
However, this approach immediately generalizes group symmetries or even Hopf-algebra symmetries, since we may use unitary fusion categories without fiber functors; it is known that a fusion category is equivalent to the representation category of some Hopf algebra if and only if a fiber functor exists \cite[Section 5.3]{EGNO_2015}.

Let $\mathcal{I}$ denote the category defined as follows: The objects are discrete intervals in $\mathbb{Z}$, and the morphisms are inclusions.
For $F \subset \mathbb{Z}$, we will denote its complement by $F^c$.

\begin{definition}[{\cite[Definition 2.1]{Hataishi_2025_DHRAbstractChain}}]\label{def:abstract-spin-chain}
An \emph{abstract spin chain without boundary} is a functor $\Loc_{\bullet} \colon \mathcal{I} \to C^*\text{-alg}$, satisfying the following conditions:
\begin{enumerate}
    \item Local finiteness: $\Loc_I$ is a finite-dimensional \(C^*\)-algebra for any $I \in \mathcal{I}$;
    \item Locality: If $I \cap J = \varnothing$, then $\Loc_I$ and $\Loc_J$ commute.
\end{enumerate}
The \emph{quasi-local algebra} associated to the abstract spin chain is $\Loc := \varinjlim_{I\in\mathcal I}\Loc_I$, which is a unital AF \(C^*\)-algebra.
$\Loc$ may be understood to be the norm completion of $\cup_I \Loc_I$.
Physically, $\Loc_I$ is the algebra of operators supported in $I$.
\end{definition}

\begin{definition}[{Fusion spin chain \cite{Jones_2024_DHR}}]\label{def:fusion-spin-chain}
    Given a unitary fusion category $\C$, we define an abstract spin chain, called the \emph{fusion spin chain}, as follows: 
    First, we need to pick an object $x \in \C$.
    $x$ should be a \emph{strong tensor generator}, in the sense that there is some $n$ such that $x^n$ contains all simple objects of $\C$ as summands.
    For all $i \in \mathbb{Z}$, we put $x$ on site $i$.
    Local operators supported in some $I \in \mathcal{I}$ are just endomorphisms of the tensor product of all objects in $I$; that is, we set $\Loc_{[i,j]} := \C(x^{j-i+1},x^{j-i+1})$.
    Note that $\Loc_{[i,j]}$ is a multi-matrix $^*$-algebra, so it is a \(C^*\)-algebra with the \(C^*\)-norm given by the operator norm.

    We still need to define inclusions of algebras induced by inclusions of regions.
    Obviously, the inclusions are defined by tensoring with identity morphisms; that is, for $I=[i,j]$ and $J=[k, l]$ with $I \subseteq J$ ($k \leq i$ and $j \leq l$), we define the algebra homomorphism $i_{I,J} \colon \Loc_I \to \Loc_J$ by $i_{I,J}(f) = 1_{x^{i-k}} \otimes f \otimes 1_{x^{l-j}}$.
    It is obvious $i_{I,J}$ respects both the involution and the \(C^*\)-norm.
\end{definition}

\subsection{Boundary charge category and fusion spin chain with boundary}
\label{BFusionSpinChain}
In this paper we would construct boundary Hamiltonians for 1D phases.
In order to do this, we first introduce how to define degrees of freedom for 1D systems with boundary.
The following definition is analogous to the fusion spin chain; the only difference is that we have a boundary and use a module category to encode boundary quantum charges.

Let $\mathcal{I}^+$ denote the category defined as follows: The objects are discrete intervals in $\mathbb{Z}_{\geq 0}$, and the morphisms are inclusions.
For $F \subset \mathbb{Z}_{\geq 0}$, we will denote its complement by $F^c$.

\begin{definition}\label{def:abstract-spin-chain-with-boundary}
    An \emph{abstract spin chain with boundary} is a functor $\Loc_{\bullet} \colon \mathcal{I}^+ \to C^*\text{-alg}$, satisfying the following conditions:
    \begin{enumerate}
        \item Local finiteness: $\Loc_I$ is a finite-dimensional \(C^*\)-algebra for any $I \in \mathcal{I}^+$;
        \item Locality: If $I \cap J = \varnothing$, then $\Loc_I$ and $\Loc_J$ commute.
    \end{enumerate}
    The \emph{quasi-local algebra} associated to the abstract spin chain is $\Loc := \varinjlim_{\mathcal{I}^+} \Loc_I$, which is a unital AF \(C^*\)-algebra.
    $\Loc$ may be understood to be the norm completion of $\cup_I \Loc_I$.
\end{definition}

Fusion spin chains with boundary are abstract chains constructed from a fusion category $\C$ and its module category\footnote{Note that by default, module category means \emph{right} module category in this paper. This convention is different from many references.} $\M$.
Fusion spin chains are of particular physical interest, since $\C$ encodes data about the bulk symmetry charge and $\M$ encodes data about the boundary local quantum charges available before a bulk phase is chosen.
After choosing a Q-system $Q\in\C$, the corresponding categorical boundary conditions are right $Q$-modules in $\M$, namely objects of $\MQ$; their microscopic realization is discussed in Section~\ref{Def. The realization functor for boundary conditions}.

\begin{definition}\label{def:fusion-spin-chain-with-boundary}
    Given a unitary fusion category $\C$ and an indecomposable semisimple unitary module category $\M$, we define an abstract spin chain with boundary, called the \emph{fusion spin chain with boundary}, as follows: 
    First, pick a strong tensor generator $x \in \C$ and any nonzero $l \in \M$.
    Put $x$ on site $i$ for all $i > 0$, and put $l$ on site 0. 

    Local operators supported in $I \in \mathcal{I}^+$ are just endomorphisms of the tensor product of all objects in $I$; that is, we set $\Loc_{[i,j]} := \C(x^{j-i+1},x^{j-i+1})$ when $i>0$, and $\Loc_{[0,j]} := \M(l \otimes x^j, l \otimes x^j)$.
    Note that $\Loc_{[i,j]}$ is a multi-matrix $^*$-algebra, so it is a \(C^*\)-algebra with the \(C^*\)-norm given by the operator norm.
    To define the algebra on the half-infinite chain, we need to define inclusions of algebras when there is inclusion between two regions.
    The inclusion \(i_{I,J}\) is similarly defined by tensoring with identity morphisms in \(J \setminus I\).
    It is obvious $i_{I,J}$ respects both the involution and the \(C^*\)-norm.
\end{definition}

Note that when $\M = \C$, $\Loc$ is just the half-line algebra $\Loc_+ = \Loc_{[0, \infty)}$ of a fusion spin chain \emph{without} boundary constructed from $\C$ \cite{Jones_2024_DHR,Hataishi_2025_DHRAbstractChain}.

\begin{remark}
    Why should boundaries be described by module categories? 
    The answer is we must have some \emph{coarse-graining} or \emph{renormalization} operation, which groups multiple sites into one and obtain a coarser lattice. 
    In the bulk, this means local quantum charges must be able to fuse, which implies the category of bulk local quantum charges $\C$ has a monoidal structure.
    At the boundary, existence of coarse-graining means the boundary must be able to absorb bulk local quantum charges associatively. This implies the category of boundary local quantum charges is a module category of $\C$.
\end{remark}

\begin{example}
When $\C = \mathrm{Rep}(G)$ for some finite group $G$, indecomposable module categories of $\C$ are all of the form $\mathrm{Rep}_\psi(H)$, the linear category of representations of the twisted group algebra $\mathbb{C}^\psi[L]$ (or the category of projective representations of $L$ with the 2-cocycle $\psi$), where $L$ is a subgroup of $G$ and $\psi \in H^2(L, U(1))$ \cite[Corollary 7.12.20]{EGNO_2015}.
In particular, $\M = \mathrm{Rep}(G)$ is the usual fully symmetric boundary, and $\M = \mathrm{Rep}(\{ e \}) = \ve$ is the fully symmetry breaking boundary.
\end{example}

\paragraph{Remark.} One useful viewpoint is to regard $\M$ as ${}_W \C$, where $W$ is an algebra internal to $\C$ and ${}_W \C$ is the category of left $W$-modules internal to $\C$, equipped with the right $\C$-module structure by tensoring on the right; this can be done by, for example, taking $W = [l,l]$, where $[-,-] \colon \M^{\op} \times \M \to \C$ is the internal hom functor \cite[Theorem 7.10.1]{EGNO_2015}.

\subsection{Toolkit for fusion spin chains}
Here we review some definitions, tools and results useful for dealing with fusion spin chains.
We mainly follow \cite{Chen_2024_InductiveLimitAF, Chen_2024_QSysCompletion, Jones_2024_DHR}, with some natural generalizations.
Before diving into technical details, we emphasize that the most important feature about the algebra $\Loc$ is that it is an \emph{AF-algebra}, which means the colimit of finite-dimensional $\text{C}^*$-algebras.
Therefore, although $\Loc$ is infinite-dimensional, we could study it by working on its finite-dimensional building blocks.

% everything in this subsection follows the spirit.

\subsubsection{Actions of fusion categories on \(C^*\)-algebras}\label{AFAction}
\begin{definition}
If $\C$ is a unitary fusion category and $A$ is a (unital) $\text{C}^*$-algebra, an \emph{action} of $\C$ on $A$ is a $\text{C}^*$-tensor functor $F \colon \C \to \mathrm{Bim}(A)$.
That is, we specify for every object $x \in \C$ a bimodule $F(x) \in \mathrm{Bim}(A)$, for every morphism $f \in \C(x,y)$ a bimodule intertwiner $F(f) \colon F(x) \to F(y)$, and for every pair of objects $x,y \in \C$ a unitary isomorphism $F^{2}_{x,y} \colon F(x) \boxtimes_{A} F(y) \to F(x \otimes y)$ satisfying the pentagon equation.
\end{definition}

\begin{definition}[AF-action \cite{Chen_2024_InductiveLimitAF}] 

Let $A,B$ be unital C*-algebras and $\phi:A\rightarrow B$ a unital $*$-homomorphism.
Let $\C$ be a unitary fusion category and suppose we have actions $F:\C\rightarrow \textbf{Bim}(A)$ and $G:\C\rightarrow \textbf{Bim}(B)$.
An equivariant structure on $\phi$ with respect to $F$ and $G$ is a family of linear maps $\{k^{x}: F(x)\rightarrow G(x):\ x\in \C\}$ satisfying the following conditions

\begin{enumerate}
\item 
For $a,b\in A$, and $\xi\in F(x)$, 

$k^{x}(a\triangleright \xi\triangleleft b)=\phi(a)\triangleright k^{x}(\xi)\triangleleft \phi(b).$
\item 
For $f\in \C(x,y)$, $\xi\in F(x)$
$k^{y}\circ F(f)(\xi)=G(f)\circ k^{x}(\xi).$
\item
$\langle k^{x}(\xi)\ |\ k^{x}(\eta)\rangle_{B}=\phi(\langle \xi\ |\ \eta\rangle_{A}).$
\item
$G(x)=k^{x}(F(x))\triangleleft B$.
\item 
The following diagram commutes:

\[
\begin{tikzcd}[column sep=large]
F(x) \otimes F(y)
\arrow[d, swap, "{k^{x} \otimes k^{y}}"]
\arrow[r]
& F(x) \boxtimes_A F(y)
\arrow[r, "{F^2_{x,y}}"]
& F(x \otimes y)
\arrow[d, "{k^{x \otimes y}}"]
\\
G(x) \otimes G(y)
\arrow[r]
& G(x) \boxtimes_B G(y)
\arrow[r, "{G^2_{x,y}}"]
& G(x \otimes y)
\end{tikzcd}
\]

\end{enumerate}

Let $\C$ be a (strict) unitary fusion category.
Suppose that we have:

\begin{enumerate}
    \item 
A sequence of finite dimensional C*-algebras $A_{n}$ and unital, injective $*$-inclusions $\iota_{n}: A_{n}\rightarrow A_{n+1}$.
\item 
A sequence of actions $F_{n}:\C\rightarrow \textbf{Bim}(A_{n})$.
\item 
A family $\{k^{x}_{n}\}$ of equivariant structures on $\iota_{n}$ with respect to $F_{n}$ and $F_{n+1}$.
\end{enumerate}

Then if we let $A:=\varinjlim A_{n}$ be the inductive limit of the sequence $A_{n}$ in the category of C*-algebras, there exists a canonical action $F:\C\rightarrow \textbf{Bim}(A)$ called the inductive limit action of the $F_{n}$ (for a detailed construction, see \cite[Proposition 4.4]{Chen_2024_InductiveLimitAF}).
Any action of $\C$ on an AF C*-algebra equivalent (in the sense of \cite[Definition 3.9]{Chen_2024_InductiveLimitAF}) to one constructed as above is called an \textit{AF-action}.

\end{definition}

Actions of fusion categories on $\text{C}^*$-algebras serve as a bridge between category theory and operator algebra.
We will mostly use this bridge to translate categorical constructions to $\text{C}^*$-algebraic objects.

\subsubsection{The standard action}\label{StandardAction}

\begin{figure}
    \centering

\input{diagrams/standard-action-0788.tex}
    \caption{Definition of \(\Std\), which will be called \(\ReaDHR\) later.}
    \label{fig:Std and ReaDHR}
\end{figure}

Here we introduce the most important usage of our bridge.
Given a unitary fusion category $\C$, an indecomposable semisimple unitary module category $\M$, we construct an AF-action, called the \emph{standard action}, as follows.

Fix a self-dual object $x \in \C$ containing every isomorphism class of simple object in $\C$, and an object $l \in \M$ containing every isomorphism class of simple object in $\M$.
We consider the category of right module functors:
\begin{equation}\label{eq:morita-dual}
    \C^\vee_{\M} := \mathrm{Fun}_{\C^\rev}(\M,\M)
\end{equation}
We define an action $\Std \colon (\C^\vee_{\M})^{\rev} \to \mathrm{Bim}(\Loc^{\bdy})$, where $\Loc^{\bdy}$ is precisely the quasi-local algebra of the fusion spin chain with boundary defined above.

The typing convention is as follows. 
An object $L\in \C_\M^\vee$ is a right-module endofunctor of $\M$, but in string diagrams we draw its action as a new strand on the \emph{left} of the boundary object. 
Thus a vector in the finite piece $\Std_n(L)$ is a morphism from the physical boundary Hilbert space $l\otimes x^n$ to the same boundary Hilbert space with an extra $L$-strand on the left. 
Composition of two such vectors stacks these extra strands, and the outer strand is produced by the second tensor factor. 
This is why the source of the action is the monoidal reverse $(\C_\M^\vee)^{\rev}$.

We denote $\Loc^{\bdy}_{n}:=\M(l \otimes x^{n}, l \otimes x^{n})$, which is a finite dimensional C*-algebra.
The inclusions $\iota_{n} \colon \Loc^{\bdy}_{n} \to \Loc^{\bdy}_{n+1}$ are given by
\[
a\mapsto a \otimes 1_x.
\]
First, we define the finite pieces of the AF action, $\Std_n \colon (\C^\vee_{\M})^{\rev} \to \mathrm{Bim}(\Loc^{\bdy}_n)$.
On object level, for $L \in \C_\M^\vee$, we define
\begin{equation}\label{eq:std-finite-piece}
    \Std_{n}(L):=\M(l \otimes x^{n}, L \otimes l \otimes x^{n})
\end{equation}
with the $\Loc^{\bdy}_{n}$-bimodule structure given by
\[
    a\triangleright \xi \triangleleft b:= (1_L \otimes a) \circ \xi \circ b, \quad a,b \in \Loc^{\bdy}_{n}, \xi\in \Std_{n}(L)
\]
Throughout the fusion-spin-chain construction, we denote the action of the module functor $L$ on objects and morphisms as tensoring on the left, i.e. by $L \otimes l \otimes x^n$ and $1_L \otimes a$ for $a \in \Loc^{\bdy}$, respectively.
We employ this notation to emphasize intuitions from the graphical calculus. 
Indeed, it is the graphical intuition that the authors keep in mind while writing this paper; we find that this intuition is much more convenient than checking lengthy algebraic expressions.

The bimodule structure is graphically represented as
\[
\input{diagrams/bimodule-action-0917.tex}
\]

The $\Loc^{\bdy}_{n}$-valued sesquilinear form is $\langle \xi\ |\ \eta\rangle := \xi^{*}\circ \eta$, which is graphically represented as

\[
\input{diagrams/inner-product-0978.tex}
\]

On morphism level, we define $\Std_{n}(f)(\xi):=(f \otimes 1_{l \otimes x^{n}}) \circ \xi$ for $f \in \C_\M^\vee(L_1, L_2)$, which is obviously a bimodule intertwiner.

We define the monoidal structure of the functor out of $(\C_\M^\vee)^{\rev}$ by giving its action on elementary tensors,
\[
(\Std_{n})^{2}_{L_1, L_2}: \Std_{n}(L_1) \boxtimes_{\Loc^{\bdy}_n} \Std_{n}(L_2) \to \Std_{n}(L_2 \otimes L_1), \qquad
(\Std_{n})^{2}_{L_1, L_2}(\xi\otimes \eta):=(1_{L_2} \otimes \xi) \circ \eta.
\]
Indeed, if $\xi: l\otimes x^n\to L_1\otimes l\otimes x^n$ and $\eta: l\otimes x^n\to L_2\otimes l\otimes x^n$, then $(1_{L_2}\otimes \xi)\circ \eta$ is a morphism $l\otimes x^n\to L_2\otimes L_1\otimes l\otimes x^n$, hence an element of $\Std_n(L_2\otimes L_1)$. This reversed order is exactly the monoidal product in $(\C_\M^\vee)^{\rev}$.
It is easy to check that these extend to unitary natural isomorphisms satisfying the pentagon equation.

It remains to define the equivariant structure.
We define $k^{L}_{n} \colon \Std_{n}(L) \to \Std_{n+1}(L)$ by
\[
k^{L}_{n}(\xi):=\xi \otimes 1_{x}.
\]
It is straightforward to verify that this defines an equivariant structure on $\iota_{n} \text{ with respect to } \Std_{n} \text{ and } \Std_{n+1}$.

Taking the limit, we obtain an AF-action $\Std \colon (\C^\vee_{\M})^{\rev}\to \mathrm{Bim}(\Loc^{\bdy})$ which we call a \emph{standard AF-action}.

Note that the standard action in \cite[Example 4.6]{Jones_2024_DHR} is just the special case $\M = \C$, where $\C_{\M}^\vee \simeq \C$.
This convention differs from \cite{Jones_2024_DHR}: we put the module functor on the left, so the order of the tensor product is reversed. This is the monoidal reverse convention summarized in Appendix~\ref{app:orientation:dhr}.

\subsubsection{The realization construction}\label{RealizationConstruction}
We further demonstrate how to use our bridge to translate algebras (more precisely, Q-systems; see appendix for details) internal to $\C$ to finite-index extensions of the $\text{C}^*$-algebra $A$.

\begin{theorem}[{\cite[Construction 4.3]{Chen_2024_QSysCompletion}}]
Given a fully faithful action $F \colon \C \to \mathrm{Bim}(A)$ and a Q-system $Q$ in $\C$, $F(Q)$ is canonically a unital $\text{C}^*$-algebra; we denote the $\text{C}^*$-algebra as $|Q|$.
Furthermore, $|Q|$ is equipped with (1) a canonical inclusion $i \colon A \hookrightarrow |Q|$; (2) a canonical faithful conditional expectation $E_{A} \colon |Q| \to A$ with finite Watatani index.
\end{theorem}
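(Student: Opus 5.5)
The plan is to transport the algebra structure of $Q$ through the tensor functor $F$, and then to identify the resulting $*$-algebra with a closed $*$-subalgebra of the adjointable operators on the right Hilbert $A$-module $F(Q)_A$. Write $(Q,m,i)$ for the Q-system, with $m\colon Q\otimes Q\to Q$, $i\colon 1_\C\to Q$. Normalize it so that $m m^*=\mathrm{id}_Q$ (separability) and the Frobenius relations hold; its self-duality data are $\mathrm{ev}=i^*m$ and $\mathrm{coev}=m^*i$. Let $u\colon A\xrightarrow{\sim}F(1_\C)$ be the unit isomorphism of the tensor functor.

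\textbf{Step 1: multiplication and unit.} Define the multiplication on $F(Q)$ by
\[
\xi\cdot\eta := F(m)\bigl(F^2_{Q,Q}(\xi\boxtimes\eta)\bigr),
\]
and the unit by $1_{|Q|}:=F(i)(u(1_A))$. Associativity follows from $m(m\otimes 1)=m(1\otimes m)$ together with the associativity (pentagon) coherence of $F^2$ and naturality of $F^2$ in both variables. The unit laws follow from $m(i\otimes 1)=\mathrm{id}=m(1\otimes i)$ and the unit coherence of $F$. The map $a\mapsto F(i)(u(a))$ is then a unital algebra homomorphism $A\to F(Q)$. Moreover, because $F(i)$ is a bimodule map, the $A$-bimodule structure on $F(Q)$ coincides with left and right multiplication by the image of $A$.

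\textbf{Step 2: involution, expectation and $C^*$-norm.} Set $E_A:=u^{-1}\circ F(i^*)\colon F(Q)\to A$, which is an $A$-bimodule map. I would define $\xi^\dagger$ as the unique element satisfying
\[
E_A(\xi^\dagger\cdot\eta)=\langle \xi\mid\eta\rangle_A \quad\text{for all } \eta .
\]
Existence and uniqueness of $\xi^\dagger$ come from nondegeneracy of the pairing $\mathrm{ev}=i^*m$, transported by $F$; equivalently, $\xi^\dagger$ is obtained by bending $\xi$ with $\mathrm{coev}$ and taking the bimodule conjugate. Using the Frobenius relation one checks three things: $(\xi\eta)^\dagger=\eta^\dagger\xi^\dagger$, $\xi^{\dagger\dagger}=\xi$, and $\langle\xi\mid\eta\rangle_A=E_A(\xi^\dagger\eta)$. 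Consequently the left-regular map
\[
L\colon F(Q)\to\mathcal{L}(F(Q)_A),\qquad L(\xi)\eta=\xi\cdot\eta,
\]
is a unital $*$-homomorphism, since $\langle \xi\eta\mid\zeta\rangle=\langle\eta\mid\xi^\dagger\zeta\rangle$. It is injective because $L(\xi)1_{|Q|}=\xi$. The $C^*$-norm on $|Q|$ is then defined by pulling back the operator norm along $L$. With respect to this norm, $E_A$ is positive ($E_A(\xi^\dagger\xi)=\langle\xi\mid\xi\rangle\ge0$) and faithful. It is a conditional expectation because it restricts to the identity on $A$ ($i^*i=\mathrm{id}$ after normalization) and is $A$-bilinear.

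\textbf{Step 3: finite index and completeness.} Full faithfulness and finiteness of $\C$ give that $F(Q)$ is right finite. In the AF setting this is visible directly, since there are finitely many simple objects each realized by a finitely generated projective bimodule. Hence $F(Q)_A$ has a projective basis $\{s_j\}$, which is a quasi-basis for $E_A$:
\[
\xi=\sum_j s_j\,E_A(s_j^\dagger\xi).
\]
This gives the finite Watatani index $\sum_j s_j s_j^\dagger$ and a Pimsner–Popa inequality $E_A(\xi^\dagger\xi)\ge\lambda\,\xi^\dagger\xi$ with $\lambda>0$. From that inequality the operator norm is equivalent to the Hilbert-module norm $\|\langle\xi\mid\xi\rangle\|^{1/2}$, under which $F(Q)$ is already complete. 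Therefore $L(F(Q))$ is closed and $|Q|$ is a genuine $C^*$-algebra. Canonicity of the whole construction follows because every choice (unitor, $F^2$, normalization of $Q$) is fixed up to unique unitary isomorphism.

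\textbf{Expected main obstacle.} I expect the hardest part to be Step 2–3: the definition of $\dagger$ and the proof that it is anti-multiplicative and compatible with the inner product. This is where the Frobenius and separability axioms of a Q-system, and unitarity of $F^2$, are genuinely used. I would check it graphically by transporting string diagrams in $\C$ through $F$.
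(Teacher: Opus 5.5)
Your construction is the one the paper reviews. The multiplication $F(m)\circ F^2_{Q,Q}$, the unit and inclusion via $F(i)$, and $E_A=u^{-1}F(i^*)$ (the paper's $F(\varepsilon)$) are the same. Your characterization $E_A(\xi^\dagger\eta)=\langle\xi\mid\eta\rangle_A$ produces exactly the paper's bending formula $(f_\xi^\dagger\boxtimes_A\mathrm{id})\circ F(m^*i)$, by the zigzag identity $(\mathrm{id}_Q\otimes i^*m)(m^*i\otimes\mathrm{id}_Q)=\mathrm{id}_Q$. The paper stops at this algebraic structure and defers the $C^*$-property and finite index to the cited reference. Your left-regular representation on $F(Q)_A$, with a projective basis serving as a Watatani quasi-basis, is a sound way to supply those.

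There are two things to fix.

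First, the normalizations $mm^*=\mathrm{id}_Q$ and $i^*i=\mathrm{id}$ cannot both hold. If $mm^*=\mathrm{id}_Q$, then $i^*i=i^*mm^*i=\mathrm{ev}\circ\mathrm{coev}$, which is at least the dimension of $Q$. Hence $E_A(1_{|Q|})=(i^*i)\,1_A\neq 1_A$ unless $Q\cong 1$, and your $E_A$ is only a scalar multiple of a conditional expectation. Nothing in your argument actually uses $mm^*=\mathrm{id}_Q$: the identity $\langle\xi\eta\mid\zeta\rangle=\langle\eta\mid\xi^\dagger\zeta\rangle$ follows from the unit and Frobenius laws alone. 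So normalize $i^*i=1$ instead, or divide $E_A$ by the scalar $i^*i$.

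Second, right finiteness of $F(Q)$ comes from rigidity, not from full faithfulness. $F(Q)$ is dualizable in $\mathrm{Bim}(A)$, and dualizable correspondences are finitely generated projective (or this holds by definition if $\mathrm{Bim}(A)$ means right-finite bimodules). Closedness of $L(F(Q))$ also does not need the Pimsner--Popa inequality, whose usual proof presupposes the $C^*$-structure you are building. Writing $\xi=\sum_j s_j\langle s_j\mid\xi\rangle$ gives directly, in the Hilbert-module norm,
\[
\|\xi\|\le\|L(\xi)\|\le\Bigl(\sum_j\|L(s_j)\|\,\|s_j\|\Bigr)\|\xi\| .
\]
The lower bound uses $L(\xi)1_{|Q|}=\xi$ and $\langle 1_{|Q|}\mid 1_{|Q|}\rangle_A=(i^*i)1_A=1_A$.
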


For the convenience of the reader, we briefly review the construction of $|Q|$ here.
We first show that $|Q|$ is a $\text{C}^*$-algebra.
The multiplication map is $|Q| \otimes |Q| \xrightarrow{\pi} |Q| \otimes_{A} |Q| \xrightarrow{F(m)} |Q|$, where $m \in \C(Q \otimes Q, Q)$ is the multiplication morphism of the $Q$-system.
The inclusion $i \colon A \hookrightarrow |Q|$ is given by $F(u)$, where $u \in \C(1_\C, Q)$ is the unit morphism of the Q-system; note that $F(1_\C) = 1_{\mathrm{Bim}(A)} = A$.
The unit of $|Q|$ is $i(1_{A})$.
It is straightforward to verify the definition satisfies associativity and unitality.

The conditional expectation is $F(\varepsilon)$, where $\varepsilon \in \C(Q,1)$ is the counit morphism.
To define the involution on $|Q|$, we note $F(Q) \simeq \mathrm{Bim}(A)(A, F(Q))$, therefore $x \in F(Q)$ could be identified with a bimodule intertwiner $f_x \colon A \to F(Q)$.
Then we define $x^*$ to be the element corresponding to the bimodule intertwiner
\[
    A \xrightarrow{F(\Delta \circ u)} F(Q) \boxtimes_A F(Q) \xrightarrow{f_x^\dag \boxtimes_A \mathrm{id}_{F(Q)}} F(Q),
\]
where $u \in \C(1_\C,Q)$ is the unit morphism and $\Delta \in \C(Q, Q \otimes Q)$ is the comultiplication morphism.

Furthermore, bimodules of $Q$ internal to $\C$ could be translated to Hilbert bimodules of $|Q|$.

\begin{theorem}[{\cite[Construction 4.8]{Chen_2024_QSysCompletion}}]\label{thm:functor-G-fully-faithful}
    There is a canonical fully faithful functor $G \colon {}_Q \C_Q \to \mathrm{Bim}(|Q|)$.
\end{theorem}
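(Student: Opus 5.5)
The functor $G$ will be written down explicitly, and full faithfulness will be proved by comparing with the fully faithful action $F$. For a $Q$-$Q$ bimodule $(X,\lambda,\rho)$ in $\C$, with $\lambda\colon Q\otimes X\to X$ and $\rho\colon X\otimes Q\to X$, I set $G(X):=F(X)$ as a vector space.

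\emph{Structure on $G(X)$.} The left $|Q|$-action is the composite
\[
|Q|\otimes F(X)\to F(Q)\boxtimes_A F(X)\xrightarrow{F^2}F(Q\otimes X)\xrightarrow{F(\lambda)}F(X),
\]
and the right action is defined the same way using $\rho$. Associativity and unitality of these actions follow from the bimodule axioms for $X$ in $\C$ together with the coherence of $F^2$. This is the same argument that makes $|Q|$ an algebra.

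\emph{Inner product.} The $|Q|$-valued inner product is modelled on the construction of $x^*$ in $|Q|$. I identify $\xi\in F(X)$ with an intertwiner $f_\xi\colon A\to F(X)$ and set
\[
\langle\xi|\eta\rangle:=\bigl(f_\xi^\dagger\boxtimes_A \mathrm{id}\bigr)\circ F\bigl((\mathrm{id}_X\otimes\text{--})\circ\dots\bigr),
\]
which in diagrams is $\xi^\dagger$ and $\eta$ joined through $X^\dagger\otimes X\to Q$. The morphism $X^\dagger\otimes X\to Q$ is built from $\rho^\dagger$, the map $\Delta\circ u$, and the standard (co)evaluations. Equivalently, $\langle\xi|\eta\rangle$ is the unique element of $|Q|$ whose image under $E_A$, after multiplying by arbitrary $q\in|Q|$, gives the $A$-valued pairing
\[
E_A\bigl(\langle\xi|\eta\, q\rangle\bigr)=\langle \xi \mid \eta\triangleleft q\rangle_{F(X)}.
\]
I would adopt this Riesz-type definition. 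It uses the finite Watatani index: $|Q|$ is finitely generated projective over $A$ with a quasi-basis, so the existence of such an element is automatic. Right $|Q|$-linearity and hermiticity come from the Frobenius and separability axioms of $Q$. Positivity and definiteness follow because $E_A$ is faithful and the $A$-valued inner product on $F(X)$ is definite. Completeness is automatic because $F(X)$ is finitely generated over $A$.

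\emph{Morphisms.} A bimodule map $f\colon X\to Y$ is sent to $F(f)$. This is adjointable with adjoint $F(f^\dagger)$, and it intertwines the actions because $f$ commutes with $\lambda$ and $\rho$.

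\emph{Full faithfulness.} Faithfulness is immediate from faithfulness of $F$. For fullness, let $T\colon G(X)\to G(Y)$ be a $|Q|$-bimodule intertwiner. Restricting scalars along $i\colon A\hookrightarrow|Q|$ makes $T$ an $A$-bimodule map $F(X)\to F(Y)$, so by fullness of $F$ there is $f\in\C(X,Y)$ with $T=F(f)$. It remains to show $f$ is a $Q$-bimodule map. The condition that $T$ commutes with left multiplication by $|Q|$ reads
\[
F(f\circ\lambda_X)\circ F^2=F(\lambda_Y\circ(\mathrm{id}_Q\otimes f))\circ F^2
\]
on $F(Q)\boxtimes_A F(X)$. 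Since $F^2$ is unitary, hence surjective, and $F$ is faithful, this gives $f\circ\lambda_X=\lambda_Y\circ(1\otimes f)$. The right action is handled identically.

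\emph{Main obstacle.} I expect the only delicate point to be the inner product: checking rigorously that it is positive and compatible with the right action. I would first verify this on the free module $X=Z\otimes Q$. There $G(X)\cong F(Z)\boxtimes_A|Q|$ carries the standard inner product. I would then pass to general $X$ as an orthogonal summand of the free module via the separable splitting $X\to X\otimes Q$, $X\to X\otimes Q\otimes Q$ induced by $\Delta$.
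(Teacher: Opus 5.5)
\textbf{Comparison with the paper.} Your construction of $G$ matches the one the paper recalls: underlying space $F(X)$, actions $F(\lambda)\circ F^2$ and $F(\rho)\circ F^2$, and $f\mapsto F(f)$. The paper stops there. It never writes down the $|Q|$-valued inner product and gives no argument for full faithfulness; both are deferred to the cited Q-system completion construction.

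Your fullness argument is a correct, direct substitute. A $|Q|$-intertwiner $T$ is in particular an $A$-intertwiner, so $T=F(f)$ by fullness of $F$. Naturality and surjectivity of $F^2$, together with faithfulness of $F$, then force $f$ to commute with $\lambda$ and $\rho$.

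Your Riesz-type definition of the inner product is what makes this clean, and is the real gain of your route:
\begin{itemize}
\item At $q=1$ it gives $E_A(\langle\xi|\eta\rangle_{|Q|})=\langle\xi|\eta\rangle_{F(X)}$, so $\operatorname{Res}(G(X))=F(X)$ exactly. Hence $|Q|$-adjointability of $T$ implies $A$-adjointability, which is what lets you invoke fullness of $F$.
\item Right $|Q|$-linearity is automatic from uniqueness. It uses only associativity of the right action, not the Frobenius axioms.
\end{itemize}
The price is that hermiticity and positivity are hidden. The diagrammatic formula you started to write, $(f_\xi^\dagger\boxtimes\mathrm{id})\circ F(\rho^\dagger)\circ f_\eta$, makes those evident instead, but then the restriction property must be checked separately.

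\textbf{What must be made explicit.} There are two points.

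First, hermiticity. By your own uniqueness argument it is equivalent to $E_A(q^*\langle\xi|\eta\rangle_{|Q|})=\langle\xi\triangleleft q|\eta\rangle_A$ for all $q$. This is a property of $\rho$, not of the axioms of $Q$. Conjugate both actions by an invertible, non-unitary $\phi\in\C(X,X)$ that is not a module map; this gives an isomorphic bimodule. Your pairing then becomes $\langle F(\phi^\dagger)\xi\,|\,F(\phi^{-1})\eta\rangle$, computed with the original $\rho$, which is in general not hermitian. So ${}_Q\C_Q$ has to mean unitary (Frobenius) bimodules: $\lambda^\dagger$ and $\rho^\dagger$ are module maps, and $\lambda\lambda^\dagger=\rho\rho^\dagger=\mathrm{id}_X$.
\begin{itemize}
\item This hypothesis is exactly what your ``orthogonal summand'' reduction uses.
\item It is also what makes $f^\dagger$ a bimodule map when you claim $F(f)$ has adjoint $F(f^\dagger)$.
\item In the free case, the reduction rests on $E_A(p^*p')=\langle p|p'\rangle_{F(Q)}$. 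This follows from $\Delta=m^\dagger$, $\varepsilon=u^\dagger$ and the Frobenius relation, via $(\mathrm{id}\otimes u^\dagger m)(\Delta u\otimes\mathrm{id})=\mathrm{id}_Q$.
\end{itemize}

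Second, the left action. You never check that it is a $*$-homomorphism into adjointable operators, i.e.\ that $\langle q\triangleright\xi|\eta\rangle_{|Q|}=\langle\xi|q^*\triangleright\eta\rangle_{|Q|}$. This is where the involution on $|Q|$, defined through $\Delta\circ u$, actually enters. Your Riesz characterization reduces it to the same identity for the $A$-valued inner product. It then follows from the same free-then-summand argument, applied to $Q\otimes Z\otimes Q$ and to the isometric bimodule embedding of a unitary $X$ into it.
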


Again we briefly review the construction of the functor $G$.
Given some $M \in {}_Q \C_Q$, we define the Hilbert space of $G(M)$ to be $F(M)$; the right action is given by $F(M) \otimes |Q| \xrightarrow{\pi} F(M) \otimes_{A} |Q| \xrightarrow{F(r)} F(M)$, and the left action is given by $|Q| \otimes F(M)  \xrightarrow{\pi} |Q| \otimes_{A} F(M) \xrightarrow{F(l)} F(M)$, where $l$ and $r$ are left and right action morphisms of $M$.

Finally, we review a characterization of the replete image of $G$, which would be used in showing the essential surjectivity of $\ReaDHR \colon (\CMdual)^{\rev} \to \DHR(\Loc^{\bdy}_\bullet)$ in Section~\ref{ComputationDHRB}.

\begin{lemma}[{\cite[Section 4.1.3]{Jones_2024_DHR}}]
There is a canonical linear restriction functor $\operatorname{Res} \colon \mathrm{Bim}(|Q|) \to \mathrm{Bim}(A)$ defined as follows:
\begin{enumerate}
    \item For $M \in \mathrm{Bim}(|Q|)$, define $\operatorname{Res}(M)$ to have the same vector space as $M$, and define the left and right $A$-actions to be restrictions of $|Q|$ actions.
The $A$-valued sesquilinear form is given by $\langle\xi \mid \nu\rangle_{A}:= E_{A}\left(\langle\xi \mid \nu\rangle_{|Q|}\right)$.
    \item For a bimodule intertwiner $f$, just define $\operatorname{Res}(f)=f$, which is automatically an $A$-bimodule intertwiner.
\end{enumerate}
\end{lemma}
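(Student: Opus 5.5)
The plan is to check directly that the data in the statement satisfy the axioms of a Hilbert $A$-bimodule, using three properties of the canonical conditional expectation $E_A \colon |Q| \to A$ from \cite[Construction 4.3]{Chen_2024_QSysCompletion}: it is $A$-bimodular, it is completely positive and faithful, and it has finite Watatani index. Functoriality then follows almost tautologically.

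\textbf{Algebraic structure and the inner-product axioms.} Since $i \colon A \hookrightarrow |Q|$ is a unital $*$-homomorphism, restricting the left and right $|Q|$-actions on $M$ gives an algebraic $A$-bimodule. For the sesquilinear form $\langle \xi \mid \nu \rangle_A := E_A(\langle \xi \mid \nu \rangle_{|Q|})$, I would check the axioms in turn.
\begin{itemize}
\item \emph{Hermiticity.} $E_A$ is a $*$-map, so $\langle \xi \mid \nu \rangle_A^* = \langle \nu \mid \xi \rangle_A$.
\item \emph{Right $A$-linearity.} By bimodularity of $E_A$,
\[
E_A(\langle \xi \mid \nu \rangle_{|Q|}\, a) = E_A(\langle \xi \mid \nu \rangle_{|Q|})\, a .
\]
\item \emph{Positivity.} Since $\langle \xi \mid \xi \rangle_{|Q|} \ge 0$ and $E_A$ is positive, $\langle \xi \mid \xi \rangle_A \ge 0$.
\item \emph{Definiteness.} If $\langle \xi \mid \xi \rangle_A = 0$, faithfulness of $E_A$ gives $\langle \xi \mid \xi \rangle_{|Q|} = 0$, hence $\xi = 0$.
\end{itemize}

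\textbf{Adjointability of the left action.} For $a \in A$,
\[
\langle a \triangleright \xi \mid \nu \rangle_A = E_A(\langle \xi \mid i(a)^* \triangleright \nu \rangle_{|Q|}) = \langle \xi \mid a^* \triangleright \nu \rangle_A .
\]
So left multiplication by $a$ is adjointable with adjoint left multiplication by $a^*$, and $A \to \mathcal{L}(\operatorname{Res}(M))$ is a unital $*$-homomorphism.

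\textbf{Intertwiners.} Let $f$ be an adjointable $|Q|$-bimodule intertwiner. It is in particular an $A$-bimodule intertwiner, and
\[
E_A(\langle f\xi \mid \nu \rangle_{|Q|}) = E_A(\langle \xi \mid f^*\nu \rangle_{|Q|}),
\]
so the same $f^*$ serves as its adjoint for the new form. The identities $\operatorname{Res}(g\circ f) = g \circ f$ and $\operatorname{Res}(\mathrm{id}) = \mathrm{id}$ are immediate, and linearity on hom-spaces is clear.

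\textbf{Completeness: the main obstacle.} The remaining axiom is that $M$ is complete in the new norm $\|\xi\|_A = \|E_A(\langle \xi\mid\xi\rangle_{|Q|})\|^{1/2}$. Since $\|E_A\| \le 1$, we have $\|\xi\|_A \le \|\xi\|_{|Q|}$. For the reverse bound I would invoke the Pimsner--Popa inequality: because $E_A$ has finite Watatani index, there is $\lambda > 0$ with
\[
E_A(x) \ge \lambda x \qquad \text{for all } x \ge 0 \text{ in } |Q|
\]
(Watatani; Frank--Kirchberg). Applying it to $x = \langle \xi \mid \xi \rangle_{|Q|}$ gives $\|\xi\|_A^2 \ge \lambda \|\xi\|_{|Q|}^2$. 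The two norms are therefore equivalent, and completeness transfers from the $|Q|$-norm. If one prefers to avoid citing this inequality, I would derive it from a quasi-basis $\{b_j\}$ of $E_A$, i.e.\ $x = \sum_j b_j E_A(b_j^* x)$, which gives a bound $\|x\| \le C \max_j \|E_A(b_j^* x)\|$. Converting that into the positive-element inequality is the one step I expect to need care.
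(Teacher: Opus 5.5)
Your argument is correct. The paper gives no proof of this lemma beyond the citation to Jones, so there is nothing to compare against. Your direct check supplies what the citation leaves implicit: bimodularity, positivity and faithfulness of $E_A$ give the inner-product axioms and adjointability, and equivalence of the two norms gives completeness. You are also right that completeness is the only nontrivial point.

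Two small remarks.

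First, the step you flag is easier than you expect, because a norm inequality suffices and no operator inequality is needed. Take $x \ge 0$ in $|Q|$ and apply Cauchy--Schwarz for the $A$-valued inner product $(y,z)\mapsto E_A(y^*z)$ with $y = x^{1/2}b_j$ and $z = x^{1/2}$. This gives $\|E_A(b_j^*x)\|^2 \le \|b_j\|^2\,\|x\|\,\|E_A(x)\|$. The quasi-basis expansion $x=\sum_j b_jE_A(b_j^*x)$ then yields $\|x\| \le \bigl(\sum_j\|b_j\|^2\bigr)^2\|E_A(x)\|$. Apply this to $x=\langle\xi\mid\xi\rangle_{|Q|}$.

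Second, the paper never pins down $\mathrm{Bim}(\cdot)$. If it is meant to consist of right-finite bimodules, you need one more line. Let $\{s_i\}$ be a projective basis of $M$ over $|Q|$. Then $\{s_i\triangleleft b_j\}$ is a projective basis of $\operatorname{Res}(M)$, since
$\sum_{i,j} (s_i \triangleleft b_j)\triangleleft E_A\bigl(b_j^*\langle s_i\mid\xi\rangle_{|Q|}\bigr) = \sum_i s_i\triangleleft\langle s_i\mid\xi\rangle_{|Q|} = \xi$.
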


\begin{theorem}[{\cite[Theorem 4.12]{Jones_2024_DHR}}]\label{ImageOfG}
Let $\mathrm{Bim}(|Q|, \C)$ be the full subcategory of $\mathrm{Bim}(|Q|)$ spanned by objects $M$ such that $\operatorname{Res}(M) \cong F(Y)$ for some $Y \in \C$.
Then $G \colon {}_Q \C_Q \to \mathrm{Bim}(|Q|, \C)$ is an equivalence of $\text{C}^*$-tensor categories.
\end{theorem}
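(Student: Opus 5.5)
The plan is to describe the replete image of $G$ using the adjunction between $G$ and $\operatorname{Res}$. Fully faithfulness and monoidality of $G$ are already given by Theorem~\ref{thm:functor-G-fully-faithful} and the construction preceding it. What remains is to identify the replete image of $G$ with $\mathrm{Bim}(|Q|,\C)$.

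\textbf{Step 1: the image lands in $\mathrm{Bim}(|Q|,\C)$.} For $M\in{}_Q\C_Q$, I compare $\operatorname{Res}(G(M))$ with $F(M)$. The underlying spaces coincide and the $A$-actions agree because they factor through $F(u)$. The $A$-valued inner products agree because $E_A=F(\varepsilon)$, and the $|Q|$-valued inner product on $G(M)$ is built from the adjoint of the right action followed by $\varepsilon$. The Frobenius/Q-system normalization then reduces $\varepsilon\circ\langle\cdot|\cdot\rangle_{|Q|}$ to the $F$-inner product. Hence $\operatorname{Res}(G(M))\cong F(M)$ with $M\in\C$.

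\textbf{Step 2: essential surjectivity.} Take $N\in\mathrm{Bim}(|Q|)$ with a unitary $\operatorname{Res}(N)\cong F(Y)$ for some $Y\in\C$. The left $|Q|$-action gives an $A$-bimodule map $|Q|\boxtimes_A\operatorname{Res}(N)\to\operatorname{Res}(N)$, that is, a map $F(Q)\boxtimes_A F(Y)\cong F(Q\otimes Y)\to F(Y)$. Since $F$ is fully faithful, this equals $F(\lambda)$ for a unique $\lambda\in\C(Q\otimes Y,Y)$. In the same way the right action gives $\rho\in\C(Y\otimes Q,Y)$. Associativity, unitality and bimodule compatibility hold after applying $F$, so by faithfulness they hold in $\C$, and $(Y,\lambda,\rho)\in{}_Q\C_Q$. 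The $*$-compatibility of the Hilbert bimodule actions translates into the unitarity/Frobenius condition on $\lambda,\rho$, again by faithfulness. By construction, the identity map $F(Y)\to N$ intertwines the $|Q|$-actions of $G(Y)$ and $N$. I then check that it is unitary for the $|Q|$-valued inner products.

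\textbf{Main obstacle: recovering the $|Q|$-valued inner product.} This is the hardest step, because $\operatorname{Res}$ only remembers $E_A\circ\langle\cdot|\cdot\rangle_{|Q|}$. The approach I would try first is to use the Pimsner--Popa/projective basis of $|Q|$ over $A$ coming from $F(\Delta\circ u)$. This basis gives the reconstruction formula $\langle\xi|\eta\rangle_{|Q|}=\sum_i b_i\,E_A(\langle\xi b_i|\eta\rangle_{|Q|})$ (up to the index normalization). The formula expresses the $|Q|$-inner product entirely in terms of the $A$-inner product and the right action, and both of these coincide for $N$ and $G(Y)$. Therefore the identity map is unitary, so $N\cong G(Y)$. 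Tensoriality of the equivalence follows because $G$ is monoidal and the full subcategory $\mathrm{Bim}(|Q|,\C)$ is closed under $\boxtimes_{|Q|}$, since it contains the image of ${}_Q\C_Q$.
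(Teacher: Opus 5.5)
Your argument is correct, but it takes a different route from the paper. The paper gives no proof of its own and imports the statement from Jones, where it is obtained from the Q-system completeness of $C^*$-algebras (Chen--Hern\'andez Palomares--Jones--Penneys). In that argument, $\mathrm{Bim}(|Q|)$ is equivalent to $F(Q)$-$F(Q)$ bimodules internal to $\mathrm{Bim}(A)$, with $\operatorname{Res}$ becoming the forgetful functor. Full faithfulness of $F$ then identifies ${}_Q\C_Q$ with those bimodule objects whose underlying $A$-bimodule lies in the image of $F$. The first half of your Step 2, pulling the action maps back through $F^2$ and using full faithfulness of $F$, is exactly that last step. Your quasi-basis identity
\[
\langle\xi|\eta\rangle_{|Q|}=\sum_i b_i\,E_A\bigl(b_i^*\langle\xi|\eta\rangle_{|Q|}\bigr)=\sum_i b_i\,\langle\xi b_i|\eta\rangle_A
\]
replaces the appeal to Q-system completeness with a direct proof that $\operatorname{Res}$, together with the right $|Q|$-action, loses no information. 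This makes your argument self-contained and shows clearly why $A$-level data suffice. The price is that you must check by hand what the cited route provides for free:
\begin{enumerate}
\item the action maps descend to bounded, hence adjointable, maps on the interior tensor products $|Q|\boxtimes_A\operatorname{Res}(N)$ and $\operatorname{Res}(N)\boxtimes_A|Q|$ (finite generation of $|Q|$ over $A$ and the Pimsner--Popa inequality handle this);
\item the scalar normalizations relating $E_A$, $F(\varepsilon)$ and $F(\Delta\circ u)$.
\end{enumerate}

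Two small corrections. First, for $\lambda$ the Frobenius condition does follow from $*$-compatibility as you say, because left multiplication by $q$ is right $A$-linear with $A$-adjoint $q^*$. For $\rho$ it does not: right multiplication by $q$ is not $A$-linear, so the relation $\langle\xi q|\eta\rangle_{|Q|}=q^*\langle\xi|\eta\rangle_{|Q|}$ is invisible to the $A$-valued inner products. Turning it into $\rho^\dagger=(\rho\otimes\mathrm{id}_Q)\circ(\mathrm{id}_Y\otimes\Delta\circ u)$ (up to normalization) needs the same quasi-basis computation, so that step belongs after you introduce the basis. Second, $\mathrm{Bim}(|Q|,\C)$ is closed under $\boxtimes_{|Q|}$ because it \emph{equals} the replete image of the monoidal functor $G$, which is what you have just proved. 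Merely containing that image would not be enough.
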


\subsection{Properties of fusion spin chains}\label{subsec:properties-fusion-spin-chains}
In this section, we first review some useful properties of fusion spin chains without boundary.
Then we generalize these properties to fusion spin chains with boundary.
We fix a unitary fusion category \(\C\), a strong tensor generator \(x \in \C\), an indecomposable module category \(\M\), and a nonzero object \(l \in \M\).
Let \(\Loc^{\bulk}_\bullet\) be the fusion spin chain without boundary and \(\Loc^{\bdy}_\bullet\) be the fusion spin chain with boundary constructed from these data.
We denote their quasi-local algebras by \(\Loc^{\bulk}\) and \(\Loc^{\bdy}\), respectively.

\begin{lemma}
\(\Loc^{\bdy}\) is a simple AF-algebra with a unique tracial state.
In particular, the tracial state could be constructed from the categorical trace.
\end{lemma}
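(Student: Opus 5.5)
The plan is to follow the standard strategy for AF-algebras: read off the Bratteli diagram of the inductive system $\Loc^{\bdy}_n = \M(l\otimes x^n, l\otimes x^n)$ with connecting maps $a\mapsto a\otimes 1_x$, and translate simplicity and uniqueness of the trace into properties of that diagram.

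\textbf{Step 1: the Bratteli diagram.} Since $\M$ is semisimple, $\Loc^{\bdy}_n \cong \bigoplus_{m\in\mathrm{Irr}(\M)} \mathrm{End}(\M(m, l\otimes x^n))$. The inclusion $\iota_n$ has multiplicity matrix given by the fusion rules of the module category: the summand at $m$ sits inside the summand at $m'$ with multiplicity $\dim \M(m', m\otimes x)$. So the Bratteli diagram is stationary from level $1$ on, with vertex set $\mathrm{Irr}(\M)$ and connecting matrix $N_x = (\dim\M(m',m\otimes x))_{m,m'}$. At level $0$ only the simples appearing in $l$ occur.

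\textbf{Step 2: primitivity of $N_x$.} Because $\M$ is indecomposable, for any simples $m,m'$ there is some $y\in\C$ with $\M(m', m\otimes y)\neq 0$. Since $x$ is a strong tensor generator, $x^{n_0}$ contains every simple of $\C$ for some $n_0$. I expect that $x^{n}$ also contains all simples for every $n\ge n_0$; this follows, for instance, because $x\otimes x^{*}\ni 1$, or because one can replace $n_0$ by multiples. The one thing to watch is that the definition only asserts the property for a single $n$. If needed, I will pass to a subsequence of levels $n\in n_0\mathbb{Z}$, or use that $x^{n_0}\otimes x^{k}$ contains $x^{k}\otimes(\text{all simples})$, which still contains all simples. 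Then every entry of $N_x^{n_0}$ is positive. This strict positivity is the main point to check carefully.

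\textbf{Step 3: simplicity.} A stationary Bratteli diagram whose matrix has a strictly positive power has no nontrivial directed hereditary subsets: from any vertex at level $n$, every vertex at level $n+n_0$ is reachable. By the standard Bratteli criterion, the AF-algebra is therefore simple.

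\textbf{Step 4: the trace.} Tracial states on the AF-algebra correspond to compatible families of positive weight vectors $t_n\in\mathbb{R}_{\ge0}^{\mathrm{Irr}(\M)}$ with $t_n = N_x t_{n+1}$, normalized at level $0$. By Perron–Frobenius, a primitive matrix has projective limit cone equal to the ray spanned by its Perron eigenvector, so the trace is unique.

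For the explicit form, I will use the categorical trace. Take the unitary module trace on $\M$ compatible with the spherical structure of $\C$, with dimensions $d_m$ satisfying $\sum_{m'} \dim\M(m',m\otimes x)\, d_{m'} = d_m d_x$. Define
\[
\tau_n(a) = \frac{\mathrm{Tr}_\M(a)}{d_l\, d_x^{\,n}}.
\]
Compatibility with $\iota_n$ holds because $\mathrm{Tr}(a\otimes 1_x)=d_x\mathrm{Tr}(a)$. Each $\tau_n$ is positive and tracial, so the family glues to a tracial state by the earlier lemma on states of inductive limits, and by uniqueness it is the tracial state.
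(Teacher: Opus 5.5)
Your proposal is correct and follows the paper's argument: the paper only notes that $\Loc^{\bdy}$ has a stationary Bratteli diagram whose connecting matrix is given by the fusion rules of the module action $\M\times\C\to\M$, and it cites Bratteli and Effros for simplicity and uniqueness of the trace, while you fill in the primitivity, Perron--Frobenius and module-trace details. One simplification: your worry in Step 2 is unnecessary, because $N_x^{n_0}>0$ already follows from $x^{n_0}$ containing every simple of $\C$ together with indecomposability of $\M$, so you need no information about higher powers of $x$.
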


\begin{proof}
Note that \(\Loc^{\bdy}\) has a simple stationary Bratteli diagram corresponding to fusion rules of the module action \(\M \times \C \to \M\).
This implies that \(\Loc^{\bdy}\) is simple with a unique tracial state (the reader might refer to \cite{Bra72} or \cite[Chapter 6]{Effros_1981_DimensionsAC}).
\end{proof}

\begin{corollary}
\(\Loc^{\bulk}_{(-\infty, a)} \otimes \Loc^{\bulk}_{(b, \infty)}\) is a simple \(C^*\)-algebra.
\end{corollary}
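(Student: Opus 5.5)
The plan is to recognize \(\Loc^{\bulk}_{(-\infty, a)} \otimes \Loc^{\bulk}_{(b, \infty)}\) as a tensor product of two simple unital AF-algebras, and then use the fact that a (minimal) tensor product of simple \(C^*\)-algebras is simple. Since both factors are AF, hence nuclear, there is only one \(C^*\)-tensor norm, so no ambiguity arises about which tensor product is meant.

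\textbf{Step 1: the right half-line.} Identify \(\Loc^{\bulk}_{(b,\infty)}\) with a fusion spin chain with boundary. Shifting sites, this algebra is the inductive limit of \(\C(x^{n}, x^{n})\) under \(a \mapsto a \otimes 1_x\). This is exactly \(\Loc^{\bdy}\) for \(\M = \C\) with boundary object \(l = x\), as remarked after Definition~\ref{def:fusion-spin-chain-with-boundary}. Here \(\C\) is indecomposable as a right module over itself because \(\C\) is fusion. The preceding lemma then shows it is simple.

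\textbf{Step 2: the left half-line.} Apply the same argument to \(\Loc^{\bulk}_{(-\infty,a)}\). The inclusions are now \(a \mapsto 1_x \otimes a\). One option is to use \(\C^{\rev}\), under which left tensoring becomes right tensoring, so the algebra becomes the half-line algebra of the fusion spin chain for \(\C^{\rev}\) with generator \(x\). Alternatively, one can argue directly. The Bratteli diagram is stationary, with incidence matrix given by the fusion matrix of \(x\) acting on simples. Because \(x\) is a strong tensor generator, some power \(x^n\) contains every simple, so the incidence matrix is primitive. A stationary Bratteli diagram with a primitive incidence matrix gives a simple AF-algebra \cite{Effros_1981_DimensionsAC}.

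\textbf{Step 3: simplicity of the tensor product.} Conclude with the standard theorem that if \(A\) and \(B\) are simple \(C^*\)-algebras, then \(A \otimes_{\min} B\) is simple (Takesaki; see \cite{Blackadar_OperatorAlg_2017}). A self-contained alternative stays within AF methods. The tensor product is again AF, with finite pieces \(\Loc_{[a-n,a-1]} \otimes \Loc_{[b+1,b+n]}\). Its Bratteli diagram has incidence matrix equal to the Kronecker product of two primitive matrices. That Kronecker product is primitive, since a power of it is the Kronecker product of two positive matrices, which is positive. Simplicity then follows as in the lemma.

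The only point needing care is the left half-line: one has to check that reversing the tensor direction does not break the stationarity or primitivity argument. Using the primitivity of the fusion matrix of \(x\) directly settles this.
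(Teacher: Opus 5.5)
Your proposal is correct and follows the paper's proof. Nuclearity of the AF factors makes the tensor norm unambiguous; each half-line is identified with \(\Loc^{\bdy}\) for \(\M=\C\) and is therefore simple by the preceding lemma; and simplicity of minimal tensor products of simple \(C^*\)-algebras (Blackadar II.9.5.3) finishes it, while your \(\C^{\rev}\)/primitivity treatment of the left half-line only spells out the paper's ``similarly,'' and your Kronecker-product argument is a valid self-contained alternative to citing Takesaki's theorem.
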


\begin{proof}
First, note that the tensor product in \(\Loc^{\bulk}_{(-\infty, a)} \otimes \Loc^{\bulk}_{(b, \infty)}\) is unambiguous, as these two algebras are AF, thus nuclear.
Therefore the tensor product coincides with the minimal tensor product.
Note that \(\Loc^{\bulk}_{(b, \infty)}\) is simple since it coincides with \(\Loc^{\bdy}\) by taking \(\M = \C\); similarly, \(\Loc^{\bulk}_{(-\infty, a)}\) is simple.
Finally, minimal tensor products of simple \(C^*\)-algebras are simple \cite[II.9.5.3]{Blackadar_OperatorAlg_2017}.
\end{proof}

\begin{theorem}
\label{thm:covering-property}
(Covering property) \cite[Proposition 4.14]{Jones_2024_DHR}
There is some \(L>0\), such that if \(I, J\) are intervals with an overlap of length greater than \(L\), then \(\Loc^{\bulk}_{I \cup J}\) is generated by \(\Loc^{\bulk}_I\) and \(\Loc^{\bulk}_J\).
\end{theorem}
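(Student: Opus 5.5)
We reduce the statement to a question about finite-dimensional multimatrix algebras and answer it with a semisimplicity argument in \(\C\). We may take \(I=[a,b]\) and \(J=[c,d]\) with \(a\le c\le b\le d\), so that \(I\cup J=[a,d]\) and the overlap \(I\cap J=[c,b]\) has \(k:=b-c+1>L\) sites. Write \(p=c-a\) and \(q=d-b\) for the numbers of sites of \(I\) and \(J\) lying outside the overlap. Under the inclusions of Definition~\ref{def:fusion-spin-chain}, \(\Loc^{\bulk}_I\) is \(\C(x^{p+k},x^{p+k})\otimes 1_{x^q}\) and \(\Loc^{\bulk}_J\) is \(1_{x^p}\otimes \C(x^{k+q},x^{k+q})\), both inside \(\C(x^{p+k+q},x^{p+k+q})\). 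We need to show that every endomorphism of \(x^{p+k+q}\) is a sum of products of such elements.

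\emph{Step 1: choice of \(L\).} Since \(x\) is a strong tensor generator, there is \(n\) such that \(x^{n}\) contains every simple object of \(\C\). Taking a simple summand \(s\subset x^{m}\), the object \(x^{m+n}\) contains \(s\otimes t\) for all simples \(t\), and one of these contains \(1\). Hence \(1\subset x^{N}\) for some \(N\), and \(x^{n'}\) contains all simples for every \(n'\ge n+N\). We will take \(L\) at least this large. Then for \(k>L\) we can fix a complete family of mutually orthogonal minimal idempotents through \(x^{k}\): isometries \(v_{s,\alpha}\colon s\to x^{k}\) with \(\sum_{s,\alpha} v_{s,\alpha}v_{s,\alpha}^*=1_{x^k}\), and every simple \(s\) occurs.

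\emph{Step 2: factorization of morphisms.} Take \(f\in\C(x^{p+k+q},x^{p+k+q})\) and insert the resolution of the identity on the middle \(k\) strands of both the source and the target. This is the step I expect to be the main obstacle, and I would not try to force it directly. Instead I would use the standard trick of choosing \(L=2(n+N)\) or so, and splitting the overlap into two halves each containing all simples. Write \(x^{p+k+q}=x^{p}\otimes x^{k_1}\otimes x^{k_2}\otimes x^{q}\). By semisimplicity, every morphism in \(\C(x^{p}\otimes x^{k_1}\otimes x^{k_2}\otimes x^{q},\;x^{p}\otimes x^{k_1}\otimes x^{k_2}\otimes x^{q})\) is a sum of composites of the form
\[
(g\otimes 1)\circ(1\otimes h),
\]
where \(g\) acts on \(x^{p}\otimes x^{k_1}\otimes(\text{stuff})\) and \(h\) on \((\text{stuff})\otimes x^{q}\). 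To see this, first resolve \(f\) through simples \(r\) running "across" the cut between the \(x^{k_1}\) and \(x^{k_2}\) blocks, using the fact that \(\C(a\otimes b,c\otimes d)\) is spanned by morphisms factoring through \(a\otimes \bar{?}\), i.e. by duality \(\C(a\otimes b,c\otimes d)\cong\bigoplus_r\C(a,c\otimes r)\otimes\C(r\otimes b,d)\). The remaining issue is that the intermediate simple \(r\) must be realized inside the overlap strands rather than as an abstract object. This is where the overlap hypothesis enters: the object \(x^{k_1}\otimes x^{k_2}\) contains \(r\otimes(\text{all simples})\) as well as each simple itself, so the factorization can be routed through \(1_{x^{p}}\otimes w\otimes 1_{x^{q}}\), where \(w\) is an isometry embedding the needed objects into the overlap strands. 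Concretely, a morphism factoring through \(x^{p}\otimes r\otimes x^{q}\) is rewritten as \((\alpha\otimes 1_{x^q})(1_{x^p}\otimes\beta)\). Here \(\alpha\) is supported on \(I\), since it uses \(x^{p}\otimes(\text{overlap})\), and \(\beta\) is supported on \(J\); the isometries \(v_{r,\cdot}\colon r\to x^{k}\) are absorbed into \(\alpha\) and \(\beta\).

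\emph{Step 3: conclusion.} Summing over simples and multiplicity indices writes \(f\) as a finite sum of products \(\alpha\beta\) with \(\alpha\in\Loc^{\bulk}_I\) and \(\beta\in\Loc^{\bulk}_J\). Hence \(\Loc^{\bulk}_{I\cup J}\) is generated by \(\Loc^{\bulk}_I\) and \(\Loc^{\bulk}_J\). The remaining work is bookkeeping: one must check that the bending and duality moves preserve supports, and must handle the degenerate cases \(I\subseteq J\) and \(J\subseteq I\), where the statement is trivial. Since this is exactly \cite[Proposition 4.14]{Jones_2024_DHR}, one may also simply invoke it.
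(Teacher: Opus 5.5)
Your strategy is the one behind the paper's proof, which defers to Jones. In both, every endomorphism of \(x^{p+k+q}\) is written as a sum of products of an \(I\)-supported and a \(J\)-supported morphism, using semisimplicity and the fact that the overlap contains every simple. The paper reduces to one-site extensions and factors basis elements of \(\C(x^{k+1},x^{k+1})\) as \((1_x\otimes f)(g\otimes 1_x)\). You treat arbitrary \(p,q\) at once, which, once repaired, avoids that induction.

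However, Step~2 as written does not produce the factorization you use at the end. With the cut you name, \(a=c=x^p\otimes x^{k_1}\) and \(b=d=x^{k_2}\otimes x^q\), your formula gives composites \((1_c\otimes\beta)(\alpha\otimes 1_b)\). These pass through \(x^p\otimes x^{k_1}\otimes r\otimes x^{k_2}\otimes x^q\), not through \(x^p\otimes r\otimes x^q\). That object in general does not embed isometrically into \(x^{p+k+q}\). So the unspecified isometry \(w\) ``embedding the needed objects'' carries the whole argument, and the halving of the overlap is never actually used.

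The fix is to apply the same formula with a diagonal cut: \(a=x^p\otimes x^k\), \(b=x^q\), \(c=x^p\), \(d=x^k\otimes x^q\). This gives
\[
f=\sum_{r,i}(1_{x^p}\otimes\beta_{r,i})(\alpha_{r,i}\otimes 1_{x^q}),\qquad \alpha_{r,i}\colon x^p\otimes x^k\to x^p\otimes r,\quad \beta_{r,i}\colon r\otimes x^q\to x^k\otimes x^q .
\]
Choose an isometry \(v_r\colon r\to x^k\) and insert \(v_r^*v_r=1_r\):
\[
f=\sum_{r,i}\bigl(1_{x^p}\otimes\beta_{r,i}(v_r^*\otimes 1_{x^q})\bigr)\bigl(((1_{x^p}\otimes v_r)\alpha_{r,i})\otimes 1_{x^q}\bigr).
\]
Each term is an element of \(\Loc^{\bulk}_J\) times an element of \(\Loc^{\bulk}_I\). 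Only the fact that \(x^k\) contains every simple is used, so neither the \(k_1/k_2\) split nor an enlarged \(L\) is needed. Your original cut can also be rescued: decompose \(x^{k_1}\otimes r\otimes x^{k_2}\) into simples and embed each into \(x^k\), under the same hypothesis.

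Separately, Step~1 has a small non sequitur. The inclusion \(\mathbbm{1}\subset x^N\) only gives \(x^{n+jN}\supseteq x^n\), not every \(n'\ge n+N\). The correct monotonicity argument is that if \(x^m\) contains every simple, so does \(x^{m+1}\). Take a simple \(t\subset x\) and any simple \(u\). Then \(\C(u,s\otimes t)\cong\C(u\otimes\bar t,s)\neq 0\) for any simple summand \(s\) of \(u\otimes\bar t\), so \(u\subset x^{m+1}\).
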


\begin{proof}
To prove the theorem, we take \(L\) to be the smallest positive integer such that \(x^n\) contains all simple objects as direct summand.
It suffices to show that if \(k \geq L+1\), then \(\Loc^{\bulk}_{[a, a+k]}\) is generated by the subalgebras \(\Loc^{\bulk}_{[a, a+k-1]}\) and \(\Loc^{\bulk}_{[a+1, a+k]}\).
The idea is to take a basis for \(\C(x^{k+1}, x^{k+1})\) and show each basis element is a product \((1_x \otimes f)(g \otimes 1_x)\) for \(f,g \in \C(x^k, x^k)\), which is straightforward but rather tedious.
We refer the reader to the reference for details.
\end{proof}

\begin{theorem}
\label{thm:algebraic-haag-duality}
(Algebraic Haag duality) \cite[Proposition 4.14]{Jones_2024_DHR}
There is a constant \(R > 0\) such that for any \(b-a>R\), we have \((\Loc^{\bulk}_{(-\infty, a)} \otimes \Loc^{\bulk}_{(b, \infty)})' = \Loc^{\bulk}_{[a,b]}\).
\end{theorem}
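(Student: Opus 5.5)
The inclusion $\Loc^{\bulk}_{[a,b]}\subseteq(\Loc^{\bulk}_{(-\infty,a)}\otimes\Loc^{\bulk}_{(b,\infty)})'$ is immediate from locality. I would prove the reverse inclusion by a reduction to finite pieces followed by a graphical (string-diagram) argument in $\C$.

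\emph{Step 1: reduction to finite pieces.} Let $z\in\Loc^{\bulk}$ commute with $\Loc^{\bulk}_{(-\infty,a)}$ and $\Loc^{\bulk}_{(b,\infty)}$. Let $\tau$ be the unique tracial state, built from the categorical trace. Let $E_n\colon\Loc^{\bulk}\to\Loc^{\bulk}_{[a-n,b+n]}$ be the $\tau$-preserving conditional expectation. Since $z$ is a norm limit of local elements, $E_n(z)\to z$ in norm. Moreover $E_n$ is a bimodule map over its range, so for $g\in\Loc^{\bulk}_{[a-n,a)}\cup\Loc^{\bulk}_{(b,b+n]}$ we have $[E_n(z),g]=E_n([z,g])=0$. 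It therefore suffices to show the following finite statement. For $n$ large, any $f\in\C(x^n\otimes y\otimes x^n,\,x^n\otimes y\otimes x^n)$ with $y=x^{b-a+1}$ that commutes with $\Loc_{[a-n,a)}\otimes1\otimes\Loc_{(b,b+n]}$ lies in $1_{x^n}\otimes\C(y,y)\otimes1_{x^n}$. Then $E_n(z)\in\Loc^{\bulk}_{[a,b]}$ for all large $n$. Since this algebra is finite-dimensional and hence closed, $z\in\Loc^{\bulk}_{[a,b]}$.

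\emph{Step 2: the finite statement.} Commuting only with $\C(x^n,x^n)\otimes1\otimes\C(x^n,x^n)$ is not enough. The relative commutant is then a family of endomorphisms $f_{s,t}\in\C(s\otimes y\otimes t,\,s\otimes y\otimes t)$ indexed by simple summands $s,t$ of $x^n$, which is larger than $\C(y,y)$. So I would exploit the full interval $[a-n,a)$ and split it as $[a-2k,a-k)\cup[a-k,a)$ with $k\ge L$, where $L$ is as in Theorem~\ref{thm:covering-property}. Using self-duality of $x$ (assumed for the standard action), the local algebra contains the Temperley--Lieb-type element
\[
e=\mathrm{coev}_{x^k}\circ\mathrm{ev}_{x^k}\in\C(x^{k}\otimes x^{k},\,x^{k}\otimes x^{k}),
\]
placed on $[a-2k,a)$. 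The key computation is then as follows. Write $f$ as an element acting on $x^k\otimes(\text{rest})$. Compose the identity $e(1\otimes f)=(1\otimes f)e$ with a cup on the left and a cap, and use the zigzag identities. This gives $f=1_{x^k}\otimes\tilde f$, where $\tilde f$ is $d_{x^k}^{-1}$ times the left partial trace of $f$ over $x^k$.

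Iterating this, or choosing $k$ to absorb all of $[a-n,a)$, strips the left tensor factor. The mirror argument on $(b,b+n]$ strips the right factor. One must check that the element $e$ really lies in the local algebra being commuted with. It does, because that algebra is all of $\C(x^{2k},x^{2k})$ on the interval. This is where the constant $R$ (some multiple of $L$) enters, since we need enough room in $n$ relative to the strip to be removed.

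\emph{Expected obstacle.} The main difficulty is Step 2: making the sliding/partial-trace argument precise when $f$ is an arbitrary endomorphism rather than a product, and keeping track of normalizations by quantum dimensions so that the recovered $\tilde f$ is the correct element. Step 1 is routine AF bookkeeping.
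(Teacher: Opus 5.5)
There is a genuine gap where Step~1 meets Step~2: the finite statement you reduce to is false as soon as $\C$ has more than one simple object. Take a central projection $p_s$ of $\C(x^n,x^n)$ onto an isotypic component of $x^n$. Then $p_s\otimes 1_y\otimes 1_{x^n}$ lies in $\Loc^{\bulk}_{[a-n,b+n]}$ and commutes with $\Loc^{\bulk}_{[a-n,a)}$ and $\Loc^{\bulk}_{(b,b+n]}$. But it is not in $1_{x^n}\otimes\C(y,y)\otimes 1_{x^n}$, since a left partial trace would force $p_s$ to be a scalar.

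Your own description of the relative commutant as $\bigoplus_{s,t}\C(s\otimes y\otimes t,s\otimes y\otimes t)$ already says this. ``Exploiting the full interval'' cannot change it, because $\C(x^n,x^n)$ \emph{is} $\Loc^{\bulk}_{[a-n,a)}$. So an $e$ placed on $[a-2k,a)\subseteq[a-n,a)$ imposes no new condition.

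Your zigzag computation is correct, but only when $e$ overhangs the left end of the support of $f$, i.e.\ when $f$ acts trivially on the leftmost $k$ sites of $e$. For $f=E_n(z)$ that requires $[e,E_n(z)]=0$ for an $e$ sticking out of $[a-n,b+n]$. The bimodule property of $E_n$ says nothing about such $e$. Replacing $E_n(z)$ by a local approximant and using approximate commutation does not help either: it produces error bounds of order $d_x^{k}$ coming from $\mathrm{ev}_{x^k}$ and $\mathrm{coev}_{x^k}$, and $k$ must grow as the approximation improves.

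What is missing is an infinite-volume input that kills finite-level central elements such as $p_s\otimes 1\otimes 1$. The paper gets it from simplicity (trivial center) of $\Loc^{\bulk}_{(-\infty,a)}\otimes\Loc^{\bulk}_{(b,\infty)}$:
\begin{enumerate}
\item The commutant is the space of central vectors of $\Loc^{\bulk}\simeq(L^a\boxtimes R^b)(Q_{a,b})$.
\item A simple bimodule with a nonzero central vector must be the unit bimodule.
\item By full faithfulness, such summands come only from unit summands of $Q_{a,b}$.
\item These realize $\Loc^{\bulk}_{(-\infty,a)}\otimes\Loc^{\bulk}_{[a,b]}\otimes\Loc^{\bulk}_{(b,\infty)}$, whose central vectors are exactly $\Loc^{\bulk}_{[a,b]}$.
\end{enumerate}

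Your diagrammatic idea can instead be rescued by using compression rather than commutation, together with uniqueness of the trace on the half-line algebra. Take $z_0\in\Loc^{\bulk}_{[c,d]}$ with $\|z-z_0\|<\epsilon$ and set $k=a-c$. Let $e=d_x^{-k}\,\mathrm{coev}_{x^k}\circ\mathrm{ev}_{x^k}$ be the projection on $[c-k,a)$, using a standard solution and your self-duality assumption. Then $ez_0e=F_a(z_0)e$, where $F_a$ is the normalized left partial trace over $[c,a)$. This $F_a$ is the $\tau$-preserving conditional expectation onto the closure of $\bigcup_d\Loc^{\bulk}_{[a,d]}$.

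Since $ez=ze$, the element $w=F_a(z)-z$ satisfies $\|we\|<2\epsilon$. Moreover $w$ commutes with $\Loc^{\bulk}_{(-\infty,a)}$, so $g\mapsto\tau(g\,w^*w)$ is a positive tracial functional there. Uniqueness of the trace then gives $\tau(w^*w)\tau(e)=\tau(ew^*we)\le\|we\|^2\tau(e)$, hence $\tau(w^*w)<4\epsilon^2$ with no dependence on $k$.

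Therefore $z=F_a(z)$. The mirror argument gives invariance under the right partial trace onto $\Loc^{\bulk}_{(-\infty,b]}$, and composing the two expectations places $z$ in $\Loc^{\bulk}_{[a,b]}$.
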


An vector $v$ in a $\Loc^{\bdy}$-module $M$ is said to be \emph{central} if $a \vartriangleright v = v \vartriangleleft a$ for any $a \in \Loc^{\bdy}$. The idea is to regard \(\Loc^{\bulk}\) as a bimodule of \(\Loc^{\bulk}_{(-\infty, a)} \otimes \Loc^{\bulk}_{(b, \infty)}\), then the commutant consists of central vectors in the bimodule. 
To implement the idea, we need some technical lemmas.
First, we should give the bimodule a categorical characterization.
Let \(m = x^{b-a+1}\), which is the object in the interval \([a,b]\).
Note that \(\C\) is canonically a module category of \(\C \boxtimes \C^{\rev}\), therefore we can consider the internal hom \([m,m]\), which has the structure of a Q-system \cite[Theorem 4.6]{Chen_2024_InductiveLimitAF}\footnote{Some non-canonical choice must be made to equip \([m,m]\) with the structure of a Q-system. However, what we care is the bimodule structure of \(\Loc^{\bulk}\), which doesn't depend on the choice. See \cite[Remark 4.9]{Jones_2024_DHR} for more detailed discussion.}.
Let's denote \(Q_{a,b} = [m,m]\).

\begin{lemma}
\(\Loc^{\bulk} \simeq (L^a \boxtimes R^b)(Q_{a,b})\) as bimodules.
\end{lemma}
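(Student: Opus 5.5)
We need to prove that the quasi-local algebra $\Loc^{\bulk}$, viewed as a bimodule over $\Loc^{\bulk}_{(-\infty,a)}\otimes\Loc^{\bulk}_{(b,\infty)}$, is the image of $Q_{a,b}=[m,m]$ under the action $L^a\boxtimes R^b$ of $\C\boxtimes\C^{\rev}$. Here $L^a$ and $R^b$ are the standard AF-actions of $\C$ on the left and right half-line algebras, with strands attached at the inner ends, as in \S\ref{StandardAction} specialized to $\M=\C$. The plan is to descend to finite pieces and compare everything at level $n$, using the AF-action machinery.

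\textbf{Step 1: finite pieces.} For $n$ large, the finite piece is
\[
\Loc^{\bulk}_{[a-n,b+n]} = \C\bigl(x^n\otimes m\otimes x^n,\; x^n\otimes m\otimes x^n\bigr),
\]
viewed as a bimodule over $\C(x^n,x^n)\otimes\C(x^n,x^n)$. By the defining adjunction of the internal hom, for the $\C\boxtimes\C^{\rev}$-module $\C$ we have
\[
\C(y\otimes m\otimes z,\; m)\cong (\C\boxtimes\C^{\rev})\bigl(y\boxtimes z,\;[m,m]\bigr).
\]
Using semisimplicity, I will rewrite the finite piece as
\[
\bigoplus_{s,t\ \mathrm{simple}} \C(x^n,s\otimes x^n)\otimes \C(x^n,x^n\otimes t)\otimes \mathrm{Hom}\bigl(s\boxtimes t,\,Q_{a,b}\bigr).
\]
This is exactly the finite piece $(L^a_n\boxtimes R^b_n)(Q_{a,b})$, since $\Std_n(s)=\C(x^n,s\otimes x^n)$. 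Concretely, the isomorphism bends the middle strands of $m$ using the unit and counit of the internal hom; pictorially, one cuts an operator on $[a-n,b+n]$ along the boundary of $[a,b]$.

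\textbf{Step 2: compatibility with the structure and the inclusions.} I will check that the Step 1 isomorphism intertwines the left and right actions of the outer algebras, which is clear graphically because those actions compose on the outer strands. It also intertwines the inner products, and here I will normalize using the Q-system structure on $[m,m]$ so that the form $\langle\xi\mid\eta\rangle=\xi^*\eta$ matches the $\Loc$-valued form; I will cite \cite[Remark 4.9]{Jones_2024_DHR} for independence from the choice. Finally, it commutes with the connecting maps: tensoring by $1_x$ on both ends corresponds to $k^s_n\otimes k^t_n$.

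\textbf{Step 3: pass to the limit.} Taking inductive limits via Theorem~\ref{thm:tensor-product-inductive-limit}, together with the inductive-limit construction of AF-actions \cite[Proposition 4.4]{Chen_2024_InductiveLimitAF}, gives the bimodule isomorphism. Density of the union of finite pieces ensures that the completion on both sides agrees.

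\textbf{Main obstacle.} The hard part is Step 2's inner product compatibility. The bulk algebra's form is composition in $\C$, while the internal hom side uses the Q-system's counit and the standard-action form, so matching them requires the right normalization of dimensions (square roots of quantum dimensions). I will first try to verify this on the matrix units indexed by simples $s,t$, where the check reduces to a scalar computation.
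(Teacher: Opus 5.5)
This is essentially the paper's own proof. You reduce to finite pieces and expand $Q_{a,b}$ over simples $s\boxtimes t$ by semisimplicity; your internal-hom adjunction gives exactly the multiplicity spaces $\C(s\otimes m\otimes t,m)$ that the paper writes down. You then identify $\bigoplus_{s,t}\C(s\otimes m\otimes t,m)\otimes L^a_n(s)\otimes R^b_n(t)$ with $\Loc^{\bulk}_{[a-n,b+n]}$, and your Steps 2--3 spell out what the paper calls straightforward.

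Two small remarks. First, the cutting map $\xi\otimes\phi\otimes\eta\mapsto(1_{x^n}\otimes\phi\otimes 1_{x^n})\circ(\xi\otimes 1_m\otimes\eta)$ only typechecks with the strands at the inner ends, as you say in words, i.e.\ $\xi\in\C(x^n,x^n\otimes s)$ and $\eta\in\C(x^n,t\otimes x^n)$; your display, like the paper's, puts them on the outer sides. Second, the inner-product matching you flag as the main obstacle is not essential for an isomorphism of bimodules, since an invertible intertwiner can be replaced by the unitary part of its polar decomposition.
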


\begin{proof}
It suffices to prove the isomorphism for finite pieces, i.e. \((L^a_k \boxtimes R_k^b)(Q_{a,b}) \simeq \Loc^{\bulk}_{[a-k,b+k]}\).
Since \(\C\) is semisimple, we have \(Q_{a,b} \simeq \oplus_{y,z \in \mathrm{Irr}(\C)} \C(y \otimes x^{b-a+1} \otimes z, x^{b-a+1}) \otimes (y \boxtimes z)\).
It follows that \((L^a_k \boxtimes R_k^b)(Q_{a,b}) \simeq \oplus_{y,z \in \mathrm{Irr}(\C)} \C(y \otimes x^{b-a+1} \otimes z, x^{b-a+1}) \otimes \C(x^k, y \otimes x^k) \otimes \C(x^k, x^k \otimes z)\).
Using the same technique as in Theorem~\ref{thm:covering-property}, it's straightforward to verify \((L^a_k \boxtimes R_k^b)(Q_{a,b}) \simeq \Loc^{\bulk}_{[a-k,b+k]}\).
\end{proof}

\begin{lemma}
If \(M\) is a simple \(\Loc^{\bulk}_{(-\infty, a)} \otimes \Loc^{\bulk}_{(b, \infty)}\)-bimodule with a nonzero central vector \(v\), then \(M\) must be isomorphic to \(\Loc^{\bulk}_{(-\infty, a)} \otimes \Loc^{\bulk}_{(b, \infty)}\) as a bimodule.
\end{lemma}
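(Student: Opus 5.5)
Write $B := \Loc^{\bulk}_{(-\infty, a)} \otimes \Loc^{\bulk}_{(b, \infty)}$. The plan is to build a bimodule map $B \to M$ out of the central vector and then show it is an isomorphism using simplicity of $B$ and of $M$. Define
\[
    \Phi \colon B \to M, \qquad \Phi(c) := c \vartriangleright v = v \vartriangleleft c .
\]
The two expressions agree because $v$ is central. We check that $\Phi$ is a bimodule map. For $c, d, e \in B$,
\[
    \Phi(d c e) = d c e \vartriangleright v = d \vartriangleright \bigl( (c \vartriangleright v) \vartriangleleft e \bigr),
\]
where the last equality uses $e \vartriangleright v = v \vartriangleleft e$ and the commutation of the left and right actions.

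Next we look at the inner product induced on $B$. Set
\[
    \langle \Phi(c) \mid \Phi(d) \rangle = \langle v \vartriangleleft c \mid v \vartriangleleft d \rangle = c^* \langle v \mid v \rangle d .
\]
Let $p := \langle v \mid v \rangle \in B$, which is a positive element. Centrality of $v$ gives, for every $c \in B$,
\[
    c^* p = \langle c \vartriangleright v \mid v \rangle = \langle v \mid c^* \vartriangleright v \rangle = p\, c^* .
\]
The middle equality uses that the left action is by adjointable operators. So $p$ lies in the center of $B$.

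We now use that $B$ is simple (the Corollary above) and unital. Its center is then $\mathbb{C}$, so $p = \lambda 1$. Since $v \neq 0$, positive-definiteness gives $\lambda > 0$. After rescaling $v$ we may take $\langle v \mid v \rangle = 1$. Then $\Phi$ preserves the $B$-valued inner product of the standard bimodule ${}_B B_B$, namely $\langle c \mid d \rangle = c^* d$. In particular $\Phi$ is an isometry with closed range. It is adjointable, with $\Phi^*(m) = \langle v \mid m \rangle$; this map is a right module map, and it is a left module map because $v$ is central.

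It follows that $\Phi(B)$ is a closed sub-bimodule of $M$ and $\Phi\Phi^*$ is a projection in the bimodule endomorphisms of $M$. This is the step I expect to be the main obstacle: everything hinges on what "simple bimodule" means. I would interpret it as having no nontrivial complemented sub-bimodule, equivalently $\mathrm{End}(M) = \mathbb{C}$. Under that reading, the nonzero projection $\Phi\Phi^*$ must equal $1$, so $\Phi$ is a unitary bimodule isomorphism ${}_B B_B \cong M$.

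If "simple" is meant instead as having no closed sub-bimodules at all, the argument is shorter. $\Phi(B)$ is a nonzero closed sub-bimodule, so it must be all of $M$, and $\Phi$ is surjective.
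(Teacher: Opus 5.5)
Your proof is correct, but it takes a different route from the paper's. The paper's argument is purely algebraic. Simplicity of $M$ gives that $M$ is generated by $v$. Centrality of $v$ makes the annihilator $\{c \in B : c \vartriangleright v = 0\}$ a closed two-sided ideal, which vanishes because $B$ is simple. Hence $c \vartriangleright v \mapsto c$ is a well-defined bijective bimodule map.

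You use simplicity of $B$ through its trivial center instead. Since $\langle v \mid v\rangle$ is central, it is a positive scalar, so you can normalize $v$. Then $\Phi$ is an isometric adjointable bimodule map with $\Phi^*\Phi = 1$, and simplicity of $M$ forces the nonzero bimodule projection $\Phi\Phi^*$ to equal $1$.

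The paper's route is shorter, but it leaves some points implicit. It does not check that $c \vartriangleright v \mapsto c$ is bounded, that it extends from the algebraic span $Bv$ to its closure, or that it respects the $B$-valued inner products. All three are needed for an isomorphism of Hilbert bimodules rather than of algebraic bimodules.

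Your normalization settles all of these at once. Injectivity comes for free from the isometry, so the annihilator argument becomes unnecessary. Moreover, $\Phi(B)$ is the range of a bimodule projection and hence automatically complemented. So your conclusion holds under either reading of ``simple'', and the step you flagged as the main obstacle is not actually one.
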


\begin{proof}
First, note that \(M\) could be generated by the single vector \(v\).
We claim that if \(a \cdot v = 0\) for some \(a \in \Loc^{\bulk}_{(-\infty, a)} \otimes \Loc^{\bulk}_{(b, \infty)}\), then \(a=0\).
The reason is that all such \(a\) form a two-sided closed ideal in \(\Loc^{\bulk}_{(-\infty, a)} \otimes \Loc^{\bulk}_{(b, \infty)}\); however, as the algebra is simple, this ideal must be zero.
Finally, we obtain a bimodule isomorphism \(M \simeq \Loc^{\bulk}_{(-\infty, a)} \otimes \Loc^{\bulk}_{(b, \infty)}\) by \(a \cdot v \mapsto a\).
This finishes the proof.
\end{proof}

\emph{Proof of Theorem~\ref{thm:algebraic-haag-duality}.}
First, note that \(\Loc^{\bulk}\) is a semisimple bimodule.
Then it suffices to find central vectors in each simple summand of \(\Loc^{\bulk}\).
We know that only simple summands isomorphic to \(\Loc^{\bulk}_{(-\infty, a)} \otimes \Loc^{\bulk}_{(b, \infty)}\) as a bimodule could contain nonzero central vectors.
As \(L^a \boxtimes R^b\) is fully faithful, \(L^a \boxtimes R^b(y \boxtimes z) \simeq \Loc^{\bulk}_{(-\infty, a)} \otimes \Loc^{\bulk}_{(b, \infty)}\) iff \(y \simeq z \simeq \mathbbm{1}_\C\).
Therefore, simple summands isomorphic to \(\Loc^{\bulk}_{(-\infty, a)} \otimes \Loc^{\bulk}_{(b, \infty)}\) in \(\Loc^{\bulk} \simeq (L^a \boxtimes R^b)(Q_{a,b})\) must correspond to tensor unit summands in \(Q_{a,b}\).
Let's denote the subobject of \(Q_{a,b}\) containing all tensor unit summands as \(U\).
Recall that
\[
    (L^a_k \boxtimes R_k^b)(Q_{a,b}) \simeq \oplus_{y,z \in \mathrm{Irr}(\C)} \C(y \otimes x^{b-a+1} \otimes z, x^{b-a+1}) \otimes \C(x^k, y \otimes x^k) \otimes \C(x^k, x^k \otimes z).
\]
By taking \(y=z=1_{\C}\), we find 
\[
    (L^a_k \boxtimes R_k^b)(U) = \C(x^{b-a+1}, x^{b-a+1}) \otimes \C(x^k, x^k) \otimes \C(x^k, x^k) = \Loc^{\bulk}_{[a-k,a)} \otimes \Loc^{\bulk}_{[a,b]} \otimes \Loc^{\bulk}_{(b,b+k]}.
\]
By taking the limit \(k \to \infty\), we find that \((L^a \boxtimes R^b)(U) = \Loc^{\bulk}_{(-\infty, a)} \otimes \Loc^{\bulk}_{[a,b]} \otimes \Loc^{\bulk}_{(b, \infty)}\).
The center of \(\Loc^{\bulk}_{(-\infty, a)} \otimes \Loc^{\bulk}_{(b, \infty)}\) is trivial since any simple unital \(C^*\)-algebra must have trivial center.
Thus the set of central vectors in \((L^a \boxtimes R^b)(U)\) is precisely \(\Loc^{\bulk}_{[a,b]}\).
Finally, the previous lemma shows that \((L^a \boxtimes R^b)(U)\) must contain all central vectors in \(\Loc^{\bulk}\).
This finishes the proof.

Similarly, we can show that the covering property and algebraic Haag duality hold for fusion spin chains with boundary:

\begin{theorem}
\label{thm:covering-property-boundary}
(Covering property)
There is an \(L>0\) such that if \(I, J\) are intervals whose intersection contains an interval of length \(L\), then \(\Loc^{\bdy}_{I \cup J}\) is generated by \(\Loc^{\bdy}_I\) and \(\Loc^{\bdy}_J\).
\end{theorem}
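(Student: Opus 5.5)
Let $L$ be chosen so that $x^L$ contains every simple object of $\C$; the same $L$ works in the boundary case, with at most a shift by one. There are two situations. If neither $I$ nor $J$ contains the boundary site $0$, then $\Loc^{\bdy}_I=\Loc^{\bulk}_I$ and likewise for $J$ and $I\cup J$, and the claim is Theorem~\ref{thm:covering-property}. Otherwise we may assume $I=[0,j]$ and $J=[i,k]$ with $0< i$ and $j-i+1\ge L$. The case $J\subseteq I$ is trivial.

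We then reduce, by induction as in the proof of Theorem~\ref{thm:covering-property}, to a single step. It suffices to show that for $n\ge L+1$ the algebra $\Loc^{\bdy}_{[0,n]}=\M(l\otimes x^n,l\otimes x^n)$ is generated by $\Loc^{\bdy}_{[0,n-1]}$ (embedded as $g\otimes 1_x$) and $\Loc^{\bdy}_{[1,n]}$ (embedded as $1_l\otimes f$ with $f\in\C(x^n,x^n)$). Enlarging $I$ one site at a time then gives $\Loc^{\bdy}_{[0,k]}$, since each intermediate right-hand interval still overlaps $I$ in at least $L$ sites. Applying the bulk covering theorem first to merge the parts of $J$ away from the boundary gives a clean reduction.

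For the single step, decompose the two sides of $\M(l\otimes x^n,l\otimes x^n)$ through simple objects. Any endomorphism is a sum of terms $\alpha^*\circ\beta$ with $\beta\colon l\otimes x^n\to m\otimes x\to\dots$. More precisely, using semisimplicity of $\M$ we write each matrix unit as a product of three morphisms:
\[
l\otimes x^{n}\xrightarrow{g\otimes 1_x} l\otimes x^{n-1}\otimes x \xrightarrow{1_l\otimes f} l\otimes x^{n-1}\otimes x\xrightarrow{g'\otimes 1_x} l\otimes x^n .
\]
Here $g,g'$ project onto and out of a summand $m\otimes x$ of $l\otimes x^{n-1}\otimes x$, with $m\in\M$ simple and $m\subseteq l\otimes x^{n-1}$. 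The middle map $f$ acts on the last $n$ sites. The key point is that, because $x^{n-1}$ with $n-1\ge L$ contains every simple $y\in\C$, every morphism through the channel $l\otimes x^{n-1}\to m$ can be rerouted. Write $m\subseteq l'\otimes y$, where $l'$ is a simple summand appearing in $l\otimes x^{\ge 0}$ on the boundary site and $y\subseteq x^{n-1}$. The change of basis on the $x^{n-1}\otimes x$ part, namely $y\otimes x\to y'$ fusion trees, is then implemented by $\Loc_{[1,n]}$. The fusion on $l\otimes x$ at the first site, together with the outer legs, is implemented by $\Loc_{[0,n-1]}$.

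The concrete plan is to mimic Jones' basis argument. Choose a basis of $\M(l\otimes x^n,l\otimes x^n)$ given by fusion trees $l\otimes x\to m_1$, $m_1\otimes x\to m_2$, and so on. Then show that each tree-to-tree matrix unit factors as $(g\otimes 1_x)(1_l\otimes f)(g'\otimes 1_x)$, inserting resolutions of identity $\sum_y$ over the middle object. Because $x^{n-1}\supseteq y$ for all simple $y$, the needed intermediate identity factors are available. The main obstacle is bookkeeping. The module associator $(l\otimes x^a)\otimes x^b\cong l\otimes(x^a\otimes x^b)$ mixes the $\M$ and $\C$ fusion trees, so we need an $F$-move to transform a left-branching tree into one where a $\C$-subtree of the last $n-1$ sites fuses to a simple $y$ before meeting $l$. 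This move is an invertible linear change of basis, so the argument only needs the spanning property, not explicit $6j$-symbols. Indecomposability of $\M$ is not needed here, only that every $y$ occurs in $x^{n-1}$.
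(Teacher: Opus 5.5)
Your proposal is correct and follows essentially the paper's route. You reduce to the one-site step $\Loc^{\bdy}_{[0,n]}=\langle \Loc^{\bdy}_{[0,n-1]},\Loc^{\bdy}_{[1,n]}\rangle$ and then run Jones' basis argument, using that every simple of $\C$ occurs in the overlap $x^{n-1}$, so that morphisms $l\otimes z\to m\otimes x$ can be routed through $l\otimes y\otimes x$ with $y\subseteq x^{n-1}$. Your explicit reduction (bulk covering to get $\Loc^{\bdy}_{[1,k]}$, then enlarging $[0,j]$ one site at a time) fills in a step the paper leaves implicit, and your third factor is harmless but unnecessary, since sums of two-factor products $(1_l\otimes f)(g\otimes 1_x)$, as in the paper, already span.
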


\begin{proof}
Again we take a basis for \(\M(l \otimes x^{k}, l \otimes x^{k})\) and show each basis element is a product \((1_x \otimes f)(g \otimes 1_x)\) for \(f \in \C(x^k, x^k), g \in \M(l \otimes x^{k-1}, l \otimes x^{k-1})\).
The proof is completely analogous to before.
\end{proof}

\begin{theorem}
\label{thm:algebraic-haag-duality-boundary}
(Algebraic Haag duality)
There is a constant \(K > 0\) such that for any \(N>K\), we have \((\Loc^{\bdy}_{(N, \infty)})' = \Loc^{\bdy}_{[0,N]}\).
\end{theorem}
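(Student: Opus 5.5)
We follow the bulk argument for Theorem~\ref{thm:algebraic-haag-duality} step by step, with the two-sided region replaced by the single right-hand region \((N,\infty)\), and with \(\C\boxtimes\C^{\rev}\) replaced by \(\C\) acting on \(\M\). The inclusion \(\Loc^{\bdy}_{[0,N]}\subseteq(\Loc^{\bdy}_{(N,\infty)})'\) is immediate from locality. The work is the reverse inclusion. We regard \(\Loc^{\bdy}\) as a \(\Loc^{\bdy}_{(N,\infty)}\)-bimodule, so that the relative commutant is exactly the space of central vectors.

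First we give this bimodule a categorical description. Set \(m=l\otimes x^{N}\in\M\), the object on \([0,N]\). Since \(\M\) is a right \(\C\)-module category, the internal hom \([m,m]\in\C\) taken with respect to the right action is a Q-system, which we denote \(Q_N\). By the same argument as \cite[Theorem 4.6]{Chen_2024_InductiveLimitAF}, the choice of Q-system structure does not affect the resulting bimodule.

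Write \(R^N\colon\C\to\mathrm{Bim}(\Loc^{\bdy}_{(N,\infty)})\) for the standard action of \(\C\) on the half-line algebra, with finite pieces \(R^N_k(z)=\C(x^k,z\otimes x^k)\); this is the case \(\M=\C\) of the construction in Section~\ref{StandardAction}. We then prove \(\Loc^{\bdy}\simeq R^N(Q_N)\) as bimodules. As in the bulk lemma, it suffices to check this on finite pieces. Semisimplicity gives
\[
Q_N\simeq\bigoplus_{z\in\mathrm{Irr}(\C)}\M(m\otimes z,m)\otimes z,
\]
and we must identify
\[
\bigoplus_{z}\M(m\otimes z,m)\otimes\C(x^k,z\otimes x^k)\;\cong\;\M(m\otimes x^k,m\otimes x^k)=\Loc^{\bdy}_{[0,N+k]}.
\]
This is the usual decomposition of a morphism \(m\otimes x^k\to m\otimes x^k\) through the simple channels \(z\). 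It requires \(N\) large enough that \(m=l\otimes x^{N}\) sees every channel, and we expect this to follow from the counting used in the covering property, Theorem~\ref{thm:covering-property-boundary}. The constant \(K\) is determined here, namely by the \(L\) of that theorem.

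Next come the structural inputs. The algebra \(\Loc^{\bdy}_{(N,\infty)}\) is isomorphic to the half-line algebra \(\Loc^{\bdy}\) with \(\M=\C\). It is therefore simple by the simplicity lemma, and hence has trivial center. The lemma on simple bimodules with a nonzero central vector applies verbatim, since its proof used only simplicity: any simple summand containing a nonzero central vector is isomorphic to the trivial bimodule. The action \(R^N\) is fully faithful by \cite{Jones_2024_DHR}, so \(R^N(z)\) is trivial if and only if \(z\simeq\mathbbm{1}\). Consequently every central vector lies in \(R^N(U)\), where \(U\subseteq Q_N\) is the isotypic component of \(\mathbbm{1}\).

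Taking \(z=\mathbbm{1}\) in the finite-piece formula gives
\[
R^N_k(U)=\M(m,m)\otimes\C(x^k,x^k)=\Loc^{\bdy}_{[0,N]}\otimes\Loc^{\bdy}_{(N,N+k]}.
\]
Passing to the limit, \(R^N(U)=\Loc^{\bdy}_{[0,N]}\otimes\Loc^{\bdy}_{(N,\infty)}\). Because the center of the simple algebra \(\Loc^{\bdy}_{(N,\infty)}\) is \(\mathbb{C}\), the central vectors of \(R^N(U)\) are exactly \(\Loc^{\bdy}_{[0,N]}\otimes 1\), which proves the theorem.

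The main obstacle is the finite-piece isomorphism \(\Loc^{\bdy}_{[0,N+k]}\cong R^N_k(Q_N)\). We must show it is compatible with the bimodule structure and with the inclusions \(k\to k+1\), so that it passes to the inductive limit, and that it is surjective once \(N>K\). We would first verify it graphically, by inserting the resolution of the identity \(1_{x^k}=\sum_{z,\alpha}\alpha^*\alpha\) through simple \(z\), and then invoke Theorem~\ref{thm:tensor-product-inductive-limit}-style limit arguments for the compatibility.
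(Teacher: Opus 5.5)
Your proposal is correct and is essentially the paper's own argument (which the paper only sketches by pointing back to the bulk proof): realize \(\Loc^{\bdy}\) as \(R^N(Q_N)\) with \(Q_N=[l\otimes x^N,l\otimes x^N]\), use simplicity of \(\Loc^{\bdy}_{(N,\infty)}\) and full faithfulness of \(R^N\) to confine central vectors to the image of the unit summand, and read off \(\Loc^{\bdy}_{[0,N]}\). One minor correction: the finite-piece isomorphism \(\bigoplus_z\M(m\otimes z,m)\otimes\C(x^k,z\otimes x^k)\cong\M(m\otimes x^k,m\otimes x^k)\) holds for every \(N\) by semisimplicity and rigidity alone, just as in the paper's computation of \(|Q|_n\) in the essential-surjectivity proof. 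So tying \(K\) to the covering constant is unnecessary, though harmless.
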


\begin{proof}
Again, we view \(\Loc^{\bdy}\) as a bimodule of \(\Loc^{\bdy}_{(N, \infty)}\) and try to find the central vectors.
The rest is completely analogous to before: \(\Loc^{\bdy} = L^a(Q_N)\) where \(Q_N = [l \otimes x^{N}, l \otimes x^{N}]\), and central vectors must lie in the image of tensor unit summands of \(Q_N\), which turns out to be \(\Loc^{\bdy}_{[0,N]}\).
\end{proof}

The next theorem is fundamental for the theory of DHR bimodules. We slightly generalize the statement in \cite[Remark 4.7]{Jones_2024_DHR} and improve the proof presented there, so that the theorem holds for any module category $\M$ and the proof does not rely on techniques of subfactor theory:
\begin{theorem}[{\cite[Remark 4.7]{Jones_2024_DHR}}]\label{FullyFaithful}
Standard actions are fully faithful, that is, for any objects $L_1, L_2\in \C_\M^\vee$, the map $\CMdual(L_1, L_2) \to \mathrm{Bim}(\Loc^{\bdy})(\Std(L_1),\Std(L_2))$ induced by $\Std$ is a bijection.
\end{theorem}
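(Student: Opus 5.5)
Faithfulness comes first and is easy. If $f\in\CMdual(L_1,L_2)$ satisfies $\Std(f)=0$, then $(f\otimes 1_{l\otimes x^n})\circ\xi=0$ for all $\xi\in\Std_n(L_1)=\M(l\otimes x^n,L_1\otimes l\otimes x^n)$ and all $n$. Since $l$ contains every simple of $\M$, the object $l\otimes x^n$ contains every simple $m$ of $\M$. So $L_1(m)\subset L_1\otimes l\otimes x^n$ is hit by some $\xi$, and $f_m=0$ for every simple $m$. Hence $f=0$, because a module natural transformation is determined by its components on simples.

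For fullness, fix a bimodule intertwiner $T\colon\Std(L_1)\to\Std(L_2)$. The plan is to show that $T$ preserves finite pieces and is given there by a morphism in $\M$, and then upgrade that morphism to a module natural transformation. Take $\xi\in\Std_n(L_1)$ and view it inside $\Std(L_1)$ via the $k^{L}$ maps. For $a\in\Loc^{\bdy}_{(n,\infty)}$, i.e.\ morphisms on sites beyond $n$, we have $a\triangleright\xi=\xi\triangleleft a$: both are $1_{L_1}\otimes 1_{l\otimes x^n}\otimes a$ composed with $\xi$. Because $T$ is a bimodule map, $T\xi$ is also central for $\Loc^{\bdy}_{(n,\infty)}$. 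Now use the argument of Theorem~\ref{thm:algebraic-haag-duality-boundary}, applied to the bimodule $\Std(L_2)$ over $\Loc^{\bdy}_{(n,\infty)}$ rather than to $\Loc^{\bdy}$ itself. Describe $\Std(L_2)$ as $L^a$ of an internal-hom object $[l\otimes x^n, L_2(l\otimes x^n)]\in\C$. The central vectors then lie exactly in the image of its tensor-unit summand, which is $\M(l\otimes x^n,L_2\otimes l\otimes x^n)=\Std_n(L_2)$. This uses only the simplicity lemma and the full faithfulness of $L^a$ on $\C$, as in the proof of Haag duality. Thus $T$ restricts to maps $T_n\colon\Std_n(L_1)\to\Std_n(L_2)$ that are $\Loc^{\bdy}_n$-bimodule maps and compatible with $-\otimes 1_x$.

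Next, identify each $T_n$ with a morphism. A $\Loc^{\bdy}_n$-bimodule map $\M(y,L_1(y))\to\M(y,L_2(y))$, with $y=l\otimes x^n$ and $\Loc^{\bdy}_n=\operatorname{End}(y)$, is given by postcomposition with a unique $g_n\in\M(L_1(y),L_2(y))$ when $y$ contains all simples. This is the Yoneda-type statement for the additive category generated by $y$: decompose $y=\bigoplus m_i^{\oplus k_i}$, use matrix units in $\operatorname{End}(y)$ to reduce to the components $\M(m,L_1 m')$, and observe that bimodule maps there are exactly postcomposition by a family $g_m\colon L_1(m)\to L_2(m)$ natural in $m$. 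Compatibility with the inclusions gives $g_{n+1}=g_n\otimes 1_x$ up to the module structure maps of $L_1,L_2$. Since each $L_i$ is determined on simples and $l\otimes x^n$ contains them all, this says the family $g=\{g_m\}$ satisfies $g_{m\otimes x}=g_m\otimes 1_x$ under the module constraints. Because $x$ is a strong tensor generator and every simple $c\in\C$ is a summand of some $x^k$, naturality and semisimplicity extend this to $g_{m\otimes c}=g_m\otimes 1_c$ for all $c$. So $g$ is a morphism in $\CMdual$, and $\Std(g)=T$ on each finite piece, hence on the norm-dense union and therefore on all of $\Std(L_1)$.

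The main obstacle is the centrality step, since $T$ is a priori only a bounded map on the completion. It needs the Haag-duality-style computation that the $\Loc^{\bdy}_{(n,\infty)}$-central vectors of $\Std(L_2)$ are exactly $\Std_n(L_2)$, for $n$ larger than the generating length. I would carry this out by copying the proof of Theorem~\ref{thm:algebraic-haag-duality}: semisimplicity of the bimodule, the lemma that a simple bimodule with a nonzero central vector is the trivial one, and reading off the tensor-unit summand of the internal hom at finite level $k$ before passing to the limit $k\to\infty$.
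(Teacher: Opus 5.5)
Your proposal is correct, but it follows a different route from the paper's proof.

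\textbf{The paper's route.} The paper first uses rigidity of $\CMdual$ and monoidality of $\Std$ to rewrite $\mathrm{Bim}(\Loc^{\bdy})(\Std(L_1),\Std(L_2))$ as $\mathrm{Bim}(\Loc^{\bdy})(\Loc^{\bdy},\Std(L))$, with $L=L_2\otimes L_1^\vee$. This is the space of central vectors of a single bimodule. It then localizes a central vector $v$ as follows. It chooses a projective basis $\{s_i\}\subseteq\Std_r(L)$ and observes that $a_i=\langle s_i\mid v\rangle$ commutes with $\Loc^{\bdy}_{(r,\infty)}$. Boundary Haag duality then gives $a_i\in\Loc^{\bdy}_{[0,r]}$, so $v=\sum_i s_i\triangleleft a_i\in\Std_r(L)$. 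Finally, it matches central vectors of $\Std_r(L)$ with module natural transformations $\mathrm{id}_\M\Rightarrow L$.

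\textbf{Your route.} You never pass to duals. You push the finite pieces through $T$ using only centrality relative to the tail algebra. You then read off $g\colon L_1\Rightarrow L_2$ at a finite level by a Morita/Yoneda argument over $\operatorname{End}(l\otimes x^n)$.

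\textbf{Trade-offs.} Your argument uses only the module-functor structure of $L_1$ and $L_2$; it needs neither rigidity nor the monoidal structure of $\Std$. The paper's reduction to the unit makes the localization step very short. In fact the paper's computation only ever uses commutation with $\Loc^{\bdy}_{(r,\infty)}$. So it already proves, in two lines from Theorem~\ref{thm:algebraic-haag-duality-boundary}, exactly the lemma you flag as the main obstacle: $\Loc^{\bdy}_{(n,\infty)}$-central vectors of $\Std(L_2)$ lie in $\Std_n(L_2)$. That proof uses only the $\Loc^{\bdy}$-valued inner product you already have. Your plan to rerun the internal-hom decomposition for $\Std(L_2)$ would also work. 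However, it requires regarding $\Std(L_2)$ as a Hilbert bimodule over the subalgebra $\Loc^{\bdy}_{(n,\infty)}$, via a finite-index conditional expectation, and controlling the completion; the projective-basis trick avoids all of that.

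\textbf{A shared gloss.} Both your write-up and the paper's pass quickly over why the natural transformations obtained at levels $n$ and $n+1$ coincide. This is needed before you can conclude $g_{m\otimes x}=g_m\otimes 1_x$ for a single $g$. It is settled by naturality along $m\cong m\otimes\one\hookrightarrow m\otimes x$, which is available because $x$ contains every simple object.
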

\begin{proof}
Let $L_1^\vee$ be a left dual of $L_1$ in $\CMdual$, and let $L=L_2\otimes L_1^\vee$.
Then $L_1^\vee$ is a right dual of $L_1$ in $(\CMdual)^{\rev}$.
Since $\Std$ is monoidal, rigidity gives natural isomorphisms
\[
\begin{aligned}
\mathrm{Bim}(\Loc^{\bdy})(\Std(L_1),\Std(L_2))
&\simeq \mathrm{Bim}(\Loc^{\bdy})(\Loc^{\bdy}, \Std(L_1^\vee) \boxtimes_{\Loc^{\bdy}} \Std(L_2))\\
&\simeq \mathrm{Bim}(\Loc^{\bdy})(\Loc^{\bdy}, \Std(L)).
\end{aligned}
\]

Note that $\Loc^{\bdy}$, the unit bimodule, is generated by a single vector $1$ (the unit of the algebra).
Therefore, the image of $1$ completely determines a bimodule morphism $\Loc^{\bdy} \to \Std(L_1^\vee) \boxtimes_{\Loc^{\bdy}} \Std(L_2)$.
Note that the image of a central vector under any bimodule morphism $f: M \to M'$ must also be a central vector, as $a \vartriangleright f(v) = f(a \vartriangleright v) = f(v \vartriangleleft a) = f(v) \vartriangleleft a$. Since $1 \in \Loc^{\bdy}$ is a central vector, its image must also be central. Therefore, $\mathrm{Bim}(\Loc^{\bdy})(\Loc^{\bdy}, \Std(L))$ identifies with the set of central vectors in $\Std(L)$. Next we prove that a central vectors in $\Std(L)$ is strictly localized in some finite interval.

Choose $r$ greater than the constant in Theorem~\ref{thm:algebraic-haag-duality-boundary}.
Since $l\otimes x^r$ contains every simple object of $\M$, semisimplicity gives a finite collection $\{s_i\}\subseteq \Std_r(L)$ such that
\[
    \sum_i s_i\circ s_i^*=1_{L\otimes l\otimes x^r}.
\]
Their images form a projective basis of $\Std(L)$, and for every $c\in \Loc^{\bdy}_{(r,\infty)}$, we have $c\triangleright s_i=s_i\triangleleft c$.
Let $v\in \Std(L)$ be a central vector and set $a_i:=\langle s_i\mid v\rangle\in \Loc^{\bdy}$.
For $c\in \Loc^{\bdy}_{(r,\infty)}$, we have
\[
\begin{aligned}
c a_i
&=\langle s_i\triangleleft c^*\mid v\rangle
=\langle c^*\triangleright s_i\mid v\rangle \\
&=\langle s_i\mid c\triangleright v\rangle
=\langle s_i\mid v\triangleleft c\rangle
=a_i c.
\end{aligned}
\]
Thus $a_i\in (\Loc^{\bdy}_{(r,\infty)})'=\Loc^{\bdy}_{[0,r]}$ by Theorem~\ref{thm:algebraic-haag-duality-boundary}.
It follows that
\[
    v=\sum_i s_i\triangleleft a_i\in \Std_r(L).
\]

We can now identify the central vectors in $\Std(L)$ with module natural transformations from the identity functor to $L$.
Regard $v$ as a morphism
\[
    v_r\colon l\otimes x^r\longrightarrow L\otimes l\otimes x^r,
\]
and set $v_n:=v_r\otimes 1_{x^{n-r}}$ for $n\geq r$.
Centrality gives
\[
    (1_L\otimes a)\circ v_n=v_n\circ a,\qquad a\in \Loc^{\bdy}_n.
\]
For each simple object $y\in \M$, choose an isometry $i_y\colon y\to l\otimes x^n$, and define
\[
    f_y:=(1_L\otimes i_y^*)\circ v_n\circ i_y\colon y\longrightarrow L\otimes y.
\]
Taking $a=i_z\circ g\circ i_y^*$ in the centrality condition of $v_n$, one finds
\[
    (1_L\otimes g)\circ f_y=f_z\circ g
\]
for every $g\colon y\to z$. In particular, the above implies that $f_y$ is independent of the choice of $i_y$ (taking $a=i_y'\circ i_y^*$ for two choices $i_y'$ and $i_y$).
These maps $\{f_y\}$ therefore form a natural transformation $f\colon \operatorname{id}_{\M}\Rightarrow L$.
Since every object of $\C$ is a direct summand of some tensor power of $x$, the compatibility $v_{n+q}=v_n\otimes 1_{x^q}$ implies, under the right module structure isomorphisms, that
\[
    f_{y\otimes c}=f_y\otimes 1_c,
\]
for every $c\in \C$.
Thus $f$ is a $\C^\rev$-module natural transformation.

Conversely, a $\C^\rev$-module natural transformation $f\colon \operatorname{id}_{\M}\Rightarrow L$ gives compatible vectors
\[
    v_n:=f_{l\otimes x^n}\in \Std_n(L).
\]
Ordinary naturality implies $(1_L \otimes a) \circ v_n=v_n\circ a$ for every $a\in \Loc^{\bdy}_n$, so the resulting vector $v\in \Std(L)$ is central.
We have therefore proved that
\[
\mathrm{Bim}(\Loc^{\bdy})(\Loc^{\bdy},\Std(L))
\simeq \operatorname{Nat}_{\C^\rev}(\operatorname{id}_{\M},L)
=\CMdual(\operatorname{id}_{\M},L)
\simeq \CMdual(L_1,L_2).
\]
All the isomorphisms above are natural, and their composite is the map induced by $\Std$.
\end{proof}

%--------------------------------------------

\section{Q-system model for 1D phases}
\label{Q-system}
In this section, we first introduce how to define commuting-projector models for 1D phases on fusion spin chains using Q-systems \cite{MYLG25} and prove they have a unique ground state, then generalize the construction and the result to fusion spin chains with boundaries.

Q-systems are special Frobenius algebras such that the comultiplication morphism is the Hermitian conjugate of the multiplication morphism, and the counit is the Hermitian conjugate of the unit \cite{Chen_2024_QSysCompletion}.
Before diving into details, we would like to clarify two important points.
First, the models are defined on \emph{infinite} spin chains, which are the appropriate setting for studying topological phases.
Second, we consider only \emph{symmetric} operators and \emph{symmetry-invariant} states.

\begin{remark}
It is proposed in \cite{Gaiotto_Johnson-Freyd_2025_CondensationsInHigherCats} that given a unitary fusion category $\C$ as the category of local quantum charges, one-dimensional gapped phases correspond to separable algebras (condensation 1-monads in their terminology) in $\C$.
However, our construction requires algebras that are not only separable, but also \emph{special $C^*$ Frobenius}. Is there any discrepancy between our construction and the proposal in \cite{Gaiotto_Johnson-Freyd_2025_CondensationsInHigherCats}?
The answer is no:
It's shown in \cite{Giorgetti_2024} that an algebra in $\C$ is isomorphic to a special $C^\ast$-Frobenius algebra if and only if it's separable.

Similarly, when we construct Hamiltonians with boundaries, we need \emph{special $C^*$ Frobenius modules}.
It's shown in \cite{Giorgetti_2023} that any right $Q$-module is isomorphic to a special one. Furthermore, it is shown in \cite{LanZhou_2024_QuantumCurrent} that if a module $(K,\mu)$ over a $C^\ast$-special Frobenius algebra $(Q,m,\eta,m^\dagger,\eta^\dagger)$ is special, namely if
\begin{equation}
\mu\mu^\dagger = \id_K, 
\end{equation}
 then it is automatically a special $C^*$ Frobenius module, i.e. 
\begin{equation}
\label{eq.BdyFrob}
    (\id_K\ot m)(\mu^\dagger \ot \id_Q) = \mu^\dagger \mu = (\mu\ot id_Q)(\id_K\ot m^\dagger).
\end{equation}

To summarize, unitarity and Frobenius conditions are automatic in (1+1)D, and the choice is unique.
However, it is unknown that whether the same holds true in higher dimensions. We conjecture that unitarity and Frobenius conditions are not automatic in higher dimensions, and the same separable algebra might be endowed with inequivalent unitarity and Frobenius structures.
\end{remark}

\subsection{Q-system models without boundary}
\begin{definition}[Q-system model]\label{def:q-system-model}
    Fix a Q-system $(Q, m, \eta, \Delta, \varepsilon)$ in a unitary fusion category $\C$. The \emph{Q-system model} is a spin chain model defined as follows: 
    
    We define local degrees of freedom by choosing $x = Q$ and defining a fusion spin chain, i.e. placing the object $Q$ on each site. We then define a commuting-projector interaction by $\Phi_Q([i,i+1]) = 1 - m_{i, i+1}^\dagger m_{i,i+1}$, where $m_{i,i+1} \colon Q \otimes Q \to Q$ is the multiplication morphism of the two sites $i$ and $i+1$, and $\Phi_Q([i,j]) = 0$ whenever $j-i \neq 1$. In other words, the Hamiltonian is defined as $H = \sum_i 1 - m_{i, i+1}^\dagger m_{i,i+1}$.
\end{definition}

\begin{remark}\label{rmk: projections}
    Actually we should put a sufficiently large object $x$ on each site, and choose an orthogonal projection $p: x \to Q$. Then the interaction should be defined as $\Phi_Q([i,i+1]) = (p_i p_{i+1})^\dagger (1 - m_{i, i+1}^\dagger m_{i,i+1}) (p_i p_{i+1})$. We omit the projections in formulas for the sake of simplicity, but we retain the projections and their Hermitian conjugates in the diagrams and represent them as triangles.
\end{remark}

To prove the model has a unique ground state, we need a fact about the fixed-point algebra.

Given an interval $[i,j] \subset \mathbb{Z}$, we define the composition of multiplication morphisms in $[i,j]$ as $m_{[i,j]} := m_{i,i+1} m_{i+1,i+2} \dots m_{j-1,j}$.
By Proposition~\ref{prop:ground_state_projector_eigenstate}, any ground state $\rho_Q$ of $\Phi_Q$ should satisfy
\begin{equation}
    \rho_Q((m^\dag_{i,i+1} m_{i,i+1}) a (m^\dag_{i,i+1} m_{i,i+1})) = \rho_Q(a),
\end{equation}
which implies $\rho_Q((m^\dag_I m_I) a (m^\dag_I m_I)) = \rho_Q(a)$.
Therefore, we could define the fixed-point algebra as in \cite{MYLG25}:

\begin{definition}\label{def:screening-map}
The \emph{screening map} is a linear map $\operatorname{scr} \colon A \to \operatorname{End}(Q)$ defined as follows: Given an operator $a \in A_{[i,j]}$, we define $\operatorname{scr} (a) = m_{[i-1,j+1]} (\mathrm{id}_Q \otimes a \otimes \mathrm{id}_Q) m^\dag_{[i-1,j+1]}$.
The \emph{fixed-point algebra} is defined as the image of $\operatorname{scr}$.

\[
\input{diagrams/screening-map-1293.tex}
\]

\end{definition}

\begin{lemma}\label{lemma:fixed_point_algebra}
    When $Q$ is a simple Q-system, the fixed-point algebra is the one-dimensional algebra $\mathbb{C} \cdot \mathrm{id}_Q$.
\end{lemma}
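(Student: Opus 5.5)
Fix $a \in \Loc_{[i,j]} = \C(Q^{n}, Q^{n})$ with $n = j-i+1$. The plan is to show that $\operatorname{scr}(a)$ lies in the space of $Q$-$Q$ bimodule endomorphisms of $Q$. Simplicity of $Q$ means exactly that this space is $\mathbb{C}\cdot\mathrm{id}_Q$ (since ${}_Q\C_Q(Q,Q)\cong \C(\mathbbm{1},Q)$ via $f\mapsto f\circ\eta$, and simplicity means $\dim\C(\mathbbm{1},Q)=1$). So the lemma reduces to a bimodule-map check, plus the trivial observation that $\mathrm{id}_Q$ is itself in the image: take $a=\mathrm{id}$ and use specialness $m m^\dagger = \mathrm{id}_Q$ iteratively, so that $m_{[i-1,j+1]} m_{[i-1,j+1]}^\dagger = \mathrm{id}_Q$. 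If the projections of Remark~\ref{rmk: projections} are present, they are absorbed into $a$ and change nothing.

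For the bimodule property, write $\operatorname{scr}(a) = m_{[i-1,j+1]}(\mathrm{id}_Q\otimes a\otimes \mathrm{id}_Q)m^\dagger_{[i-1,j+1]}$. We must show
\[
m\circ(\mathrm{id}_Q\otimes \operatorname{scr}(a)) = \operatorname{scr}(a)\circ m ,
\]
and the analogous identity for the right action. The key tools are associativity of $m$ and the Frobenius relation $(m\otimes \mathrm{id})(\mathrm{id}\otimes m^\dagger) = m^\dagger m = (\mathrm{id}\otimes m)(m^\dagger\otimes \mathrm{id})$.

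\begin{itemize}
\item[(i)] Associativity gives $m\circ(\mathrm{id}_Q\otimes m_{[i-1,j+1]}) = m_{[i-2,j+1]}$, so the extra left $Q$-strand merges into the leftmost multiplication tree.
\item[(ii)] On the bottom side, the incoming $Q$ strand must be passed through the comultiplication tree $m^\dagger_{[i-1,j+1]}$. The Frobenius relation moves the multiplication past the leftmost comultiplication vertex:
\[
m^\dagger_{[i-1,j+1]}\circ m = (m\otimes \mathrm{id}_{Q^{n+1}})\circ(\mathrm{id}_Q\otimes m^\dagger_{[i-1,j+1]}).
\]
This follows by applying $m^\dagger m = (m\otimes\mathrm{id})(\mathrm{id}\otimes m^\dagger)$ at the first vertex and then coassociativity.
\item[(iii)] The resulting $m$ acts only on the leftmost padding strand, disjoint from the support of $a$. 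It therefore commutes past $\mathrm{id}_Q\otimes a\otimes\mathrm{id}_Q$, and associativity absorbs it into the top tree.
\end{itemize}

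Carried out in the other order, this identifies both sides of the left-module identity with the same diagram. The right action is handled symmetrically, using the rightmost padding strand.

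The essential point is that $a$ is sandwiched between padding strands on both sides, so that multiplications and comultiplications can slide along $Q$ strands without touching $a$. This is exactly why the screening map pads with $\mathrm{id}_Q$ on each side.

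The main obstacle is the bookkeeping in step (ii): making sure the Frobenius move is applied with consistent bracketing of the trees $m_{[i-1,j+1]}$ (left-nested products $m_{i,i+1}m_{i+1,i+2}\cdots$). I would handle this by first noting that, by associativity and coassociativity, any bracketing gives the same morphism $Q^{n+2}\to Q$. One can then choose the bracketing that peels off the outer strand first, and the Frobenius move becomes a single local step.
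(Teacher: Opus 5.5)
Your argument is the paper's argument. The Frobenius relation together with associativity makes $\operatorname{scr}(a)$ a $Q$-$Q$ bimodule endomorphism of $Q$, and simplicity then forces it to be a scalar. Your sliding argument is correct: the iterated Frobenius identity for $m^\dagger_{[i-1,j+1]}\circ m$, followed by moving the new multiplication past $\mathrm{id}_Q\otimes a\otimes\mathrm{id}_Q$ along the padding strand. It spells out what the paper compresses into ``the Frobenius condition implies''. Your use of specialness to show that $\mathrm{id}_Q$ actually lies in the image is a small but useful addition.

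One correction is needed: the parenthetical justification of the simplicity step is false and should be dropped.
\begin{itemize}
\item The map $f\mapsto f\circ\eta$ identifies \emph{one-sided} module endomorphisms of $Q$ with $\C(\mathbbm{1},Q)$. On bimodule endomorphisms it is only injective. Its image there is the set of central elements $\xi\in\C(\mathbbm{1},Q)$, i.e.\ those with $m\circ(\xi\otimes\mathrm{id}_Q)=m\circ(\mathrm{id}_Q\otimes\xi)$.
\item Correspondingly, $\dim\C(\mathbbm{1},Q)=1$ is connectedness (haploidness), which is strictly stronger than simplicity.
\end{itemize}
For a counterexample, take the paper's own $Q=M\otimes M^*$ with $M\in\mathrm{Rep}^\psi(G)$ non-simple; the paper notes this represents the same phase. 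It has $\C(\mathbbm{1},Q)\cong\operatorname{End}_{\mathrm{Rep}^\psi(G)}(M)$ of dimension larger than one, yet its only bimodule endomorphisms are scalars. Already for trivial $G$, $Q$ is a full matrix algebra with center $\mathbb{C}$.

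Read literally, your parenthetical would prove the lemma only for haploid Q-systems. No auxiliary argument is needed: as in the paper, simplicity of $Q$ \emph{means} ${}_Q\C_Q(Q,Q)=\mathbb{C}\cdot\mathrm{id}_Q$. That is exactly the fact your bimodule computation feeds into.
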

\begin{proof}
Note that the Frobenius condition implies that any $\operatorname{scr} (a) \colon Q \to Q$ is a $Q$-$Q$ bimodule map, i.e. $m \circ (\mathrm{id}_Q \otimes \operatorname{scr} (a)) = \operatorname{scr} (a) \circ m$ and $m \circ (\operatorname{scr} (a) \otimes \mathrm{id}_Q) = \operatorname{scr} (a) \circ m$.
Since $Q$ is simple, any $Q$-$Q$ bimodule map is proportional to $\mathrm{id}_Q$.
\end{proof}

\begin{figure}
    \centering

    \resizebox{1.05\textwidth}{!}{
\input{diagrams/frobenius-module-1436.tex}

    }

    \caption{The Frobenius conditions ensure that \(\mathrm{scr}(a)\) is a right \(Q\)-module map. Similarly, it is also a left \(Q\)-module map.}

\end{figure}

\begin{theorem}
The interaction $\Phi_Q$ has a unique ground state $\rho_Q$ for any simple Q-system $Q$.
\end{theorem}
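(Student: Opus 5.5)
Uniqueness and existence are handled separately, and both reduce to finite pieces via the screening map and Lemma~\ref{lemma:fixed_point_algebra}.

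\emph{Uniqueness.} Let $\rho$ be any ground state and $a\in\Loc_{[i,j]}$. By Proposition~\ref{prop:ground_state_projector_eigenstate}, $\rho(a)=\rho(P a P)$ where $P=m^\dagger_{[i-1,j+1]}m_{[i-1,j+1]}$. This uses that the projections $m^\dagger_{k,k+1}m_{k,k+1}$ commute and that, by the Frobenius and specialness relations, their product over $[i-1,j+1]$ equals $m^\dagger_{[i-1,j+1]}m_{[i-1,j+1]}$. We then rewrite
\[
PaP=m^\dagger_{[i-1,j+1]}\,\operatorname{scr}(a)\,m_{[i-1,j+1]} .
\]
Lemma~\ref{lemma:fixed_point_algebra} gives $\operatorname{scr}(a)=\lambda(a)\,\mathrm{id}_Q$ for a scalar $\lambda(a)$ independent of $\rho$. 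Hence
\[
\rho(a)=\lambda(a)\,\rho(P)=\lambda(a),
\]
since $\rho(1-P)=0$ follows by applying the proposition repeatedly to the factors of $1-P$. So $\rho$ is determined on $\bigcup_I\Loc_I$, and by density and continuity it is determined on all of $\Loc$.

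\emph{Existence.} Define $\rho_Q(a):=\lambda(a)$ on each $\Loc_{[i,j]}$, with the following normalization. If $\varepsilon^\dagger\varepsilon$ is not a scalar multiple of the identity, use instead $\lambda(a)=\operatorname{scr}(a)/\operatorname{scr}(1)$; note $\operatorname{scr}(1)=m m^\dagger$ on the enlarged interval, which is a positive multiple of $\mathrm{id}_Q$ by specialness.

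The steps to check are:
\begin{enumerate}
\item \emph{Compatibility with inclusions.} $\operatorname{scr}(a\otimes 1_Q)=\operatorname{scr}(a)$ up to the same normalization. This follows from associativity together with $m m^\dagger=\mathrm{id}$ absorbing the extra strand.
\item \emph{Positivity.} $\operatorname{scr}(a^*a)=T^\dagger T$ with $T=(\mathrm{id}\otimes a\otimes\mathrm{id})\,m^\dagger_{[i-1,j+1]}$, so it is a positive multiple of $\mathrm{id}_Q$. Unitality follows from the normalization.
\item \emph{Vanishing on the Hamiltonian.} $\rho_Q(\Phi_Q([k,k+1]))=0$ because $m_{[i-1,j+1]}$ absorbs $m^\dagger_{k,k+1}m_{k,k+1}$, so the screening of $1-m^\dagger m$ is zero.
\end{enumerate}
The lemma on states of AF algebras, which reconstructs a state from compatible states on finite pieces, then yields a state $\rho_Q$ on $\Loc$.

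The main obstacle is the diagrammatic identity in the uniqueness step. We must show that the product of the commuting projections equals $m^\dagger_{[i-1,j+1]}m_{[i-1,j+1]}$, and that sandwiching $a$ between them factors through the screening map with the correct scalar normalization. I would verify this by induction on interval length using the Frobenius relation $(m\otimes\mathrm{id})(\mathrm{id}\otimes m^\dagger)=m^\dagger m$. If the projections in Remark~\ref{rmk: projections} are included, the same argument goes through with $p$ inserted at each end.
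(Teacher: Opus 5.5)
Your proposal is correct and follows the paper's argument: via the screening map and Lemma~\ref{lemma:fixed_point_algebra}, any ground state is forced to satisfy $\rho(a)=\lambda(a)$. Your explicit checks (compatibility with inclusions, positivity, unitality, vanishing on $\Phi_Q$, gluing over finite pieces) supply the existence half that the paper's proof only asserts. Your use of the full product $m^\dagger_{[i-1,j+1]}m_{[i-1,j+1]}$ is the careful version of the paper's single-projector shorthand.

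The caveat about $\varepsilon^\dagger\varepsilon$ is beside the point, since $\varepsilon^\dagger\varepsilon$ is not the relevant normalization. Specialness $mm^\dagger=\mathrm{id}_Q$ already gives $\operatorname{scr}(1)=\mathrm{id}_Q$, so $\lambda(1)=1$ automatically.
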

\begin{proof}
We would show the existence and uniqueness of the ground state by explicit construction.
To determine the expectation value $\rho_Q(a)$ of a local operator $a$, note that any ground state $\rho_Q$ of $\Phi_Q$ should satisfy $\rho_Q(a) = \rho_Q((m^\dag_{i,i+1} m_{i,i+1}) a (m^\dag_{i,i+1} m_{i,i+1})) = \rho_Q(m^\dag_{i,i+1} (\operatorname{scr} (a)) m_{i,i+1})$.
However, Lemma~\ref{lemma:fixed_point_algebra} tells us that $\operatorname{scr} (a) = \lambda_a \mathrm{id}_Q$ for some $\lambda_a \in \mathbb{C}$, therefore
\[
    \rho_Q(a) = \rho_Q(m^\dag_{i,i+1} \lambda_a \mathrm{id}_Q m_{i,i+1}) = \lambda_a \rho_Q(m^\dag_{i,i+1} m_{i,i+1}) = \lambda_a \rho_Q(1) = \lambda_a,
\]
which finishes the construction of the ground state.
\end{proof}

\begin{remark}
    The ground state of $\Phi_Q$ is defined by the condition that all local terms $m_{i,i+1}^\dagger m_{i,i+1}$ have the maximum expectation value $+1$. We avoid defining ground states of generic Hamiltonians, as expectation values of Hamiltonians are ill-defined in the thermodynamic limit. Although there is an operator algebraic definition of ground states in the thermodynamic limit (see, for example, \cite{ogata_2021_classificationgappedgroundstate}), it doesn't quite match the physical intuition of a ground state. See \cite[Section IV]{MYLG25} for a discussion on this issue.
\end{remark}

\begin{remark}
This theorem holds true even when $Q$ corresponds to a 1D phase with spontaneous symmetry breaking.
Why do Hamiltonians with spontaneous symmetry breaking still have a unique ground state in our framework? The reason is that the operator algebra we consider only contains \textbf{symmetric} operators, therefore all states we consider are necessarily symmetry-invariant. 
It is easy to verify that in the conventional setup of symmetry breaking phases, the only symmetry-invariant state in the ground state space (which is not a vector space, but a convex set of positive linear functionals on the operator algebra) is the maximally mixed state.

Actually, in our framework, the notion of ``non-symmetric operators'' is absent without further structures specified.
This stands in contrast to the traditional approach, where one begins with a theory without an explicit symmetry, identifies symmetry actions afterward, and then defines symmetric operators as those commuting with the symmetry. 
In that case, symmetric operators arise as a subset of the larger algebra of all quasi-local operators. 
Since non-symmetric operators exist naturally in this setup, one can then consider non-symmetric perturbations and observables, allowing one to meaningfully ask whether a phase has symmetry-breaking.

If the symmetric charge category $\C$ is anomaly-free, that is, if there exists fiber functors 
$F:\C\rightarrow \ve$ to the category of vector spaces, then we may choose a fiber functor to identify objects in 
$\C$ with vector spaces, and non-symmetric operators become well-defined. 
Note that the fiber functor is not unique in general, so the identification is not canonical.
Actually, choosing a fiber functor is equivalent to choosing an identification of the symmetry action as a Hopf algebra by Tannaka duality.

However, if $\C$ is anomalous (i.e. no such fiber functor exists), there is no canonical way to define non-symmetric operators, and it would be difficult to work with the traditional approach. In such cases, notions like ``SPT phase'' or ``symmetry-breaking phase'' cease to be meaningful within this framework.

\end{remark}

\begin{example}\label{eg. Z2 gapped phases}
1D phases with $\Z_2$ global symmetry: Up to Morita equivalence, there are two Q-systems in $\mathrm{Rep}(\Z_2)$.
Denote the tensor unit in $\mathrm{Rep}(\Z_2)$ as $\one$ and the other simple object as $e$.
The first one is $1$ on object level, and the multiplication morphism $m_1 \colon 1 \otimes 1 \to 1$ is just the identity; obviously it gives the trivial phase, since any state defined on the fusion spin chain with $x=1$ is a product state.
The second one is $1 \oplus e$ on object level, and the multiplication morphism $m_2 \colon (1 \oplus e) \otimes (1 \oplus e) \to 1 \oplus e$ is given by the following matrix:
\[
    \begin{array}{ c|cccc }
     & \one \otimes \one & \one \otimes e & e \otimes \one & e\otimes e\\
    \hline
    \one & 1/\sqrt{2} & 0 & 0 & 1/\sqrt{2}\\
    e & 0 & 1/\sqrt{2} & 1/\sqrt{2} & 0
    \end{array}
\]
It is easy to verify that $m_2^\dagger m_2 = (1+Z_i Z_{i+1})/2$, corresponding to the symmetry-breaking phase.
\end{example}

\subsection{Q-system models with boundary}
\label{subsec: Q-system models with bdy}

The construction of Q-system models with boundary is almost identical to those without boundary, with the difference that we place a right $Q$-module on the boundary.

\begin{definition}\label{def:q-system-model-with-boundary}
    Fix a Q-system $(Q, m, \eta, \Delta, \varepsilon)$ in a unitary fusion category $\C$, a unitary $\C$-module category $\M$, and a special $C^*$ Frobenius right $Q$-module $K \in \M_Q$ with action $\mu \colon K \otimes Q \to K$. The \emph{Q-system model with boundary} is defined as follows:
    
    We define the local degrees of freedom by choosing the bulk object to be $x = Q$ and the boundary object to be $l = K$\footnote{Again we should actually choose sufficiently large objects and orthogonal projections, see Remark~\ref{rmk: projections}.}, and defining a fusion spin chain with boundary, i.e. placing the object $Q$ on each bulk site $i > 0$ and the object $K$ on the boundary site $0$.
    
    We then define a commuting-projector interaction by $\Phi_{K,Q}([0,1]) = 1 - \mu^\dagger \mu$, $\Phi_{K,Q}([i,i+1]) = 1 - m_{i, i+1}^\dagger m_{i,i+1}$ for $i > 0$, and $\Phi_{K,Q}([i,j]) = 0$ whenever $j-i \neq 1$. In other words, the Hamiltonian is defined as $H = (1 - \mu^\dagger \mu)+ \sum_{i > 0} (1 - m_{i, i+1}^\dagger m_{i,i+1})$.
\end{definition}

Given an interval $[0,j] \subset \mathbb{Z}_{\geq 0}$, we define the composition of action and multiplication morphisms in $[0,j]$ as $\mu_j := \mu \circ (\mathrm{id}_K \otimes m_{[1,j]})$.
By Proposition~\ref{prop:ground_state_projector_eigenstate}, any ground state $\phi$ of $\Phi_{K,Q}$ should satisfy
\begin{equation}
    \phi((\mu^\dag \mu) a (\mu^\dag \mu)) = \phi(a),
\end{equation}
and similarly for the bulk terms, which implies $\phi((\mu^\dag_j \mu_j) a (\mu^\dag_j \mu_j)) = \phi(a)$ for any interval $[0,j]$.
Therefore, we could define the boundary fixed-point algebra:

\begin{definition}\label{def:boundary-screening-map}
The \emph{boundary screening map} is a linear map $\operatorname{scr} \colon \Loc^{\bdy} \to \operatorname{End}_{\MQ}(K)$ defined as follows: Given an operator $a \in \Loc^{\bdy}_{[0,j]}$, we define $\operatorname{scr} (a) = \mu_{j+1} (a \otimes \mathrm{id}_Q) \mu^\dag_{j+1}$.
The \emph{boundary fixed-point algebra} is defined as the image of $\operatorname{scr}$.

\begin{equation}
\begin{split}
    \operatorname{scr}:\Loc^{\bdy}&\rightarrow \End_{\MQ}(K)\\
    a&\mapsto\operatorname{scr}(a):=
\input{diagrams/boundary-screening-1906.tex}
\end{split}
\end{equation}

\end{definition}

\begin{lemma}\label{lemma:boundary_fixed_point_algebra}
          $\operatorname{im}(\operatorname{scr}) = \End_{\MQ}(K)$. In particular, when $K$ is a simple right $Q$-module in $\M$, the boundary fixed-point algebra is the one-dimensional algebra $\mathbb{C} \cdot \mathrm{id}_K$.
\end{lemma}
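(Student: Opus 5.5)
The argument has two parts: $\operatorname{scr}(a)$ lands in $\End_{\MQ}(K)$, and every $Q$-module endomorphism of $K$ arises this way. The simple case then follows because $\End_{\MQ}(K)=\mathbb{C}\cdot\mathrm{id}_K$ for simple $K$ (unitary semisimplicity of $\MQ$).

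\emph{Step 1: $\operatorname{scr}(a)$ is a right $Q$-module map.} Write $\operatorname{scr}(a)=\mu_{j+1}(a\otimes \mathrm{id}_Q)\mu_{j+1}^\dagger$. Compare $\mu\circ(\operatorname{scr}(a)\otimes\mathrm{id}_Q)$ with $\operatorname{scr}(a)\circ\mu$.
\begin{itemize}
\item Unfold $\mu_{j+1}=\mu\circ(\mathrm{id}_K\otimes m_{[1,j+1]})$ and use associativity of the action, so that the extra $Q$-strand is absorbed by the last multiplication on the left.
\item On the right-hand side, use the Frobenius module identity \eqref{eq.BdyFrob}, $(\mu\otimes\mathrm{id}_Q)(\mathrm{id}_K\otimes m^\dagger)=\mu^\dagger\mu$, together with the Q-system Frobenius relation for $m$. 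This slides the $Q$-strand through $\mu_{j+1}^\dagger$ onto the last bulk strand.
\end{itemize}
The incoming strand only ever meets the rightmost site $j+1$, on which $a\otimes\mathrm{id}_Q$ acts trivially, so the two sides agree. This is the same graphical argument as in the figure following Lemma~\ref{lemma:fixed_point_algebra}, except that we only need the right-hand half.

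\emph{Step 2: surjectivity.} Given $f\in\End_{\MQ}(K)$, take $j=0$ and $a=f\in\M(K,K)=\Loc^{\bdy}_{[0,0]}$; with projections as in Remark~\ref{rmk: projections}, use $p^\dagger f p$. Then
\[
\operatorname{scr}(f)=\mu(f\otimes\mathrm{id}_Q)\mu^\dagger=f\,\mu\mu^\dagger=f,
\]
using that $f$ is a module map and specialness $\mu\mu^\dagger=\mathrm{id}_K$.

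\emph{Remaining checks.} Two points need care.
\begin{itemize}
\item The map must be well defined on $\Loc^{\bdy}$, i.e.\ independent of the interval chosen to contain $a$. Replacing $a$ by $a\otimes\mathrm{id}_Q$ gives the same result, because $m_{[j+1,j+2]}m^\dagger_{[j+1,j+2]}=\mathrm{id}_Q$ by specialness of $Q$.
\item $\operatorname{scr}$ is bounded, since $\mu_{j+1}$ is a coisometry, so it extends from the local algebras to the norm closure.
\end{itemize}

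The main obstacle is the diagrammatic bookkeeping in Step 1: the right Frobenius identity has to be applied in the correct order so that the new $Q$-strand passes through the entire chain of multiplications. I expect this to be routine.
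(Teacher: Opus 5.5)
Your proposal is correct and follows the paper's proof. The Frobenius identities give $\operatorname{im}(\operatorname{scr})\subset\End_{\MQ}(K)$; the canonical inclusion of $\End_{\MQ}(K)$ at the boundary site is a section of $\operatorname{scr}$, since $\mu(f\otimes\mathrm{id}_Q)\mu^\dagger=f\mu\mu^\dagger=f$; and Schur's lemma gives the simple case. Your well-definedness and contractivity checks add details the paper leaves implicit.

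One harmless slip: sliding the $Q$-strand through $\mu_{j+1}^\dagger$ in $\operatorname{scr}(a)\circ\mu$ needs the first equality $(\mathrm{id}_K\otimes m)(\mu^\dagger\otimes\mathrm{id}_Q)=\mu^\dagger\mu$ of \eqref{eq.BdyFrob}, not the one you quote. Since the two are adjoints of each other, nothing breaks.
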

\begin{proof}
First, note that the Frobenius condition implies that any $\operatorname{scr} (a) \colon K \to K$ is a right $Q$-module map, i.e. $\mu \circ (\operatorname{scr} (a) \otimes \mathrm{id}_Q) = \operatorname{scr} (a) \circ \mu$, which means $\operatorname{im}(\operatorname{scr}) \subset \End_{\MQ}(K)$.
For the other direction, note that there is a canonical inclusion $\End_{\MQ}(K) \hookrightarrow \Loc$. It's straightforward to verify that composition $\End_{\MQ}(K) \hookrightarrow \Loc \xrightarrow{\mathrm{scr}} \End_{\MQ}(K)$ is the identity map, which implies $\operatorname{im}(\operatorname{scr}) \supset \End_{\MQ}(K)$.

 Finally, note that when $K$ is simple, any right $Q$-module map is proportional to $\mathrm{id}_K$.
\end{proof}

\begin{theorem}
The interaction $\Phi_{K,Q}$ has a unique ground state $\phi$ for any simple right $Q$-module $K$ in $\M$.
\end{theorem}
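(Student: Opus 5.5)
The plan mirrors the bulk proof: derive a formula for $\phi(a)$ forced by the ground-state condition (uniqueness), then check that this formula defines a state annihilating every local term (existence).

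\textbf{Uniqueness.} Let $\phi$ be a ground state and $a\in\Loc^{\bdy}_{[0,j]}$. By Proposition~\ref{prop:ground_state_projector_eigenstate} applied to the terms on $[0,1],\dots,[j,j+1]$, we have
\[
\phi(a)=\phi\bigl((\mu_{j+1}^\dagger\mu_{j+1})\,(a\otimes\mathrm{id}_Q)\,(\mu_{j+1}^\dagger\mu_{j+1})\bigr).
\]
Multiplying out $P a P$ term by term with $P=\mu^\dagger\mu$ and $P=m^\dagger_{i,i+1}m_{i,i+1}$, every difference term contains some $1-P$ at the left or right end, so it is killed. Here one must know that all the projectors involved commute, which is part of $\Phi_{K,Q}$ being a commuting-projector interaction. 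The commutation of $\mu^\dagger\mu$ with $m_{1,2}^\dagger m_{1,2}$ follows from associativity of $\mu$ together with \eqref{eq.BdyFrob}.

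The middle of the expression is exactly $\mu_{j+1}^\dagger\,\operatorname{scr}(a)\,\mu_{j+1}$. By Lemma~\ref{lemma:boundary_fixed_point_algebra}, simplicity of $K$ gives $\operatorname{scr}(a)=\lambda_a\,\mathrm{id}_K$. Hence
\[
\phi(a)=\lambda_a\,\phi(\mu_{j+1}^\dagger\mu_{j+1})=\lambda_a\,\phi(1)=\lambda_a .
\]
The last step again uses that the product of the commuting projectors has expectation $1$. So $\phi$ is determined by $a\mapsto\lambda_a$.

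\textbf{Existence.} Define $\phi(a):=\lambda_a$, where $\operatorname{scr}(a)=\lambda_a\mathrm{id}_K$. The checks are as follows.
\begin{enumerate}
\item \emph{Well defined on the inductive limit.} We need $\operatorname{scr}(a\otimes\mathrm{id}_Q)=\operatorname{scr}(a)$. This reduces to $m\,m^\dagger=\mathrm{id}_Q$ (specialness) together with associativity of $\mu$: the extra site is absorbed through $\mu(\mathrm{id}_K\otimes m)=\mu(\mu\otimes\mathrm{id}_Q)$.
\item \emph{Unital.} $\operatorname{scr}(1)=\mu_{j+1}\mu_{j+1}^\dagger=\mathrm{id}_K$, using $\mu\mu^\dagger=\mathrm{id}_K$ and $mm^\dagger=\mathrm{id}_Q$.
\item \emph{Positive.} $\operatorname{scr}$ is a compression $a\mapsto V a V^\dagger$ with $V=\mu_{j+1}$, so $\operatorname{scr}(a^*a)\ge 0$ and hence $\lambda_{a^*a}\ge0$. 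Boundedness ($|\lambda_a|\le\|a\|$) follows, so $\phi$ extends by continuity to $\Loc^{\bdy}$ via the lemma on states of AF algebras.
\item \emph{Annihilates local terms.} For $a=1-\mu^\dagger\mu$ we get $\mu_{j+1}(\mu^\dagger\mu\otimes\mathrm{id})\mu_{j+1}^\dagger=\mathrm{id}_K$, since $\mu_{j+1}$ factors through $\mu$ and $\mu^\dagger\mu$ is absorbed using $\mu\mu^\dagger\mu=\mu$. Thus $\lambda_a=0$, and the same argument works for the bulk terms $1-m^\dagger_{i,i+1}m_{i,i+1}$.
\end{enumerate}

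\textbf{Main obstacle.} The delicate part is the diagrammatic bookkeeping in the existence checks, in particular well-definedness under the inclusions and the absorption of the projectors. These rely on associativity and the Frobenius identity \eqref{eq.BdyFrob} in the module category $\M$, with the projections from Remark~\ref{rmk: projections} carried along. I expect each of them to be a direct graphical computation of the same kind used in Lemma~\ref{lemma:boundary_fixed_point_algebra}.
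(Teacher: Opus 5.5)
Your proposal is correct and follows the same route as the paper: the ground-state condition lets you insert $\mu_{j+1}^\dagger\mu_{j+1}$ on both sides of $a$, and Lemma~\ref{lemma:boundary_fixed_point_algebra} then forces $\phi(a)=\lambda_a$. Your existence checks (compatibility with the inclusions, unitality, positivity via compression by the coisometry $\mu_{j+1}$, and vanishing on each local term) fill in steps that the paper's proof leaves implicit when it calls the argument an ``explicit construction''.
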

\begin{proof}
Again we show the existence and uniqueness of the ground state by explicit construction.
To determine the expectation value $\phi(a)$ of a local operator $a \in \Loc^{\bdy}_{[0,j]}$, note that any ground state $\phi$ of $\Phi_{K,Q}$ should satisfy $\phi(a) = \phi((\mu^\dag_{j+1} \mu_{j+1}) a (\mu^\dag_{j+1} \mu_{j+1})) = \phi(\mu^\dag_{j+1} (\operatorname{scr} (a)) \mu_{j+1})$.
However, Lemma~\ref{lemma:boundary_fixed_point_algebra} tells us that $\operatorname{scr} (a) = \lambda_a \mathrm{id}_K$ for some $\lambda_a \in \mathbb{C}$, therefore
\[
    \phi(a) = \phi(\mu^\dag_{j+1} \lambda_a \mathrm{id}_K \mu_{j+1}) = \lambda_a \phi(\mu^\dag_{j+1} \mu_{j+1}) = \lambda_a \phi(1) = \lambda_a,
\]
which finishes the construction of the ground state.
\end{proof}

\begin{theorem}\label{thm:boundary_unique_GS}
Let \(Q\) be a simple Q-system in \(\C\). The associated infinite-chain Q-system interaction has a unique ground state. More generally, if \(K\in \M_Q\) is a simple right \(Q\)-module, then the boundary interaction \(\Phi_{K,Q}\) on the half-infinite chain has a unique ground state.
\end{theorem}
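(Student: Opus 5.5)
The statement combines the two preceding theorems: the bulk case is the boundary case with $\M=\C$ and $K=Q$ regarded as a right module over itself. I will therefore argue only the boundary case, treating \emph{uniqueness} and \emph{existence} separately. The earlier proofs essentially establish uniqueness. Existence needs to be checked, because one must show that the formula $\phi(a):=\lambda_a$ really defines a state.

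\emph{Uniqueness.} Let $P_j:=\mu_j^\dagger\mu_j$. First I show that $P_j$ equals the product of the commuting local projectors $\mu^\dagger\mu$ and $m^\dagger_{i,i+1}m_{i,i+1}$ for $1\le i<j$. This follows by induction from the Frobenius relations \eqref{eq.BdyFrob} together with specialness ($\mu\mu^\dagger=\id_K$, $mm^\dagger=\id_Q$). Next, Proposition~\ref{prop:ground_state_projector_eigenstate}, applied repeatedly, shows that any ground state satisfies $\phi(a)=\phi(P_{j+1}aP_{j+1})=\phi(\mu_{j+1}^\dagger\operatorname{scr}(a)\mu_{j+1})$ for every $a\in\Loc^{\bdy}_{[0,j]}$. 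By Lemma~\ref{lemma:boundary_fixed_point_algebra} and simplicity of $K$ we have $\operatorname{scr}(a)=\lambda_a\id_K$. Hence $\phi(a)=\lambda_a\phi(P_{j+1})=\lambda_a$. Since $\bigcup_j\Loc^{\bdy}_{[0,j]}$ is dense and states are continuous, $\phi$ is uniquely determined.

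\emph{Existence.} I define $\phi(a):=\lambda_a$ on local operators and verify the following properties in order.
\begin{enumerate}
\item Well-definedness: $\lambda_a$ must not depend on the interval used to compute it. This amounts to $\operatorname{scr}(a\otimes 1_Q)=\operatorname{scr}(a)$, which follows by absorbing the extra $Q$-strand via associativity of $\mu$ and $m$ and then using $mm^\dagger=\id_Q$.
\item Normalization: $\operatorname{scr}(1)=\mu_{j+1}\mu_{j+1}^\dagger=\id_K$, again by specialness.
\item Positivity: $\operatorname{scr}(a^*a)=B^\dagger B$ with $B=(a\otimes\id_Q)\mu_{j+1}^\dagger$, so $\lambda_{a^*a}\ge 0$.
\item Boundedness: this follows from $\|\operatorname{scr}(a)\|\le\|\mu_{j+1}\|^2\|a\|=\|a\|$. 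Hence $\phi$ is a compatible family of states on the finite pieces and extends to $\Loc^{\bdy}$ by the lemma on states of AF algebras.
\item Ground-state property: $\operatorname{scr}(1-\mu^\dagger\mu)=0$ and $\operatorname{scr}(1-m^\dagger_{i,i+1}m_{i,i+1})=0$, because $\mu_{j+1}$ already factors through $\mu$ and through $m_{i,i+1}$, whence $\mu_{j+1}\mu^\dagger\mu=\mu_{j+1}$ and similarly for the bulk terms.
\end{enumerate}

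The main obstacle is the diagrammatic bookkeeping in steps 1 and 5, and in the identification of the product of local projectors with $P_j$. These require that $\mu_{j+1}$ can be re-bracketed so that any chosen $\mu$ or $m_{i,i+1}$ appears adjacent to the operator. I would handle this with associativity and the Frobenius relation \eqref{eq.BdyFrob}, working first in the strict case without the projections of Remark~\ref{rmk: projections}. I would then note that the projections $p$ only conjugate everything by isometries and do not affect these identities.
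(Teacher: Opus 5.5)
The gap is your opening reduction of the bulk case. Setting $\M=\C$ and $K=Q$ does not produce the infinite-chain model. The fusion spin chain with boundary for $\M=\C$ is the half-line algebra $\Loc_{[0,\infty)}$, a proper subalgebra of the two-sided quasi-local algebra, while $\Phi_Q$ has terms on all of $\mathbb{Z}$.

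More seriously, the reduction needs a hypothesis the statement does not give. ``Simple Q-system'' here means $\operatorname{End}_{Q\text{-}Q}(Q)\cong\mathbb{C}$, which is what Lemma~\ref{lemma:fixed_point_algebra} uses. By contrast, $Q$ is a simple right $Q$-module only when $\operatorname{End}_{\C_Q}(Q)\cong\C(\mathbbm{1},Q)$ is one-dimensional, i.e.\ when $Q$ is haploid. These conditions differ. Take $Q=M\otimes M^*$ with $M$ non-simple (which the paper's $G$-SPT discussion explicitly allows), or just $Q=\operatorname{End}(\mathbb{C}^2)$ in $\mathrm{Vec}$. Such a $Q$ is simple but has $\operatorname{End}_{\C_Q}(Q)\cong M_2(\mathbb{C})$. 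For it, the half-line model with $K=Q$ has one ground state for every state of $M_2(\mathbb{C})$; physically, there is a dangling qubit on site $0$. The two-sided model, however, still has a unique ground state.

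So the bulk assertion cannot be deduced from the boundary one as you propose. Even for haploid $Q$, you would still have to apply the half-line argument at every left endpoint and check that the resulting values agree.

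The repair is straightforward. Run your uniqueness/existence template directly on $\mathbb{Z}$ with the two-sided screening map of Definition~\ref{def:screening-map}, $\operatorname{scr}(a)=m_{[i-1,j+1]}(\mathrm{id}_Q\otimes a\otimes\mathrm{id}_Q)m^\dagger_{[i-1,j+1]}$. Its image lies in $\operatorname{End}_{Q\text{-}Q}(Q)=\mathbb{C}\,\mathrm{id}_Q$ by Lemma~\ref{lemma:fixed_point_algebra}. Well-definedness then needs invariance under padding $a$ by $1_Q$ on either side, which is your step~1 together with its mirror image. This is how the paper proceeds: the theorem simply cites the separate bulk and boundary results.

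Your boundary-case argument coincides with the paper's, and its additional pieces are correct. These are the identification of $\mu_j^\dagger\mu_j$ with the product of the local projectors and the explicit existence checks: padding invariance, $\operatorname{scr}(1)=\mathrm{id}_K$, $\operatorname{scr}(a^*a)=B^\dagger B\ge 0$, and $\operatorname{scr}$ annihilating every local term $1-\mu^\dagger\mu$ and $1-m^\dagger_{i,i+1}m_{i,i+1}$. They fill in what the paper only asserts with ``which finishes the construction.''
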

\begin{proof}
This follows from the uniqueness theorem for \(\Phi_Q\) above and the uniqueness theorem for \(\Phi_{K,Q}\) above.
\end{proof}

\begin{remark}
   We only consider the boundary condition $K$ being a simple object in $\MQ$ in the above derivation. A non-simple object $K$ is a finite direct sum of simple objects $K=\oplus_{(i,n)}K_i^{\oplus n}$, since $\MQ$ is finite semisimple. $\End_{\MQ}(K)$ is then a multi-matrix algebra with multiplication the composition of right $Q$-module maps, and Lemma~\ref{lemma:boundary_fixed_point_algebra} still holds.
\end{remark}

%---------------------------------

\section{Classification of boundary conditions and construction of Hamiltonians}

\subsection{Boundary conditions and realization functors}

Before presenting rigorous definitions, we would like to ask the reader to reflect on the following question: What is a topological charge? Is it a single Hamiltonian, or a single quantum state, that is different from the ground state somewhere?

Our answer is, a topological charge should be understood as an equivalence class of quantum states, i.e. a superselection sector.
In other words, two quantum states that can be transformed to each other by a (quasi-) local operator should be viewed as topologically equivalent; the philosophy is, though they could differ in some finite region, viewed from a distance they are really the same.

The mathematical notion that captures the intuition is representations, or modules, of the quasi-local algebra.
It has been realized before that 0D topological excitations should really be understood as modules of the quasi-local algebra \cite{Wen_lecture_2016}; see also \cite[Section 3.2]{Kong_2022_Invitation}.
Given a quantum state, there is a canonical representation associated with it via the GNS construction of Section~\ref{subsec:states-and-reps}, which should be interpreted as the superselection sector that the quantum state lives in. 

Based on these observations, we propose that objects of the category of boundary conditions, $\BCond$, are modules of the quasi-local algebra that correspond to GNS constructions of quantum states with a topological charge.
We first give an explicit description of these GNS constructions.

Consider a simple right $Q$-module $K \in \MQ$, and denote the ground state of the corresponding boundary interaction $\Phi_{K,Q}$ from Section~\ref{subsec: Q-system models with bdy} by $\rho_K \colon \Loc^{\bdy} \to \mathbb{C}$. As this section only involves fusion spin chains with boundary, we will abbreviate $\Loc^{\bdy}$ as $\Loc$.
Denote the restriction of $\rho_K$ to $\Loc_{[0,n]}$ as $\rho_{K,n}$.
Recall that $\operatorname{GNS}(\rho_{K,n})$ is defined as $\Loc_{[0,n]} / N_{K,n}$, where $N_{K,n} = \{ a \in \Loc_{[0,n]} \mid \rho_{K,n}(a^* a) = 0 \}$.
There is a simple characterization of $N_{K,n}$:

\begin{lemma}
$N_{K,n} = \{ a \in \Loc_{[0,n]} \mid (a \otimes \mathrm{id}_x)\mu^\dag_{n+1} = 0 \}$.
\end{lemma}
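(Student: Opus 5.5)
The plan is to compute $\rho_{K,n}(a^*a)$ explicitly using the formula for the ground state from the uniqueness theorem, and then use positivity in the $C^*$-category. For $a\in\Loc_{[0,n]}$, the proof of uniqueness of the ground state of $\Phi_{K,Q}$ gives
\[
\rho_K(b)\,\mathrm{id}_K=\operatorname{scr}(b)=\mu_{n+1}(b\otimes \mathrm{id}_x)\mu_{n+1}^\dagger ,
\]
for $b\in\Loc_{[0,n]}$ with $K$ simple (here $x=Q$, and $\mu_{n+1}\colon K\otimes Q^{n+1}\to K$). Applying this to $b=a^*a$ gives
\[
\rho_{K,n}(a^*a)\,\mathrm{id}_K=\bigl((a\otimes\mathrm{id}_x)\mu_{n+1}^\dagger\bigr)^\dagger\bigl((a\otimes\mathrm{id}_x)\mu_{n+1}^\dagger\bigr),
\]
since $(a\otimes \mathrm{id}_x)^\dagger=a^*\otimes\mathrm{id}_x$ and $\dagger$ is contravariant.

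Set $T:=(a\otimes\mathrm{id}_x)\mu_{n+1}^\dagger\colon K\to K\otimes Q^{n+1}$, a morphism in the unitary (hence $C^*$) category $\M$. Then $\rho_{K,n}(a^*a)=0$ if and only if $T^\dagger T=0$. By the $C^*$-property of morphism spaces in a unitary category ($\|T^\dagger T\|=\|T\|^2$, or faithfulness of the trace $\operatorname{tr}(T^\dagger T)>0$ for $T\neq0$), this holds if and only if $T=0$. That is exactly the claimed description of $N_{K,n}$.

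The main point to check carefully is the indexing. The screening map in Definition~\ref{def:boundary-screening-map} for $a\in\Loc_{[0,j]}$ uses $\mu_{j+1}$ with an extra $Q$ strand, matching the $\mathrm{id}_x$ in the lemma. I would also check that the scalar identity $\operatorname{scr}(b)=\rho_K(b)\mathrm{id}_K$ is exactly what was established, and not merely $\rho_K(b)=\lambda_b$ where $\operatorname{scr}(b)=\lambda_b\mathrm{id}_K$; these are the same statement. The remaining steps are routine consequences of unitarity and of $K$ being simple, which is why this indexing check is where I expect the only real care to be needed.
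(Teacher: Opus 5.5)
Your proposal is correct and is essentially the paper's own argument: the paper likewise notes that $\rho_{K,n}(a^*a)=0$ if and only if $\operatorname{scr}(a^*a)=\mu_{n+1}(a^*a\otimes\mathrm{id}_x)\mu_{n+1}^\dagger=T^\dagger T$ vanishes, and then uses positivity of the dagger in the unitary category to conclude $T=(a\otimes\mathrm{id}_x)\mu_{n+1}^\dagger=0$. Your version is slightly more careful than the paper's in two places it leaves implicit: you keep the extra $\mathrm{id}_x$ strand, and you state explicitly that $\operatorname{scr}(b)=\rho_K(b)\,\mathrm{id}_K$ for simple $K$.
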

\begin{proof}
Note that $\rho_{K,n}(a^* a) = \rho(a^* a)$, and $\rho(a^* a) = 0$ if and only if 
\[\operatorname{scr} (a^* a) = \mu_{n+1} (a^* a) \mu^\dag_{n+1} = (a \mu^\dag_{n+1})^*(a \mu^\dag_{n+1}) = 0.\]
The last equality holds if and only if $a \mu^\dag_{n+1} = 0$.
\end{proof}

\[
\input{diagrams/gns-kernel-2056.tex}
\]

This leads us to consider the map $\mu_n^* \colon \Loc_{[0,n]} \to \M(K, l \otimes x^{n+1})$ defined by $\mu_n^*(a) = (a \otimes \mathrm{id}_x)\mu^\dag_{n+1}$.
Immediately, we arrive at an equivalent characterization of $\operatorname{GNS} (\rho_{K,n})$:

\begin{corollary}\label{cor:GNS-im-m-star}
There is a canonical isomorphism
\[
    \GNS(\rho_{K,n}) \simeq \operatorname{im}(\mu_n^*).
\]
\end{corollary}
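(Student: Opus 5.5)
The plan is to build the isomorphism directly on the dense quotient and then check that it is isometric. Define a map $\Phi_n \colon \Loc_{[0,n]} \to \operatorname{im}(\mu_n^*)$ by $a \mapsto \mu_n^*(a) = (a \otimes \mathrm{id}_x)\mu^\dag_{n+1}$. This map is linear and surjective onto its image by construction. By the preceding lemma, its kernel is exactly $N_{K,n}$. It therefore descends to a linear bijection
\[
    \overline{\Phi}_n \colon \Loc_{[0,n]}/N_{K,n} \to \operatorname{im}(\mu_n^*).
\]
Since $\Loc_{[0,n]}$ is finite dimensional, the quotient $\Loc_{[0,n]}/N_{K,n}$ is already complete. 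Hence it coincides with $\GNS(\rho_{K,n})$, and no completion step is needed.

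Next we equip $\operatorname{im}(\mu_n^*) \subseteq \M(K, l\otimes x^{n+1})$ with the inner product
\[
    \langle \xi, \eta\rangle := \lambda\bigl(\xi^\dag \eta\bigr),
\]
where $\xi^\dag\eta \in \End_{\MQ}(K)$ and $\lambda$ extracts the scalar (for simple $K$, $\xi^\dag \eta = \lambda\,\mathrm{id}_K$). We must first check that $\xi^\dag\eta$ really is a scalar multiple of $\mathrm{id}_K$. Write $\xi = \mu_n^*(a)$ and $\eta = \mu_n^*(b)$. Then
\[
    \xi^\dag\eta = \mu_{n+1}(a^*b \otimes \mathrm{id}_x)\mu_{n+1}^\dag = \operatorname{scr}(a^*b),
\]
which is $\lambda_{a^*b}\,\mathrm{id}_K$ by Lemma~\ref{lemma:boundary_fixed_point_algebra}. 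By the proof of the uniqueness theorem, $\lambda_{a^*b} = \rho_K(a^*b)$, which is exactly the GNS inner product $\langle a, b\rangle$. So $\overline{\Phi}_n$ is unitary.

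Finally, $\overline{\Phi}_n$ intertwines the left actions. For $c \in \Loc_{[0,n]}$ we have
\[
    \mu_n^*(ca) = (c\otimes \mathrm{id}_x)\,\mu_n^*(a),
\]
so $\operatorname{im}(\mu_n^*)$ is a left $\Loc_{[0,n]}$-module under post-composition and $\overline{\Phi}_n$ is a module isomorphism. Canonicity holds because no choices enter the construction: only $\mu$ and $\mu^\dag$ are used.

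The only subtle point is matching the scalar normalization: identifying $\operatorname{scr}(a^*b)$ with $\rho_K(a^*b)\,\mathrm{id}_K$ relies on $\rho_K(1) = 1$ together with $\mu_{n+1}\mu_{n+1}^\dag = \mathrm{id}_K$, i.e.\ on specialness of $K$ and $Q$. If $K$ is not simple, one would instead use a trace on $\End_{\MQ}(K)$, but the statement concerns simple $K$.
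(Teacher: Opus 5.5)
Your proposal is correct and follows the paper's route: the first isomorphism theorem applied to $\mu_n^*$, with $\ker(\mu_n^*) = N_{K,n}$ supplied by the preceding lemma. Your extra check that $\xi^\dagger\eta = \operatorname{scr}(a^*b) = \rho_K(a^*b)\,\mathrm{id}_K$ shows the induced bijection is unitary and intertwines the left $\Loc_{[0,n]}$-action, which the paper leaves implicit.
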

\begin{proof}
From the first isomorphism theorem of modules, we have a canonical isomorphism
\[
    \Loc_{[0,n]} / \operatorname{ker}(\mu_n^*) \simeq \operatorname{im}(\mu_n^*)
\]
and $\operatorname{ker}(\mu_n^*) = N_{K,n}$ from the previous lemma, so $\Loc_{[0,n]} / \operatorname{ker}(\mu_n^*)$ is by definition $\operatorname{GNS}(\rho_{K,n})$.
\end{proof}

The GNS construction captures exactly how the IR description (the simple right $Q$-module $K \in \MQ$) relates to the complete UV description (the object $l \otimes x^n$).

Then we should find out how inclusions of local operator algebras induce inclusions of GNS modules.
For $r\geq 0$, define
\[
    j_{r,r+1} \colon \M(K, l \otimes x^{r}) \to \M(K, l \otimes x^{r+1}), \qquad
    j_{r,r+1}(f) = (f \otimes p_Q^\dagger) \mu_K^\dagger,
\]
where $\mu_K: K \otimes Q \to K$ is the action morphism of the right $Q$-module $K$ and $p_Q^\dagger: Q \to x$ is the inclusion of $Q$ into $x$. We will omit the subscript $K$ if it is clear from the context.
Obviously $j_{r,r+1}$ is injective.

For $a\in \Loc_{[0,n]}$, the inclusion of local algebras sends $a$ to $a\otimes \mathrm{id}_x\in \Loc_{[0,n+1]}$, and the corresponding GNS inclusion is expressed by
\[
    j_{n+1,n+2}(\mu_n^*(a)) = \mu_{n+1}^*(a \otimes \mathrm{id}_x).
\]
Thus the maps $j_{r,r+1}$ identify the finite GNS pieces under enlargement of the interval. Furthermore, $\operatorname{im} j_{r,r+1} \subset \operatorname{im}(\mu_r^*)\simeq \GNS(\rho_{K,r})$, which implies all morphisms $K \to l \otimes x^r$ are contained in the GNS construction after applying the inclusion $j_{r,r+1}$.

There are some drawbacks in the discussion above: We only define how the simple objects of  $\MQ$ are realized as modules of the quasi-local algebra; neither non-simple objects nor morphisms in $\MQ$ are dealt with.
Fortunately, there is an obvious fix of these drawbacks: extend the realization of boundary conditions as modules to a functor.

Again, we first define finite pieces of the realization functor, then glue them into an AF action functor:
\begin{definition}[The realization functor for boundary conditions]\label{Def. The realization functor for boundary conditions}

We first define $\Loc_n:=\Loc_{[0,n]}$ and $\ReaBCond_n: \MQ^{\op} \to \Mod(\Loc_n)$.
As before, we define $\ReaBCond_n(K) := \M(K, l \otimes x^n)$.
The inner product on $\ReaBCond_n(K)$ is defined as $\left\langle h|k \right\rangle = \operatorname{tr}(h^\dagger k)$.
Note that $h^\dagger k \in \MQ(K,K)$, and $\MQ(K,K)$ is a multi-matrix algebra, and the trace on the multi-matrix algebra is well-defined. 

On morphism level, given $f: K_1 \to K_2$, we define $\ReaBCond_n(f) = f^*$, where $f^*: \ReaBCond_n(K_2) \to \ReaBCond_n (K_1)$ is the map of pre-composition with $f$, i.e. $h \mapsto h \circ f$. 

Inclusions of small pieces into large pieces are defined by $j_{n,n+1}: \ReaBCond_n(K) \to \ReaBCond_{n+1}(K)$.
The \emph{realization functor for boundary conditions} $\ReaBCond: \MQ^{\op} \to \Mod(\Loc)$ is defined as the inductive limit $\varinjlim_n \ReaBCond_n$.
\end{definition}

\begin{remark}
    It is not just a matter of convention that the source of $\ReaBCond$ is $\MQ^\op$ instead of $\MQ$. Taking the opposite category of $\MQ$ is necessary for at least two reasons: we want the realization functor to produce left modules of $\Loc$; consistency with the action of the boundary SymTFT (see Section~\ref{section: bulk-boundary correspondence}). We believe that there are deeper reasons for taking the opposite category in the macroscopic description of topological phases, which is a subject for future investigation.
\end{remark}

\begin{figure}
    \centering
    \input{diagrams/reabcond-functor-1362.tex}
    \caption{Definition of \(\ReaBCond\).}
\end{figure}

\subsection{Classification of boundary conditions}

In this section, we prove that simple boundary conditions of the Q-system model are in one-to-one correspondence with simple objects in \(\MQ\), equivalently in \(\MQ^\op\).

First of all, we need to define the microscopic notion of simple boundary conditions in terms of operator algebras and states. 
The physical intuition is, a simple boundary condition is a pure state that is indistinguishable from the reference state in the bulk, but could host interesting features on the boundary. 
To be precise, we would like to study pure states $\sigma \in S(\mathsf{Loc})$, such that there is some $n>0$, with $\sigma(a) = \rho(a)$ for all $a \in \mathsf{Loc}_{[n, \infty)}$, where $\rho$ is the ground state of the Q-system model defined on an infinite fusion spin chain.

This idea could be further formulated in terms of modules of the quasi-local algebra, as a pure state could be translated by the GNS construction into a simple module; conversely, given a simple module $M$ and any unit vector $v \in M$, $\langle v | - \cdot v\rangle: \Loc \to \mathbb{C}$ is a pure state. 
Therefore, a \emph{simple boundary condition} of a Q-system model may be defined as a simple object $M \in \Mod(\Loc)$ such that there is a unit vector $v \in M$ satisfying $\langle v | a \cdot v\rangle = \rho(a)$ for all $a \in \mathsf{Loc}_{[n, \infty)}$, where $\rho$ is the ground state of the Q-system model.

However, it is necessary to take non-simple boundary conditions into consideration to formulate the bulk-boundary correspondence. We find that a natural condition on boundary conditions is being generated by a finite set localized in some finite region, whose spirit is quite similar to the definition of DHR bimodules \cite{Jones_2024_DHR} (also see Section~\ref{sec:DHR theory}).

\begin{definition}\label{def:boundary-condition}
    Given a Q-system model, $M \in \Mod(\Loc)$, and a finite interval $I$, a unit vector $v \in M$ is said to be localized in $I$ if $\langle v | a \cdot v\rangle = \rho(a)$ for any $a$ supported outside $I$. 
    A \emph{boundary condition} of a Q-system model is an object $M \in \Mod(\Loc)$, such that there is a finite set of unit vectors $\{v_1, \dots, v_r\}$ which generates $M$, and there is some $N>0$ such that each $v_i$ is localized in $[0,N]$. We denote the full subcategory of $\Mod(\Loc)$ consisting of boundary conditions as $\BCond$.
\end{definition}

\begin{remark}
This definition is manifestly invariant under symmetric finite-depth unitary circuits.
Therefore, though all our technical manipulations are based on the Q-system model, the classification result in this section automatically holds for any system in the same phase.
\end{remark}

\begin{lemma}
$\sigma(a) = \rho(a)$ for all $a \in \mathsf{Loc}_{[n, \infty)}$ if and only if $\sigma(m_i^\dag m_i) = 1$ for all $i \in [n, \infty)$.
\end{lemma}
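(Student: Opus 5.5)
The forward direction is immediate, and the converse is a screening argument run on \(\sigma\) instead of on a ground state. Throughout, \(m_i\) denotes \(m_{i,i+1}\) (with the projections of Remark~\ref{rmk: projections} suppressed), and I read \(\rho\) on \(\Loc_{[n,\infty)}\) through the identification of \(\Loc^{\bdy}_{[n,\infty)}\), \(n>0\), with the half-line algebra of the bulk fusion spin chain.

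\emph{Forward direction.} Assume \(\sigma=\rho\) on \(\Loc_{[n,\infty)}\). Since \(\rho\) is the ground state of \(\Phi_Q\), we have \(\rho(1-m_i^\dagger m_i)=0\), so \(\rho(m_i^\dagger m_i)=1\). For \(i\ge n\) the operator \(m_i^\dagger m_i\) lies in \(\Loc_{[n,\infty)}\), hence \(\sigma(m_i^\dagger m_i)=1\).

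\emph{Converse, step 1: absorbing projectors.} Assume \(\sigma(m_i^\dagger m_i)=1\) for all \(i\ge n\). The proof of Proposition~\ref{prop:ground_state_projector_eigenstate} used only Cauchy--Schwarz and the vanishing of the state on the relevant projectors. Hence it applies verbatim to \(\sigma\) and the projectors \(1-m_i^\dagger m_i\), \(i\ge n\), giving
\[
\sigma(b\, m_i^\dagger m_i)=\sigma(m_i^\dagger m_i\, b)=\sigma(b)\quad\text{for all } b\in\Loc .
\]
The projectors \(m_i^\dagger m_i\) commute, so by induction the same holds with \(P_{[i,j]}:=m_{[i,j]}^\dagger m_{[i,j]}\) for any \(n\le i\le j\), using the Frobenius identity that \(P_{[i,j]}\) is the product of the \(m_k^\dagger m_k\), \(i\le k<j\). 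Thus \(\sigma(a)=\sigma(P a P)\) for every such \(P\).

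\emph{Converse, step 2: screening.} Take \(a\in\Loc_{[i,j]}\) with \(i-1\ge n\), and let \(P=P_{[i-1,j+1]}\). Then \(PaP=m_{[i-1,j+1]}^\dagger\,\operatorname{scr}(a)\,m_{[i-1,j+1]}\). Lemma~\ref{lemma:fixed_point_algebra} gives \(\operatorname{scr}(a)=\lambda_a\,\mathrm{id}_Q\), so
\[
\sigma(a)=\lambda_a\,\sigma(P)=\lambda_a .
\]
The same computation gives \(\rho(a)=\lambda_a\). Hence \(\sigma=\rho\) on the norm-dense union of the \(\Loc_{[i,j]}\) with \(i>n\), and by continuity on its closure.

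\emph{Main obstacle: the left edge site \(n\).} An operator touching site \(n\) has no left neighbour inside the region where the hypothesis holds, so step 2 only gives the half-screening
\[
m_{[n,j+1]}(a\otimes\mathrm{id}_Q)m_{[n,j+1]}^\dagger .
\]
This is merely a right \(Q\)-module endomorphism of \(Q\), i.e.\ an element of \(\End_{\C_Q}(Q)\cong\C(1,Q)\). I would handle this by viewing \(\Loc_{[n,\infty)}\) as a fusion spin chain with boundary, with \(\M=\C\) and boundary module \(K=Q\), and invoking Lemma~\ref{lemma:boundary_fixed_point_algebra}: the half-screened element is then a scalar precisely when \(\dim\C(1,Q)=1\).
\begin{itemize}
\item If \(Q\) is connected, which holds in the motivating examples such as \(Q=1\oplus e\) in Example~\ref{eg. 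Z2 gapped phases}, the boundary uniqueness theorem with \(K=Q\) finishes the proof directly.
\item If \(Q\) is not connected, the range of \(i\) must be understood so that one bulk projector straddles sites \(n-1\) and \(n\). Equivalently, the statement should be applied with \(n\) shifted by one, which is harmless since the lemma is only used to define localization with some \(N\). Step 2 then covers all of \(\Loc_{[n,\infty)}\).
\end{itemize}
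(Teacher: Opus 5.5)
Your proposal runs the same screening argument as the paper: absorb the projectors into $\sigma$ via the Cauchy--Schwarz argument of Proposition~\ref{prop:ground_state_projector_eigenstate}, screen, and read off the scalar $\lambda_a$ from Lemma~\ref{lemma:fixed_point_algebra}. It departs from the paper exactly where the paper's proof is loose.

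The paper says that for $a$ supported in $[n,r]$ one may ``screen the interval $[n,r]$ exactly as in the ground state computation'' to get $\sigma(a)=\sigma(\operatorname{scr}(a))$. But the two-sided map $\operatorname{scr}$ on $\mathsf{Loc}_{[n,r]}$ needs the projector on the bond $(n-1,n)$. Under the reading $m_i=m_{i,i+1}$, which is the reading under which the forward direction is immediate as the paper asserts, that projector is not among the hypotheses. You correctly see that at site $n$ only one-sided screening is available. It lands in $\End_{\C_Q}(Q)\cong\C(1,Q)$ rather than in $\mathbb{C}\,\mathrm{id}_Q$. You also correctly see that the boundary fixed-point lemma, equivalently Theorem~\ref{thm:boundary_unique_GS} with $\M=\C$ and $K=Q$, closes this gap precisely when $Q$ is connected.

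Your caveat for non-connected $Q$ is necessary, not cosmetic, because the ``if'' direction fails as stated in that case. Take $\C=\ve$ and the simple but non-connected Q-system $Q=\mathbb{C}^2\otimes(\mathbb{C}^2)^*\cong\End(\mathbb{C}^2)$; the model is a dimer chain. The state $\rho$ restricted to $\mathsf{Loc}_{[n,\infty)}$ is maximally mixed on the left tensor factor of site $n$. A state that is pure on that factor and has dimers on all bonds $(i,i+1)$ with $i\ge n$ satisfies every hypothesis but differs from $\rho$.

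Reading $m_i$ as $m_{i-1,i}$ instead would legitimize the paper's screening but break the forward direction. In the Ising symmetry-breaking phase $Q=1\oplus e$, a domain wall between sites $n-1$ and $n$ is invisible to symmetric operators on $[n,\infty)$, yet $\sigma(m_{n-1,n}^\dagger m_{n-1,n})=0$.

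So your formulation is the correct version: an equivalence for connected $Q$, and in general the one-way implication with $n$ shifted by one. As you note, the shift costs nothing downstream, where $n$ only needs to exist.
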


\begin{proof}
We only need to prove the ``if'' part.
Let $a$ be supported in $[n,r]$. Since $\sigma(m_i^\dag m_i)=1$ for all $i\in[n,r]$, Proposition~\ref{prop:ground_state_projector_eigenstate} implies that inserting the corresponding commuting projectors does not change the expectation value of $a$ in the state $\sigma$.
Thus we may screen the interval $[n,r]$ exactly as in the ground state computation, and obtain
\[
    \sigma(a)=\sigma(\operatorname{scr}(a)).
\]
The screened operator is a scalar $\lambda_a\cdot \mathrm{id}$ by Lemma~\ref{lemma:fixed_point_algebra}. The ground state $\rho$ gives the same scalar after screening, hence
\[
    \sigma(a) = \sigma(\lambda_a \cdot \mathrm{id}) = \lambda_a = \rho(a).
\]
\end{proof}

\begin{remark}
    This lemma essentially says that local indistinguishability implies global indistinguishability in the Q-system model. One important reason for the lemma to hold is that there is no nontrivial topology in one dimension. Indeed, local indistinguishability \emph{does not} imply global indistinguishability in two and higher spatial dimensions; this is precisely the core idea of the entanglement bootstrap program \cite{EB1,EB2}.
\end{remark}

Since $\sigma(m_i^\dagger m_i) = 1$ does not hold for some boundary sites, we cannot screen a generic operator to a single-site module morphism.
However, we could still screen a generic operator to a module morphism localized near the boundary:

\begin{definition}\label{def:partial-screening-map}
We define the partial screening map $\operatorname{scr}_n: \mathsf{Loc} \to \mathsf{Loc}_{[0,n+1]}$ as follows: Given some operator $a \in \mathsf{Loc}_{[0,r]}$ where $r>n$, $\operatorname{scr}_n (a) = m_{[n,r+1]} a m_{[n,r+1]}^\dag$.
We define the fixed-point algebra $A_n^{\mathrm{fp}} := \operatorname{im} (\operatorname{scr}_n) \subset \mathsf{Loc}_{[0,n+1]}$.
\end{definition}

\[
\input{diagrams/partial-screening-2281.tex}
\]

\begin{proposition}
$A_n^{\mathrm{fp}}$ is the subalgebra of operators that commute with the right $Q$-action, i.e. $A_n^{\mathrm{fp}} = \MQ(l \otimes x^{n+1}, l \otimes x^{n+1}) \subset \M(l \otimes x^{n+1}, l \otimes x^{n+1}) = \mathsf{Loc}_{[0,n+1]}$. Note that the right $Q$-module structure of $l \otimes x^{n+1}$ is just the right $Q$-module structure of the rightmost $x$.
\end{proposition}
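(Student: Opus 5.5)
The plan is to prove the two inclusions separately, exactly as in the proof of Lemma~\ref{lemma:boundary_fixed_point_algebra}, but with the boundary object $l\otimes x^{n}$ (viewed as an object of $\M$) playing the role of $K$. The right $Q$-module structure on $l\otimes x^{n+1}$ is $\mathrm{id}_{l\otimes x^n}\otimes \mu_x$, where $\mu_x$ denotes the right $Q$-action on the last site: $x\otimes Q\to x$, obtained from $m$ together with the projection $p$ of Remark~\ref{rmk: projections}. One point needs attention first. Strictly, only the $Q$-summand of the last site survives the screening, so the image of $\operatorname{scr}_n$ consists of operators that vanish outside $l\otimes x^n\otimes Q$. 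I will therefore identify $A_n^{\mathrm{fp}}$ with $\MQ(l\otimes x^{n}\otimes Q, l\otimes x^{n}\otimes Q)$, embedded via $p$. This is the free module $(l\otimes x^n)\otimes Q$, and the paper's statement is this identification with the projections suppressed.

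\textbf{Inclusion $A_n^{\mathrm{fp}}\subset \MQ(\cdot,\cdot)$.} Take $a\in\Loc_{[0,r]}$ with $r>n$. Then $\operatorname{scr}_n(a)=m_{[n,r+1]}\,a\,m^\dagger_{[n,r+1]}$ is an operator in which all sites from $n+1$ to $r+1$ have been fused into the single last $Q$-strand. To show that it commutes with the right $Q$-action, I will use the Frobenius relations $(\mathrm{id}\otimes m)(m^\dagger\otimes\mathrm{id})=m^\dagger m=(m\otimes\mathrm{id})(\mathrm{id}\otimes m^\dagger)$ together with associativity of $m$. The argument, shown in the figure after Lemma~\ref{lemma:fixed_point_algebra}, has three steps:
\begin{enumerate}
\item slide an extra $Q$-strand attached on the right of the output through $m_{[n,r+1]}$ by associativity;
\item pass it across $a$ by applying the Frobenius relation, since $a$ acts trivially on the new rightmost strand;
\item reassemble it on the input side.
\end{enumerate}
The only difference from the earlier lemma is that here only the right action is involved and the left-hand legs are left open. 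This gives $\operatorname{scr}_n(a)\circ(\mathrm{id}\otimes m)=(\mathrm{id}\otimes m)\circ(\operatorname{scr}_n(a)\otimes\mathrm{id}_Q)$.

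\textbf{Inclusion $\MQ(\cdot,\cdot)\subset A_n^{\mathrm{fp}}$, and closure.} Given a right $Q$-module endomorphism $f$ of $l\otimes x^n\otimes Q$, regard $f$ itself as an element of $\Loc_{[0,n+1]}\subset\Loc_{[0,r]}$, taking $r=n+1$. Then
\[
\operatorname{scr}_n(f)=(\mathrm{id}\otimes m)(f\otimes\mathrm{id}_Q)(\mathrm{id}\otimes m^\dagger)=f\,(\mathrm{id}\otimes m)(\mathrm{id}\otimes m^\dagger)=f,
\]
using first that $f$ is a module map and then specialness $mm^\dagger=\mathrm{id}_Q$. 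So $\operatorname{scr}_n$ restricted to $\MQ(\cdot,\cdot)$ is the identity, which proves the reverse inclusion. Combined with the first inclusion, this shows $\operatorname{scr}_n$ is an idempotent onto the commutant of the right $Q$-action, which is a $*$-subalgebra. This justifies calling $A_n^{\mathrm{fp}}$ an algebra, even though $\operatorname{scr}_n$ itself is not multiplicative.

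\textbf{Well-definedness and the main obstacle.} I must also check that $\operatorname{scr}_n(a)$ does not depend on the choice of $r$, that is, that $\operatorname{scr}_n(a\otimes\mathrm{id}_x)=\operatorname{scr}_n(a)$. With projections this becomes $m(\mathrm{id}\otimes pp^\dagger)m^\dagger=mm^\dagger=\mathrm{id}_Q$, which again follows from specialness. I expect the main obstacle to be bookkeeping rather than any idea: keeping track of the projections $p$ and the exact object on the last site, so that the identity and Frobenius manipulations are applied to $Q$ and not to $x$. The fix is to perform every computation on $Q$-strands after inserting $p\,p^\dagger$ wherever it is needed.
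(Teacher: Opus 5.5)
Your proposal is correct and follows the paper's proof. The paper gets $A_n^{\mathrm{fp}}\subset\MQ(\cdot,\cdot)$ from the Frobenius relations of the Q-system, and gets the reverse inclusion from $\operatorname{scr}_n(f)=f$ for every right $Q$-module map $f$; you do the same, spelling out the second step with specialness $mm^\dagger=\mathrm{id}_Q$. Your extra bookkeeping is consistent with the paper, which simply suppresses the projections: identifying the image with $\MQ(l\otimes x^n\otimes Q,\, l\otimes x^n\otimes Q)$ when $x\neq Q$, and checking that the result does not depend on $r$. (Minor point: in your three-step sketch, the Frobenius relation is what moves the strand through the $m^\dagger$ fan, while passing $a$ is just the interchange law.)
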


\begin{proof}
$A_n^{\mathrm{fp}} \subset \MQ (l \otimes x^{n+1}, l \otimes x^{n+1})$ is easily verified by using Frobenius conditions of the Q-system.
To prove $\MQ (l \otimes x^{n+1}, l \otimes x^{n+1}) \subset A_n^{\mathrm{fp}}$, note that given any $f \in \MQ (l \otimes x^{n+1}, l \otimes x^{n+1})$, we have $\operatorname{scr}_n (f) = f$.
\end{proof}

\begin{remark}
It is straightforward to verify that $\operatorname{scr}_n$ is also a conditional expectation.
\end{remark}

We will see that the fixed-point algebra does encode all information of a boundary state, i.e. a boundary state can be completely determined by its action on the fixed-point algebra, which is encapsulated in the next lemma.

\begin{lemma}
Define the set of boundary states to be $S_n^{\mathrm{bdy}} := \{ \phi \in S(\mathsf{Loc}) \big| \phi(m_i^\dag m_i) = 1 \quad \forall i \in [n, \infty) \}$.
Then there is a canonical convex isomorphism $S_n^{\mathrm{bdy}} \simeq S(A_n^{\mathrm{fp}})$.
\end{lemma}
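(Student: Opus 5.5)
The two maps are restriction and extension through the partial screening map. In one direction, send $\phi \in S_n^{\mathrm{bdy}}$ to its restriction $\phi|_{A_n^{\mathrm{fp}}}$. Since $A_n^{\mathrm{fp}} = \MQ(l\otimes x^{n+1}, l\otimes x^{n+1})$ is a unital $*$-subalgebra of $\Loc_{[0,n+1]}$, this is a state on $A_n^{\mathrm{fp}}$. In the other direction, send $\psi \in S(A_n^{\mathrm{fp}})$ to $\psi\circ \operatorname{scr}_n$. I first check that $\operatorname{scr}_n$ is well defined on $\bigcup_r \Loc_{[0,r]}$, i.e.\ independent of the choice of $r$. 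This follows from $m m^\dagger = \mathrm{id}_Q$ (specialness), since enlarging $r$ by one inserts $m m^\dagger$ at the right end. Then $\operatorname{scr}_n$ extends by continuity to a unital completely positive map $\Loc \to A_n^{\mathrm{fp}}$; indeed it is a compression by the isometry-adjoint $m_{[n,r+1]}^\dagger$. Hence $\psi\circ\operatorname{scr}_n$ is a state on $\Loc$.

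Next I show that the two maps land in the right sets and are mutually inverse. For $\psi\circ\operatorname{scr}_n$ to lie in $S_n^{\mathrm{bdy}}$, I need $\operatorname{scr}_n(m_i^\dagger m_i) = 1$ for $i\ge n$. Graphically, $m_{[n,r+1]} m_i^\dagger m_i m_{[n,r+1]}^\dagger = m_{[n,r+1]}m_{[n,r+1]}^\dagger = 1$ by associativity and the Frobenius/special relations, exactly as in the proof of Lemma~\ref{lemma:fixed_point_algebra}. For $\phi\in S_n^{\mathrm{bdy}}$, I show $\phi = \phi\circ\operatorname{scr}_n$ using the argument of the preceding lemma (local indistinguishability). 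By Proposition~\ref{prop:ground_state_projector_eigenstate}, applied to the projectors $m_i^\dagger m_i$ whose expectation is $1$ (so that $1-m_i^\dagger m_i$ has expectation $0$ and Cauchy--Schwarz applies), I can insert $P = m^\dagger_{[n,r+1]}m_{[n,r+1]}$ on both sides: $\phi(a) = \phi(PaP)$. I then rewrite $PaP = m^\dagger_{[n,r+1]}\operatorname{scr}_n(a) m_{[n,r+1]}$. Since $\operatorname{scr}_n(a)$ is a right $Q$-module map on the rightmost strand, it commutes past $m^\dagger_{[n,r+1]}$. This gives $PaP = \operatorname{scr}_n(a)\otimes(\text{stuff}) = \operatorname{scr}_n(a)\,P$, and one more application of the proposition yields $\phi(a)=\phi(\operatorname{scr}_n(a))$. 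The other composite is the identity because $\operatorname{scr}_n(f)=f$ for $f\in A_n^{\mathrm{fp}}$, as in the proposition above. Both maps are visibly affine, so this gives a convex isomorphism.

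I expect the main obstacle to be the identity $\phi(a) = \phi(\operatorname{scr}_n(a))$. It requires a careful string-diagram manipulation: showing that $m^\dagger_{[n,r+1]}\operatorname{scr}_n(a)\,m_{[n,r+1]}$, viewed inside $\Loc_{[0,r+1]}$ (with the embedding of $\Loc_{[0,n+1]}$ tensored by identities), agrees with $\operatorname{scr}_n(a)\,P$, using only the module-map property of $\operatorname{scr}_n(a)$ and the Frobenius condition. I would verify this first for $r=n+1$ and then induct on $r$. Along the way I would keep track of the projections $p$ suppressed in Remark~\ref{rmk: projections}.
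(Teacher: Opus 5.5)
Your proposal is correct and follows the paper's proof. Both use restriction to $A_n^{\mathrm{fp}}$ in one direction and $\psi\mapsto\psi\circ\operatorname{scr}_n$ in the other. The identity $\phi=\phi\circ\operatorname{scr}_n$ comes from screening under the projector conditions, and $\operatorname{scr}_n|_{A_n^{\mathrm{fp}}}=\mathrm{id}$ gives the other composite.

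You also check several points the paper leaves implicit: independence of $r$, that $\operatorname{scr}_n$ is unital completely positive so $\psi\circ\operatorname{scr}_n$ is a state, and that this state satisfies the $m_i^\dagger m_i$ conditions. One thing to fix is the site indexing, which you inherit from the paper. The $i=n$ condition holds only if site $n$ is absorbed into the screened strand, as in the formula $m_{[n,r+1]}$. In that case $A_n^{\mathrm{fp}}$ lives on sites $[0,n]$, not on $[0,n+1]$ as you (and the paper) state.
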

\begin{proof}
    First, we define the convex map $\operatorname{Res} : S_n^{\mathrm{bdy}} \to S(A_n^{\mathrm{fp}})$ as the restriction map: Note that $A_n^{\mathrm{fp}} \subset \mathsf{Loc}_{[0,n+1]} \subset \mathsf{Loc}$, therefore given any $\phi \in S^{\mathrm{bdy}}_n$, we could define $\operatorname{Res} (\phi) = \phi \big|_{A_n^{\mathrm{fp}}}$.

    The inverse map $\operatorname{Ind} : S(A_n^{\mathrm{fp}}) \to S_n^{\mathrm{bdy}}$ is defined in a similar way to the ground state of the Q-system model in Section~\ref{Q-system}: Given any $\psi \in S(A_n^{\mathrm{fp}})$, we define $\operatorname{Ind} (\psi)(a) = \psi (\operatorname{scr}_n(a))$.

    Indeed, if $\phi\in S_n^{\mathrm{bdy}}$, then the defining projector conditions allow us to screen every local operator outside $[0,n+1]$, so $\phi(a)=\phi(\operatorname{scr}_n(a))$ for all local $a$. This gives $\operatorname{Ind}(\operatorname{Res}(\phi))=\phi$.
    Conversely, if $\psi\in S(A_n^{\mathrm{fp}})$ and $b\in A_n^{\mathrm{fp}}$, then $\operatorname{scr}_n(b)=b$, so $\operatorname{Res}(\operatorname{Ind}(\psi))(b)=\psi(b)$.
    Hence $\operatorname{Res}$ and $\operatorname{Ind}$ are inverse to each other.
\end{proof}

\begin{remark}
Physically, the screening map is the renormalization map, and thanks to good properties of the Q-system model, no information is lost during renormalization. In other words, invariance under the screening map could be understood as a fixed-point condition.
\end{remark}

Note that convex isomorphisms preserve extreme points, i.e. pure states, therefore $\operatorname{Res} (\sigma)$ is a pure state defined on $A_n^{\mathrm{fp}}$. 
On the other hand, as $\MQ$ is semisimple, $\MQ (l \otimes x^{n+1}, l \otimes x^{n+1})$ is a multi-matrix algebra, with each matrix component labeled by a simple object $K \in \MQ$. 
Note that a pure state on a multi-matrix algebra must be supported in a single block, therefore there is a simple object $K \in \MQ$ associated with $\operatorname{Res} (\sigma)$.
It turns out we could reconstruct the boundary condition from this data:

\begin{theorem}[Classification of simple boundary conditions]
Suppose $\operatorname{Res} (\sigma)$ is a pure state supported on the block $K$ of $A_n^{\mathrm{fp}}$.
Then $\operatorname{GNS} (\sigma) \simeq \ReaBCond(K)$.
\end{theorem}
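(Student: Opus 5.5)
The plan is to reduce everything to the finite-dimensional fixed-point algebra $A_n^{\mathrm{fp}}$ through the convex isomorphism $S_n^{\mathrm{bdy}} \simeq S(A_n^{\mathrm{fp}})$, and then to write down an explicit unit vector in $\ReaBCond(K)$ whose vector state equals $\sigma$. Since $\ReaBCond(K)$ will turn out to be irreducible, uniqueness of the GNS representation then gives the isomorphism.

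\textbf{Step 1: a candidate vector.} Write $A_n^{\mathrm{fp}} = \MQ(l\otimes x^{n+1}, l\otimes x^{n+1}) \cong \bigoplus_{K'} \mathrm{End}(\MQ(K', l\otimes x^{n+1}))$, with blocks indexed by simple $K'\in\MQ$; here $K'$ acts by postcomposition. A pure state supported in the block $K$ is a vector state $b \mapsto \operatorname{tr}(h^\dagger b h)$ for some $h \in \MQ(K, l\otimes x^{n+1})$ normalized by $h^\dagger h = \mathrm{id}_K$. View $h$ as an element of $\ReaBCond_{n+1}(K) = \M(K, l\otimes x^{n+1})$ and push it into the inductive limit by the maps $j$.

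\textbf{Step 2: matching the states.} Show that for every local $a \in \Loc_{[0,r]}$,
\[
\langle h \mid a\cdot h\rangle_{\ReaBCond(K)} = \operatorname{Res}(\sigma)(\operatorname{scr}_n(a)) = \sigma(a).
\]
After applying $j_{n+1,r+1}$, the vector $h$ becomes $(h\otimes \mathrm{id})\circ(\text{$\mu$-dagger tree})$. The key facts needed are the following.
\begin{itemize}
\item $h$ is a right $Q$-module map, where the module structure on $l\otimes x^{n+1}$ comes from the rightmost $Q$.
\item The Frobenius identities \eqref{eq.BdyFrob} let us rewrite the $\mu^\dagger$ tree attached to $K$ as a tree of $m^\dagger$'s attached to the last site.
\end{itemize}
Together these give $j(h) = (\mathrm{id}\otimes m^\dagger_{[n+1,r+1]})\,h$ up to the projections. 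Consequently $j(h)^\dagger\, a\, j(h) = h^\dagger \operatorname{scr}_n(a)\, h$, and taking $\operatorname{tr}$ gives the identity. By density this extends to all of $\Loc$. I expect this diagrammatic identity, with careful bookkeeping of the projections $p_Q$, to be the main computational obstacle.

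\textbf{Step 3: cyclicity and irreducibility.} I will show that $\ReaBCond(K)$ is irreducible for simple $K$. For each $r$, the algebra $\Loc_{[0,r]} = \M(l\otimes x^r, l\otimes x^r)$ acts on $\M(K, l\otimes x^r)$ by postcomposition. Since $K$ is simple as an object of $\MQ$, this space is irreducible over the subalgebra commuting with $Q$. The cleaner route is the following. Any nonzero vector $v$ of the finite piece generates, under $\Loc_{[0,r]}$, all of $\M(K, l\otimes x^r)$ restricted to the isotypic component of each simple summand of $K$ in $\M$. The module structure should then ensure that the inductive limit recovers everything: $j$ lands in the span of $\M$-morphisms, and any $\M$-morphism $K \to l\otimes x^{r}$ is reached by $\Loc$ acting on $j(h)$, because the composite $h^\dagger j$-pairing is faithful.

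An alternative avoids irreducibility. $\GNS(\sigma)$ is irreducible because $\sigma$ is pure (Blackadar's theorem), and the cyclic subrepresentation of $\ReaBCond(K)$ generated by $h$ is isomorphic to $\GNS(\sigma)$ by Step 2. It then remains to show that $h$ is cyclic. This is done at the finite level: $\Loc_{[0,r]}\cdot j(h)$ is a $\Loc_{[0,r]}$-submodule containing $\operatorname{im}(\mu_r^*)$-type vectors. Using Corollary~\ref{cor:GNS-im-m-star} applied to $K$, together with the observation that $\operatorname{im} j \subset \operatorname{im}(\mu^*)$, this submodule is all of $\ReaBCond_{r+1}(K)$.

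Combining the steps, $\GNS(\sigma)\simeq \overline{\Loc\cdot h} = \ReaBCond(K)$.
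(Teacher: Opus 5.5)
Your Steps 1 and 2 are correct, and they take a more explicit route than the paper. The paper identifies $\ReaBCond(K)$ with $\operatorname{GNS}(\rho_K)$ and passes to $A_n^{\mathrm{fp}}$. It then uses that two pure states in the same block of a multi-matrix algebra have equivalent GNS representations, so it only has to check that $\operatorname{Res}(\rho_K)$, the vector state of $\mu_{n+1}^\dagger$, lies in block $K$. Its reduction to $A_n^{\mathrm{fp}}$ tacitly uses that $\operatorname{scr}_n$ is $A_n^{\mathrm{fp}}$-bimodular, so that a unitary $u$ with $\operatorname{Res}(\sigma)=\operatorname{Res}(\rho_K)\circ\mathrm{Ad}(u)$ gives $\sigma=\rho_K\circ\mathrm{Ad}(u)$. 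You instead exhibit $\sigma$ directly as a vector state on $\ReaBCond(K)$. Your key identity $j(h)=(\mathrm{id}\otimes m^\dagger_{[n+1,r+1]})\,h$ does hold: by \eqref{eq.BdyFrob} and unitality, $\mu^\dagger=(\mu\otimes\mathrm{id}_Q)(\mathrm{id}_K\otimes m^\dagger\eta)$, the same holds for the action on the last site, and so a module map $h$ intertwines the two coactions. In effect, $j(h)$ is the paper's implicit vector $u\xi$ written down explicitly.

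Step 3, however, does not prove what you need.

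\textbf{Route A.} The finite-level claims are false in general. As a module over $\MQ(l\otimes x^r,l\otimes x^r)$, the space $\M(K,l\otimes x^r)$ is $\MQ(K\otimes Q,\,l\otimes x^r)$. This is typically reducible, because $K\otimes Q$ usually has several simple summands in $\MQ$. It is not irreducible over $\Loc_{[0,r]}$ either, since $K$ need not be simple in $\M$. The ``faithful pairing'' remark says nothing about the inductive limit.

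\textbf{Route B.} The decisive claim, that $\Loc\cdot j(h)$ contains the reference vector, is only asserted. The conclusion ``all of $\ReaBCond_{r+1}(K)$'' is also wrong. In fact $\Loc_{[0,r]}\cdot j(h)\subseteq\operatorname{im}(\mu_r^*)=j_{r,r+1}(\ReaBCond_r(K))$, which is generally a proper subspace.

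\textbf{Repair.} Let $\xi$ be the vector given by $\mu_r^\dagger$ at level $r$, i.e. the cyclic vector of $\rho_K$. The relations $\mu_{r+1}^\dagger=(\mu_r^\dagger\otimes\mathrm{id})\mu^\dagger$ and $\mu_r\mu_r^\dagger=\mathrm{id}_K$ give $(h\mu_{n+1})\cdot\xi=j(h)$. Since $h^\dagger h=\mathrm{id}_K$, they also give $(\mu_{n+1}^\dagger h^\dagger)\cdot j(h)=\xi$; this is exactly where simplicity of $K$ in $\MQ$ enters. Moreover $j_{r,r+1}(f)=(f\mu_r)\cdot\xi$ for every $f\in\ReaBCond_r(K)$, so $\xi$ is cyclic, and therefore so is $j(h)$. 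Uniqueness of the GNS triple then finishes the proof. Alternatively, cite that $\ReaBCond(K)\simeq\operatorname{GNS}(\rho_K)$ is irreducible because $\rho_K$, as the unique ground state, is pure; then every nonzero vector is cyclic.
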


\begin{proof}
    Recall that $\ReaBCond(K)$ could be identified with $\operatorname{GNS} (\rho_K)$ by the construction of $\ReaBCond$ in Definition~\ref{Def. The realization functor for boundary conditions}, where $\rho_K$ is the ground state of the boundary interaction $\Phi_{K,Q}$.
Obviously $\rho_K \in S^{\mathrm{bdy}}_n$, thus $\operatorname{Res} (\rho_K)$ is also a pure state defined on $A^{\mathrm{fp}}_n$.
Therefore, to show that $\operatorname{GNS} (\sigma) \simeq \operatorname{GNS} (\rho_K)$, it suffices to show that $\operatorname{GNS} (\operatorname{Res} (\sigma)) \simeq \operatorname{GNS} (\operatorname{Res} (\rho_K))$ as modules of $A^{\mathrm{fp}}_n$.

    Note that given any two pure states on a multi-matrix algebra, they have isomorphic GNS constructions if and only if they are supported on the same matrix block.
Therefore it suffices to show $\operatorname{Res} (\rho_K)$ is also supported on the block $K$ of $A_n^{\mathrm{fp}}$.
Under the identification $A_n^{\mathrm{fp}}=\MQ(l\otimes x^{n+1},l\otimes x^{n+1})$, the simple blocks are labelled by the simple right $Q$-modules appearing in $l\otimes x^{n+1}$. 
The state $\operatorname{Res}(\rho_K)$ is obtained from the vector $\mu_{n+1}^\dagger:K\to l\otimes x^{n+1}$, which lies in the summand labelled by $K$. Therefore $\operatorname{Res} (\rho_K)$ is supported on the block $K$.
\end{proof}

\begin{corollary}
    $\BCond$ is finite semisimple.
\end{corollary}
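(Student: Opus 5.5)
The plan is to show that every object of $\BCond$ decomposes as a finite direct sum of simple boundary conditions, and that simple boundary conditions have finite-dimensional endomorphism spaces. The latter holds automatically because they are irreducible representations, so $\End = \mathbb{C}$. Finiteness of the isomorphism classes of simples then follows from the preceding classification theorem: every simple boundary condition is isomorphic to $\ReaBCond(K)$ for a simple $K\in\MQ$, and $\MQ$ has finitely many simples. The work therefore lies in the decomposition statement.

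Let $M\in\BCond$ be generated by unit vectors $v_1,\dots,v_r$, each localized in $[0,N]$.

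\textbf{Step 1.} For each $i$, the vector state $\sigma_i=\langle v_i|\,\cdot\,v_i\rangle$ agrees with $\rho$ outside $[0,N]$. In particular $\sigma_i(m_j^\dagger m_j)=1$ for all $j>N$, so $\sigma_i\in S^{\mathrm{bdy}}_N$.

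\textbf{Step 2.} By the convex isomorphism $S^{\mathrm{bdy}}_N\simeq S(A_N^{\mathrm{fp}})$, $\sigma_i$ is determined by its restriction $\psi_i$ to the multi-matrix algebra $A_N^{\mathrm{fp}}=\MQ(l\otimes x^{N+1},l\otimes x^{N+1})$. Decompose $\psi_i$ as a finite convex combination of pure states $\psi_{i,k}$. Each $\psi_{i,k}$ lies in a single block $K_{i,k}$. Applying $\operatorname{Ind}$ gives pure states $\operatorname{Ind}(\psi_{i,k})$ whose GNS modules are $\ReaBCond(K_{i,k})$, by the classification theorem.

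\textbf{Step 3.} Consequently $\GNS(\sigma_i)$ is a subrepresentation of $\bigoplus_k \ReaBCond(K_{i,k})$. Since $\operatorname{Ind}$ is affine and GNS of a convex combination embeds into the direct sum of the GNS of its pieces, it is a finite direct sum of simple modules. Here one uses that the $\ReaBCond(K)$ are irreducible, so any subrepresentation of a finite direct sum of them is again such a sum.

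\textbf{Step 4.} Now $M=\overline{\sum_i \Loc v_i}$ is a quotient of $\bigoplus_i \GNS(\sigma_i)$, via the map $a\cdot\Omega_i\mapsto a v_i$. Because the orthogonal complement of an invariant subspace is invariant, $M$ is isomorphic to a summand of that finite direct sum of irreducibles. Hence $M$ is a finite direct sum of simples.

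\textbf{Step 5.} Conversely, finite direct sums of objects in $\BCond$ lie in $\BCond$, and direct summands do too. For summands, project the generators and normalize them; a projected vector is localized wherever the original was, because $\rho$ restricted outside $[0,N]$ factors through $\operatorname{scr}_N$ and the projected state still lies in $S^{\mathrm{bdy}}_N$. Thus $\BCond$ is closed under sums and summands, it is a semisimple $C^*$-category with finitely many simples, and its Hom spaces are finite-dimensional.

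The main obstacle is Step 3: justifying carefully that the GNS module of a finite convex combination of states embeds into the direct sum of GNS modules. Concretely, one must show that $\phi\le\lambda\phi'$ yields a bounded intertwiner $\GNS(\phi')\to\GNS(\phi)$ with dense image, and that the pure pieces are mutually compatible. I would handle this with the Radon--Nikodym-type argument: write $\psi_i=\sum_k t_k\psi_{i,k}$ on the finite-dimensional algebra, transport it through $\operatorname{Ind}$, and use the vector $\bigoplus_k\sqrt{t_k}\,\Omega_{i,k}$ to realize $\sigma_i$ as a vector state in $\bigoplus_k\ReaBCond(K_{i,k})$. The cyclic subspace generated by this vector is then $\GNS(\sigma_i)$.
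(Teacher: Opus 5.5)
Your proposal is correct and is essentially the paper's argument: the convex isomorphism $S^{\mathrm{bdy}}_n\simeq S(A_n^{\mathrm{fp}})$, a finite convex decomposition into pure states on the multi-matrix algebra $A_n^{\mathrm{fp}}$, and identification of each pure piece with some $\ReaBCond(K)$ by the classification theorem; your Steps 3--5 supply details the paper's three-line proof leaves implicit, such as realizing $\operatorname{GNS}(\sigma_i)$ inside $\bigoplus_k\ReaBCond(K_{i,k})$ via the vector $\bigoplus_k\sqrt{t_k}\,\Omega_{i,k}$ and passing from the cyclic submodules to $M$. The only correction is harmless index bookkeeping: since $m_j^\dagger m_j$ is supported on $[j,j+1]$, a vector localized in $[0,N]$ only gives a state in $S^{\mathrm{bdy}}_{N+1}$, so a projected generator in Step 5 is in general localized in $[0,N+1]$ rather than ``wherever the original was''; this still suffices because the definition of a boundary condition only asks for some $N$.
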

\begin{proof}
    Consider some $M \in \BCond$, with a finite set of generators $\{v_1, \dots, v_n\}$ localized in $[0,n]$. 
    It suffices to prove that the submodule generated by each $v_i$ is finite semisimple.

    Using the convex isomorphism $S_n^{\mathrm{bdy}} \simeq S(A_n^{\mathrm{fp}})$, $\langle v_i | - \cdot v_i \rangle$ is a state on $A_n^{\mathrm{fp}}$, which must be a finite convex combination of pure states on $A_n^{\mathrm{fp}}$. Therefore, the submodule generated by $v_i$ is isomorphic to a finite direct sum $\oplus_j \ReaBCond(K_{i,j})$, which is finite semisimple.
\end{proof}

Obviously the image of $\ReaBCond$ lies in $\BCond$. From now on, the target of $\ReaBCond$ would always be $\BCond$. As a general boundary condition is a direct sum of simple boundary conditions, we arrive at the following corollary:

\begin{corollary}
    $\ReaBCond: \MQ^{\op} \to \BCond$ is essentially surjective.
\end{corollary}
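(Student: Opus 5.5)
The statement to prove is that $\ReaBCond\colon \MQ^{\op}\to\BCond$ is essentially surjective. The plan is to combine the previous corollary (every $M\in\BCond$ is finite semisimple) with the classification of simple boundary conditions. The remaining work is to check that each simple summand of a boundary condition is itself a simple boundary condition in the sense of the theorem, and that $\ReaBCond$ preserves finite direct sums.

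First, let $M\in\BCond$ be generated by unit vectors $v_1,\dots,v_r$ localized in $[0,N]$. By the previous corollary, $M\simeq\bigoplus_j M_j$ with each $M_j$ simple. I will not work with the $M_j$ directly. Instead I look at the cyclic submodule $\overline{\Loc\, v_i}\simeq\GNS(\omega_i)$, where $\omega_i=\langle v_i\mid -\cdot v_i\rangle$. Localization means $\omega_i$ agrees with $\rho$ outside $[0,N]$. By the lemma characterizing states equal to $\rho$ on $\Loc_{[n,\infty)}$, this gives $\omega_i(m_k^\dagger m_k)=1$ for all $k> N$, so $\omega_i\in S^{\mathrm{bdy}}_{N+1}$.

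Through the convex isomorphism $S^{\mathrm{bdy}}_{N+1}\simeq S(A^{\mathrm{fp}}_{N+1})$, write $\operatorname{Res}(\omega_i)$ as a finite convex combination of pure states on the multi-matrix algebra $A^{\mathrm{fp}}_{N+1}$. Each pure state is supported on a block labelled by a simple $K\in\MQ$. By the classification theorem, the GNS module of the corresponding induced pure state is $\ReaBCond(K)$.

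The step I expect to be the main obstacle is the decomposition $\GNS(\omega_i)\hookrightarrow\bigoplus_{K}\ReaBCond(K)^{\oplus n_K}$. The subtlety is that a GNS representation of a convex combination need not be a direct sum of the GNS representations of its pure components. I would handle this at the level of $A^{\mathrm{fp}}_{N+1}$, where everything is finite dimensional: $\GNS(\operatorname{Res}\omega_i)$ is a submodule of a finite direct sum of the simple block modules. I would then transport this along $\operatorname{Ind}$, using that $\GNS(\operatorname{Ind}\psi)$ depends functorially and additively on the $A^{\mathrm{fp}}$-module $\GNS(\psi)$. Concretely, I would use the finite-piece description $\GNS(\rho_{K,n})\simeq\operatorname{im}(\mu_n^*)$ together with Theorem 2.? on GNS commuting with inductive limits, so that everything is reduced to finite pieces.

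Once this decomposition is in hand, every simple summand of $M$ is isomorphic to some $\ReaBCond(K)$. The final step is to observe that $\ReaBCond$ is additive. This holds because $\ReaBCond_n(K_1\oplus K_2)=\M(K_1\oplus K_2,l\otimes x^n)\simeq\ReaBCond_n(K_1)\oplus\ReaBCond_n(K_2)$ compatibly with the maps $j_{n,n+1}$, and the inductive limit then preserves the splitting. Hence $M\simeq\ReaBCond\bigl(\bigoplus_j K_j\bigr)$, which gives essential surjectivity.
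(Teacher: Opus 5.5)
Your proposal is correct and follows essentially the paper's route: the paper gets the corollary in one line from the proof that $\BCond$ is finite semisimple, combined with the classification of simple boundary conditions; in that proof, the state $\langle v_i\mid -\cdot v_i\rangle$ on $A_n^{\mathrm{fp}}$ is a finite convex combination of pure block states, so the submodule generated by $v_i$ is a finite direct sum of modules $\ReaBCond(K_{i,j})$. Your added care about the GNS representation of a convex combination and about additivity of $\ReaBCond$ fills points the paper leaves implicit, and the former can even be done directly on $\Loc$: write $\omega_i=\sum_j t_j\operatorname{Ind}(\psi_j)$; each $\operatorname{Ind}(\psi_j)$ is pure because $S^{\mathrm{bdy}}_{N+1}$ is a face of $S(\Loc)$, and $\GNS(\omega_i)$ embeds isometrically into $\bigoplus_j\GNS(\operatorname{Ind}\psi_j)$ via $a\,\xi_{\omega_i}\mapsto\bigoplus_j\sqrt{t_j}\,a\,\xi_j$.
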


\begin{remark}
    The method in this section could be applied to show that bulk topological charges in the Q-system model correspond to objects in $\QCQ$.
\end{remark}

\subsection[ReaBCond is an equivalence of categories]{$\ReaBCond$ is an equivalence of categories}\label{subsec: ReaBCond is an equivalence of categories}
In this section we would prove $\ReaBCond$ is also fully faithful, which would imply $\ReaBCond$ is an equivalence between $\MQ^{\op}$ and $\BCond$.
In this way, we give a satisfactory characterization of the category of boundary conditions.

First, since both $\BCond$ and $\MQ$ are semisimple, it suffices to consider simple right $Q$-modules $K_1, K_2 \in \MQ$ and prove that $\ReaBCond$ induces a bijection between $\MQ(K_1, K_2)$ and $\Mod(\mathsf{Loc})(\ReaBCond(K_2), \ReaBCond(K_1))$ (remember that $\ReaBCond$ is contravariant).

When $K_1 \simeq K_2$, we just need to prove $\Mod(\mathsf{Loc})(\ReaBCond(K_2), \ReaBCond(K_1)) \simeq \mathbb{C}$.
Then since $\ReaBCond$ sends isomorphisms to isomorphisms, it must induce a bijection between $\MQ(K_1, K_2) \simeq \mathbb{C}$ and $\Mod(\mathsf{Loc})(\ReaBCond(K_2), \ReaBCond(K_1)) \simeq \mathbb{C}$.

\begin{proposition}
$\Mod(\mathsf{Loc})(\ReaBCond(K_2), \ReaBCond(K_1)) \simeq \mathbb{C}$ when $K_1 \simeq K_2$.
\end{proposition}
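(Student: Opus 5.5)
We will show that $\ReaBCond(K)$ is an irreducible $\Loc$-module whenever $K$ is simple in $\MQ$. The statement then follows from Schur's lemma for irreducible representations of $C^*$-algebras: we may take $K_1=K_2=K$ after composing with the unitary $\ReaBCond$ of an isomorphism $K_1\to K_2$, and then every intertwiner of an irreducible module is a scalar. The plan is to identify $\ReaBCond(K)$ with the GNS representation of the pure state $\rho_K$ and to prove purity of $\rho_K$ using the convex isomorphism $S^{\mathrm{bdy}}_n\simeq S(A_n^{\mathrm{fp}})$.

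First, we identify $\ReaBCond(K)$ with $\GNS(\rho_K)$. By Corollary~\ref{cor:GNS-im-m-star}, $\GNS(\rho_{K,n})\simeq \operatorname{im}(\mu_n^*)\subset \M(K,l\otimes x^{n+1})=\ReaBCond_{n+1}(K)$. These identifications are compatible with the inclusions $j_{r,r+1}$, and $\operatorname{im} j_{r,r+1}\subset \operatorname{im}\mu_r^*$, so the inductive limits coincide. Here we use the theorem that the GNS construction commutes with inductive limits. We also need to check that the inner product $\operatorname{tr}(h^\dagger k)$ matches $\rho_K(a^*b)$ up to normalisation; for simple $K$ this holds because $h^\dagger k\in\MQ(K,K)=\mathbb{C}\,\mathrm{id}_K$. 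Under this identification, the vector $\mu^\dagger_{n+1}$ is cyclic and implements $\rho_K$.

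Next, we prove that $\rho_K$ is pure. The state $\rho_K$ lies in $S^{\mathrm{bdy}}_n$ for every $n\ge 1$. By the lemma giving the convex isomorphism $S_n^{\mathrm{bdy}}\simeq S(A_n^{\mathrm{fp}})$, $\rho_K$ is pure on $\Loc$ if and only if $\operatorname{Res}(\rho_K)$ is pure on $A_n^{\mathrm{fp}}=\MQ(l\otimes x^{n+1},l\otimes x^{n+1})$. Any decomposition $\rho_K=t\phi_1+(1-t)\phi_2$ in $S(\Loc)$ has $\phi_i\in S^{\mathrm{bdy}}_n$. This is because $\phi_i(m_i^\dagger m_i)\le 1$ and the convex combination attains $1$, so each $\phi_i$ must attain $1$ as well. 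Hence purity on the face $S^{\mathrm{bdy}}_n$ is genuine purity on $S(\Loc)$.

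Now $\operatorname{Res}(\rho_K)$ is the vector state $b\mapsto \operatorname{tr}\bigl(\mu_{n+1}\, b\, \mu_{n+1}^\dagger\bigr)$ (suitably normalised) on the multi-matrix algebra $\MQ(l\otimes x^{n+1},l\otimes x^{n+1})$. Since $\mu_{n+1}^\dagger\colon K\to l\otimes x^{n+1}$ is a $Q$-module map from a simple module, it spans a rank-one piece of the $K$-isotypic block. More precisely, $b\mapsto \mu_{n+1}b\mu_{n+1}^\dagger\in\End_{\MQ}(K)=\mathbb{C}$ is the vector functional of a single vector in the multiplicity space $\MQ(K,l\otimes x^{n+1})$ of that block, and a vector state on a matrix block is pure.

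The main obstacle is making this last step precise. We need to write $A_n^{\mathrm{fp}}\cong\bigoplus_{K'}\End\bigl(\MQ(K',l\otimes x^{n+1})\bigr)$ and check that the compression by $\mu_{n+1}^\dagger$ is exactly the vector functional associated to $\mu_{n+1}^\dagger\in\MQ(K,l\otimes x^{n+1})$. If this becomes awkward, we will instead prove irreducibility of $\ReaBCond(K)$ directly. Since $\ReaBCond(K)=\varinjlim \M(K,l\otimes x^n)$, each finite piece $\M(K,l\otimes x^n)$ is a module over $\Loc_n=\M(l\otimes x^n,l\otimes x^n)$. For a simple $c$ in $\M$, the isotypic part of $\M(K,l\otimes x^n)$ decomposes as $\M(K,c)\otimes\M(c,l\otimes x^n)$. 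We would then use the screening map and Lemma~\ref{lemma:boundary_fixed_point_algebra} to show that any $\Loc$-intertwiner commuting with all $\Loc_n$ must preserve the images $j(\M(K,l\otimes x^r))$ and act on the cyclic vector $\mu_{n+1}^\dagger$ by an element of $\End_{\MQ}(K)=\mathbb{C}$. Since that vector is cyclic, the intertwiner is a scalar.
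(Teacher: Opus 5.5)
Your proposal is correct and follows the paper's route. Like the paper, you identify $\ReaBCond(K)$ with $\GNS(\rho_K)$, use purity of $\rho_K$ to get an irreducible module, apply Schur's lemma, and transport along $\ReaBCond$ of an isomorphism $K_1\to K_2$. The paper only asserts that $\rho_K$ is pure, and your argument actually proves it: any convex decomposition of $\rho_K$ stays in the face $S^{\mathrm{bdy}}_n$, and $\operatorname{Res}(\rho_K)$ is the vector state of the unit vector $\mu_{n+1}^\dagger$ in the $K$-block of $A^{\mathrm{fp}}_n$. A shorter justification is also available: the ground states of $\Phi_{K,Q}$ form a face of $S(\mathsf{Loc})$, so uniqueness of the ground state already makes $\rho_K$ extreme.
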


\begin{proof}
Recall that given a simple right $Q$-module $K \in \MQ$, $\ReaBCond(K) \simeq \operatorname{GNS} (\rho_K)$, where $\rho_K$ is a pure state by Definition~\ref{Def. The realization functor for boundary conditions}.
The GNS construction of a pure state is a simple module \cite[II.6.4.8]{Blackadar_OperatorAlg_2017}, and the morphism space between two simple objects must be zero or $\mathbb{C}$ by Schur's lemma.
Since $K_1 \simeq K_2$, there is an isomorphism between them.
As $\ReaBCond$ is a functor, it sends isomorphisms to isomorphisms.
thus there is an isomorphism between $\ReaBCond(K_2)$ and $\ReaBCond(K_1)$.
It follows that $\Mod(\mathsf{Loc})(\ReaBCond(K_2), \ReaBCond(K_1)) \simeq \mathbb{C}$.
\end{proof}

When $K_1$ is not isomorphic to $K_2$, we need to prove $\Mod(\mathsf{Loc})(\ReaBCond(K_2), \ReaBCond(K_1)) = 0$.
This could be proved by (again) reducing the problem to fixed-point algebras.

\begin{proposition}
When $K_1 \not \simeq K_2$, $\Mod(\mathsf{Loc})(\ReaBCond(K_2), \ReaBCond(K_1)) = 0$.
\end{proposition}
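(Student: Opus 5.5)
The plan is to argue by contradiction and push everything down to the fixed-point algebra \(A_n^{\mathrm{fp}}\), where the claim becomes a statement about blocks of a multi-matrix algebra. Both \(\ReaBCond(K_1)\simeq\GNS(\rho_{K_1})\) and \(\ReaBCond(K_2)\simeq\GNS(\rho_{K_2})\) are simple modules, being GNS representations of pure states. By Schur's lemma, a nonzero intertwiner \(T\colon \ReaBCond(K_2)\to\ReaBCond(K_1)\) must therefore be an isomorphism, and it may be taken unitary after polar decomposition. So it suffices to show that \(\GNS(\rho_{K_1})\not\simeq\GNS(\rho_{K_2})\) as \(\Loc\)-modules.

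Suppose such a unitary \(T\) exists. Let \(\Omega_2\) be the cyclic vector of \(\GNS(\rho_{K_2})\) and set \(w:=T\Omega_2\). Then the vector state \(\sigma:=\langle w\mid -\cdot w\rangle\) on \(\GNS(\rho_{K_1})\) equals \(\rho_{K_2}\).
\begin{itemize}
\item Since \(\rho_{K_2}\) is a ground state of \(\Phi_{K_2,Q}\), it satisfies \(\rho_{K_2}(m_i^\dagger m_i)=1\) for all bulk bonds \(i\geq 1\). Hence \(\rho_{K_2}\in S_n^{\mathrm{bdy}}\) for \(n\geq1\), and likewise \(\rho_{K_1}\in S_n^{\mathrm{bdy}}\).
\item By the classification theorem for simple boundary conditions, \(\operatorname{Res}(\rho_{K_j})\) is a pure state of \(A_n^{\mathrm{fp}}=\MQ(l\otimes x^{n+1},l\otimes x^{n+1})\) supported on the block labelled by \(K_j\).
\end{itemize}
The key step is to show that every vector state of \(\GNS(\rho_{K_1})\) lying in \(S_n^{\mathrm{bdy}}\) restricts to \(A_n^{\mathrm{fp}}\) with support only in the block \(K_1\). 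Granting this, \(\operatorname{Res}(\rho_{K_2})\) is supported in block \(K_1\), contradicting \(K_1\not\simeq K_2\) since the blocks are distinct.

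To prove the key step, I would use the explicit finite pieces \(\ReaBCond_r(K_1)=\M(K_1,l\otimes x^r)\). The central minimal projection \(z_{K}\in A_n^{\mathrm{fp}}\) of block \(K\) acts on \(h\in\M(K_1,l\otimes x^{n+1})\), suitably embedded via the maps \(j\), by postcomposition. Under the right \(Q\)-module structure of the rightmost site, after the \(j\)-embedding \(h\mapsto (h\otimes p_Q^\dagger)\mu^\dagger\), the image factors as a \(Q\)-module map from \(K_1\) (or from \(K_1\otimes Q\)) into \(l\otimes x^{n+1}\). The embedding is exactly what makes the image a module map: this uses the Frobenius identity \eqref{eq.BdyFrob} and is the same computation that showed \(\mathrm{scr}\) lands in module maps. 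Since \(K_1\) is simple, postcomposing with \(z_K\) kills these vectors for \(K\neq K_1\) and fixes them for \(K=K_1\). So \(z_{K_1}\) acts as the identity on the dense subspace \(\bigcup_r \ReaBCond_r(K_1)\), with \(r\geq n+1\) reached through the inclusions. Here \(z_{K_1}\in\Loc_{[0,n+1]}\) is regarded in \(\Loc\), and the elements of \(\ReaBCond_r\) for larger \(r\) are written through screening. By continuity \(z_{K_1}\) acts as the identity on all of \(\GNS(\rho_{K_1})\), hence \(\sigma(z_{K_1})=1\) for every vector state \(\sigma\).

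I expect the main obstacle to be this last computation. For \(r>n+1\) one must check that \(z_{K_1}\), which lives in \(\Loc_{[0,n+1]}\) rather than in the far region, still acts trivially. The idea is that any vector in \(\ReaBCond_r(K_1)\) arising from the GNS inclusions has the form \((f\otimes 1)\mu^\dagger_{\cdots}\), so its sites \(n+1,\dots,r\) are joined by \(\mu^\dagger\) and \(m^\dagger\) chains. Inserting \(m^\dagger m\) projections on bonds \(\geq n+1\) does nothing on such vectors, which reduces the claim to the case \(r=n+1\).

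An alternative route is to use that \(\sigma=\rho_{K_2}\) is a ground state for \(\Phi_{K_2,Q}\), hence \(\sigma(\mu_{K_2}^\dagger\mu_{K_2})=1\); this is less convenient because \(l\) need not equal \(K_2\). I would therefore commit to the central-projection argument, possibly first replacing \(n\) by a larger \(n\) so that \(l\otimes x^{n+1}\) contains both \(K_1\) and \(K_2\) as \(Q\)-module summands.
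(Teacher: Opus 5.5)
Your reduction is sound: Schur's lemma, the fact that $w:=T\Omega_2$ has vector state $\rho_{K_2}$, and $\rho_{K_2}(z_{K_2})=1$ from the classification theorem. The gap is in the key step, where your argument establishes a false statement. You conclude that $z_{K_1}$ acts as the identity on all of $\GNS(\rho_{K_1})$. Since $\Loc$ is simple, $\pi_{\rho_{K_1}}$ is faithful, so this would give $z_{K_1}=1$ in $\Loc_{[0,n+1]}$. That would make $l\otimes x^{n+1}$ $K_1$-isotypic, contradicting your own choice of $n$ so that it also contains $K_2$. Concretely, in the Ising SSB chain with $\M=\ve$ one has $z_{K_1}=(1+Z_{n+1})/2$, and $z_{K_1}X_{n+1}\Omega_1=0$.

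The faulty step is the claim that vectors of $\ReaBCond_r(K_1)$ have sites $n+1,\dots,r$ joined by $m^\dagger$-chains. The union $\bigcup_r\ReaBCond_r(K_1)$ contains arbitrary $f\in\M(K_1,l\otimes x^r)$, and $j$ attaches only the newest site through $\mu^\dagger$. So $j(f)$ is a $Q$-module map for the structure on site $r+1$, not on site $n+1$, which is where the module structure defining $z_K$ lives. Only the images of $\ReaBCond_{n+1}(K_1)$ have the chain form, and these span a finite-dimensional subspace.

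The missing idea is that bulk-projector invariance holds for the specific vector $w$, not for all vectors. Since $T$ intertwines and $\Omega_2$ is fixed by every $m_i^\dagger m_i$, so is $w$. Approximate $w$ by finite-level vectors $h_r$; being in the image of $j$, these are $Q$-module maps for the last site. Replace them by $P_rh_r$, where $P_r$ is the product of the bulk projectors on the bonds among sites $n+1,\dots,r$. These still converge to $w$, because $\|P_rh_r-w\|=\|P_r(h_r-w)\|\le\|h_r-w\|$.

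Now $P_rh_r=(\mathrm{id}\otimes m^\dagger_{[n+1,r]})h'_r$, where $h'_r=(\mathrm{id}\otimes m_{[n+1,r]})h_r$ is a right $Q$-module map $K_1\to l\otimes x^{n+1}$. By the Frobenius property, $z_K$ commutes past $m^\dagger_{[n+1,r]}$, so $z_KP_rh_r=\delta_{K,K_1}P_rh_r$. Hence $z_{K_2}w=0$. On the other hand $z_{K_2}w=Tz_{K_2}\Omega_2=w$, so $w=0$ and $T=0$.

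This repaired argument is essentially the paper's screening argument. The paper also inserts the boundary projector $\mu_{K_2}^\dagger\mu_{K_2}$, which is a genuine element of $\Loc_{[0,1]}$ once the projection $l\to K_2$ is included; so your objection to that route is not a real obstruction. With it, the screened vector lands directly in $\MQ(K_1,K_2)=0$, with no need for Schur's lemma or the block structure of $A_n^{\mathrm{fp}}$.
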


\begin{proof}
As $\ReaBCond(K) \simeq \operatorname{GNS} (\rho_K)$, we will prove $\Mod(\mathsf{Loc})(\operatorname{GNS} (\rho_{K_2}), \operatorname{GNS} (\rho_{K_1})) = 0$. 
Note that there is a canonical cyclic vector $\xi \in \operatorname{GNS} (\rho_{K_2})$ corresponding to the ground state. 
Our strategy is to prove that given any module morphism $\mu: \operatorname{GNS} (\rho_{K_2}) \to \operatorname{GNS} (\rho_{K_1})$, $\mu(\xi)$ must be a zero vector, by a variant of the screening map.

The cyclic vector satisfies $m_i^\dag m_i \cdot \xi = \xi$ for all $i>0$, and $({m^{K_2}_0})^\dag m^{K_2}_0 \cdot \xi = \xi$.
Therefore, given any module morphism $\mu: \operatorname{GNS} (\rho_{K_2}) \to \operatorname{GNS} (\rho_{K_1})$, we must have $m_i^\dag m_i \cdot \mu(\xi) = \mu(\xi)$ for all $i>0$, and $({m^{K_2}_0})^\dag m^{K_2}_0 \cdot \mu(\xi) = \mu(\xi)$. 
As shown in Corollary~\ref{cor:GNS-im-m-star}, vectors in $\operatorname{GNS} (\rho_{K_1})$ could be identified as morphisms $K \to l \otimes x^{n+1}$ or their infinite sums (with converging norm).

Case 1. $\mu(\xi)$ is some morphism $f: K \to l \otimes x^{n+1}$:
The condition $m_i^\dag m_i \cdot \mu(\xi) = \mu(\xi)$ for all $i>0$ is translated into $m_i^\dag m_i f= f$ for all $i>0$.
Therefore, we have $(\prod_{i=1}^{n} m_i^\dag m_i) ({m^{K_2}_0})^\dag m^{K_2}_0 f = f = (\prod_{i=1}^{n} m_i^\dag)({m^{K_2}_0})^\dag (\prod_{i=1}^{n} m_i) m^{K_2}_0 f$. 
However, $(\prod_{i=1}^{n} m_i) m^{K_2}_0 f \in \MQ(K_1, K_2)$, which must be zero. This implies $f$ is a zero morphism, thus $\mu(\xi)$ must be a zero vector.

\begin{equation}
    \begin{split}
\input{diagrams/module-morphism-2555.tex}
& \quad \in \MQ(K_1,K_2) \quad = \quad 0
\end{split}
\end{equation}

Case 2. $\mu(\xi) = \sum_{k=0}^{\infty} f_k$ is an convergent infinite sum of morphisms $f_k: K \to l \otimes x^{k+1}$: As shown in case 1, we have $(\prod_{i=1}^{n} m_i) m^{K_2}_0 f_i \in \MQ(K_1, K_2) = 0$, then for any $n > 0$,
\[
    \mu(\xi) = (\prod_{i=1}^{n} m_i^\dagger m_i) ({m^{K_2}_0})^\dagger m^{K_2}_0 \mu(\xi) = (\prod_{i=1}^{n} m_i^\dagger m_i) ({m^{K_2}_0})^\dagger m^{K_2}_0 \sum_{k=n}^{\infty} f_k.
\]
Therefore, 
\[
    \|\mu(\xi)\| = \|(\prod_{i=1}^{n} m_i^\dagger m_i) ({m^{K_2}_0})^\dagger m^{K_2}_0 \sum_{k=n}^{\infty} f_k\| \leq \|\sum_{k=n}^{\infty} f_k\|
\]
However, as the infinite sum is convergent, $\lim_{n \to \infty} \|\sum_{k=n}^{\infty} f_k\| = 0$, which implies $\|\mu(\xi)\| = 0$, thus $\mu(\xi) = 0$.
This finishes the proof.
\end{proof}

These two propositions imply that $\ReaBCond$ is fully faithful. Thus we arrive at the theorem:

\LinkedRestatableTarget{target:bcondclassification}\bcondclassification

\begin{remark}\label{rmk:TopDef}
  Using the same technique, one can easily show that the category of bulk topological defects in the Q-system model \emph{without boundary}, denoted by $\mathrm{TopDef}$, is equivalent to $(\QCQ)^{\op}$ \emph{as a linear category}. It seems natural to expect that this equivalence can be promoted to an equivalence \emph{as monoidal categories}. However, at present, we do not yet know how to define the fusion of defects in the operator-algebraic language, nor how to formulate the corresponding monoidal structure on the microscopic side.

Note that the direction of fusion of topological defects is a matter of choice. In other words, one may freely choose the category of bulk topological defects to be monoidally equivalent to either $(\QCQ)^{\op}$ or $(\QCQ)^{\rev,\op}$; the first choice means that defects fuse from left to right, and the second choice means defects fuse from right to left. However, categorical computations in Section~\ref{section: bulk-boundary correspondence} indicate that we should choose $(\QCQ)^{\rev,\op}$ for consistency with conventions of Ref.~\cite{Kong_Yuan_2024_EnrichedMonoidalCat}. 
Also note that our convention differs from those in the literature~\cite{Kong_2022_1DEnrichedCat,Xu_2024_1DPhaseAbelianSym,LanZhou_2024_QuantumCurrent}, where the category of bulk topological defects is taken to be $(\QCQ)^\rev$; this discrepancy is due to different choices of the direction of time. We will return to this orientation issue later in discussions of the bulk SymTFT and the bulk-boundary correspondence.
\end{remark}

\subsection{Examples}
\subsubsection{The Ising chain}

In this section, we study gapped boundaries of gapped phases on the Ising chain.
We will present all categorical data and show how these data can be used to construct concrete Hamiltonians.

As shown in our construction, we need to specify both the gapped phase and the boundary local quantum charge category.
Recall that there are two gapped phases corresponding to the two (Morita classes of) separable algebras \(\one\) and \(\one \oplus e\), as presented in Example~\ref{eg. Z2 gapped phases}.
So our first job would be to find out possible boundary local quantum charge categories and construct fusion spin chains with boundary.

\paragraph{Boundary local quantum charge categories and fusion spin chains with boundary}
Boundary local quantum charge categories are classified by module categories of the bulk charge category.
It is known there are exactly two indecomposable module categories of \(\mathrm{Rep}(\Z_2)\).
The first one is \(\M_1 = \mathrm{Rep}(\Z_2)\) itself (as a linear category), with the module action given by the tensor product in \(\mathrm{Rep}(\Z_2)\); it would be useful later that \(\M_1\) is equivalent to \({}_{\one}\mathrm{Rep}(\Z_2)\), the category of modules of the separable algebra \(\one\).
The second one is \(\M_2 = \ve\) (as a linear category), with the right module action given as follows: \(a \otimes x := a \otimes U(x)\), where \(U: \mathrm{Rep}(\Z_2) \to \ve\) is the forgetful functor, and the tensor product on the right-hand side is the tensor product in \(\ve\).
\(\M_2\) is equivalent to \({}_{\one \oplus e}\mathrm{Rep}(\Z_2)\), the category of modules of the separable algebra \(\one \oplus e\).

Now there are two gapped phases and two boundary local quantum charge categories, so four combinations in total.
We would analyze them one by one.

\paragraph{Trivial phase with symmetric boundary condition}
In this case, the Q-system used to construct the bulk Hamiltonian is \(Q = \one\), and the boundary local quantum charge category is \(\M = \mathrm{Rep}(\Z_2)\).
We take the boundary degree of freedom to be also  \(l=\one \oplus e\), which is a single qubit.

The right \(Q\)-module category is \(\MQ = \mathrm{Rep}(\Z_2)\), so the category of boundary conditions is equivalent to \(\MQ^\op\). Since \(\MQ\) and \(\MQ^\op\) have the same objects, there are two simple boundary conditions given by two simple special Frobenius \(Q\)-modules in \(\M\).

The first one is \(K_1 = \one\), the projection from $l$ to $K_1$ is $p_0=(1+X_0)/2$, with the action morphism \(\mu: K_1 \otimes Q \to K_1\) given by the identity morphism \(1: \one \otimes \one \to \one \).
Then the boundary interaction \(H^{\mathrm{bdy}} = (p_0^\dagger \otimes  p_1^\dagger)(\mu^\dagger \mu)(p_0 \otimes  p_1)\) could be written as the Pauli operator $(1+X_0+X_1+X_0X_1)/4$,
which shares the same ground state as $(X_0+X_1)/2$. Such boundary term projects the boundary to the trivial \(\Z_2\) charge.

The second one is \(K_2 = e\), the projection from $l$ to $K_2$ is $p_0 = (1-X_0)/2$,  the action morphism \(\mu: K_2 \otimes Q \to K_2\) is given by the identity morphism \(1: e \otimes \one \to e \).
Then the boundary interaction \(H^{\mathrm{bdy}} = (p_0^\dagger \otimes  p_1^\dagger)(\mu^\dagger \mu)(p_0 \otimes  p_1)\) is $(1-X_0+X_1-X_0X_1)/4$, which is equivalent to $(-X_0+X_1)/2$, and it projects the boundary to the nontrivial \(\Z_2\) charge.

\paragraph{Trivial phase with symmetry-breaking boundary condition}
In this case, the Q-system used to construct the bulk Hamiltonian is \(Q = \one\), and the boundary local quantum charge category is \(\mathcal{M} =  \ve\).
We take the boundary degree of freedom  \(l=\mathbb{C}^2\in\ve\).

Although the boundary Hilbert space has the same dimension with the previous case, there is one crucial difference: No symmetry is imposed at the boundary.
That is, here we allow both the Pauli \(X\) and Pauli \(Z\) operator at the boundary, while the Pauli \(Z\) operator is forbidden in the previous case since it's not symmetric.

The right \(Q\)-module category is \(\MQ = \ve\), so the category of boundary conditions is equivalent to \(\MQ^\op\).
There is only one simple boundary condition given by the Frobenius \(Q\)-module \(K = \mathbb{C}\) in \(\M\), with action morphism \(\mu: K \otimes Q \to K\) given by the identity morphism \(1: \one \otimes \mathbb{C} \to \mathbb{C} \).  

There are two projections from $l$ to $K$. For $p_0 = (1+X_0)/2$, the boundary term \(H^{\mathrm{bdy}}_1 = (p_0^\dagger \otimes  p_1^\dagger)(\mu^\dagger \mu)(p_0 \otimes  p_1)\sim(X_0+X_1)/2\). For $p_0 = (1-X_0)/2$,   \(H^{\mathrm{bdy}}_2 = (p_0^\dagger \otimes  p_1^\dagger)(\mu^\dagger \mu)(p_0 \otimes  p_1)\sim(-X_0+X_1)/2\). Although these two boundary terms look different, we can use the local unitary $Z_0$ to transform $H^\bdy_1$ to $H^\bdy_2$ and vice versa, so they are in the same phase.  The boundary local quantum charge category $\cM=\ve$ really implies that there is no nontrivial $\Z_2$ charge on the boundary, as we could apply $Z_0$ to create a " \(\Z_2\) charge" locally.

\paragraph{SSB phase with symmetric boundary condition}
In this case, the Q-system used to construct the bulk Hamiltonian is \(Q = \one \oplus e\), and the boundary local quantum charge category is \(\M = \mathrm{Rep}(\Z_2)\), we take the boundary degree of freedom $l=\one\oplus e$.

The right \(Q\)-module category is \(\MQ = \Rep(\Z_2)_{\one\oplus e}\cong \ve\), so the category of boundary conditions is equivalent to \(\MQ^\op\).
There is only one simple boundary condition given by the Frobenius \(Q\)-module \(K = \one \oplus e\) in \(\M\), and the action morphism \(\mu: K \otimes Q \to K\) coincides with the multiplication morphism of \(Q\), which is given by the following matrix:
\begin{equation}
    \begin{array}{ c|c|c|c|c }
    & \one \otimes \one & \one \otimes e & e \otimes \one & e\otimes e\\
    \hline
    \one & 1/\sqrt{2} & 0 & 0 & 1/\sqrt{2}\\
    \hline
    e & 0 & 1/\sqrt{2} & 1/\sqrt{2} & 0
    \end{array}
\end{equation}
Then the boundary interaction \(H^{\mathrm{bdy}} = (p_0^\dagger \otimes  p_1^\dagger)(\mu^\dagger \mu)(p_0 \otimes  p_1)\) could be written as the Pauli operator \((1+Z_0 Z_1 ) / 2\), which is essentially the Ising coupling $Z_0Z_1$.

\paragraph{SSB phase with symmetry-breaking boundary condition}
\label{paragraph: ssb phase with symmetry-breaking boundary condition}
In this case, the Q-system used to construct the bulk Hamiltonian is \(Q = \one \oplus e\), and the boundary local quantum charge category is \(\M = \ve\), and we take the boundary degree of freedom $l=\mathbb{C}^2$.

The right \(Q\)-module category is \(\MQ = \mathrm{Rep}(\Z_2)\), so the category of boundary conditions is equivalent to \(\MQ^\op\).
There are two simple boundary conditions given by two simple special Frobenius \(Q\)-modules in \(\M\).

The first one is \(K_1 = \mathbb{C}\), with the action morphism \(\mu: K_1 \otimes Q \to K_1\) given by the following matrix:
\begin{equation}
    \begin{array}{ c|c|c|}
    & \mathbb{C} \otimes \one & \mathbb{C} \otimes e\\ 
    \hline
    \mathbb{C} & 1/\sqrt{2} & 1/\sqrt{2}\\
    \hline
    \end{array}
\end{equation}
Consider $K_1 = \mathbb{C}=\mathrm{span} \{ |0\rangle \}$ and take the projection $p_0 = (1+Z_0)/2$,
 the boundary term \(H^{\mathrm{bdy}} = (p_0^\dagger \otimes  p_1^\dagger)(\mu^\dagger \mu)(p_0 \otimes  p_1) = (1+Z_0)(1+Z_1)/4\).  
The Hamiltonian is \(H = -(1+Z_0)(1+Z_1) / 4 - \sum_{i \geq 1} (Z_i Z_{i+1} +1)/2 \), and the ground state is \(\prod_{i \geq 0} | 0 \rangle\).

The underlying vector space of the second one is also \(K_2 = \mathbb{C}\), but the action morphism \(\mu: K_2 \otimes Q \to K_2\) is different from the first one:
\begin{equation}
    \begin{array}{ c|c|c|}
    & \mathbb{C} \otimes \one & \mathbb{C} \otimes e\\
    \hline
    \mathbb{C} & 1/\sqrt{2} & -1/\sqrt{2}\\
    \hline
    \end{array}
\end{equation}
Take $p'_0 = (1-Z_0)/2$, then the boundary interaction \(H^{\mathrm{bdy}} = ((p'_0)^\dagger \otimes  p_1^\dagger)(\mu^\dagger \mu)(p'_0 \otimes  p_1)\) is
\((1-Z_0)(1-Z_1)/4\),
the Hamiltonian is \(H = -(1-Z_0)(1-Z_1) / 4 - \sum_{i \geq 1} (Z_i Z_{i+1} +1)/2 \), and the ground state is \(\prod_{i \geq 0} | 1 \rangle\).

Note that the \(Z_1\) term is not allowed when we consider the symmetric boundary condition, as it doesn't commute with the symmetry action.
However, we allow it when considering the symmetry-breaking boundary condition, as non-symmetric operators are now allowed.
This explains the existence of two distinct boundary conditions.

\begin{remark}
\label{remark: support of non-symmetric operators}
The support of non-symmetric operators must contain the boundary, as operators supported in the bulk are still required to be symmetric.
For example, the support of the operator \(Z_i\) is not site \(i\), but the interval \([0,i]\).
As a corollary, \(\prod_{i \geq 0} | 0 \rangle, \prod_{i \geq 0} | 1 \rangle\) and the GHZ state \(\prod_{i \geq 0} | 0 \rangle + \prod_{i \geq 0} | 1 \rangle\) are all indistinguishable in the bulk, since only measuring symmetric operators is allowed.
\end{remark}

The four cases are summarized in the following table:
\begin{table}[h]
\centering
\begin{tabular}{|c|c|c|c|}
\hline
Phase & Boundary local quantum charge category & Boundary conditions & Hamiltonian term \\
\hline

Trivial & \(\mathrm{Rep}(\Z_2)\) & \(K_1 = \one, K_2 = e\) & \((\pm X_0 + X_1) / 2 \) \\
\hline

Trivial & \(\ve\) & \(K = \mathbb{C}\) & \((X_0 + X_1) / 2 \) \\
\hline

SSB & \(\mathrm{Rep}(\Z_2)\) & \(K = \one\oplus e\) & \(Z_0 Z_1 \) \\
\hline

SSB & \(\ve\) & \(K_1 = \mathbb{C}, K_2 = \mathbb{C}\) & \(\pm (Z_0 + Z_1) / 2 + 1\) \\

\hline

\end{tabular}
\caption{Summary of gapped phases and boundaries on the Ising chain}
\label{table: summary of gapped phases and boundaries}
\end{table}

\subsubsection{ $G$-SPT phases for finite group $G$ and boundary modes}
We consider gapped boundaries of \(1+1\)D bosonic systems with finite symmetry group \(G\). The corresponding charge category is \(\mathcal C=\Rep(G)\). It is known that Morita classes of separable algebras in \(\Rep(G)\) are classified by triples \((G,H,\psi)\), where \(H\) is a subgroup of \(G\) and \(\psi\in \mathrm H^2(H,U(1))\), up to conjugation \cite{Ost03}. This agrees with the physical classification of anomaly-free gapped phases in \(1+1\)D with \(G\)-symmetry \cite{Chen_Gu_Wen_2011_SPTMPS,Cirac_2011_1DPhaseMPS}. Physically, \(H\) is the unbroken symmetry subgroup, and \(\psi\) labels an SPT phase protected by \(H\). In particular, \(G\)-SPT phases correspond to \(H=G\), and are classified by \(\mathrm H^2(G,U(1))\).

Following the \(Q\)-system construction of~\cite{YWL_DefectsIn2+1TO}, we systematically construct the \(Q\)-systems corresponding to \(G\)-SPT phases as follows. Choose a normalized cocycle representative \(\psi\in Z^2(G,U(1))\). Let \(\Rep^\psi(G)\) denote the category of \(\psi\)-projective representations of \(G\). For \((M,\tau)\in \Rep^\psi(G)\), we have \(\tau(g)\tau(h)=\psi(g,h)\tau(gh)\). The dual vector space \(M^*=\Hom(M,\mathbb C)\) naturally carries a \(\psi^{-1}\)-projective representation \(\tau^*\), defined by
\[
(\tau^*(g)f)(m)=\psi(g,g^{-1})^{-1} f\bigl(\tau(g^{-1})m\bigr),
\]
for \(g\in G\), \(f\in M^*\), and \(m\in M\). A direct computation shows that \(\tau^*(g)\tau^*(h)=\psi(g,h)^{-1}\tau^*(gh)\). Therefore, the projective twists cancel on \(M\otimes M^*\), and \(\tau\otimes\tau^*\) defines an ordinary \(G\)-representation. Thus \(M\otimes M^*\) is an object of \(\Rep(G)\). Moreover, \(Q=M\otimes M^*\cong \End(M)\) carries a natural \(Q\)-system structure in \(\Rep(G)\), defined as follows: let \(d=\dim M\), and the multiplication and unit are
\[
\mu((m\otimes f)\otimes(n\otimes \ell))
=
\frac{1}{\sqrt d}f(n)m\otimes \ell,
\qquad
\eta(1)=\sqrt d\sum_i e_i\otimes e^i,
\]
where \(m,n\in M\), \(f,\ell\in M^*\), and \(\{e_i\}\) is an orthonormal basis of \(M\) with dual basis \(\{e^i\}\). With respect to the Hilbert--Schmidt dagger on \(Q\cong \End(M)\), the comultiplication is \(\Delta=\mu^\dagger\), explicitly
\[
\Delta(m\otimes f)
=
\frac{1}{\sqrt d}\sum_i (m\otimes e^i)\otimes(e_i\otimes f).
\]
Then \(\mu\circ\Delta=\id_Q\). Together with associativity, unitality, and the Frobenius property inherited from the matrix algebra structure, this makes \(Q\) a special dagger Frobenius algebra, equivalently a \(Q\)-system, in \(\Rep(G)\). Its Morita class corresponds to the \(G\)-SPT phase labeled by \([\psi]\in \mathrm H^2(G,U(1))\).

Note that for any nonzero \((M,\tau)\in \Rep^\psi(G)\), the algebra \(Q=M\otimes M^*\) represents the same Morita class; in particular, this class depends only on \([\psi]\in \mathrm H^2(G,U(1))\).  To make the $Q$-system  $Q=M\ot M^*$ have smaller dimension, we can take $M$ to be a simple object in $\Rep^\psi(G)$. 

\begin{figure}
    \centering
\input{diagrams/edge-mode-2810.tex}
    \caption{Each oval denotes a lattice site, the red dashed lines denote the interactions. $N\in \Rep^\psi(G)$ on the boundary site $N\ot M^*$ is a dangling edge mode.}
\label{fig:SymmetricEdgeModeForGSPT}
\end{figure}

\paragraph{Symmetric boundary condition.}
We now consider the case that the full symmetry is preserved on the boundary, which has the boundary local quantum charge category \(\mathcal M=\Rep(G)\). For the \(Q\)-system \(Q=M\otimes M^*\), the right \(Q\)-module category is \(\mathcal M_Q=\Rep(G)_{M\ot M^*}\), so the category of boundary conditions is equivalent to \(\mathcal M_Q^\op\). We show that symmetric boundary conditions are classified by projective representations of \(G\) with the 2-cocycle $\psi$.

For any object \(N\in \Rep^\psi(G)\), \(N\otimes M^*\) is an object of \(\Rep(G)\), since the projective twists cancel. It has a canonical right \(Q=M\otimes M^*\)-module structure given by evaluation:
\[
    \mu_N\bigl((n\otimes f)\otimes(m\otimes \ell)\bigr)
    =
    \frac{1}{\sqrt{d}} f(m)\, n\otimes \ell ,
\]
where the prefactor comes from the normalization of the \(Q\)-system. This defines a functor
\[
    -\otimes M^*:\Rep^\psi(G)\longrightarrow \Rep(G)_{M\otimes M^*},
\]
which is an equivalence of \(\Rep(G)\)-module categories~\cite{EGNO_2015}. Hence every simple object in \(\Rep(G)_{M\otimes M^*}\) is isomorphic to \(N\otimes M^*\) for some simple object \(N\in\Rep^\psi(G)\).

Physically, if we take \(K=N\otimes M^*\) as a simple boundary condition, the boundary Hamiltonian term associated with the \(Q\)-module action is essentially the evaluation pairing between the right half \(M^*\) of the boundary object \(N\otimes M^*\) and the left half of the bulk object \(M \otimes M^*\). Hence it acts trivially on \(N\), leaving \(N\) as a dangling projective edge mode (see Figure~\ref{fig:SymmetricEdgeModeForGSPT}).  For a nontrivial SPT class \([\psi]\), any simple \(\psi\)-projective representation \(N\) has dimension greater than one, giving rise to a symmetry-protected boundary degeneracy. 

\paragraph{Symmetry breaking boundary condition.}
When the symmetry is completely broken at the boundary, the boundary local quantum charge category is \(\cM=\ve\). Then the right \(Q\)-module category is
   $ \cM_Q=\ve_{M\otimes M^*}.$   
Thus the category of boundary conditions is equivalent to \(\cM_Q^\op\).
Here the \(M\otimes M^*\)-action on objects of \(\MQ\) is obtained as follows: Under the forgetful functor $U:\Rep(G)\to \ve,$
the algebra \(M\otimes M^*\) becomes the matrix algebra $U(M\otimes M^*)\cong \End(M)$; 
objects in \(\MQ\) are ordinary modules of the algebra $U(M\otimes M^*)$.

By the representation theory of matrix algebras,
\[
    \ve_{M\otimes M^*}
    :=
    \ve_{U(M\otimes M^*)}
    \simeq
    \ve_{\End(M)}
    \simeq
    \ve.
\]
Therefore there is a unique simple boundary condition up to isomorphism.

\subsubsection{$\Rep(D_8)$ SPT phases and boundary modes}
In this section we study the boundary modes of the $\Rep(D_8)$ SPT phases, where $D_8$ is the dihedral group of order 8, $D_8:=\langle s,r| s^2=r^4=e, srs=r^{-1}\rangle$.

The bulk charge category associated with the $\Rep(D_8)$ symmetry is given by the canonical Morita dual $\C=\ve_{D_8}$. According to Refs.~\cite{Seifnashri_2024, MYLG25}, there exist three distinct $\Rep(D_8)$ SPT phases, which correspond to three (Morita classes of) separable algebras in $\ve_{D_8}$. These algebras are given by $Q_0 = \mathbb{C}\langle e\rangle$, $Q_1 = \mathbb{C}^\omega\langle r^2, s\rangle$, and $Q_2 = \mathbb{C}^\omega\langle r^2, sr\rangle$. 
Here, $Q_0 = \mathbb{C}\langle e\rangle$ is the trivial algebra. The subsets $\langle r^2, s\rangle$ and $\langle r^2, sr\rangle$ are two distinct $\Z_2 \times \Z_2$ subgroups of $D_8$, and $\omega \in H^2(\Z_2 \times \Z_2, U(1)) \cong \Z_2$ denotes the nontrivial 2-cocycle. The algebras $Q_1$ and $Q_2$ are the twisted group algebras of these $\Z_2 \times \Z_2$ subgroups, where the multiplication is twisted by the 2-cocycle $\omega$. The explicit values of $\omega$ are summarized in Table~\ref{tab.twist}.
\begin{table}[]
    \centering
    \begin{tabular}{|c|c|c|c|c|}
    \hline
        $\omega$ & $(0,0)$ & $(0,1)$&$(1,0)$&$(1,1)$ \\
    \hline
$(0,0)$&$1$&$1$&$1$&$1$\\
         \hline
         $(0,1)$&$1$&$1$&$1$&$1$\\
         \hline
         $(1,0)$& $1$ &$-1$&$1$&$-1$\\
         \hline
         $(1,1)$&$1$ &$-1$&$1$&$-1$\\
         \hline
    \end{tabular}
    \caption{The nontrivial 2-cocycle of $\Z_2\times \Z_2$ with $U(1)$ coefficient. The rows correspond to the first argument $g$ and the columns to the second argument $h$ of the cocycle $\omega(g,h)$.  In the context of two specific subgroups isomorphic to $\mathbb{Z}_2 \times \mathbb{Z}_2$, the generator $r^2$ is identified with $(0,1)$ in both $\langle r^2, s\rangle$ and $\langle r^2, sr\rangle$, while $s$ corresponds to $(1,0)$ in $\langle r^2, s\rangle$ and $sr$ corresponds to $(1,0)$ in $\langle r^2, sr\rangle$.}
    \label{tab.twist}
\end{table}

Note that any indecomposable $\ve_{D_8}$-module $\cM$ yields a valid boundary local quantum charge category. The symmetric boundary condition corresponds to the boundary local quantum charge category $\cM=\ve_{D_8}$, which has already been studied in~\cite{MYLG25}. In the following, we study the fully symmetry-breaking boundary condition (with $\cM=\ve$).

The module action of $\ve_{D_8}$ on $\ve$ is defined by forgetting the $D_8$-grading and then taking the tensor product within $\ve$.   

\paragraph{$Q_0$ with symmetry breaking boundary condition}

Since \(Q_0=\mathbb C\langle e\rangle\) is one dimensional, $\MQ$ has only one simple object. Each bulk term in the Hamiltonian is simply the projection onto \(Q_0\), and the boundary term is the projection onto the one-dimensional space
$K=\mathbb C:=\operatorname{Span}\{v\}$.
Thus the ground state is given by the product state
\[
    \lvert\psi\rangle
    =
    \lvert v \rangle \otimes \lvert e \rangle \otimes \lvert e \rangle \otimes \lvert e \rangle \otimes \cdots .
\]

\paragraph{$Q_1$ and $Q_2$ with symmetry breaking boundary condition}
Upon forgetting the $D_8$-grading structure, $Q_1$ and $Q_2$ become isomorphic to twisted group algebras of $\Z_2\times \Z_2$, $\mathbb{C}^\omega(\Z_2\times \Z_2)$. To compute $\cM_{Q_1}$ and $\cM_{Q_2}$, we  first consider the category of right  $\mathbb{C}^\omega(\Z_2\times \Z_2)$ modules in $\ve$. This category  is equivalent to the category of projective representations of $\Z_2\times \Z_2$, denoted $\Rep^\omega(\Z_2\times \Z_2)$.  We then restore the 
$D_8$-grading structure retrospectively.
Up to isomorphism, $\Rep^\omega(\Z_2\times \Z_2)$ contains exactly one irreducible projective representation $K$ of dimension 2. The twisted right actions  $\mu: K\otimes \mathbb{C}^\omega( \Z_2\times \Z_2)\rightarrow K$, and how elements in $\Z_2\times \Z_2$ are identified with elements in $D_8$ are summarized in Table~\ref{table.Z_2^2action}.
\begin{table}[]
    \centering
    \begin{tabular}{|c|c|c|c|c|}
    \hline
        
        $g\in \Z_2\times \Z_2$ & $(0,0)$ & $(0,1)$ & $(1,0)$ & $(1,1)$  \\
         \hline
   $\mu(g)$   & $\one$ & $X$& $Z$& $XZ$\\
   \hline
    \end{tabular}
    \caption{The twisted right actions on the irreducible  $2d$ $\Z_2\times \Z_2$ representation. $X$, $Z$ are Pauli operators. $r^2$ in both $Q_1$ and $Q_2$ is identified with $(0,1)$; while $s$ in $Q_1$ corresponds to $(1,0)$, and $sr$ in $Q_2$ corresponds to $(1,0)$. }
    \label{table.Z_2^2action}
\end{table}

Omitting the projectors, the effective low energy boundary term is $H^\bdy=  \mu^\dagger \mu$. We can express its components by fixing the incoming and outgoing $\Z_2\times \Z_2$ elements, as shown in Table~\ref{tab.RepD8_SSB_bdy_term}.
\begin{table}[]
    \centering
 \begin{tabular}{|c|c|c|c|c|}
\hline
\diagbox{$g$}{$h$} & $(0,0)$ & $(0,1)$ & $(1,0)$ & $(1,1)$ \\
\hline
$(0,0)$      & $I$  & $X$  & $Z$  & $ZX$  \\
\hline
$(0,1)$      & $X$  & $I$  & $ZX$ & $Z$   \\
\hline
$(1,0)$      & $Z$  & $XZ$& $I$  & $-X$   \\
\hline
$(1,1)$     & $XZ$& $Z$ & $-X$ & $I$   \\
\hline
\end{tabular}
    \caption{$\mu(h)^\dagger \mu(g)$. $r^2$ in both $Q_1$ and $Q_2$ is identified with $(0,1)$; while $s$ in $Q_1$ corresponds to $(1,0)$, and $sr$ in $Q_2$ corresponds to $(1,0)$.}
    \label{tab.RepD8_SSB_bdy_term}
\end{table}

  \begin{remark}
    The reader might have noticed that all SPT phases, protected by either group symmetry or non-invertible symmetry, have only one simple boundary condition when we take the boundary local quantum charge category $\M = \ve$.
    Indeed when $\M=\ve$, boundary conditions and ground state degeneracy are related in a particularly simple way: the number of simple objects in $\ve_Q$ agrees with the ground state degeneracy of the bulk phase labeled by $Q$ after forgetting the symmetry via the same fiber functor that defines the $\C$-module structure on $\ve$.

    To see this, recall that a $\C$-module structure on $\ve$ is equivalent to a fiber functor
    \[
        U:\C\to \ve
    \]
    \cite[Chapter 7]{EGNO_2015}. Therefore, for a Q-system $Q\in \C$, the category $\ve_Q$ can be identified with the category of ordinary right modules over the algebra $U(Q)$ in $\ve$. Since $Q$ is separable, $U(Q)$ is a separable algebra in $\ve$. Hence $U(Q)$ is a finite-dimensional semisimple algebra, and therefore isomorphic to a multi-matrix algebra
    \[
        U(Q)\cong \bigoplus_{i=1}^n \operatorname{Mat}_{d_i}(\mathbb{C}),
    \]
    where $n$ is the number of simple matrix blocks and $d_i$ is the size of the $i$-th block. It follows that $\ve_Q$ has precisely $n$ simple objects.

    On the other hand, if the symmetry is forgotten via the same fiber functor $U:\C\to \ve$, then the corresponding bulk fixed-point algebra is
    \[
        \operatorname{End}_{U(Q)\text{-}U(Q)}(U(Q)),
    \]
    the algebra of $U(Q)$-$U(Q)$ bimodule endomorphisms of $U(Q)$. This algebra is naturally isomorphic to the center $Z(U(Q))$. Since
    \[
        Z(U(Q)) \cong \oplus^n\mathbb{C},
    \]
    its dimension is also $n$. It is proved in~\cite{MYLG25} that $\dim Z(U(Q))$ is precisely the number of extremal points of the simplex of $H$-type ground states, which is the ground state degeneracy. 
    Thus, with respect to the same fiber functor $U$, the number of simple objects in $\ve_Q$ agrees with the ground state degeneracy after forgetting the symmetry.
\end{remark}

%--------------------------------------------------------

\section{DHR theory of fusion spin chains with boundary}\label{sec:DHR theory}

In this section, our goal is to define the boundary SymTFT as the category of boundary DHR bimodules and compute its structure as a fusion category.
Before delving into technicalities of Hilbert bimodules and operator algebras, we would first review the intuitive computations in \cite{Kong_2022_1DEnrichedCat}, where elements in the SymTFT are defined as non-local operators.
Though non-local operators are ill-defined, they do capture the key physical intuition about the SymTFT: Operator charges that could act on topological defects.
We then illustrate how attempts to formulate non-local operators rigorously lead to the concept of Hilbert bimodules.
We recommend the reader to come back here whenever feeling uncertain about physical meanings of DHR bimodules.

\subsection{From non-local operators to Hilbert bimodules}
\label{subsection: non-local ops to hilb bimods}

\paragraph{Bulk SymTFT as non-local operators}
On the Ising chain (infinite in both directions), it is argued that there are four sectors of non-local symmetric operators, which could be written down as follows: the identity operator \(\one\); \(m_i=\otimes_{k \geq i} X_k\); \(Z_i=\otimes_{k \geq i}\left(Z_k Z_{k+1}\right)\); and \(m_i Z_i\).
There is a "tensor product" of non-local operators defined by composition.
Obviously the 'fusion rule' of non-local operators are identical to simple objects in the toric code, namely
\begin{equation}
m_i \otimes m_i = Z_i \otimes Z_i = (m_i Z_i) \otimes (m_i Z_i) = \one, \quad m_i \otimes Z_i = Z_i \otimes m_i = m_i Z_i.
\end{equation}
There is also a "braiding" of non-local operators defined by commutation relations.
For example, the double braiding between \(m_i\) and \(Z_i\) is \(-1\).
In conclusion, these non-local operators should be understood as simple objects in the SymTFT described by \(Z_1(\Rep(\Z_2)) \simeq \mathbf{TC}\).

\paragraph{Topological holography via non-local operators}
What is topological holography?
Indeed there are various aspects of the concept, and an important one is that \emph{\(n\)-dimensional gapped phases are controlled by the SymTFT, which is \((n+1)\)-dimensional}.
We will illustrate that topological holography could be realized as follows: Non-local operators could act on topological charges of any gapped phase, which would give the full classification data of the phase.
Consider the Hamiltonian of the trivial phase
\begin{equation}
H_1 = -\sum_{i \in \mathbb{Z} } X_i,
\end{equation}
and the ground state is
\begin{equation}
|\psi_1\rangle=|\cdots++\cdots\rangle.
\end{equation}
Define a topological charge to be an excitation that cannot be created individually by \emph{symmetric} local unitary operators.
There are two topological charges in the trivial phase: One is the trivial charge, and the other is the \(Z_2\) symmetry charge.
Apparently, applying the non-local operators \(Z_i\) and \(m_i Z_i\) on the ground state yields a \(Z_2\) charge, and applying \(\one\) or \(m_i\) does not.
In other words, the trivial phase behaves like the \(m\)-condensed boundary of the SymTFT.
Similarly, consider the Hamiltonian of the symmetry breaking phase
\begin{equation}
H_2 = -\sum_{i \in \mathbb{Z} } Z_i Z_{i+1},
\end{equation}
and the unique symmetry-invariant ground state is
\begin{equation}
|\psi_2\rangle = \frac{1}{\sqrt 2} \left( |\cdots 000 \cdots\rangle + |\cdots 111 \cdots\rangle \right).
\end{equation}
Note that the relative phase between two summands is unphysical, as it cannot be detected by any symmetric operator.
There are two topological charges in the symmetry-breaking phase: One is the trivial charge, and the other is the domain wall.
Apparently, applying the non-local operators \(m_i\) and \(m_i Z_i\) on the ground state yields a domain wall, and applying \(\one\) or \(Z_i\) does not.
In other words, the symmetry breaking phase behaves like the \(e\)-condensed boundary of the SymTFT.

\paragraph{Boundary SymTFT as non-local operators}
There is a similar story for spin chains with boundary.
However, there is one more choice to make: The boundary symmetry, or the boundary local quantum charge category.
In the case of Ising chain, we could choose from the symmetric boundary, with boundary local quantum charge category \(\Rep(\Z_2)\), or the symmetry-breaking boundary, with boundary local quantum charge category \(\ve\).
We emphasize that the boundary local quantum charge category is the more natural viewpoint, as when the input UFC \(\mathcal{C}\) doesn't admit a fiber functor, there is no notion of symmetry action at all!

We first consider the symmetric boundary.
Consider the right-infinite Ising chain, i.e. there is a spin for each \(n \in \mathbb{Z}_{\geq 0}\).
In this case, the boundary SymTFT behaves precisely as the \(m\)-condensed boundary of the bulk SymTFT, where only two sectors of non-local operators survive: One is the trivial operator \(\one\), and the other is the spin-flip operator \(Z_i=\otimes_{k \geq i}\left(Z_k Z_{k+1}\right)\).
The boundary SymTFT is described by the fusion category \(\Rep(\Z_2)\).

Now consider the symmetry-breaking boundary.
There is a spin for each \(n \in \mathbb{Z}_{\geq 0}\), but there is one more important difference: We allow non-symmetric operators near the boundary.
(This is the meaning of symmetry-breaking.)
In this case, the boundary SymTFT behaves precisely as the \(e\)-condensed boundary of the bulk SymTFT, where only two sectors of non-local operators survive: One is the trivial operator \(\one\), and the other is \(m_i=\otimes_{k \geq i} X_k\).
The boundary SymTFT is described by the fusion category \(\ve_{Z_2}\).

\paragraph{Non-local operators, QCAs, and Hilbert bimodules}
The most obvious problem of this discussion is that non-local operators are ill-defined.
Furthermore, there is no intuition about how to define direct sums of them, but \(Z_1(\Rep(\Z_2))\) has direct sums.
Fortunately, this problem could be fixed by a two-step procedure: We first formulate non-local operators rigorously in terms of \emph{quantum cellular automata} (QCA), then promote QCAs to Hilbert bimodules via the \emph{modulation construction}.

The first observation is, though a non-local operator is ill-defined, conjugation of local operators by non-local operators are well-defined.
Therefore we could switch to the Heisenberg picture and consider how "non-local operators" act on the quasi-local algebra by conjugation, which gives rise to \(*\)-automorphisms of the quasi-local algebra.
For example, conjugation by \(m_i\) has the effect \(m_i (Z_{i-1} Z_i) m_i^{-1} = - Z_{i-1} Z_i\), and conjugation by \(Z_i\) has the effect \(Z_i (X_i) Z_i^{-1} = -X_i\).
Note that there is a set of obvious generators of the symmetric quasi-local algebra: \(\{ X_i, Z_i Z_{i+1} \}_{i \in \mathbb{Z}}\); they generate the whole symmetric quasi-local algebra by multiplication and summation.
Therefore, to describe algebra automorphisms, it suffices to describe how they act on the generators.
The four automorphisms corresponding to the four non-local operators could be written down as follows: the identity automorphism \(\alpha_\one\); \(\alpha_{m_i}\), which sends \(Z_{i-1} Z_i\) to \(-Z_{i-1} Z_i\) and sends all other generators to themselves; \(\alpha_{Z_i}\), which sends \(X_i\) to \(-X_i\) and sends all other generators to themselves; and \(\alpha_{m_i Z_i} = \alpha_{m_i} \alpha_{Z_i}\).
It is apparent that these automorphisms all preserve locality of operators.
In other words, they are all \emph{quantum cellular automata} (QCA).

The next question is how QCAs act on quantum states.
Since QCAs are defined via the quasi-local algebra \(\Loc\), we should also understand quantum states as positive linear functionals on \(\Loc\).
Then a QCA \(\alpha: \Loc \to \Loc\) naturally acts on a state \(\phi: \Loc \to \mathbb{C}\) by composition, i.e. \(\alpha(\phi) := \phi \circ \alpha\).
This action is identical to applying the non-local operator to the quantum state, since it just translated the latter to the Heisenberg picture.

The second observation is, any \(*\)-homomorphism of \(C^*\)-algebras gives rise to a Hilbert bimodule, and Hilbert bimodules have a natural notion of direct sum.

The \emph{modulation construction} converts a \(*\)-homomorphism \(f: A \to B\) to a Hilbert bimodule \(\operatorname{Md} (f)\) as follows.
The underlying vector space is \(B\) itself.
The left action is \(a \cdot x = f(a)x\) for \(a \in A, x \in B\), and the right action is \(x \cdot b = xb\).
Finally, the sesquilinear form \(B \times B \to B\) is defined as \(\left\langle x|y \right\rangle = x^* y\).
Moreover, the modulation construction preserves composition in the following sense:
\begin{proposition}
If \(f: A \to B\) and \(g: B \to C\) are \(*\)-homomorphisms between \(C^*\) algebras, then \(\operatorname{Md} (g \circ f) \simeq \operatorname{Md} (f) \boxtimes_B \operatorname{Md} (g)\).
\end{proposition}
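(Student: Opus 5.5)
The plan is to write down an explicit unitary $\Phi\colon \operatorname{Md}(f)\boxtimes_B \operatorname{Md}(g)\to \operatorname{Md}(g\circ f)$ and to check that it intertwines the $A$--$C$ bimodule structures. Throughout, $f$ and $g$ are taken to be unital, as required by the paper's definition of Hilbert bimodules; this is used at exactly one point, surjectivity.

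\textbf{Step 1: recall the tensor product.} The relative tensor product $\operatorname{Md}(f)\boxtimes_B \operatorname{Md}(g)$ is built from the algebraic tensor product $B\odot C$. It carries the $C$-valued form
\[
\langle x\otimes y \mid x'\otimes y'\rangle = \bigl\langle y \bigm| \langle x\mid x'\rangle_B \triangleright y'\bigr\rangle_C = y^*\, g(x^*x')\, y' .
\]
One then quotients by the null vectors and completes. The left $A$-action is $a\triangleright(x\otimes y)=f(a)x\otimes y$. The right $C$-action is $(x\otimes y)\triangleleft c = x\otimes yc$. In this quotient the balancing relation $xb\otimes y = x\otimes g(b)y$ holds automatically.

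\textbf{Step 2: define $\Phi$ and show it is isometric.} Set $\Phi(x\otimes y) := g(x)\,y \in C$ on elementary tensors and extend linearly. For finite sums the computation is
\[
\Bigl(\sum_i g(x_i)y_i\Bigr)^*\Bigl(\sum_j g(x'_j)y'_j\Bigr) = \sum_{i,j} y_i^*\, g(x_i^*x'_j)\, y'_j .
\]
The right-hand side is exactly the $C$-valued form of Step 1. So $\Phi$ preserves the $C$-valued inner products. In particular, $\Phi$ kills precisely the null vectors. Hence it descends to an injective isometry on the quotient and extends by continuity to the completion. Since $g$ is a $*$-homomorphism, the balancing relation is respected: $g(xb)y = g(x)g(b)y$.

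\textbf{Step 3: surjectivity and intertwining.} Because $g$ is unital, $\Phi(1_B\otimes y)=y$, so $\Phi$ is surjective. An isometric surjection preserving $C$-valued inner products is a unitary of right Hilbert $C$-modules. Its adjoint is the inverse $y\mapsto 1_B\otimes y$, which makes $\Phi$ adjointable. For the right action, $\Phi((x\otimes y)c)=g(x)yc$. For the left action,
\[
\Phi\bigl(a\triangleright(x\otimes y)\bigr)=g(f(a)x)y=(g\circ f)(a)\,g(x)y = a\triangleright \Phi(x\otimes y),
\]
where the last term uses the $A$-action on $\operatorname{Md}(g\circ f)$. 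Hence $\Phi$ is the required bimodule isomorphism.

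\textbf{Main obstacle.} The only delicate point is well-definedness on the quotient and completion. This is dispatched by the inner-product identity in Step 2, since an inner-product-preserving map automatically respects the null space and is bounded. The other point needing care is surjectivity, which is exactly where unitality of $g$ enters. In the non-unital case one would only obtain the closed submodule $\overline{g(B)C}$.
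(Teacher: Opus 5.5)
Your proof is correct. The paper gives no proof to compare against: the proposition is stated as a standard fact and used only heuristically. Your explicit unitary $\Phi(x\otimes y)=g(x)y$ is the standard verification.

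The identity $\langle \Phi\xi\mid\Phi\eta\rangle_C=\langle\xi\mid\eta\rangle$ on $B\odot C$ does all the work. It shows that $\Phi$ kills exactly the null vectors, that it is isometric, and, as a bonus, that the $C$-valued form is positive.

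Your remark on unitality is worth keeping, because it is a genuine hypothesis rather than a convenience. For $g=0$ every vector of $\operatorname{Md}(f)\boxtimes_B\operatorname{Md}(g)$ is null, so the tensor product vanishes. Meanwhile $\operatorname{Md}(g\circ f)$ still has underlying space $C$. So the statement as written, for arbitrary $*$-homomorphisms, is false. What rescues it is the paper's definition of a Hilbert bimodule, whose unital left action silently forces $f$ and $g$ to be unital.
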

Therefore, composition of non-local operators are faithfully translated into tensor product of Hilbert bimodules (via QCAs as an intermediate step).

How do Hilbert bimodules act on a quantum state?
Note that the GNS construction converts a quantum state to a left module, and Hilbert bimodules naturally act on left modules by relative tensor product.
Furthermore, the GNS construction also preserves composition:
\begin{proposition}
If \(f: A \to B\) is a \(*\)-homomorphism between \(C^*\) algebras, and \(g: B \to \mathbb{C}\) is a state, then \(\operatorname{GNS} (g \circ f) \simeq \operatorname{Md} (f) \boxtimes_B \operatorname{GNS} (g)\).
\end{proposition}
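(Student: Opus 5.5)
The plan is to write down an explicit map from the GNS space of the pulled-back state into the relative tensor product, show that it is an isometric intertwiner, and then identify its range. Write $\Omega_g$ for the distinguished cyclic vector of $\GNS(g)$ and $\Omega_{g\circ f}$ for that of $\GNS(g\circ f)$.

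\emph{Step 1 (simplify the target).} First identify $\operatorname{Md}(f)\boxtimes_B \GNS(g)$ with the Hilbert space $\mathcal H_g$ itself. This uses the standard unitary $x\otimes\xi\mapsto x\cdot\xi$ for $x\in B$ and $\xi\in\mathcal H_g$. It is isometric because
\[
\langle x\otimes\xi , y\otimes\eta\rangle=\langle \xi, x^*y\cdot\eta\rangle .
\]
It is surjective since $1_B\otimes\xi\mapsto\xi$. Under this identification, $A$ acts on $\mathcal H_g$ through $a\cdot\xi=\pi_g(f(a))\xi$, because the left action on $\operatorname{Md}(f)$ is $a\cdot x=f(a)x$.

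\emph{Step 2 (the canonical map).} Next define
\[
V(a+N_{g\circ f}) := f(a)\otimes\Omega_g \;=\; \pi_g(f(a))\Omega_g \qquad (a\in A).
\]
For $a,a'\in A$ this satisfies
\[
\langle V a, V a'\rangle=\langle \Omega_g,\pi_g(f(a)^*f(a'))\Omega_g\rangle=g(f(a^*a'))=(g\circ f)(a^*a').
\]
So $V$ is well defined on $A/N_{g\circ f}$ (it kills exactly $N_{g\circ f}$), is isometric, and extends to the completion $\GNS(g\circ f)$. It intertwines the $A$-actions by construction, since $V(ba)=f(b)f(a)\otimes\Omega_g$. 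Moreover, $V$ sends $\Omega_{g\circ f}$ to $f(1_A)\otimes\Omega_g$.

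\emph{Step 3 (range; the main obstacle).} It remains to show that $V$ is onto, which is the delicate step. The range of $V$ is the closed $A$-submodule $\overline{\pi_g(f(A))\Omega_g}$, i.e.\ the cyclic submodule generated by the vector $f(1_A)\otimes\Omega_g$.

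I will argue surjectivity by checking that $1_B\otimes\Omega_g$ is $A$-cyclic. For the unital maps used here (QCAs, and more generally $*$-automorphisms), $f(1_A)=1_B$ and $f(A)=B$. Hence $\pi_g(f(A))\Omega_g=\pi_g(B)\Omega_g$, which is dense in $\mathcal H_g$ by the cyclicity of $\Omega_g$ in the GNS construction.

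For an arbitrary $f$, I would state precisely what the canonical isometry gives: it identifies $\GNS(g\circ f)$ with the sector of $\operatorname{Md}(f)\boxtimes_B\GNS(g)$ generated by the transported vector $1_B\otimes\Omega_g$. This is the sense in which ``$\simeq$'' is meant in the physical reading, namely that the sector of the transformed state is obtained by acting with the bimodule on the sector of $g$.

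Finally, functoriality and compatibility with the previous proposition, namely
\[
\operatorname{Md}(g'\circ f)\simeq\operatorname{Md}(f)\boxtimes\operatorname{Md}(g'),
\]
follow from the same associativity of relative tensor products.
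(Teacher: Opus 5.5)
Your Steps 1 and 2 are correct. $\operatorname{Md}(f)\boxtimes_B\operatorname{GNS}(g)\cong\mathcal H_g$ with $A$ acting through $\pi_g\circ f$, and $a+N_{g\circ f}\mapsto\pi_g(f(a))\Omega_g$ is a well-defined isometric $A$-intertwiner. The paper states this proposition without proof, as motivation for passing from QCAs to bimodules, so there is no argument of its own to compare against.

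The gap is in Step 3. For a general $*$-homomorphism the range $\overline{\pi_g(f(A))\Omega_g}$ can be a proper subspace. In fact the proposition as literally stated is false, so no argument can close this step. Take the unital inclusion $f\colon\mathbb{C}\to M_2(\mathbb{C})$, $\lambda\mapsto\lambda 1$, and the vector state $g(x)=x_{11}$, whose GNS space is $\mathbb{C}^2$. Then $\operatorname{GNS}(g\circ f)=\mathbb{C}$ is one-dimensional, while $\operatorname{Md}(f)\boxtimes_{M_2(\mathbb{C})}\operatorname{GNS}(g)\cong\mathbb{C}^2$ is two copies of it. This $f$ is unital, so your sentence ``for the unital maps used here \ldots\ $f(A)=B$'' should not be read as saying that unitality suffices. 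What you need is that $\Omega_g$ be cyclic for $\pi_g(f(A))$, which holds, for example, when $f$ is surjective.

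So do not re-read ``$\simeq$'' to cover arbitrary $f$. State plainly what your argument proves: $\operatorname{GNS}(g\circ f)$ is unitarily equivalent to the cyclic subrepresentation $\overline{\pi_g(f(A))\Omega_g}$. That subrepresentation is a direct summand of $\operatorname{Md}(f)\boxtimes_B\operatorname{GNS}(g)$, since its orthogonal complement is invariant. It equals the whole module if and only if $\Omega_g$ is $\pi_g(f(A))$-cyclic. Under the extra hypothesis that $f$ is surjective (in particular a $*$-automorphism, which is the QCA case the paper actually uses), your argument is a complete proof of the corrected statement.
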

Therefore, action of non-local operators on quantum states are faithfully translated into tensor product of Hilbert bimodules with left modules.

Note that boundary non-local operators could also be translated into Hilbert bimodules in the same way.

Unfortunately, QCAs cannot produce the whole SymTFT in general: When we consider non-abelian group symmetries, there are non-invertible objects in \(Z_1(\mathrm{Rep}(G))\), but modulating QCAs must produce invertible Hilbert bimodules.
To obtain the correct description of the SymTFT, we must consider more general Hilbert bimodules, which turn out to be DHR bimodules.

\subsection{The DHR category of fusion spin chains without boundary}
\label{subsec:jones_review}

The heuristic correspondence between non-local operators and Hilbert bimodules is placed on rigorous footing by the discrete DHR theory developed by Jones \cite{Jones_2024_DHR}. 
In this framework, the non-local operators are formulated as DHR bimodules.

\begin{definition}\label{def:bulk-dhr-localized-basis}
Consider an abstract spin chain without boundary $\Loc^{\bulk}_\bullet$. Given some $I \in \mathcal{I}$ and $M \in \mathrm{Bim}(\Loc^{\bulk})$, we say that a vector $x \in M$ is \emph{localized in} $I$ if
\[
    b x = x b \quad \text{for all } b \in \Loc^{\bulk}_{I^c}.
\]
A projective basis $\{ s_i \}$ is \emph{localized in} $I$ if each $s_i$ is localized in $I$.
\end{definition}

\begin{definition}[\cite{Jones_2024_DHR}, Def. 3.2]
$M \in \mathrm{Bim}(\Loc^{\bulk})$ is called a \emph{DHR-bimodule}, if there exists some $r \in \mathbb{Z}_+$ such that for any interval $I$ with length greater than $r$, there is a projective basis localized in $I$.
The category $\DHR(\Loc^{\bulk}_\bullet)$ is the full subcategory of $\mathrm{Bim}(\Loc^{\bulk})$ consisting of DHR bimodules.
\end{definition}

Note that DHR bimodules are defined for any abstract spin chain.

Physically, a bimodule $M$ localized in $I$ represents a non-local operator that is indistinguishable from the identity sector outside the region $I$, and the requirement that the bimodule could be localized everywhere reflects the fact that non-local operators are mobile.
The collection of all such bimodules that are localizable in some finite region forms the \emph{DHR category} of the net, denoted $\DHR(\Loc^{\bulk}_\bullet)$. This category is naturally a \(C^*\)-tensor category where the monoidal product is the relative tensor product $\otimes_{\Loc^{\bulk}}$ of bimodules. 

Under the assumption of (weak) algebraic Haag duality---which ensures that operators commuting with the complement of a region are contained within that region---the category $\DHR(\Loc^{\bulk}_\bullet)$ becomes a braided monoidal category \cite[Theorem B]{Jones_2024_DHR}. This braiding captures the commutation relations of non-local string operators when their endpoints are "slid" past one another on the lattice.

The central result for fusion spin chains built from the unitary fusion category $\C$ is the identification of the DHR category with the Drinfeld center of the category of local quantum charges, $Z_1(\C)$:

\begin{theorem}[\cite{Jones_2024_DHR}, Theorem C]
Let $\Loc^{\bulk}_\bullet$ be the net of symmetric operators of a 1D fusion spin chain associated with a unitary fusion category $\C$ and a strong tensor generating object $x \in \C$. Then there is an equivalence of braided \(C^*\)-tensor categories:
\begin{equation}
    \DHR(\Loc^{\bulk}_\bullet) \cong Z_1(\C).
\end{equation}
\end{theorem}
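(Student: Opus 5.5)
The plan is to build an explicit realization functor from the Drinfeld center into the DHR category and then show it is a braided equivalence. The template is the argument of Theorem~\ref{FullyFaithful} and Theorem~\ref{ImageOfG}, specialized to the bulk chain.

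\textbf{Construction.} For $(z,\beta)\in Z_1(\C)$, with $\beta_y\colon z\otimes y\to y\otimes z$ the half-braiding, I would first define finite pieces
\[
F_n(z) := \C\bigl(x^{2n+1},\, x^{n}\otimes z\otimes x^{n+1}\bigr),
\]
which inserts the strand $z$ at a fixed cut. The bimodule structure comes from composing with $\Loc^{\bulk}_{[-n,n]}$ on either side, with $\Loc^{\bulk}_{[-n,n]}$-valued inner product $\xi^*\circ\eta$. The equivariant inclusions $F_n(z)\to F_{n+1}(z)$ tensor with $1_x$ on both ends. By the AF-action machinery of Section~\ref{AFAction}, these pieces glue to a bimodule $F(z)$, and the monoidal structure is given by stacking strands.

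\textbf{DHR property.} Using a finite family $\{s_i\}$ with $\sum_i s_is_i^*=1$, which exists because $x^{L}$ contains every simple object, I would show that $F(z)$ has a projective basis localized at the cut. The half-braiding $\beta$ gives unitary bimodule isomorphisms that transport the cut past any number of sites. Hence $F(z)$ has projective bases localized in every sufficiently long interval, so $F(z)$ is DHR. The braiding on $\DHR(\Loc^{\bulk}_\bullet)$ from \cite[Theorem B]{Jones_2024_DHR} is defined by moving localizations apart; through these transport maps it becomes $\beta$. This makes $F$ braided monoidal.

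\textbf{Full faithfulness.} I would copy the proof of Theorem~\ref{FullyFaithful}. By rigidity it suffices to identify the central vectors of $F(z)$. Algebraic Haag duality (Theorem~\ref{thm:algebraic-haag-duality}), applied to the coefficients $\langle s_i\mid v\rangle$, shows that a central vector is strictly localized in a finite window. Centrality against local operators on both sides then forces the vector to come from a morphism $\mathbbm{1}\to z$ that is compatible with $\beta$, that is, a morphism in $Z_1(\C)$.

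\textbf{Essential surjectivity (main obstacle).} Let $M$ be DHR with a projective basis localized in $[a,b]$. I would restrict $M$ to a bimodule over
\[
B := \Loc^{\bulk}_{(-\infty,a)}\otimes\Loc^{\bulk}_{(b,\infty)}.
\]
Using Haag duality together with the presentation $\Loc^{\bulk}\simeq(L^a\boxtimes R^b)(Q_{a,b})$, the aim is to show this restriction lies in the image of the action of $\C\boxtimes\C^{\rev}$, in the sense of Theorem~\ref{ImageOfG}. That would identify $M$ with a $Q_{a,b}$-bimodule in $\C\boxtimes\C^{\rev}$. Since $Q_{a,b}$ is the internal end of $\C$ over $\C\boxtimes\C^{\rev}$, such bimodules are equivalent to $Z_1(\C)$; concretely, $M$ is $F_n(z)$ for some object $z$ carrying extra data.

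The remaining difficulty is to extract the half-braiding from localizability in shifted intervals. Comparing the projective bases localized in $[a,b]$ and in $[a+1,b+1]$ gives a unitary
\[
z\otimes x\to x\otimes z .
\]
I expect the hardest part to be showing that this unitary extends naturally to all $y\in\C$, which should use $x$ being a strong tensor generator and the covering property (Theorem~\ref{thm:covering-property}), and that it satisfies the hexagon identity.
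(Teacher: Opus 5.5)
The paper does not prove this statement; it quotes it from \cite{Jones_2024_DHR}. Your outline follows Jones's strategy, which is also the template the paper reuses for the half-infinite chain (Theorem~\ref{FullyFaithful} and Section~\ref{ComputationDHRB}). That template is: realize $Z_1(\C)$ by an AF-action, get localizability from the half-braiding, get full faithfulness from central vectors and algebraic Haag duality, and get essential surjectivity from Theorem~\ref{ImageOfG} applied to $Q_{a,b}$ over $B$. Two steps are not right as written.

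The first is the construction of $F(z)$. With $z$ inserted at an interior cut, ``composing with $\Loc^{\bulk}_{[-n,n]}$ on either side'' defines only the right action: you cannot post-compose $a\in\C(x^{2n+1},x^{2n+1})$ with $\xi\colon x^{2n+1}\to x^{n}\otimes z\otimes x^{n+1}$. The left action has to use the half-braiding to move $z$ out of the way,
\[
a\triangleright\xi=(\beta_{x^{n}}\otimes 1_{x^{n+1}})\circ(1_z\otimes a)\circ(\beta_{x^{n}}^{-1}\otimes 1_{x^{n+1}})\circ\xi ,
\]
as in the formula the paper recalls for $\ReaDHR^{\bulk}_n$. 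Compatibility with the inclusions $\xi\mapsto 1_x\otimes\xi\otimes 1_x$ then follows from $\beta_{x\otimes x^{n}}=(1_x\otimes\beta_{x^{n}})(\beta_x\otimes 1_{x^{n}})$. So $\beta$ is part of the bimodule $F(z)$ itself, not only of the transport maps between cuts. Your full-faithfulness step depends on this: it is left centrality, with this twisted action, that forces a morphism $\mathbbm{1}\to z$ to be $\beta$-compatible.

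The second is essential surjectivity, where the obstacle you single out is not where the work is. Once Theorem~\ref{ImageOfG} places $M$ in the essential image of $G$ on ${}_{Q_{a,b}}(\C\boxtimes\C^{\rev})_{Q_{a,b}}$, the Morita equivalence of that category with $Z_1(\C)$ already supplies the half-braiding, with its naturality and hexagon identity. Re-extracting a unitary $z\otimes x\to x\otimes z$ from shifted bases is therefore redundant. What does need an argument is the following.
\begin{enumerate}
\item The hypothesis of Theorem~\ref{ImageOfG}, for which Haag duality is not the tool. Each $s_k$ commutes with $B$, so $\xi\mapsto s_k\xi$ is a $B$-bimodule map out of each summand $(L^a\boxtimes R^b)(y\boxtimes w)$ of $\Loc^{\bulk}\cong(L^a\boxtimes R^b)(Q_{a,b})$. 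These summands are simple because $L^a\boxtimes R^b$ is fully faithful, and $M=\sum_k s_k\Loc^{\bulk}$. This is exactly the lemma on $s_k\triangleleft\ReaDHR(y)_i$ in Section~\ref{ComputationDHRB}.
\item The passage from ``$M$ lies in the image of $G$'' to ``$M\cong F(z)$''. You can exhibit a natural isomorphism between $G$ (composed with the Morita equivalence) and your $F$. More cheaply, note that $F$ is fully faithful with image inside the essential image of $G$, and both sources have the same number of simple classes, so $F$ hits every simple.
\end{enumerate}
Finally, the claim that the DHR braiding of \cite[Theorem B]{Jones_2024_DHR} becomes $\beta$ under transport is the braided half of the theorem. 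It has to be computed, not asserted, and the directions you choose for $\beta$ and for transport decide whether you land on $Z_1(\C)$ or its reverse; the paper's own convention gives $Z_1(\C^{\rev})$.
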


Furthermore, it is shown that symmetric quantum cellular automata (QCA) acts naturally on the category of DHR bimodules as follows: Since any bounded-spread automorphism $\alpha \in \mathrm{QCA}(\Loc^{\bulk}_\bullet)$ maps localizable bimodules to localizable bimodules, it induces a braided autoequivalence $\DHR(\alpha)$ of $Z_1(\C)$.
This map descends to a group homomorphism
\begin{equation}
    \DHR: \mathrm{QCA}(\Loc^{\bulk}_\bullet) / \mathrm{FDQC}(\Loc^{\bulk}_\bullet) \longrightarrow \mathrm{Aut}_{\DHR(\Loc^{\bulk}_\bullet)} \simeq \mathrm{Aut}_{\mathrm{br}}(Z_1(\C)),
\end{equation}
where $\mathrm{FDQC}(\Loc^{\bulk}_\bullet)$ denotes finite-depth quantum circuits, which are shown to act trivially on the DHR category \cite[Theorem A]{Jones_2024_DHR}.

The physical intuition is, this homomorphism tracks how dualities in the SymTFT permute non-local operators.
As non-local operators control gapped phases (see the discussion of topological holography in Section~\ref{subsection: non-local ops to hilb bimods}), this implies the group homomorphisms see how dualities (QCAs) permute different phases.
Besides, DHR bimodules serve as a powerful invariant for the classification of symmetric quantum cellular automata (QCA).  

In the convention used in this paper, this identification is applied to the oppositely oriented chain. Equivalently, the bulk DHR category is
\begin{equation}
\DHR(\Loc^{\bulk}_\bullet) \cong Z_1(\C^\rev)\cong \overline{Z_1(\C)}.
\end{equation}
The difference from the convention of \cite{Jones_2024_DHR} is only an orientation reversal; see Appendix~\ref{app:orientation:dhr}.

\subsection{The DHR category of spin chains with boundary}
In this section, we extend the framework of DHR bimodules to spin chains with boundary. 
Fix some abstract spin chain with boundary $\Loc^{\bdy}_\bullet$.
Now we define the DHR category of $\Loc^{\bdy}_\bullet$ and show it is a monoidal category.

\begin{definition}\label{def:boundary-dhr-localized-basis}
Given some $I \in \mathcal{I}^+$ and $M \in \mathrm{Bim}(\Loc^{\bdy})$, we say that a vector $x \in M$ is \emph{localized in} $I$ if
\[
    b x = x b \quad \text{for all } b \in \Loc^{\bdy}_{I^c}.
\]
A projective basis $\{ s_i \}$ is said to be \emph{localized in} $I$ if each $s_i$ is localized in $I$.
\end{definition}

\begin{definition}\label{def:boundary-dhr-bimodule}
$M \in \mathrm{Bim}(\Loc^{\bdy})$ is called a \emph{DHR-bimodule} if it is semisimple, and there exists some $r \in \mathbb{Z}_+$ such that there is a projective basis localized in region $[0,r]$.
The category $\DHR(\Loc^{\bdy}_\bullet)$ is the full subcategory of $\mathrm{Bim}(\Loc^{\bdy})$ consisting of DHR bimodules.
\end{definition}

\begin{proposition}
$\DHR(\Loc^{\bdy}_\bullet)$ is a monoidal category.
\end{proposition}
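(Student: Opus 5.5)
Since $\DHR(\Loc^{\bdy}_\bullet)$ is defined as a full subcategory of $\mathrm{Bim}(\Loc^{\bdy})$, which is already a ($C^*$-)monoidal category under the relative tensor product $\boxtimes_{\Loc^{\bdy}}$ with unit $\Loc^{\bdy}$, the plan is to show that this subcategory contains the unit and is closed under $\boxtimes_{\Loc^{\bdy}}$. The associators and unitors are then inherited from $\mathrm{Bim}(\Loc^{\bdy})$, and the coherence axioms hold automatically.

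\textbf{Unit.} The unit $\Loc^{\bdy}$ is a DHR bimodule. The singleton $\{1\}$ is a projective basis, because $1\langle 1\mid a\rangle = a$ for all $a \in \Loc^{\bdy}$. The vector $1$ commutes with every $b$, so it is localized in $[0,r]$ for every $r$. The unit bimodule is also simple, since $\Loc^{\bdy}$ is simple with trivial center. Bimodule endomorphisms of it correspond to central vectors, so its endomorphism algebra is $\mathbb{C}$ and it is in particular semisimple.

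\textbf{Localization of tensor products.} Let $M, N$ be DHR bimodules with projective bases $\{s_i\}$ localized in $[0,r]$ and $\{t_j\}$ localized in $[0,r']$, and put $R = \max(r, r')$. Each $s_i$ is also localized in $[0,R]$, because $[0,R]^c \subset [0,r]^c$ and the inclusions of local algebras respect this. The candidate basis for $M \boxtimes_{\Loc^{\bdy}} N$ is $\{s_i \boxtimes t_j\}$. Using the $\Loc^{\bdy}$-valued inner product $\langle s\boxtimes t \mid s'\boxtimes t'\rangle = \langle t \mid \langle s\mid s'\rangle t'\rangle$, we compute
\[
\sum_{i,j} (s_i\boxtimes t_j)\langle s_i\boxtimes t_j \mid \xi\boxtimes \eta\rangle = \sum_i s_i \boxtimes \Bigl(\sum_j t_j\langle t_j\mid \langle s_i\mid \xi\rangle \eta\rangle\Bigr) = \sum_i s_i\langle s_i\mid\xi\rangle\boxtimes\eta = \xi\boxtimes\eta .
\]
By linearity and continuity, and using that both bases are finite, this extends to the completed tensor product. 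For $b \in \Loc^{\bdy}_{[0,R]^c}$ we get
\[
b(s_i\boxtimes t_j) = (s_i b)\boxtimes t_j = s_i\boxtimes (b t_j) = (s_i\boxtimes t_j) b .
\]
This localization step is routine.

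\textbf{Semisimplicity.} This is the main obstacle, since Definition~\ref{def:boundary-dhr-bimodule} builds semisimplicity into the notion of DHR bimodule. The first attempt is to show that $\mathrm{End}(M \boxtimes_{\Loc^{\bdy}} N)$ is finite-dimensional; a finite-dimensional $C^*$-algebra of endomorphisms gives semisimplicity by decomposing along minimal projections. Endomorphisms of a bimodule $X$ with a projective basis $\{u_k\}$ are determined by the matrix entries $\langle u_k \mid T u_l\rangle$. When the $u_k$ are localized in $[0,R]$, these entries commute with $\Loc^{\bdy}_{(R,\infty)}$, and hence lie in $\Loc^{\bdy}_{[0,R]}$ by Theorem~\ref{thm:algebraic-haag-duality-boundary}, after enlarging $R$ beyond the Haag constant. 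This mirrors the argument in Theorem~\ref{FullyFaithful}. It follows that $\mathrm{End}(X)$ embeds into the finite-dimensional algebra $\mathbb{M}_n(\Loc^{\bdy}_{[0,R]})$, so it is a finite-dimensional $C^*$-algebra, and $X$ splits as a finite orthogonal direct sum of simple bimodules. Applying this to $X = M\boxtimes_{\Loc^{\bdy}} N$ completes the proof, and it shows that the semisimplicity clause in the definition is actually automatic.
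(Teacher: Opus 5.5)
Your proof is correct and has the same structure as the paper's. Both arguments show closure under $\boxtimes_{\Loc^{\bdy}}$ using the product basis $\{s_i\boxtimes t_j\}$ localized in $[0,\max(r_1,r_2)]$. Both note that the unit $\Loc^{\bdy}$ is semisimple (indeed simple) because $\Loc^{\bdy}$ is a simple $C^*$-algebra. Both inherit associators and unitors from $\mathrm{Bim}(\Loc^{\bdy})$.

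The one real difference is the semisimplicity of $M\boxtimes_{\Loc^{\bdy}}N$. The paper simply asserts it. You prove it: $T\mapsto(\langle u_k\mid Tu_l\rangle)_{k,l}$ is an injective $*$-homomorphism of $\mathrm{End}(X)$ into the finite-dimensional algebra $\mathbb{M}_n(\Loc^{\bdy}_{[0,R]})$, by Theorem~\ref{thm:algebraic-haag-duality-boundary}. This is the same Haag-duality trick used in Theorem~\ref{FullyFaithful}. Your argument supplies a justification the paper omits, which matters because semisimplicity is not automatic in $\mathrm{Bim}(\Loc^{\bdy})$. It also shows that the semisimplicity clause in Definition~\ref{def:boundary-dhr-bimodule} is redundant.

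Like the paper's appeal to simplicity of $\Loc^{\bdy}$, your argument uses inputs specific to the fusion spin chain with boundary. So neither proof covers an arbitrary abstract spin chain with boundary, even though the subsection opens by fixing one.
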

\begin{proof}
First, we show $\DHR(\Loc^{\bdy}_\bullet)$ is closed under relative tensor product.
Suppose $M,N \in \DHR(\Loc^{\bdy}_\bullet)$, $\{ s_i \}$ is a projective basis of $M$ localized in $[0, r_1]$, $\{ t_j \}$ is a projective basis of $N$ localized in $[0, r_2]$, then $M \boxtimes_{\Loc^{\bdy}} N$ is still semisimple, and $\{ s_i \otimes t_j \}$ is a projective basis of $M \boxtimes_{\Loc^{\bdy}} N$ localized in $[0, \max (r_1,r_2)]$, so $M \boxtimes_{\Loc^{\bdy}} N$ is still a DHR bimodule.

To find the tensor unit, note that $\Loc^{\bdy}$ is a semisimple bimodule since $\Loc^{\bdy}$ is a simple $\text{C}^*$-algebra, and there is a natural isomorphism $\Loc^{\bdy} \otimes_{\Loc^{\bdy}} M \simeq M$ for $M \in \mathrm{Bim}(\Loc^{\bdy})$.

Finally, the obvious map $(L \boxtimes_{\Loc^{\bdy}} M) \boxtimes_{\Loc^{\bdy}} N \simeq L \boxtimes_{\Loc^{\bdy}} (M \boxtimes_{\Loc^{\bdy}} N)$ is a natural bimodule intertwiner, and it is straightforward to verify it satisfies the pentagon and triangle consistency equations.
\end{proof}

\subsection{Computation of $\DHR(\Loc^{\bdy}_\bullet)$}\label{ComputationDHRB}

In this subsection, we would prove that $(\CMdual)^\rev \simeq \DHR(\Loc^{\bdy}_\bullet) $ by showing that the standard action constructed in Section~\ref{StandardAction}, $\Std \colon (\CMdual)^\rev \to \mathrm{Bim}(\Loc^{\bdy})$ restricts to a monoidal equivalence $\ReaDHR \colon (\CMdual)^\rev \to \DHR(\Loc^{\bdy}_\bullet)$. 
The reader could refer to figure~\ref{fig:Std and ReaDHR} to review its definition.
On the other hand, as there is a canonical isomorphism $\WCW \simeq \CMdual$ after translating the left-module convention of \cite[Remark 7.12.5]{EGNO_2015} to our right-module convention, $\ReaDHR$ coincides with the functor $G \colon \WCW \to \mathrm{Bim}(|Q|)$ constructed in Theorem~\ref{thm:functor-G-fully-faithful}. This comparison with $G$ is made after that convention translation; the monoidal source of $\ReaDHR$ remains $(\CMdual)^\rev$. See Appendix~\ref{app:orientation:right-modules}.

\begin{theorem}
The image of $\Std$ lies in $\DHR(\Loc^{\bdy}_\bullet)$.
\end{theorem}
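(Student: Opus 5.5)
The two defining conditions of Definition~\ref{def:boundary-dhr-bimodule} have to be checked for $\Std(L)$, $L \in \CMdual$: we need a projective basis localized in some $[0,r]$, and we need $\Std(L)$ to be semisimple. I will get the basis from the finite pieces, as in the proof of Theorem~\ref{FullyFaithful}, and then get semisimplicity from full faithfulness together with rigidity.

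\textbf{Localized projective basis.} Choose $r$ with $l\otimes x^r$ containing every simple object of $\M$; this is possible because $x$ is a strong tensor generator and $\M$ is indecomposable. Since $L\otimes l\otimes x^r$ is a finite direct sum of simples, each of which is a summand of $l\otimes x^r$, semisimplicity gives finitely many $s_i\in \Std_r(L)=\M(l\otimes x^r, L\otimes l\otimes x^r)$ with $\sum_i s_i\circ s_i^* = 1_{L\otimes l\otimes x^r}$. Write $k(s_i)$ for their images in $\Std(L)$ under the equivariant inclusions $k^L_n$. For $\xi\in\Std_n(L)$ with $n\ge r$, we have
\[
\sum_i k(s_i)\langle k(s_i)\mid \xi\rangle=\sum_i ((s_i s_i^*)\otimes 1_{x^{n-r}})\circ\xi=\xi .
\]
These finite pieces are dense, and $\xi\mapsto \sum_i s_i\langle s_i\mid\xi\rangle$ is bounded, so the identity extends to all of $\Std(L)$. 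Hence $\{k(s_i)\}$ is a projective basis. For localization, take $c\in\Loc^{\bdy}_{(r,n]}$, so $c=1_{l\otimes x^r}\otimes c'$. Then $(1_L\otimes c)\circ(s_i\otimes 1)=s_i\otimes c'=(s_i\otimes 1)\circ c$ by the interchange law. Thus $c\triangleright s_i=s_i\triangleleft c$, and by density and continuity each $k(s_i)$ is localized in $[0,r]$.

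\textbf{Semisimplicity.} If $L$ is simple, Theorem~\ref{FullyFaithful} gives $\mathrm{End}(\Std(L))\cong\CMdual(L,L)\cong\mathbb{C}$. I then need this to imply that $\Std(L)$ is simple. The expected obstacle is showing that every sub-bimodule is complemented. For right-finite Hilbert bimodules this holds: $\mathcal{L}(\Std(L))$ is a $C^*$-algebra, and a closed sub-bimodule that is the range of a projection in the bimodule endomorphisms is an orthogonal summand. I plan to argue this via the dual. $\Std(L^\vee)$ is a conjugate of $\Std(L)$ because $\Std$ is monoidal and $\CMdual$ is rigid. A bimodule with a conjugate in $\mathrm{Bim}(\Loc^{\bdy})$ has finite-dimensional endomorphism algebra, and it decomposes according to the minimal projections of that algebra, as in Jones's treatment of the bulk case. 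For general $L=\bigoplus L_j$, additivity of $\Std$ makes $\Std(L)$ a finite direct sum of simples, hence semisimple.

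I expect the semisimplicity step to be the delicate one. The localization step is the same string-diagram computation as in Theorem~\ref{FullyFaithful}. If the duality argument proves awkward, I would instead show directly that any closed sub-bimodule $N$ is the image of a projection $p\in\mathrm{End}_{\mathrm{Bim}}(\Std(L))$. The idea is that $N$ is generated by its components $\langle k(s_i)\mid N\rangle$, and algebraic Haag duality (Theorem~\ref{thm:algebraic-haag-duality-boundary}) forces these to be localized. This would reduce the question to the finite-dimensional algebra $\Loc^{\bdy}_{[0,r]}$.
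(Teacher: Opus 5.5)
Your proposal is correct and is essentially the paper's proof. Semisimplicity comes from full faithfulness of $\Std$ (Theorem~\ref{FullyFaithful}), and the localized projective basis is a finite family $\{s_i\}$ with $\sum_i s_i\circ s_i^*=1$ in a single finite piece, pushed forward by $k^L_n$ and localized by the interchange law; the paper uses $\Std_1(L)$ because $l$ was chosen to contain every simple of $\M$. The complementation issue you flag does not need to be addressed: semisimplicity in $\mathrm{Bim}(\Loc^{\bdy})$ is meant in the $C^*$-categorical sense, so $\mathrm{End}(\Std(L))\cong\mathbb{C}$ for simple $L$ plus additivity of $\Std$ already suffices, and you can drop the duality and Haag-duality detours.
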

\begin{proof}
Since $\Std$ is fully faithful, $\Std(L) \in \DHR(\Loc^{\bdy}_\bullet)$ must be semisimple for any $L \in \CMdual$.
It remains to show $\Std(L)$ has a projective basis localized in some interval $[0,r]$.

We first show that $\Std_1(L)$ has a projective basis $S = \{ s_i \}$.
Since $\M$ is semisimple, we have direct sum decompositions $l \otimes x \simeq \oplus_{m \in \operatorname{Irr}(\M)} p_m m$ and $L \otimes l \otimes x \simeq \oplus_{m \in \operatorname{Irr}(\M)} q_m m$.
Now $\M(l \otimes x, L \otimes l \otimes x)$ is identified with the multi-matrix $\oplus_{m \in \operatorname{Irr}(\M)} \mathbb{M}(p_m, q_m)$, and $\M(L \otimes l \otimes x, L \otimes l \otimes x)$ is identified with the multi-matrix algebra $\oplus_{m \in \operatorname{Irr}(\M)} \mathbb{M}(q_m, q_m)$; from elementary linear algebra, we could find $\{ s_i \in \M(l \otimes x, L \otimes l \otimes x) \}$ such that
\[
    \sum_i\left|s_i\right\rangle_{\Loc^{\bdy}_{[0,1]}}\left\langle s_i\right| = \sum_i s_i \circ s_i^*=1_{L \otimes l \otimes x}.
\]
Next we show the image of $S$ is a projective basis of $\Std(L)$.
If $S$ is a projective basis of $\ReaDHR_n(L)$, then $k^L_n(S)$ is a projective basis of $\ReaDHR_{n+1}(L)$, since
\[
    \sum_i (s_i \otimes 1_x) \circ (s_i \otimes 1_x)^* = 1_{L \otimes l \otimes x^{n+1}}.
\]
By induction, the image of $S$ is a projective basis of all $\ReaDHR_{n}(L)$.
It follows the image of $S$ is a projective basis of the inductive limit $\Std(L)$.
It is obvious that each $s_i$ is localized in $[0,1]$.
\end{proof}

In light of this theorem, from now on we change our notation from $\Std$ to $\ReaDHR$ whenever we are dealing with DHR categories as the target.
Since we have shown that $\ReaDHR \colon (\CMdual)^{\rev} \to \DHR(\Loc^{\bdy}_\bullet)$ is fully faithful by Theorem~\ref{FullyFaithful}, it remains to prove it is essentially surjective.

\begin{theorem}
$\ReaDHR$ is essentially surjective.
\end{theorem}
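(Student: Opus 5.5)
The plan is to reduce essential surjectivity to the characterization of the replete image of $G$ in Theorem~\ref{ImageOfG}. This follows the route already used for algebraic Haag duality (Theorem~\ref{thm:algebraic-haag-duality-boundary}): present $\Loc^{\bdy}$ as a Q-system extension of a tail algebra.

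\textbf{Step 1: Loc as a Q-system extension.} Let $M\in\DHR(\Loc^{\bdy}_\bullet)$ have a projective basis $\{s_i\}$ localized in $[0,r]$. Enlarging $r$ if necessary, we may assume $r=N$ exceeds the constant $K$ of Theorem~\ref{thm:algebraic-haag-duality-boundary}. Set $A:=\Loc^{\bdy}_{(N,\infty)}$; this is a copy of the half-line algebra of the bulk chain, so it carries the AF action of $\C$ used in the proof of Haag duality. As there, take the Q-system $Q_N=[l\otimes x^N,l\otimes x^N]\in\C$. The bimodule isomorphism $\Loc^{\bdy}\simeq F(Q_N)$ from that proof should upgrade to a $*$-isomorphism $\Loc^{\bdy}\cong|Q_N|$ that is compatible with the inclusion $A\hookrightarrow\Loc^{\bdy}$ and with the conditional expectation onto $A$. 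To establish this, I would check it on finite pieces $\Loc^{\bdy}_{[0,N+k]}$ and pass to the inductive limit, invoking \cite[Construction 4.3]{Chen_2024_QSysCompletion}.

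\textbf{Step 2: the restriction is a multiple of the unit.} I would show that $\operatorname{Res}(M)\cong F(\mathbbm{1}_\C^{\oplus k})$ as $A$-bimodules. Each $s_i$ commutes with $A$, so $s_i$ is a central vector for $A$. Each $\langle s_i|s_j\rangle_{\Loc^{\bdy}}$ commutes with $A$, hence lies in $\Loc^{\bdy}_{[0,N]}$ by Haag duality; this is the finite-dimensional algebra $\operatorname{End}(l\otimes x^N)$, which corresponds to the unit summands of $Q_N$. Every $v\in M$ can be written $v=\sum_i s_i\langle s_i|v\rangle$. Combining this with the description of $\Loc^{\bdy}$ as $A$-bimodule, namely $F(Q_N)$, whose $A$-central part is exactly $\Loc^{\bdy}_{[0,N]}$, we should be able to rewrite $M$ as a finite sum of $A$-central vectors times $A$, and diagonalize the $E_A$-inner products. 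This would yield an $A$-bimodule isomorphism with a finite direct sum of copies of $A=F(\mathbbm{1})$. In particular $M\in\mathrm{Bim}(|Q_N|,\C)$.

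\textbf{Step 3: identification and the main obstacle.} Theorem~\ref{ImageOfG} then gives $M\cong G(B)$ for some $B\in{}_{Q_N}\C_{Q_N}$. Since $l\otimes x^N$ contains all simples of $\M$, the internal-end Morita theory \cite[Theorem 7.10.1]{EGNO_2015} identifies ${}_{Q_N}\C_{Q_N}$ with the dual $\CMdual$, up to the left/right convention translation recorded in Appendix~\ref{app:orientation:right-modules}. The remark opening this subsection says that $\ReaDHR$ agrees with $G$ under this identification, so $M\cong\ReaDHR(L)$ for the corresponding $L$.

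I expect Step 2 to be the main obstacle. The inner product on $\operatorname{Res}(M)$ is $E_A(\langle\cdot|\cdot\rangle)$ rather than the original one. One must therefore produce an honest $E_A$-orthonormal family of $A$-central generators, not merely a generating set. I would first try to do this on the finite pieces $M_n:=\sum_i \Loc^{\bdy}_{[0,n]}s_i$ using the multi-matrix structure, and then pass to the limit with Theorem~\ref{thm:tensor-product-inductive-limit}-style gluing. A secondary point is that the semisimplicity required in Definition~\ref{def:boundary-dhr-bimodule} is what guarantees that the resulting $B$ is a genuine (semisimple) object rather than a degenerate one.
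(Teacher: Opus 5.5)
Your Steps~1 and~3 match the paper's proof. Step~2, however, aims at a false statement. The restriction of a DHR bimodule to the tail $A=\Loc^{\bdy}_{(N,\infty)}$ is not a multiple of the unit in general.

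\textbf{Counterexample.} Take the unit bimodule $M=\Loc^{\bdy}$, with projective basis $\{1\}$. Then $\operatorname{Res}(M)=F(Q_N)\cong\bigoplus_{y\in\operatorname{Irr}(\C)}F(y)^{\oplus N_y}$, where $N_y=\dim\M(l\otimes x^N\otimes y,l\otimes x^N)$. This multiplicity is nonzero for every simple $y$ once $l\otimes x^N$ contains all simples of $\M$. For $y\not\cong\mathbbm{1}$, full faithfulness gives $\mathrm{Hom}(A,F(y))\cong\C(\mathbbm{1},y)=0$, so $F(y)$ has no nonzero $A$-central vectors. Concretely, in the Ising chain with symmetric boundary, $Z_NZ_{N+1}\in\Loc^{\bdy}$ lies outside the closed span of $\Loc^{\bdy}_{[0,N]}A$.

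\textbf{Where the argument breaks.} You infer from $v=\sum_i s_i\langle s_i|v\rangle$ that $M$ can be rewritten as $A$-central vectors times $A$. Haag duality does place $\langle s_i|s_j\rangle$ in $\Loc^{\bdy}_{[0,N]}$. But the coefficients $\langle s_i|v\rangle$ range over all of $\Loc^{\bdy}$, and its non-unit summands $F(y)$ survive in $M$. So the obstacle you anticipate, finding enough $E_A$-orthonormal $A$-central generators, cannot be overcome: such generators do not exist.

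\textbf{The fix.} Theorem~\ref{ImageOfG} only needs $\operatorname{Res}(M)\cong F(Y)$ for some $Y\in\C$, and this follows from an observation you already made.
\begin{enumerate}
\item Since each $s_i$ commutes with $A$, the map $T_i\colon F(Q_N)=\operatorname{Res}(\Loc^{\bdy})\to\operatorname{Res}(M)$, $b\mapsto s_i\triangleleft b$, is an $A$-bimodule intertwiner.
\item $T_i$ is adjointable for the $E_A$-inner products, with adjoint $w\mapsto\langle s_i|w\rangle$, and $\sum_iT_iT_i^*=\mathrm{id}$.
\item Hence $T^*=(T_i^*)_i$ embeds $\operatorname{Res}(M)$ isometrically as the range of the projection $T^*T\in\operatorname{End}(F(Q_N)^{\oplus n})\cong\operatorname{End}_{\C}(Q_N^{\oplus n})$.
\item Therefore $\operatorname{Res}(M)\cong F(Y)$ for a subobject $Y\subseteq Q_N^{\oplus n}$.
\end{enumerate}
This also removes your inner-product worry, and it needs no Haag duality. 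It is the paper's argument in different packaging. The paper decomposes $\Loc^{\bdy}=\bigoplus_y\ReaDHR(y)^{\oplus N_y}$ over $\Loc^{\bdy}_{(r,\infty)}$. It then notes that each $s_k\triangleleft\ReaDHR(y)_i$ is either zero or isomorphic to the simple $\ReaDHR(y)$. So every simple summand of the restriction is some $\ReaDHR(y)$, not necessarily $\ReaDHR(\mathbbm{1})$.
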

\begin{proof}
Take some $M \in \DHR(\Loc^{\bdy}_\bullet)$; we need to show $M$ is inside the replete image of $\ReaDHR$.
Since $M$ is a DHR bimodule, we have some $r \in \mathbb{Z}_+$ such that $M$ has a projective basis $\{ s_i \}$ localized in $[0,r]$.

Consider the action $G \colon \C \to \mathrm{Bim}(\Loc^{\bdy}_{(r, \infty)})$ defined by $G_n(y) = \C(x^n, y \otimes x^n)$; it is precisely the standard action for $\M = \C$.
Define $Q = [l \otimes x^r, l \otimes x^r]$, which is an algebra in $\C$.

First we show $\Loc^{\bdy} \simeq |Q|$.
By \cite[Theorem 4.6]{Chen_2024_InductiveLimitAF}, $|Q| \simeq \operatorname{colim} |Q|_n$, where $|Q|_n$ denotes the realization with respect to the functor $G_n$.
Therefore, it suffices to prove isomorphisms of the finite pieces $\Loc^{\bdy}_{[0,n+r]} \simeq |Q|_n$.

Note that we have
\[
    Q \simeq \bigoplus_{y \in \operatorname{Irr} (\C )} \M (l \otimes x^r \otimes y, l \otimes x^r) \otimes y
\]
where $\M (l \otimes x^r \otimes y, l \otimes x^r) \otimes y := y^{\oplus \operatorname{dim} \M (l \otimes x^r \otimes y, l \otimes x^r)}$.

Therefore
\begin{align*}
    |Q|_n &\simeq \bigoplus_{y \in \operatorname{Irr} (\C )} \M (l \otimes x^r \otimes y, l \otimes x^r) \otimes \ReaDHR(y) \\
    &= \bigoplus_{y \in \operatorname{Irr} (\C )} \M (l \otimes x^r \otimes y, l \otimes x^r) \otimes \C (x^n, y \otimes x^n)
\end{align*}
which is manifestly isomorphic to $\Loc^{\bdy}_{[0,n+r]}$.

By Theorem~\ref{ImageOfG}, it suffices to show that $M$ is within $\mathrm{Bim}(|Q|, \C)$, i.e., the restriction of $M$ to $\Loc^{\bdy}_{(r, \infty)}$ is within the image of standard action $\ReaDHR$.

Note that $\Loc^{\bdy} = |Q|$, and $|Q| = \Std(Q)$ as a $\Loc^{\bdy}_{(r, \infty)}$-bimodule, so $\Loc^{\bdy}$ decomposes as a $\Loc^{\bdy}_{(r, \infty)}$-bimodule in the following form:
\[
    \Loc^{\bdy} = \bigoplus_{y \in \operatorname{Irr} (\C )} \ReaDHR(y)^{\oplus N_y}, \quad N_y = \operatorname{dim} \M (l \otimes x^r \otimes y, l \otimes x^r).
\]
Next we shall decompose $M$ into simple $\Loc^{\bdy}_{(r, \infty)}$-bimodules.
For each $y$ and $1 \leq i \leq N_y$, let $\ReaDHR(y)_i$ denote the $i$-th copy of $\ReaDHR(y)$ in this decomposition.
\begin{lemma}
$s_k \vartriangleleft \ReaDHR(y)_i$ is a $\Loc^{\bdy}_{(r, \infty)}$-submodule of $M$.
\end{lemma}
\begin{proof}
First, it is closed under right action by $\Loc^{\bdy}_{(r, \infty)}$, since $\ReaDHR(y)_i$ is closed under right action by $\Loc^{\bdy}_{(r, \infty)}$.
It is also closed under left action by $\Loc^{\bdy}_{(r, \infty)}$, since $s_k$ is localized in $[0,r]$, so for any $b \in \Loc^{\bdy}_{(r, \infty)}$, $b \vartriangleright (s_k \vartriangleleft \ReaDHR(y)_i) = (s_k \vartriangleleft b) \vartriangleleft \ReaDHR(y)_i = s_k \vartriangleleft (b \vartriangleright \ReaDHR(y)_i)$, and $\ReaDHR(y)_i$ is closed under left action.
\end{proof}

\begin{corollary}
There is an algebraic module intertwiner $\alpha_{k,y,i} \colon \ReaDHR(y)_i \to s_k \vartriangleleft \ReaDHR(y)_i$ via $\xi \mapsto s_k \vartriangleleft \xi$.
Furthermore, $\ReaDHR(y)_i$ is simple since $\ReaDHR$ is fully faithful, so $\alpha_{k,y,i}$ must be either zero or an isomorphism.
\end{corollary}

Finally, since $\{ s_k \vartriangleleft \ReaDHR(y)_i \}_{k,y,i}$ generates $M$, it follows that every simple summand of $M$ must be some $\ReaDHR(y)$.
\end{proof}

Putting these theorems together, we arrive at the following theorem:
\LinkedRestatableTarget{target:boundarysymtft}\boundarysymtft

As an immediate corollary, we have the \emph{charge-operator duality}, namely we could not only construct a fusion spin chain with boundary $\Loc^{\bdy}_{\bullet}$ from a fusion category $\C$ encoding local quantum charges, but also recover $\C^{\rev}$ from $\Loc^{\bdy}_{\bullet}$:
\begin{corollary}
Given a fusion spin chain with boundary $\Loc^{\bdy}_{\bullet} \colon \mathcal{I} \to \text{C}^*\text{-alg}$ constructed from a fusion category $\C$ with boundary local quantum charge category $\M=\C$, denote $(\Loc^{\bdy})^-_\bullet \colon \mathcal{I}^+ \to \text{C}^*\text{-alg}$ the fusion spin chain with boundary obtained by restricting $\Loc^{\bdy}_\bullet$ to intervals in $\mathbb{Z}_{\geq 0}$.
Then $\DHR(\Loc^{\bdy}_\bullet) \simeq \C^{\rev}$.
\end{corollary}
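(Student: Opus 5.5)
This is the special case $\M=\C$ (regular right $\C$-module) of Theorem~\ref{thm:intro-boundary-symtft}, followed by an identification of the Morita dual. I will check that the hypotheses hold and then compute $(\C^\vee_{\C})^{\rev}$.

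\textbf{Step 1: hypotheses.} The regular module $\M=\C$, with right action by tensoring, is indecomposable and semisimple, since $\C$ is fusion. The boundary site carries $l\in\C$. As the paper notes after Definition~\ref{def:fusion-spin-chain-with-boundary}, for $\M=\C$ the net $\Loc^{\bdy}_\bullet$ is exactly the half-line net $\Loc_{[0,\infty)}$ of the chain without boundary. Restricting the bulk chain to $\mathbb{Z}_{\geq 0}$ is therefore an abstract spin chain with boundary of the type covered by Theorem~\ref{thm:intro-boundary-symtft}. That theorem gives a monoidal equivalence
\[
\ReaDHR\colon (\C^\vee_{\C})^{\rev}\to \DHR(\Loc^{\bdy}_\bullet).
\]

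\textbf{Step 2: identify $\C^\vee_{\C}$.} Here $\C^\vee_{\C}=\mathrm{Fun}_{\C^\rev}(\C,\C)$ is the category of right $\C$-module endofunctors of $\C$. I will use the standard equivalence
\[
\C\to \mathrm{Fun}_{\C^\rev}(\C,\C),\qquad y\mapsto y\otimes-.
\]
Left multiplication commutes with right multiplication via the associator. Conversely, any right module functor $L$ is isomorphic to $L(\mathbb 1)\otimes -$ through its module structure $L(\mathbb 1)\otimes c\cong L(\mathbb 1\otimes c)$.

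The main point is the monoidal orientation. The product in $\C^\vee_{\C}$ is composition of functors, and
\[
(y\otimes -)\circ(z\otimes -)\cong (y\otimes z)\otimes -,
\]
so the equivalence $\C\simeq\C^\vee_{\C}$ is monoidal, not reversed. This matches the paper's remark that $\C^\vee_\M\simeq\C$ when $\M=\C$, and agrees with the string-diagram convention in Section~\ref{StandardAction}, where $L$ is drawn as a strand on the left. It follows that
\[
(\C^\vee_{\C})^{\rev}\simeq \C^{\rev}.
\]

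\textbf{Step 3: compose.} Composing the two equivalences gives $\DHR(\Loc^{\bdy}_\bullet)\simeq\C^{\rev}$. Concretely, $\ReaDHR(y)_n=\C(l\otimes x^n, y\otimes l\otimes x^n)$, which recovers Jones' standard action up to the orientation reversal discussed in Appendix~\ref{app:orientation:dhr}.

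I expect no real obstacle. The only care needed is the bookkeeping of $\rev$ in Step 2, namely confirming that composition in $\mathrm{Fun}_{\C^\rev}(\C,\C)$ corresponds to the product in $\C$ and not in $\C^\rev$.
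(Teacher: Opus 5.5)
Your proposal is correct and follows the paper's own route. The paper presents this as an immediate corollary of Theorem~\ref{thm:intro-boundary-symtft} in the special case $\M=\C$, using the identification $\C^\vee_{\C}\simeq\C$ noted in Section~\ref{StandardAction}, so that $(\C^\vee_{\C})^{\rev}\simeq\C^{\rev}$. Your check that the product $L_2\otimes L_1=L_2\circ L_1$ corresponds to the product in $\C$, not $\C^{\rev}$, agrees with the convention the paper uses for $(\Std_n)^2$.
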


\begin{remark}
If given the net of local algebras $\Loc_\bullet$, we could recover $\mathrm{Rep}(G)$ but not $G$, as different finite groups could have the same representation category \cite{Etingof_2000_IsocategoricalGroups}. On the other hand, if our universe indeed has some global or gauge symmetry, the symmetry action could never be observed, as everything we could see must be symmetric; but the local quantum charges could be observed (e.g. electric charges), therefore more physical.
\end{remark}

In this section, we discuss the relation between $\DHR(\Loc^{\bulk}_\bullet)$ and $\DHR(\Loc^{\bdy}_\bullet)$, where $\Loc^{\bulk}_\bullet$ is the fusion spin chain without boundary built from $\C$, and $\Loc^{\bdy}_\bullet$ is the fusion spin chain with boundary constructed in Section~\ref{section: fusion spin chains with bdy}.
We shall prove that there is a natural action of $\DHR(\Loc^{\bulk}_\bullet)$ on $\DHR(\Loc^{\bdy}_\bullet)$ equipped with a central structure, and the action coincides with the action of $Z_1(\C^\rev)$ on $(\C_\M^\vee)^{\rev}$.

We build $\Loc^{\bulk}_\bullet$ following \cite{Jones_2024_DHR}: Let $x = \oplus_{a \in \mathrm{Irr}(\C)} a$.
We put one copy of $x$ on each site $n \in \mathbb{Z}$.
Local operators supported in some $I \in \mathcal{I}$ are just endomorphisms of the tensor product of all objects in $I$; that is, we set $\Loc^{\bulk}_{[a,b]} := \C(x^{b-a+1},x^{b-a+1})$.
Note that $\Loc^{\bulk}_{[a,b]}$ is a multi-matrix ${}^*$-algebra, so it is a \(C^*\)-algebra with the \(C^*\)-norm given by the operator norm.

\subsection{Definition of the bulk-boundary action}

Now we define the action of $\DHR(\Loc^{\bulk}_{\bullet})$ on $\DHR(\Loc^{\bdy}_\bullet)$, that is, a monoidal functor $U: \DHR(\Loc^{\bulk}_{\bullet}) \to \DHR(\Loc^{\bdy}_\bullet)$.
We take the viewpoint of $\M = \WC$, thus we have a canonical inclusion $\Loc^{\bdy} \hookrightarrow \Loc^{\bulk}$.
Concretely, the half-infinite spin chain with boundary is identified with the right half of the infinite spin chain: the boundary site is placed at $0$, and the bulk sites are $1,2,\dots$.
Thus an operator supported on $[0,n]$ in the boundary chain is viewed as an operator supported on the same interval inside the infinite chain.

We first define the object $U(X)$.
For $X \in \DHR(\Loc^{\bulk}_{\bullet})$, we take some projective basis $S = \{ s_n \}$ of $X$ localized in some $[p,q]$ such that $0 < p < q$.
Such a basis exists because a bulk DHR bimodule is mobile: by definition it admits localized projective bases in sufficiently long intervals, and we may choose such an interval entirely inside the right half-chain.
Then we define $U(X) = \Loc^{\bdy} S \Loc^{\bdy}$; this definition makes sense since $\Loc^{\bdy}$ is viewed as a subalgebra of $\Loc^{\bulk}$.
Intuitively, $U(X)$ is the same bulk non-local operator sector, but now only boundary observables are allowed to act on its left and right.
We shall now show $\Loc^{\bdy} S \Loc^{\bdy}$ is indeed a DHR bimodule.

\begin{proposition}
$S$ is a projective basis of $\Loc^{\bdy} S \Loc^{\bdy}$.
\end{proposition}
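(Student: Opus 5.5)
We need to show that $S$ is a projective basis of the right Hilbert $\Loc^{\bdy}$-module $\Loc^{\bdy} S \Loc^{\bdy}$. We read $\Loc^{\bdy} S \Loc^{\bdy}$ as the closed span of $\{a \vartriangleright s_n \vartriangleleft b\}$ inside $X$. Its inner product is the restriction of the $\Loc^{\bulk}$-valued inner product of $X$. So there are two things to check. First, $\langle \xi \mid \eta\rangle_X \in \Loc^{\bdy}$ for all $\xi,\eta$ in $\Loc^{\bdy} S \Loc^{\bdy}$, so that we really have a right Hilbert $\Loc^{\bdy}$-module. Second, the reconstruction identity $\xi=\sum_n s_n \langle s_n \mid \xi\rangle$ holds there.

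The reconstruction identity comes for free. Since $S$ is a projective basis of $X$, we have $\xi=\sum_n s_n\langle s_n\mid\xi\rangle_X$ for every $\xi\in X$. Once we know the coefficients $\langle s_n\mid\xi\rangle_X$ lie in $\Loc^{\bdy}$, each term $s_n\langle s_n\mid\xi\rangle$ lies in $S\Loc^{\bdy}\subseteq \Loc^{\bdy}S\Loc^{\bdy}$. As a byproduct, $\Loc^{\bdy} S\Loc^{\bdy} = S\Loc^{\bdy}$, which is finitely generated and hence closed. So everything reduces to the first point.

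The key step is therefore the following claim: for $a\in\Loc^{\bdy}$, we have $\langle s_m \mid a\vartriangleright s_n\rangle_X\in\Loc^{\bdy}$. Together with right $\Loc^{\bdy}$-linearity and the $*$-symmetry of the inner product, this covers all inner products of generators. We prove it by localization and Haag duality.

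Take $b\in \Loc^{\bulk}_{(-\infty,p)}$, which commutes with $a$ whenever $a$ is supported in $[0,\infty)$ away from that region. Using $b s_n=s_n b$, we compute $\langle s_m\mid a\, s_n\rangle b=\langle s_m\mid a\,s_n b\rangle=\langle s_m\mid a\,b\,s_n\rangle$. Commuting $b$ past $a$ requires care, because $a\in\Loc^{\bdy}$ is supported in $[0,\infty)$ while $b$ lives on $(-\infty,p)$, so they overlap on $[0,p)$. To handle this, we use instead $b\in\Loc^{\bulk}_{(-\infty,0)}$, which commutes with all of $\Loc^{\bdy}$ by locality (this requires $p>0$, which holds). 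This gives
\[
\langle s_m\mid a s_n\rangle b = \langle s_m\mid b\, a s_n\rangle = \langle b^* s_m\mid a s_n\rangle=\langle s_m b^*\mid a s_n\rangle = b\langle s_m\mid a s_n\rangle .
\]
Hence the inner product commutes with $\Loc^{\bulk}_{(-\infty,0)}$.

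We then need the half-line Haag-type statement $(\Loc^{\bulk}_{(-\infty,0)})'=\Loc^{\bdy}$, where $\Loc^{\bdy}$ is viewed via $\M={}_W\C$ as the right half-line algebra. We obtain it from Theorem~\ref{thm:algebraic-haag-duality} by the central-vector argument used in its proof: decompose $\Loc^{\bulk}$ as a $\Loc^{\bulk}_{(-\infty,0)}$-bimodule and keep only the tensor-unit summands. The same argument applies after approximating $a$ by local elements supported in $[0,N]$ and passing to norm limits.

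We expect this Haag-duality step to be the main obstacle. The paper's Theorem~\ref{thm:algebraic-haag-duality} is stated for two-sided complements of finite intervals, so the half-line version needs either a separate (but analogous) proof, or a limiting argument. In the limiting argument, we would use that the inner product commutes with $\Loc^{\bulk}_{(-\infty,0)}\otimes\Loc^{\bulk}_{(N,\infty)}$ modulo small errors, and then let $N\to\infty$ using a conditional-expectation approximation onto $\Loc^{\bulk}_{[0,N]}$. We would first try the direct bimodule-decomposition proof for the half-line, since it copies the existing argument verbatim.
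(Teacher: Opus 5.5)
Your reduction to the claim $\langle s_m\mid a\vartriangleright s_n\rangle\in\Loc^{\bdy}$ is correct, and so is your commutation computation against $\Loc^{\bulk}_{(-\infty,0)}$. However, the step that actually carries the proof is left open. The half-line statement $(\Loc^{\bulk}_{(-\infty,0)})'\cap\Loc^{\bulk}=\Loc^{\bdy}$ is not available in the paper, and it does not follow by running the central-vector argument of Theorem~\ref{thm:algebraic-haag-duality} verbatim. That argument needs the complement of the region to be a \emph{finite} interval $[a,b]$. Only then is $\Loc^{\bulk}\simeq(L^a\boxtimes R^b)(Q_{a,b})$ for the Q-system $Q_{a,b}=[m,m]$ of the finite object $m=x^{b-a+1}$, and only then can the central vectors be read off from the tensor-unit summand of $Q_{a,b}$. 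The same holds in Theorem~\ref{thm:algebraic-haag-duality-boundary}, where the complement is $[0,N]$. With complement $[0,\infty)$ there is no such finite object. $\Loc^{\bulk}$ is only an inductive limit of such bimodules over $\Loc^{\bulk}_{(-\infty,0)}$, so an additional limiting argument would be required, for instance trace-preserving conditional expectations onto $\Loc^{\bulk}_{(-\infty,N]}$, which are $\Loc^{\bulk}_{(-\infty,0)}$-bimodule maps. As written, your proof stops at its crux.

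The repair needs no half-line duality, and it is simpler than the ``modulo small errors'' scheme you sketch: approximate $a$ \emph{before} commuting. Let $a\in\Loc^{\bdy}_{[0,N]}$ and let $N'\geq\max(N,q)$ exceed the constant of Theorem~\ref{thm:algebraic-haag-duality}. Every $c\in\Loc^{\bulk}_{(-\infty,0)}\otimes\Loc^{\bulk}_{(N',\infty)}$ commutes with $a$ by locality, and with $s_m,s_n$ because $[p,q]\subset[0,N']$. Your computation then gives $\langle s_m\mid a s_n\rangle c=c\langle s_m\mid a s_n\rangle$ \emph{exactly}, and two-sided Haag duality yields $\langle s_m\mid a s_n\rangle\in\Loc^{\bulk}_{[0,N']}=\Loc^{\bdy}_{[0,N']}$. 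For general $a\in\Loc^{\bdy}$, use $\|\langle s_m\mid (a-a_N)s_n\rangle\|\leq\|s_m\|\,\|s_n\|\,\|a-a_N\|$ together with closedness of $\Loc^{\bdy}$ in $\Loc^{\bulk}$.

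Once repaired, your route genuinely differs from the paper's. The paper writes $a$ as sums of products of short-support factors and applies Haag duality on $[p-R,q+R]$ to the factors meeting $[p,q]$. It commutes the remaining factors past $s_i$ and inducts over the product. The repaired argument dispenses with the factorization, the induction, and the need to fit $[p-R,q+R]$ inside $[0,\infty)$. What it gives up is the information that inner products against short-support operators are localized near $[p,q]$, which the proposition does not need.
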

\begin{proof}
Note that in $X$, any element $a s_i b$ could be written as a linear combination of the projective basis elements by $a s_i b = \sum_j s_j \left\langle s_j|a s_i b \right\rangle = \sum_j s_j \left\langle s_j|a s_i\right\rangle b$.
Therefore, it suffices to show that for any $a \in \Loc^{\bdy}$, $\left\langle s_j|a s_i \right\rangle \in \Loc^{\bdy}$.

Since $\Loc^{\bdy}$ is boundedly generated, we could write $a$ as a linear combination of products $a = \sum_j a_{1,j} \dots a_{n,j}$, such that each $a_{i,j}$ is supported in an interval with length smaller than some constant $R$.
Then $a s_i b = \sum_j a_{1,j} \dots a_{n,j} s_i b$.
Therefore, it suffices to prove that for $a_{n,j}$ supported in an interval with length smaller than $R$, we have $\left\langle s_h|a_{n,j} s_i \right\rangle \in \Loc^{\bdy}$ for every basis element $s_h$; since $s_i = \sum_h s_h \left\langle s_h|a_{n,j} s_i \right\rangle$, by induction we could prove $a_{1,j} \dots a_{n,j} s_i b \in \Loc^{\bdy}$, thus $a s_i b \in \Loc^{\bdy}$.

If the support of $a$ has overlap with $[p,q]$, then the support of $a$ must be a subset of $[p-R,q+R]$.
Take any $c \in \Loc^{\bulk}_{(-\infty,p-R)} \cup \Loc^{\bulk}_{(q+R, \infty)}$.
Since $s_i, s_j, a$ are all localized in $[p-R, q+R]$, $c$ could commute through all of them.
Then we have
\begin{equation}
\begin{aligned}
\left\langle s_j|a s_i \right\rangle c &= \left\langle s_j|a s_i c \right\rangle \\
&= \left\langle s_j|a c s_i  \right\rangle = \left\langle s_j| c a s_i  \right\rangle \\
&= \left\langle c^* s_j| a s_i  \right\rangle = \left\langle s_j c^*| a s_i  \right\rangle \\
&= c \left\langle s_j | a s_i  \right\rangle.
\end{aligned}
\end{equation}
Since $\Loc^{\bulk}_\bullet$ satisfies algebraic Haag duality, this implies $\left\langle s_j|a s_i \right\rangle \in \Loc^{\bulk}_{[p-R,q+R]} = \Loc^{\bdy}_{[p-R,q+R]}$.

If the support of $a$ does not overlap with $[p,q]$,
\begin{equation}
\left\langle s_j|a s_i \right\rangle = \left\langle s_j |s_i a \right\rangle = \left\langle s_j|s_i \right\rangle a.
\end{equation}
We already know that $\left\langle s_i|s_j \right\rangle = \left\langle s_i|1_{\Loc^{\bulk}} \cdot s_j \right\rangle \in \Loc^{\bdy}_{[p,q]}$ by the previous case, therefore $\left\langle s_j|a s_i \right\rangle \in \Loc^{\bdy}_{[p-R,q+R]}$.
\end{proof}

\begin{proposition}
$\Loc^{\bdy} S \Loc^{\bdy}$ has a natural $\Loc^{\bdy}$ valued inner product, that is, for any $x,y \in \Loc^{\bdy} S \Loc^{\bdy}, \left\langle x|y \right\rangle_{\Loc^{\bulk}} \in \Loc^{\bdy}$.
Therefore, $\Loc^{\bdy} S \Loc^{\bdy}$ is indeed a Hilbert $\Loc^{\bdy}$-bimodule.
\end{proposition}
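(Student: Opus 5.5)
The plan is to use the previous proposition, which says that the localized projective basis $S=\{s_n\}$ of $X$ is also a projective basis of $\Loc^{\bdy} S \Loc^{\bdy}$. In proving it we showed that $\langle s_j | a s_i\rangle \in \Loc^{\bdy}$ for every $a\in\Loc^{\bdy}$ and all $i,j$. I will reduce the statement to this fact.

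First I check that $\Loc^{\bdy} S \Loc^{\bdy}$, meaning the closed span of the elements $a s_i b$, is preserved by the left and right $\Loc^{\bdy}$-actions. This is immediate from its definition. By sesquilinearity and continuity of $\langle\cdot|\cdot\rangle_{\Loc^{\bulk}}$, and because $\Loc^{\bdy}$ is norm-closed in $\Loc^{\bulk}$, it suffices to treat elementary vectors $x = a s_i b$ and $y = c s_j d$ with $a,b,c,d\in\Loc^{\bdy}$. For these, right linearity and the adjoint property of the left action give
\[
\langle a s_i b | c s_j d\rangle_{\Loc^{\bulk}} = b^*\,\langle s_i | a^* c\, s_j\rangle_{\Loc^{\bulk}}\, d .
\]
Since $a^*c\in\Loc^{\bdy}$, the middle factor lies in $\Loc^{\bdy}$ by the claim established in the previous proof. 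Hence the whole expression lies in $\Loc^{\bdy}$.

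Next I verify the Hilbert module axioms. Conjugate symmetry, right $\Loc^{\bdy}$-linearity and positivity are inherited from the $\Loc^{\bulk}$-valued inner product on $X$. This works because positivity of an element of $\Loc^{\bdy}$ is the same whether it is viewed in $\Loc^{\bdy}$ or in $\Loc^{\bulk}$, by spectral permanence for unital $C^*$-subalgebras. Completeness follows from two observations. The norm $\|\langle x|x\rangle\|^{1/2}$ coincides with the norm of $X$, so the closed subspace is complete. Alternatively, completeness follows from $S$ being a finite projective basis: $x=\sum_i s_i\langle s_i|x\rangle$ exhibits the module as the image of a projection in $(\Loc^{\bdy})^{|S|}$, which is automatically complete. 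The left action of $\Loc^{\bdy}$ is by adjointable operators, with adjoint given by the action of $a^*$, and this is inherited from $X$. So we get a unital $*$-homomorphism $\Loc^{\bdy}\to\mathcal{L}(\Loc^{\bdy}S\Loc^{\bdy})$.

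The main obstacle is the step already handled in the previous proposition, namely that $\langle s_j|a s_i\rangle\in\Loc^{\bdy}$ for arbitrary, non-local $a\in\Loc^{\bdy}$. That argument was organised by splitting $a$ into boundedly supported factors and applying algebraic Haag duality of $\Loc^{\bulk}_\bullet$ (Theorem~\ref{thm:algebraic-haag-duality}). I will cite it rather than redo it. One point needs care in this citation: the expansion $a s_i = \sum_h s_h \langle s_h | a s_i\rangle$ is iterated factor by factor, and then a norm limit is taken in $a$. This is legitimate because $a\mapsto\langle s_j|a s_i\rangle$ is norm-continuous and $\Loc^{\bdy}$ is closed.
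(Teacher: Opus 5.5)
Your proof is correct and follows essentially the paper's route. Both reduce the statement to the fact, established in the preceding proposition, that $\langle s_j | a s_i\rangle \in \Loc^{\bdy}$ for all $a\in\Loc^{\bdy}$. The paper packages this as the expansion $x=\sum_i s_i b_i$ with $b_i\in\Loc^{\bdy}$ together with $\langle s_m|s_n\rangle\in\Loc^{\bdy}$, whereas you apply it directly to $\langle s_i| a^*c\, s_j\rangle$ for elementary vectors. Your explicit checks of positivity via spectral permanence, of completeness, and of the norm-continuity needed to pass from local to arbitrary $a$ supply details the paper leaves implicit.
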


\begin{proof}
First note that $\left\langle s_m|s_n \right\rangle \in \Loc^{\bdy}$ by the previous proposition.
Then since $\{ s_n \}$ is a projective basis, any $x \in \Loc^{\bdy} S \Loc^{\bdy}$ could be written as $\sum_i s_i b_i$, where $b_i \in \Loc^{\bdy}$.
Therefore, it suffices to verify $\langle s_m b_i |  s_n b_j \rangle \in \Loc^{\bdy}$; evidently $\langle s_m b_i |  s_n b_j \rangle = b_i^* \left\langle s_m|s_n \right\rangle b_j \in \Loc^{\bdy}$.
\end{proof}

\begin{proposition}
The definition $U(X)$ is independent of the choice of $S$.
\end{proposition}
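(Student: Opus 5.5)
We need to show that if $S=\{s_n\}$ and $T=\{t_m\}$ are two projective bases of $X$, localized in intervals $[p,q]$ and $[p',q']$ inside the right half-chain, then $\Loc^{\bdy} S \Loc^{\bdy} = \Loc^{\bdy} T \Loc^{\bdy}$ as subspaces of $X$. By symmetry it suffices to prove one inclusion, and since $\Loc^{\bdy} T \Loc^{\bdy}$ is a $\Loc^{\bdy}$-bimodule, it is enough to show $S \subseteq \Loc^{\bdy} T \Loc^{\bdy}$. As a Hilbert bimodule, $\Loc^{\bdy} T\Loc^{\bdy}$ is closed, and equality of the Hilbert bimodule structures then follows, since both carry the restricted $\Loc^{\bulk}$-valued inner product.

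The key step is to expand each $s_n$ in the basis $T$: $s_n=\sum_m t_m\langle t_m\mid s_n\rangle$. We then show $\langle t_m\mid s_n\rangle\in\Loc^{\bdy}$, using the same Haag duality argument as in the proposition showing that $S$ is a projective basis of $\Loc^{\bdy} S\Loc^{\bdy}$. Take $c\in \Loc^{\bulk}_{(-\infty,r)}\cup\Loc^{\bulk}_{(R,\infty)}$, where $[r,R]\supseteq[p,q]\cup[p',q']$ with $r>0$. Both $s_n$ and $t_m$ commute with $c$, so
\[
\langle t_m\mid s_n\rangle c=\langle t_m\mid s_n c\rangle=\langle t_m\mid c s_n\rangle=\langle c^* t_m\mid s_n\rangle=\langle t_m c^*\mid s_n\rangle=c\langle t_m\mid s_n\rangle .
\]
Enlarging $[r,R]$ if necessary so that its length exceeds the Haag duality constant of Theorem~\ref{thm:algebraic-haag-duality} (while keeping $r>0$), algebraic Haag duality gives $\langle t_m\mid s_n\rangle\in\Loc^{\bulk}_{[r,R]}$. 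Since $r>0$, this algebra equals $\Loc^{\bdy}_{[r,R]}\subseteq\Loc^{\bdy}$ under the identification $\M={}_W\C$ used to embed $\Loc^{\bdy}\hookrightarrow\Loc^{\bulk}$. Hence $s_n\in T\Loc^{\bdy}$, and the inclusion follows.

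The main obstacle is the identification $\Loc^{\bulk}_{[r,R]}=\Loc^{\bdy}_{[r,R]}$ for intervals away from the boundary site. This is where we need the chosen embedding to match bulk sites $n\ge1$ with the same object $x$ and the same local algebras $\C(x^k,x^k)$. The embedding is defined this way, so we only need to note it. A secondary point is that the two intervals must overlap a common long interval in the right half-chain; choosing $[r,R]$ as their convex hull, enlarged to the right, handles this. Finally, we would remark that the resulting identification $\Loc^{\bdy}S\Loc^{\bdy}=\Loc^{\bdy}T\Loc^{\bdy}$ is literally an equality of subspaces of $X$, so no choice-dependent isomorphism is involved. This makes $U$ well defined on objects and immediately gives the action on morphisms by restricting bimodule maps of $X$.
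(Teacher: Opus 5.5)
Your proof is correct and follows the paper's own argument. You expand one localized basis in the other, show by the same inner-product manipulation that the coefficients commute with the complement algebra, and use algebraic Haag duality to place them in $\Loc^{\bulk}_{[r,R]}\subseteq\Loc^{\bdy}$. Your version is in fact a bit more careful than the paper's: you write the coefficients in the correct order as $\langle t_m\mid s_n\rangle$, and you enlarge the interval (keeping $r>0$) so that the length hypothesis of Haag duality is actually satisfied.
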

\begin{proof}
Consider two projective bases of $X$, $S_1 = \{ s_m \}, S_2 = \{ t_n \}$, localized in $I_1, I_2$ respectively, where $I_1, I_2 \in \mathcal{I}^+$.
Take the smallest interval $I \in \mathcal{I}^+$ containing $I_1 \cup I_2$.
Note that $s_m = \sum_n t_n \left\langle s_m|t_n \right\rangle$.
We shall show $a \left\langle s_m|t_n \right\rangle = \left\langle s_m|t_n \right\rangle a$ for any $a \in \Loc^{\bulk}_{I^c}$:
\begin{equation}
\begin{aligned}
\left\langle s_i|t_j \right\rangle a &= \left\langle s_i|t_j a \right\rangle = \left\langle s_i|a t_j \right\rangle \\
&= \left\langle a^* s_i|t_j \right\rangle = \left\langle s_i a^*|t_j \right\rangle \\
&= a\left\langle s_i|t_j \right\rangle.
\end{aligned}
\end{equation}
Then by algebraic Haag duality, $\left\langle s_i|t_j \right\rangle \in \Loc^{\bulk}_{I} \subset \Loc^{\bdy}$, so indeed we could use operators in $\Loc^{\bdy}$ to generate $s_m$ from $\{ t_n \}$.
\end{proof}

Next we should define how the functor acts on morphisms.
For some bimodule intertwiner $f: X \to Y$, we take a projective basis of $X$ localized in some $[p,q]$ and denote it as $\{ s_i \}$; define $U(f): U(X) \to U(Y)$ to send $s_i$ to $f(s_i)$.
Obviously it extends to a bimodule intertwiner by bilinearity, and the definition is functorial.

The independence of the chosen localized basis is important here: replacing $S$ by another localized basis only changes the generators by coefficients in $\Loc^{\bdy}$, hence gives the same sub-bimodule and the same morphism after extension by bilinearity.

Finally, we should prove the associativity of the action, that is, there are natural isomorphisms $U(X \otimes Y) \simeq U(X) \otimes U(Y)$.
Pick a projective basis $\{ s_i \}$ of $X$ and a projective basis $\{ t_j \}$ of $Y$, both localized in some $[p,q]$.
Note that $\{ s_i \}$ is a projective basis of $U(X)$, and $\{ t_j \}$ is a projective basis of $U(Y)$, therefore $\{ s_i \boxtimes t_j\}$ is a projective basis of $U(X) \otimes U(Y)$.
On the other hand, $\{ s_i \boxtimes t_j\}$ is a projective basis of $X \otimes Y$, therefore a projective basis of $U(X \otimes Y)$.
Then we have a bimodule isomorphism $U(X \otimes Y) \to U(X) \otimes U(Y)$ defined by $s_i \boxtimes t_j \mapsto s_i \boxtimes t_j$; it is easy to verify the definition is natural and satisfies the pentagon equation.

\subsection{Computation of the module structure}\label{ComputationModStructure}

Knowing that $\DHR(\Loc^{\bulk}_{\bullet}) \simeq Z_1(\C^\rev), \DHR(\Loc^{\bdy}_\bullet) \simeq (\CMdual)^{\rev}$, we compare the action of $\DHR(\Loc^{\bulk}_{\bullet})$ on $\DHR(\Loc^{\bdy}_\bullet)$ to the action of $Z_1(\C^\rev)$ on $(\CMdual)^{\rev}$. In this section we would show that these two central actions are equivalent.
More precisely, we consider the following diagram, which is commutative (up to a monoidal natural isomorphism $\alpha$):

\[\input{diagrams/bulk-boundary-action-3473.tex}\]

The goal of this section is to construct the natural isomorphism $\alpha: U \circ \ReaDHR^{\bulk}\Rightarrow \ReaDHR^{\bdy} \circ K$.

First we recall some relevant definitions and facts.
There is a canonical action $K: Z_1(\C^\rev) \to (\CMdual)^{\rev}$, defined as follows:
Given $(z, \sigma) \in Z_1(\C^\rev)$, we define the functor $K_{z, \sigma}: \M \to \M$ to be $m \mapsto m \otimes z$ on objects, and $f \mapsto f \otimes \mathrm{id}_z$ on morphisms.
$K_{z, \sigma}$ is a module functor since for $y \in \C$, we have the isomorphism $\beta_{m,y}: K_{z, \sigma}(m \otimes y) \to K_{z, \sigma}(m) \otimes y$ given by $\mathrm{id}_m \otimes \sigma_{z,y}$, where $\sigma_{z,y}:y\otimes z\to z\otimes y$ is the half-braiding in $\C^\rev$.
The orientation is summarized by the following diagram:
\[
\input{diagrams/canonical-action-3492.tex}
\]
The action $\ReaDHR^{\bulk}: Z_1(\C^\rev) \to \DHR(\Loc^{\bulk}_{\bullet})$ is again an AF-action, so we review the definitions of finite pieces $\ReaDHR_n^\bulk$ \cite[Page 35]{Jones_2024_DHR}: We define $\ReaDHR_n^{\bulk}(z, \sigma) = \C(x^{n}, x^{n} \otimes z \otimes x^n)$, with the inner product $\left\langle \xi|\eta \right\rangle = \xi^\dagger \circ \eta$.
The left and right action by $\Loc^{\bulk}_{[1,2n+1]}$ is defined as
\begin{equation}
a\triangleright \xi \triangleleft b:= ((\sigma_{z, x^{n}})^{-1} \otimes \mathrm{id}_{x^n}) \circ (\mathrm{id}_z \otimes a) \circ (\sigma_{z, x^{n}} \otimes \mathrm{id}_{x^n}) \circ \xi \circ b.
\end{equation}

Next we construct $\alpha$, which boils down to defining $\Loc^{\bdy}$-bimodule isomorphisms $\alpha_{z, \sigma}: U \circ \ReaDHR^{{\bulk}}(z, \sigma) \to \ReaDHR^{{\bdy}} \circ K_{z,\sigma}$.
Since both are AF actions, it suffices to prove that their finite pieces are isomorphic, i.e. construct $\alpha_{(z, \sigma),n}: U \circ \ReaDHR^{{\bulk},n}(z, \sigma) \to \ReaDHR^{{\bdy},2n} \circ K_{z,\sigma}$.
We may take $U \circ \ReaDHR^{{\bulk},n}(z, \sigma) = \M(m \otimes x^{n}, m \otimes x^{n} \otimes z \otimes x^n)$.
Recall that $\ReaDHR^{{\bdy},2n} \circ K_{z,\sigma} = \M(m \otimes x^{n}, m \otimes z \otimes x^{2n})$, and the action is defined by
\begin{equation}
a\triangleright \xi \triangleleft b:= (\mathrm{id}_m \otimes \sigma_{z, x^{2n}}) \circ (a \otimes \mathrm{id}_z) \circ (\mathrm{id}_m \otimes (\sigma_{z, x^{2n}})^{-1}) \circ \xi \circ b.
\end{equation}

Then apparently, $\alpha_{z, \sigma}: U \circ \ReaDHR^{{\bulk}}(z, \sigma) \to \ReaDHR^{{\bdy}} \circ K_{z,\sigma}$ may be constructed as $\xi \mapsto (\mathrm{id}_{m} \otimes \sigma_{z, x^n} \otimes \mathrm{id}_{x^n})\circ \xi$.
$\alpha_{z, \sigma}$ is invertible since $\sigma$ is; it is straightforward to verify it is a bimodule intertwiner.

To summarize, we proved the following theorem in this section:

\LinkedRestatableTarget{target:symtftbulkbdy}\symtftbulkbdy

%----

\section{Bulk-boundary correspondence of 1D phases}
\label{section: bulk-boundary correspondence}

In this section, we discuss the bulk-boundary correspondence of (1+1)D gapped quantum phases with symmetry.

We are interested in gapped phases with symmetry. In this situation, topological defects do not capture the complete data of the phase; indeed, the category of boundary conditions of any (1+1)D gapped phase is merely a linear 1-category, whose center only reflects the number of distinct simple boundary conditions.
What's missing? We have already seen in Section~\ref{subsection: non-local ops to hilb bimods} that the bulk SymTFT plays a central role in characterizing the bulk phase.
The physical intuition is, the operator algebra has nontrivial structure due to the presence of symmetry, which must be incorporated in the categorical description of the phase. It turns out that the SymTFT is the correct categorical encapsulation of symmetry.

In (1+1)D, there are several works extracting the enriched fusion category data from lattice models of gapped phase with abelian group symmetry and studying the bulk-boundary relations \cite{Kong_2022_QL1,Xu_2024_1DPhaseAbelianSym}.
The enriched background with input symmetry $G$ is $Z_1(\Rep(G))$ interpreted as ``sectors of symmetric non-local operators", which is unfortunately ill-defined.
In \cite{LanZhou_2024_QuantumCurrent}, the authors studied $Q$-systems models of (1+1)D gapped phases with symmetry. They proposed a complete categorical description of these phases by the enriched fusion category ${}^{Z_1(\C)} ({}_Q \C_Q)^\rev$, where $\C$ is the category of local quantum charges, $({}_Q \C_Q)^\rev$ is the category of topological excitations of the phase, and $Z_1(\C)$ is interpreted as the category of so-called quantum currents. However, the bulk-boundary relation is not examined closely in their work.

In this section, we first review the discussion of bulk-boundary relations in the Ising chain in \cite{Kong_2022_1DEnrichedCat}, which provides essential physical intuition.
We then rigorously establish the correspondence for general gapped phases using the language of operator algebras and Hilbert bimodules. We would define the action of the boundary DHR category $\DHR(\Loc^{\bdy}_\bullet)$ on the category of boundary conditions $\BCond$ as tensor product of Hilbert bimodules.
Then we show that this action is equivalent to the canonical action of $(\CMdual)^{\rev}$ on $\MQ^{\op}$, with the bending isomorphism identifying it with the action of $(\CMdual)^{\op}$, and we prove that the bulk phase is the center of the enriched category ${}^{(\CMdual)^{\rev}} \MQ^{\op}$.

\subsection{First example: The Ising chain}
\label{subsection: first example the ising chain}
We only review the discussion of symmetric boundary for brevity; the story for the symmetry breaking boundary is quite similar. 
Note that the discussion uses the language of non-local operators, which is unrigorous but intuitive. 
As in Section~\ref{subsection: non-local ops to hilb bimods}, this discussion could be rigorously reformulated as QCAs and Hilbert bimodules.

\paragraph{The trivial phase}
Obviously there are two simple boundary conditions of the phase: One has trivial \(Z_2\) charge, and the other has a nontrivial \(Z_2\) charge.
Namely, the first boundary is given by the Hamiltonian
\begin{equation}
H_+ = -\sum_{i \geq 0} X_i,
\end{equation}
and the ground state is
\begin{equation}
|\psi_+ \rangle=|+++ \dots \rangle.
\end{equation}
The second boundary is given by the Hamiltonian
\begin{equation}
H_- = X_0 -\sum_{i >0} X_i,
\end{equation}
and the ground state has a \(Z_2\) charge trapped on the boundary:
\begin{equation}
|\psi_- \rangle=|-++ \dots \rangle.
\end{equation}
Therefore, the category of boundary conditions \(\BCond\) has two simple objects: \(b_+\) corresponds to \(| \psi_+ \rangle\), and \(b_-\) corresponds to \(| \psi_- \rangle\).
Now we consider how the boundary SymTFT acts on the category of boundary conditions.
The action of \(Z_i\) induces a spin flip, which exchanges the two boundary conditions.
Therefore, \(\BCond\) is a module category over \(\Rep(\Z_2)\), with rules of the action given by
\begin{equation}
\begin{aligned}
& b_+  \otimes\one = b_+, \quad  b_-\otimes \one = b_- \\
& b_+ \otimes Z_i = b_-, \quad  b_-\otimes Z_i = b_+ .
\end{aligned}
\end{equation}
This is equivalent to the indecomposable module category \(\Rep(\Z_2)\) (as a linear category) over \(\Rep(\Z_2)\).

\paragraph{The symmetry-breaking phase}
There is only one simple boundary condition of the phase, which is given by the Hamiltonian
\begin{equation}
H_+ = -\sum_{i \geq 0} Z_i Z_{i+1},
\end{equation}
and the unique symmetry-invariant ground state is
\begin{equation}
|\psi\rangle = \frac{1}{\sqrt 2} \left( |000 \cdots\rangle + |111 \cdots\rangle \right).
\end{equation}
Therefore, the category of boundary conditions \(\BCond\) has only one simple object \(b\).
Now we consider how the boundary SymTFT acts on the category of boundary conditions.
The action of \(Z_i\) induces a relative phase of \(\pi\) between \(|000 \cdots\rangle\) and \(|111 \cdots\rangle\), but this relative phase is again unphysical, as it cannot be detected by any symmetric operator.
Therefore, \(\BCond\) is a module category over \(\Rep(\Z_2)\), with rules of the action given by
\begin{equation}
b \otimes \one = b, \quad b \otimes Z_i = b .
\end{equation}
This is isomorphic to the indecomposable module category \(\ve\) (as a linear category) over \(\Rep(\Z_2)\).

\subsection{Action of the boundary DHR category on boundary conditions}
\label{subsec:action of bdyDHR on BCond}

In general, objects of the SymTFT should be formulated by DHR bimodules instead of non-local operators.
As in Section~\ref{subsection: non-local ops to hilb bimods}, non-local operators in the previous subsection could be translated into DHR bimodules, and applying non-local operators to quantum states would be faithfully translated into tensor product of Hilbert bimodules with left modules.
This indicates we should consider the action of the boundary DHR category, \(\DHR(\Loc^{\bdy}_\bullet)\), on the category of boundary conditions, \(\BCond\), by tensor product of Hilbert bimodules:
\begin{equation}
\otimes: \DHR(\Loc^{\bdy}_\bullet) \times \BCond \to \BCond, \quad (M,N) \mapsto M \boxtimes_{\Loc^{\bdy}} N.
\end{equation}

We would like to compare this action to categorical data.
Recall that \(\DHR(\Loc^{\bdy}_\bullet) \simeq (\C_\M^\vee)^{\rev}, \BCond \simeq \MQ^{\op}\), and there is a canonical action of \(\C_\M^\vee\) on \(\MQ\) by applying module functors to objects in \(\M\):
\begin{equation}
\otimes: \C_\M^\vee \times \MQ \to \MQ, \quad (L,K) \mapsto L(K)
\end{equation}

However, we need to somehow adjust \((\C_\M^\vee)^{\rev}\) to \((\C_\M^\vee)^{\op}\), since only the latter could act on \(\MQ^{\op}\) by the canonical action.
This problem could be visualized as follows: If we naively consider \(\ReaDHR(L) \boxtimes \ReaBCond(K)\), the result has an unwanted \(L\) leg on the top, which cannot be compared to \(\ReaBCond (L(K))\), which has the leg \(L\) on the bottom instead of on the top.

Obviously we would like to bend the leg \(L\) to the downside.
Fortunately there is a canonical way to do this: It is well-known that when \(\M\) is an indecomposable semisimple module category, \(\C_\M^\vee\) is a fusion category \cite[Theorem 9.3.2]{EGNO_2015}, which means any \(L \in \C_\M^\vee\) has a left dual \(L^\vee\).
This supplies us with a canonical isomorphism, called the \emph{bending isomorphism}, as follows:
\begin{equation}\label{eq: bending isomorphism}
\M(K_1, L(K_2)) \simeq \M(L^\vee(K_1), K_2).
\end{equation}

This isomorphism is natural in both \(K_1\) and \(K_2\). 
Then we can establish the isomorphism of \(\ReaDHR(L) \boxtimes \ReaBCond(K)\) and \(\ReaBCond(L^\vee(K))\), as shown in \cref{fig:bending isomorphism}.

\begin{figure}
    \centering

\input{diagrams/bending-isomorphism-3635.tex}

    \caption{The bending isomorphism allows comparing \(\ReaDHR(L) \boxtimes \ReaBCond(K)\) to \(\ReaBCond(L^\vee(K))\).}
    \label{fig:bending isomorphism}

\end{figure}

This also fixes the problems on the categorical side, as taking the left dual is a functor \((-)^\vee: (\CMdual)^{\rev} \to {(\CMdual)}^{\op}\).
Thus the induced action of \(L\in(\CMdual)^\rev\) on \(K\in\MQ^\op\) is represented objectwise by \(L\cdot K:=L^\vee(K)\).

To summarize, we obtain an action of \((\C_\M^\vee)^{\rev}\) on \(\BCond\) defined as follows:
\[
\begin{aligned}
    (\C_\M^\vee)^{\rev} \times \BCond & \xrightarrow{\ReaDHR \times \mathrm{id}} \DHR(\Loc^{\bdy}_\bullet) \times \BCond \\
    & \xrightarrow{\boxtimes_{\Loc^{\bdy}}} \BCond.
\end{aligned}
\]

Now both \(\BCond\) and \(\MQ^{\op}\) are left module categories over \((\C_\M^\vee)^{\rev}\), so they can be compared.
In this section, we would show that these two modules are equivalent; that is, \(\ReaBCond: \MQ^{\op} \to \BCond\) is not only an equivalence of categories, but an \emph{equivalence of module categories} \cite[Section 7.2]{EGNO_2015}.

We should first promote \(\ReaBCond\) to a module functor. 
In formulas involving objects, we freely identify the objects of \(\MQ\) and \(\MQ^\op\); morphisms are reversed only when the category structure is used.
Recall that a (left) module functor between two module categories $\mathcal{M}$ and $\mathcal{N}$ over the same tensor category $\mathcal{D}$ is a pair $(F, \beta)$, where $F: \mathcal{M} \rightarrow \mathcal{N}$ is a functor and $\beta_{X, M}: F(X \otimes M) \rightarrow X \otimes F(M)$ is a family of natural isomorphism satisfying the pentagon and triangle equations. Therefore we need to find
\begin{equation}
\label{eq.bdy_DHR_action_on_BC}
\beta_{L,K}: \ReaBCond (L^\vee(K)) \simeq \ReaDHR(L) \boxtimes_{\Loc^{\bdy}} \ReaBCond(K), \quad L \in \C_\M^\vee, K \in \MQ
\end{equation}
satisfying consistency conditions.

Note that both \(\ReaBCond (L^\vee(K))\) and \(\ReaDHR(L) \boxtimes_{\Loc^{\bdy}} \ReaBCond(K)\) arise from AF actions.
Therefore, it suffices to find natural isomorphisms for the finite-dimensional building block:
\begin{equation}
\beta_{L,K,n}: \ReaBCond_n (L^\vee(K)) \to \ReaDHR_n(L) \boxtimes_{\Loc_n} \ReaBCond_n(K), \quad L \in \CMdual, K \in \MQ;
\end{equation}
Obviously the bending isomorphism defined in \eqref{eq: bending isomorphism} does the job.
Taking the inductive limit gives \(\beta_{L,K}\), which finishes the promotion of \(\ReaBCond\) to a module functor.

Note that we may treat both \((\CMdual)^{\rev}\) and \(\DHR(\Loc^{\bdy}_\bullet)\) as strict monoidal categories, then all consistency isomorphism of module actions become trivial. Thus \(\beta_{L,K}\) satisfy the consistency equations required.

We have proved in Section~\ref{subsec: ReaBCond is an equivalence of categories} that \(\ReaBCond\) is an equivalence, which finishes the proof that \(\ReaBCond\) is an equivalence of left module categories over \((\CMdual)^{\rev}\).
To summarize, we obtained the following theorem:

\LinkedRestatableTarget{target:bulkbdyaction}\bulkbdyaction
\begin{proof}
The bending isomorphism constructed above promotes \(\ReaBCond\) to a module functor for the \((\C_\M^\vee)^{\rev}\)-actions. Since \(\ReaBCond\) is an equivalence by Theorem~\ref{thm:intro-bcond-classification} and \(\ReaDHR\) is a monoidal equivalence by Theorem~\ref{thm:intro-boundary-symtft}, the diagram above identifies the operator-algebraic action with the canonical categorical action.
\end{proof}

The theorem could be expressed by the following commutative diagram of categories:
\[\input{diagrams/module-equivalence-3819.tex}\]

\subsection{Enriched center and bulk-boundary correspondence}

With all necessary preparations done, we could give the full description of a 1D gapped symmetric phase with gapped boundary, as shown in Figure~\ref{fig:1D phase complete description}. The full description consists of four elements: the category of bulk topological defects, \(\mathrm{TopDef}\); the category of boundary conditions, \(\BCond\); the boundary SymTFT realized as the boundary DHR category; the bulk SymTFT realized as the (bulk) DHR category \cite{Jones_2024_DHR}.

\begin{figure}
    \centering  

\input{diagrams/phase-description-3843.tex}
    
    \caption{Complete description of 1D symmetric gapped phases with gapped boundary.}
    \label{fig:1D phase complete description}
  
\end{figure}

The complete categorical description of the boundary is the collection of macroscopic observables on the boundary. In our framework of operator algebras, for bulk charge category $\cC$, fixing a boundary local quantum charge category $\cM$, all possible boundary conditions form the category $\BC$. This is not everything that happens on the boundary. Although local symmetric operators cannot change the type of boundary condition, boundary $\DHR$ bimodules can serve as the sectors of boundary changing operators as illustrated in Figure~\ref{fig:bending isomorphism}.  $\BC$ as the $\DHR(\Loc_\bullet^\bdy)$ module, with action~\eqref{eq.bdy_DHR_action_on_BC}, is then the complete macroscopic observables on the boundary. Such data is equivalent to the enriched category
\begin{equation}
{}^{\DHR(\Loc_\bullet^\bdy)}\BC,
\end{equation}
given by the so-called \emph{canonical construction}.
The hom-object is defined as the internal hom of the corresponding module action. For a review on relevant concepts in enriched category theory, see Appendix~\ref{app.enriched_cat}.

Due to the equivalences $\DHR(\Loc_\bullet^\bdy)\simeq(\cC_\M^\vee)^\rev$~(Theorem~\ref{thm:intro-boundary-symtft}), $\BC\simeq \MQ^\op$~(Theorem~\ref{thm:intro-bcond-classification}), and the fact that the equivalences are compatible with module structures (Theorem~\ref{thm:intro-bulk-boundary-action}), Theorem~3.45 in~\cite{Kong_Yuan_2024_EnrichedMonoidalCat} (reviewed in Theorem~\ref{thm.cc_ECat}) yields that  
\begin{equation}\label{eq:bdy enriched cat vs category theory}
{}^{\DHR(\Loc_\bullet^\bdy)}\BC \simeq {}^{(\cC_\M^\vee)^\rev}{\MQ^\op}
\end{equation}
as enriched categories.

On the other hand, the enriched monoidal category describing the bulk should be 
\[
    {}^{\DHR(\Loc_\bullet^\bulk)}\mathrm{TopDef}.
\]

It is rigorously proved that $\DHR(\Loc_\bullet^\bulk) \simeq Z_1(\C^\rev)$ \cite{Jones_2024_DHR} and $\mathrm{TopDef} \simeq (\QCQ)^{\rev,\op}$ (Remark~\ref{rmk:TopDef}), and we can apply techniques in Section~\ref{subsec:action of bdyDHR on BCond} to show that the action of $\DHR(\Loc_\bullet^\bulk) \simeq Z_1(\C^\rev)$ on $\mathrm{TopDef} \simeq (\QCQ)^{\rev,\op}$ is equivalent to the categorical action of $Z_1(\C^\rev)$ on $(\QCQ)^{\rev,\op}$.
% \footnote{There is a canonical categorical action of $Z_1(\C)$ on $\QCQ$. Our action involves a bending isomorphism $Z_1(\C) \xrightarrow{(-)^\vee} Z_1(\C)^{\rev,op}$}.
Therefore we have the equivalence of enriched monoidal categories
\begin{equation}\label{eq:bulk enriched cat vs category theory}
    {}^{\DHR(\Loc_\bullet^\bulk)}\mathrm{TopDef} \simeq {}^{Z_1(\C^\rev)}(\QCQ)^{\rev,\op}
\end{equation}
This agrees with \cite{Kong_2022_1DEnrichedCat, Xu_2024_1DPhaseAbelianSym,LanZhou_2024_QuantumCurrent} after translating orientation conventions; see Appendix~\ref{app:orientation:enriched}. In particular, our convention replaces $\C$ by $\C^\rev$, and $(\QCQ)^{\rev,\op}$ is canonically equivalent to $\QCQ$ by rigidity.

With all necessary preparations ready, we could finally state and prove the main theorem of this paper:

\LinkedRestatableTarget{target:bulkbdycorrespondence}\bulkbdycorrespondence
\begin{proof}
According to Corollary 4.41 in~\cite{Kong_Yuan_2024_EnrichedMonoidalCat}, given a monoidal category $\cD$ and a left $\cD$ module $\cN$, the $E_0$ center of ${}^\cD\cN$ is given by 
\begin{equation}
    Z_0({}^\cD\cN)= {}^{Z_1(\cD)}\Fun_\cD(\cN,\cN).
\end{equation}
 Here,  $\Fun_\cD(\cN,\cN)$ is the category of left module functors from $\cN$ to $\cN$. The enriched monoidal category structure comes from the structure of monoidal left $Z_1(\cD^\rev)\cong \overline{Z_1(\cD)}$ module structure of $\Fun_\cD(\cN,\cN)$, given by 
 the canonical central functor 
 $Z_1(\cD^\rev)\simeq Z_1(\Fun_\cD(\cN,\cN)   )\xrightarrow{\mathrm{forget}} \Fun_\cD(\cN,\cN)$ (see~\ref{prp.MonoidalCC}, or Proposition 4.17 in~\cite{Kong_Yuan_2024_EnrichedMonoidalCat}).  
 
In our case,  we have 
\begin{equation}
    Z_0({}^{\DHR(\Loc_\bullet^\bdy)}\BC) \simeq Z_0( {}^{(\cC_\M^\vee)^\rev}{\MQ^\op})= {}^{Z_1((\cC_\M^\vee)^\rev)} \Fun_{(\cC_{\cM}^\vee)^\rev}(\MQ^\op, \MQ^\op)
\end{equation}
For the underlying part, passing to opposite module categories gives
\begin{equation}
\Fun_{(\cC_{\cM}^\vee)^\rev}(\MQ^\op, \MQ^\op)
\simeq \Fun_{(\cC_{\cM}^\vee)^{\rev,\op}}(\MQ, \MQ)^\op.
\end{equation}
Then the duality functor identifies the acting category with \(\cC_\cM^\vee\), so
\begin{equation}
\Fun_{(\cC_{\cM}^\vee)^{\rev,\op}}(\MQ, \MQ)^\op
\simeq \Fun_{\cC_\cM^\vee}(\MQ,\MQ)^\op.
\end{equation}
Finally, the standard Morita computation gives
\begin{equation}
\Fun_{\cC_\cM^\vee}(\MQ,\MQ)^\op\simeq ({}_Q\cC_Q)^{\rev,\op}.
\end{equation}
For the background part, let $\cM\simeq {}_W\cC$ as right $\cC$ modules for certain separable algebra $W\in \cC$, then
\begin{equation}
Z_1((\cC_\cM^\vee)^\rev)\simeq Z_1(({}_W\cC_W)^\rev)\simeq Z_1(\cC^\rev).
\end{equation}
Combining with Equation~\ref{eq:bdy enriched cat vs category theory} and Equation~\ref{eq:bulk enriched cat vs category theory}, we obtain
\begin{equation}
    Z_0({}^{\DHR(\Loc_\bullet^\bdy)}\BC) \simeq Z_0({}^{(\cC_\M^\vee)^\rev}{\MQ^\op}) \simeq {}^{Z_1(\C^\rev)}(\QCQ)^{\rev,\op} \simeq {}^{\DHR(\Loc_\bullet^\bulk)}\mathrm{TopDef}
\end{equation}
which finishes the proof.
\end{proof}

%----------
\paragraph{Acknowledgments}
TL is supported by start-up funding from The Chinese University of
Hong Kong, by funding from Research Grants Council, University Grants Committee
of Hong Kong (ECS No.~24304722, CRF No.~C7015-24GF), and also by Guangdong Provincial Quantum Science Strategic Initiative (No.~GDZX2501013).

\appendix

\section{Enriched category}
\label{app.enriched_cat}
In this appendix, we review the basic theory of enriched category and enriched monoidal category based on \cite{Kong_Yuan_2024_EnrichedMonoidalCat}.
\begin{definition}

Let $(\mathcal{A},\otimes,\mathbbm{1}_{\mathcal{A}})$ be a monoidal category. An $\mathcal{A}$-enriched category $^{\mathcal{A}}\mathcal{L}$ consists of the following data
\begin{itemize}
    \item A collection of objects $\mathrm{Ob}(^{\mathcal{A}}\mathcal{L})$, for simplicity, we just use $^{\mathcal{A}}\mathcal{L}$ to denote $\mathrm{Ob}(^{\mathcal{A}}\mathcal{L})$.
    \item For any $x,y\in {^{\mathcal{A}}}\mathcal{L}$, there is a so-called hom-object $^{\mathcal{A}}\mathcal{L}(x,y)\in \mathcal{A}$.
    \item For any $x\in {^{\mathcal{A}}}\mathcal{L}$, there is a distinguished identity morphism:
\begin{equation*}
    1_x: \mathbbm{1}_\mathcal{A}\rightarrow {^{\mathcal{A}}\mathcal{L}}(x,x) 
\end{equation*}
\item  For any triple $x,y,z\in{^{\mathcal{A}}}\mathcal{L}$, there is a composition morphism for the hom-objects: 
\begin{equation*}
    \circ:\  {^{\mathcal{A}}\mathcal{L}}(y,z)\otimes {^{\mathcal{A}}\mathcal{L}}(x,y) \rightarrow {^{\mathcal{A}}\mathcal{L}}(x,z)
\end{equation*}
\end{itemize}
The data satisfy the following commuting diagram, for any $x,y,z,w\in {^{\mathcal{A}}}\mathcal{L}$:
\begin{equation}
\input{diagrams/enriched-associativity-3974.tex}
\end{equation}
here the associator in $\mathcal{A}$ is implied.
\begin{equation}
\label{identity}
\input{diagrams/enriched-identity-3986.tex}
\end{equation}
\end{definition}
We will call the category $\mathcal{A}$ the \emph{background category} of $^{\cA}\cL$.
\begin{definition}
    For an enriched category $^{\cA}\cL$, we define an ordinary category $\cL$, 
    \begin{itemize}
        \item $\mathrm{Ob}(\mathcal{L}):= \mathrm{Ob}({^{\mathcal{A}}\mathcal{L}})$;
        \item $\mathcal{L}(x,y):= \mathcal{A}(\mathbbm{1}_{\mathcal{A}}, {^{\mathcal{A}}\mathcal{L}}(x,y))$, for any $x,y\in \mathcal{L}$
        \item Suppose $g\in \mathcal{L}(y,z)$ and $f\in \mathcal{L}(x,y)$, their composition is defined as:
\begin{equation}
\label{composition}
    g\circ f: \mathbbm{1}_{\mathcal{A}} \rightarrow \mathbbm{1}_{\mathcal{A}}\otimes \mathbbm{1}_{\mathcal{A}}\stackrel{g\otimes f}{\longrightarrow}  {^{\mathcal{A}}\mathcal{L}}
(y,z) \otimes  {^{\mathcal{A}}\mathcal{L}}(x,y) \stackrel{\circ}{\longrightarrow}  {^{\mathcal{A}}\mathcal{L}}(x,z)
\end{equation}
\item $1_x: \mathbbm{1}_{\mathcal{A}}\rightarrow  {^{\mathcal{A}}\mathcal{L}}(x,x)
$ is the identity in $\mathcal{L}(x,x)$.
    \end{itemize}
\end{definition}

It's straightforward to verify $\cL$ is a well-defined category, and we will call $\cL$ the \emph{underlying category} of $^{\cA}\cL$.

\begin{definition}
    Let $^\cA\cL$ and $^\cB\cM$ be two enriched categories, their Cartesian product  $^\cA\cL\times\ ^\cB\cM$ is a $\cA\times \cB$ enriched category defined as follows
    \begin{itemize}
        \item $\ob(^\cA\cL\times\ ^\cB\cM)=\ob(^\cA\cL)\times \ob(^\cB\cM)$
        \item for $(x_1,y_1),(x_2,y_2)\in \ob(^\cA\cL\times\ ^\cB\cM)$, the hom-object is defined as 
        \begin{equation}
            (^\cA\cL\times\ ^\cB\cM)((x_1,y_1),(x_2,y_2)):= (^\cA\cL(x_1,x_2), \ ^\cB\cM(y_1,y_2))\in \cA\times \cB
        \end{equation}
        \item the composition and identity morphism are both defined component-wise.
    \end{itemize}
\end{definition}

\begin{definition}
Let $ {}^{\mathcal{A}}\mathcal{L}$ and $ {}^{\mathcal{B}}\mathcal{M}$ be two enriched categories. An enriched functor $F:\ ^{\cA}\cL\rightarrow \ ^{\cB}\cM$  consists of the following data:
\begin{itemize}
    \item A monoidal functor $\hat{F}: \mathcal{A}\rightarrow \mathcal{B}$, which is so-called a background changing functor;
    \item A map $F: \ob(\mathcal{L})\rightarrow \ob(\mathcal{M})$
    \item For any $x,y\in \mathcal{L}$, there is a morphism 
\begin{equation*}
    F_{x,y}:\hat{F}({}^{\mathcal{A}}\mathcal{L}(x,y))\rightarrow {}^{\mathcal{B}}\mathcal{M}(F(x),F(y))
\end{equation*}
\end{itemize}
 satisfying the following commuting diagram (we omit the  unitor and  monoidal structure of $\hat{F}$):
\begin{equation}
\input{diagrams/enriched-functor-4044.tex}
\end{equation}

\begin{equation}
\label{functor}
\input{diagrams/functor-identity-4056.tex}
\end{equation}
\end{definition}
An enriched functor $F:\ ^{\cA}\cL\rightarrow \ ^{\cB}\cM$ can induce an ordinary functor between the underlying categories $\underline{F}:\cL\rightarrow \cM$. On object level $\underline{F} = F$. For every morphism $f\in \mathcal{L}(x,y): \mathbbm{1}_{\mathcal{A}}\rightarrow  {}^{\mathcal{A}}\mathcal{L}(x,y)$, $\underline{F}(f)$ is defined by the following composition

\begin{equation}
\input{diagrams/underlying-functor-4070.tex}
\end{equation}

It preserves the composition and identity morphism.

\begin{definition}

Let $F,G: {}^{\mathcal{A}}\mathcal{L} \rightarrow {}^{\mathcal{B}}\mathcal{M}$ be two enriched functors,  an enriched natural transformation $\xi: F\Rightarrow G$ consists of the following data:
\begin{itemize}
    \item a monoidal natural transformation $\hat{\xi}:\hat{F}\rightarrow \hat{G}$, which is called \emph{the background changing natural transformation}
    \item a family of morphisms:
\begin{equation*}
    \xi_{x}: \mathbbm{1}_{\mathcal{B}}\rightarrow {}^{\mathcal{B}}\mathcal{M}(F(x),G(x)),\ \ \forall x \in {}^{\mathcal{A}}\mathcal{L}
\end{equation*}
\end{itemize}
satisfying the following commuting diagram
\begin{equation}
\label{NT}
\input{diagrams/enriched-transformation-4096.tex}
\end{equation}
\end{definition}
An enriched natural transformation $\xi:F\Rightarrow G$ automatically gives an ordinary natural transformation between induced ordinary functors $\underline{\xi}:\underline{F}\Rightarrow \underline{G}$, where ${\underline{\xi}}_{x}\in \cM(F(x),G(x)):=\cB(\mathbbm{1}_\cB, \ ^{\cB}\cM(F(x),G(x)))$ is just $\xi_{x}$.

\begin{definition}
    Let $\cA$ be a monoidal category and $\cL$ be a left $\cA$ module, with module action $\odot: \cA\times \cL\rightarrow \cL$. For $x,y\in \cL$, if the functor $\cL(-\odot x,y):\cA^\op\rightarrow \catset$ is representable, we call the representing object as internal hom of $x$ and $y$, denoted by $[x,y]$, i.e.
there is a natural isomorphism 
\begin{equation}
    \cL(-\odot x,y)\cong \cA(-,[x,y]).
\end{equation}
\end{definition}
When internal hom exists, we define unit $u_x:\one_\cA\rightarrow [x,x]$ as the image of $1_x$ under the isomorphism, and evaluation $\ev_{x,y}:[x,y]\odot x\rightarrow y$ as the preimage of $1_{[x,y]}$ under the isomorphism. For $x,y,z\in \cL$, there is a composition of internal homs
\begin{equation}
   \circ: [y,z]\ot [x,y]\rightarrow [x,z]
\end{equation}
defined as the image of $\ev_{y,z}$ under the map 
\begin{equation}
    \cL([y,z]\odot y, z)\xrightarrow{\cL([y,z]\odot \ev_{x,y},z)} \cL([y,z]\odot ([x,y]\odot x),z)\cong \cL([y,z]\ot [x,y]\odot x,z)\cong \cA([y,z]\ot[x,y],[x,z]).
\end{equation}
The composition of internal homs is associative and compatible with units.

\begin{lemma}[Canonical Construction]
\label{lem.cc}
      Let $\cA$ be a monoidal category and $\cL$ be a left $\cA$ module, if $[x,y]$ exists for any $x,y\in \cL$, we can define an enriched category $^{\cA}\cL$, where
    \begin{itemize}
         \item $\mathrm{Ob}(^{\cA}\cL) = \mathrm{Ob}(\cL)$
    \item hom-object $^{\cA}\cL(x,y):=[x,y]$
    \item identity morphism is $u_x:\one_\cA\rightarrow [x,x]$
    \item composition is $\circ: [y,z]\ot [x,y]\rightarrow [x,z]$
    \end{itemize}
\end{lemma}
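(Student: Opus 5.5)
The plan is to verify the enrichment axioms by the Yoneda lemma. Everything is transported through the defining natural bijection $\Phi_{a}\colon\cL(a\odot x,y)\cong\cA(a,[x,y])$, which is natural in $a\in\cA$. The first step is to turn the definitions of $u_x$, $\ev_{x,y}$ and $\circ$ into characterizing equations.

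\emph{Step 1 (how $\Phi$ acts).} Naturality of $\Phi$ in $a$, applied to $1_{[x,y]}$, gives for every $f\colon a\to[x,y]$
\[
\Phi^{-1}(f)=\ev_{x,y}\circ(f\odot 1_x).
\]
Consequently a morphism $f\colon a\to[x,y]$ is uniquely determined by the composite $\ev_{x,y}\circ(f\odot 1_x)\colon a\odot x\to y$.

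\emph{Step 2 (characterizing equations).} Unwinding the definitions with Step 1:
\begin{itemize}
\item $u_x=\Phi(\lambda_x)$, where $\lambda_x\colon\one_\cA\odot x\to x$ is the unitor of the module action. Hence $\ev_{x,x}\circ(u_x\odot 1_x)=\lambda_x$.
\item The composition $\circ\colon[y,z]\otimes[x,y]\to[x,z]$ is characterized by
\[
\ev_{x,z}\circ(\circ\odot 1_x)=\ev_{y,z}\circ(1_{[y,z]}\odot\ev_{x,y})\circ m,
\]
where $m\colon([y,z]\otimes[x,y])\odot x\cong[y,z]\odot([x,y]\odot x)$ is the module associator.
\end{itemize}

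\emph{Step 3 (associativity).} Take the two composites $[z,w]\otimes[y,z]\otimes[x,y]\to[x,w]$ appearing in the associativity diagram. By Step 1 it suffices to show that they have the same image under $\Phi^{-1}$, that is, the same composite with $\ev_{x,w}\circ(-\odot 1_x)$.
\begin{itemize}
\item Apply the composition identity of Step 2 twice.
\item Use naturality of $m$ and functoriality of $\odot$ to slide each evaluation past the identities.
\end{itemize}
Both sides then reduce to the iterated evaluation
\[
\ev_{z,w}\circ(1\odot\ev_{y,z})\circ(1\odot1\odot\ev_{x,y}),
\]
precomposed with the unique module-associativity isomorphism onto $[z,w]\odot([y,z]\odot([x,y]\odot x))$. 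The two bracketings yield the same such isomorphism by the pentagon axiom of the module category together with the associator of $\cA$. Injectivity of $\Phi^{-1}$ then gives equality.

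\emph{Step 4 (unit laws).} For $\circ\,(1_{[x,y]}\otimes u_x)$, Step 2 gives
\[
\ev_{x,y}\circ\bigl(1\odot(\ev_{x,x}\circ(u_x\odot1))\bigr)\circ m=\ev_{x,y}\circ(1\odot\lambda_x)\circ m.
\]
By the module triangle axiom this equals $\ev_{x,y}\circ(\rho\odot1)$, where $\rho$ is the right unitor of $\cA$. Under $\Phi$ this corresponds to the unitor $\rho\colon[x,y]\otimes\one\to[x,y]$, which is the required identity law. The law $\circ\,(u_y\otimes 1_{[x,y]})=\lambda$ is argued the same way, using naturality of $m$ in the first variable so that $\ev_{y,y}\circ(u_y\odot 1)=\lambda_y$ is applied after $\ev_{x,y}$.

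The main obstacle is purely the coherence bookkeeping in Step 3. One must check carefully that the associator of $\cA$ and the module associators on the two sides genuinely agree. I would handle this by invoking the module pentagon once, rather than tracing diagrams by hand. Alternatively, one can strictify the monoidal category and the module, as is done elsewhere in the paper; then all associators are identities and the check is immediate.
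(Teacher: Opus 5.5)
Your proof is correct. The paper gives no argument of its own for this lemma. It only asserts, just before the statement, that composition of internal homs is associative and compatible with units, and relies on \cite{Kong_Yuan_2024_EnrichedMonoidalCat}. Your Yoneda-style verification is exactly the standard argument behind that assertion: you characterize a map into $[x,w]$ by its composite with $\ev_{x,w}\circ(-\odot 1_x)$, reduce both sides to iterated evaluations, and match the module associators via the pentagon. It makes the appendix self-contained at the cost of some coherence bookkeeping.

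One detail to tighten is the left unit law, where ``argued the same way'' hides a different coherence identity. Your computation reaches $\lambda_y\circ(1_{\one}\odot\ev_{x,y})\circ m_{\one,[x,y],x}=\ev_{x,y}\circ\lambda_{[x,y]\odot x}\circ m_{\one,[x,y],x}$. To turn this into $\ev_{x,y}\circ(\ell_{[x,y]}\odot 1_x)$, with $\ell$ the left unitor of $\cA$, you need $\lambda_{a\odot x}\circ m_{\one,a,x}=\ell_a\odot 1_x$. This is not the triangle axiom $(1\odot\lambda_x)\circ m_{a,\one,x}=\rho_a\odot 1_x$ that you used for the right unit law. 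It does follow from the module pentagon and triangle by a Kelly-type argument, and it is an axiom outright if the action is taken to be a monoidal functor $\cA\to\mathrm{End}(\cL)$, so you should cite it explicitly.
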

  We will call the construction in Lemma~\ref{lem.cc} the canonical construction of enriched categories. The underlying category of the canonical construction $^{\cA}\cL$ is equivalent to $\cL$.
  \begin{theorem}[Theorem 3.45 in \cite{Kong_Yuan_2024_EnrichedMonoidalCat}]
  \label{thm.cc_ECat}
      The canonical construction defines a 2-equivalence from $\mathbf{fsLMod}$ to $\mathbf{fsECat}$, where $\mathbf{fsLMod}$ is the 2-category of  finite semisimple left module categories over finite semisimple monoidal categories, and $\mathbf{fsECat}$ is the 2-category of finite semisimple enriched categories, linear enriched functors and enriched natural transformations.
  \end{theorem}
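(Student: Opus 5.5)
The plan is to build the 2-functor explicitly using the internal-hom adjunction, and then to show that it is locally an equivalence and essentially surjective. I will use the mate correspondence
\[
\cL(a\odot x,y)\cong\cA(a,[x,y])
\]
throughout.

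First, existence of internal homs. For a finite semisimple left $\cA$-module $\cL$ and $x,y\in\cL$, the functor $\cL(-\odot x,y)\colon\cA^{\op}\to\catset$ is additive and finite-dimensional on the finite semisimple category $\cA$. Such a functor is representable, explicitly by
\[
[x,y]\simeq\bigoplus_{a\in\operatorname{Irr}(\cA)}\cL(a\odot x,y)^*\otimes a .
\]
So Lemma~\ref{lem.cc} applies, and $(\cA,\cL)\mapsto{}^{\cA}\cL$ is defined on objects.

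Next, 1-morphisms. A 1-morphism in $\mathbf{fsLMod}$ is a monoidal functor $\hat F\colon\cA\to\cB$ together with a functor $F\colon\cL\to\cM$ and coherent isomorphisms $F(a\odot x)\cong\hat F(a)\odot F(x)$. To it I assign the enriched functor with background change $\hat F$. Its structure map $F_{x,y}\colon\hat F([x,y])\to[Fx,Fy]$ is the mate of the composite
\[
\hat F([x,y])\odot Fx\;\cong\;F([x,y]\odot x)\;\xrightarrow{F(\ev_{x,y})}\;Fy .
\]
Compatibility with composition and units reduces, by the universal property of $\ev$, to the module-functor coherence axioms; I would check this by a routine diagram chase. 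A 2-morphism $(\hat\xi,\xi)$ with $\xi$ a module natural transformation gives $\xi_x\colon\one_\cB\to[Fx,Gx]$ as the mate of $\xi_x\colon Fx\to Gx$. Naturality of the mate correspondence gives 2-functoriality.

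Local equivalence comes next. For fixed objects, I would show the functor on hom-categories is an equivalence by inverting the mate construction. Given an enriched functor $(\hat F,F,F_{x,y})$ between canonical constructions, first take the induced ordinary functor $\underline F$. Its module structure $F(a\odot x)\to\hat F(a)\odot F(x)$ is the mate of
\[
\hat F(a)\xrightarrow{\hat F(\mathrm{coev})}\hat F([x,a\odot x])\xrightarrow{F_{x,a\odot x}}[Fx,F(a\odot x)] ,
\]
read in the opposite direction. It is invertible because the module action of a finite semisimple category is rigid: every $a$ has a dual, which yields an explicit inverse. The two constructions are mutually inverse up to the canonical isomorphisms $\cA(\one,[x,y])\cong\cL(x,y)$, and the same argument works on 2-cells.

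The main obstacle is essential surjectivity: showing that every finite semisimple enriched category ${}^{\cA}\cL$ arises, up to equivalence, from a module category. My first attempt is a Yoneda-type reconstruction.
\begin{enumerate}
\item Choose finitely many objects $x_1,\dots,x_n$ representing the simple objects of the underlying category.
\item Form the multi-object algebra $E=\bigoplus_{i,j}{}^{\cA}\cL(x_i,x_j)$ in $\cA$, with multiplication given by enriched composition. Because the underlying category is finite semisimple, $E$ should be separable.
\item Take $\cN=\mathrm{RMod}_E(\cA)$. This is a finite semisimple left $\cA$-module with internal homs.
\item Define the enriched Yoneda functor ${}^{\cA}\cL\to{}^{\cA}\cN$ by $x\mapsto\bigoplus_i{}^{\cA}\cL(x_i,x)$.
\item Show it is enriched fully faithful. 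This should reduce to $[E_i,E_j]_{\cN}\cong E_{ij}$, the enriched Yoneda lemma for free modules.
\item Show it is essentially surjective on underlying categories. This uses that every simple $E$-module is a summand of some $E\cdot e_i$.
\end{enumerate}
The delicate points are two. One must confirm that the definition of $\mathbf{fsECat}$ forces the enriched category to be Cauchy complete, so that the summands in step 6 exist in ${}^{\cA}\cL$. One must also confirm that the identity background change suffices. If these fail as stated, I would instead pass to the Cauchy completion and verify that it is equivalent to ${}^{\cA}\cL$ in $\mathbf{fsECat}$.
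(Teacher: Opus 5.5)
The paper has no proof of its own here; it quotes the result from \cite{Kong_Yuan_2024_EnrichedMonoidalCat}. Your plan therefore has to stand alone, and it has two genuine gaps.

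\textbf{Local essential surjectivity.} Inverting the mate construction gives a map $\hat F(a)\odot F(x)\to F(a\odot x)$. Your reason for its invertibility is false, because a finite semisimple monoidal category need not be rigid.

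Take the monoid $M=\{1,0\}$ under multiplication. Then $\ve_M$ is finite semisimple, but $\delta_0$ has no dual: $\delta_0\otimes Y$ is always $0$-graded, so any coevaluation vanishes. On the regular module $\ve_M$, let $P$ be the projection onto the $1$-graded part. Give it structure maps $a\otimes P(V)\to P(a\otimes V)$ equal to the identity for $a=\delta_1$ and zero for $a=\delta_0$. Both sides of every lax constraint land in $P$ of a $0$-graded object unless all acting objects are $\delta_1$, so the axioms hold. Hence $P$ gives a linear enriched functor ${}^{\ve_M}\ve_M\to{}^{\ve_M}\ve_M$ with identity background. It is not isomorphic to the image of any strong module functor.

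So, with strong module functors as 1-morphisms, your 2-functor is not locally essentially surjective in the stated generality. You have two options:
\begin{itemize}
\item take the 1-morphisms of $\mathbf{fsLMod}$ to be module functors whose structure maps $\hat F(a)\odot F(x)\to F(a\odot x)$ need not be invertible. This is the same phenomenon behind the oplax modules in Proposition~\ref{prp.MonoidalCC}, and your forward construction works verbatim with such maps;
\item or assume rigidity, e.g.\ fusion as in all applications in the paper. There your duality argument is the standard proof that lax module functors are strong.
\end{itemize}

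\textbf{Essential surjectivity.} Step 6 carries the content of the theorem, and it is not a general fact. Together with Step 5 it would make ${}^{\cA}\cL$ tensored, and finite semisimplicity of $\cA$ and of the underlying category does not give that.

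Over $\cA=\ve_{\mathbb{Z}_2}$ with simples $\one,g$, let ${}^{\cA}\cL$ have objects $x^{\oplus n}$ and hom-objects $\one^{\oplus nm}$, with matrix composition. Its underlying category is $\ve$, finite semisimple and Cauchy complete in the ordinary sense, and $E=\one$. But $g$ is a simple $E$-module that is not a summand of any $E^{\oplus n}$.

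Nor is ${}^{\cA}\cL$ equivalent to any canonical construction; equivalences change the background only by a monoidal equivalence, which changes nothing here. A module $\cN$ with a single simple object $y$ and $[y,y]\cong\one$ would have $\cN(g\odot y,y)\cong\cA(g,\one)=0$. Yet $g\odot y$ is a nonzero multiple of $y$, a contradiction. Your fallback fails for the same reason: any completion that adjoins $g\odot x$ has two simple objects, so it is not equivalent to ${}^{\cA}\cL$.

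The missing idea is a tensoredness condition, which any correct definition of the objects of $\mathbf{fsECat}$ must contain.
\begin{itemize}
\item On a finite semisimple underlying category, $\cA(a,{}^{\cA}\cL(x,-))$ is always representable, by $a\odot x:=\bigoplus_s\cA(a,{}^{\cA}\cL(x,s))^{*}\otimes s$ with a universal $u\colon a\to{}^{\cA}\cL(x,a\odot x)$.
\item What must be required is that composing with $u$ gives bijections $\cA(c,{}^{\cA}\cL(a\odot x,y))\cong\cA(c\otimes a,{}^{\cA}\cL(x,y))$ for all $c$.
\item This fails in the example above, where $g\odot x=0$.
\end{itemize}

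Given this condition, the direct route needs neither $E$, nor the unproved separability of $E$, nor a reconstruction theorem. The bijections give $(c\otimes a)\odot x\cong c\odot(a\odot x)$ coherently by Yoneda. So the underlying category becomes a finite semisimple left $\cA$-module whose internal homs are ${}^{\cA}\cL(x,y)$ by construction, with identity background change.

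A minor slip: in your formula for $[x,y]$, the multiplicity of $a$ should be $\cL(a\odot x,y)$ itself. The dual belongs only in the covariant representability used above.
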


  Let $(\cA,\ot_\cA,c)$ be a braided monoidal category, where $c$ is the braiding, then the tensor product functor $\ot_\cA: \cA\times \cA \rightarrow \cA$ can be equipped with a structure of monoidal functor, we will use the convention of the following monoidal structure 
  \begin{equation}
      \ot_\cA(a_1,a_2) \ot_\cA \ot_\cA(a_1',a_2')=(a_1\ot_\cA a_2) \ot_\cA (a_1'\ot_\cA a_2') \xrightarrow{ c_{a_2,a_1'}} (a_1\ot_\cA a_1')\ot_\cA (a_2\ot_\cA a_2') = \ot_\cA (a_1\ot_\cA a_1', a_2\ot_\cA a_2').
  \end{equation}
\begin{definition}
    An $\cA$-enriched monoidal category consists of the following data
    \begin{itemize}
    \item an $\cA$-enriched category $^\cA\cL$;
    \item a tensor product enriched functor 
    $\ot:\ ^\cA\cL\times\ ^\cA\cL \rightarrow\ ^\cA\cL $ such that $\hat{\ot}=\ot_\cA$;
     \item an enriched functor $I:\ast\rightarrow\ ^\cA\cL$, where $\ast$ is the enriched category with one object, and $I(\ast)=\one$ is called tensor unit of $^\cA\cL$;
     \item an enriched natural isomorphism $\alpha: \ot\circ (\ot \times \id)\Rightarrow \ot \circ( \id \times \ot)$ called associator such that $\hat{\alpha}$ is the associator in $\cA$;
     \item two enriched natural isomorphism $\lambda: \ot \circ (I\times \id)\Rightarrow \id$, $\rho:  \ot\circ (\id \times I)\Rightarrow \id$ called the left and right unitors, such that $\hat{\lambda}$ and $\hat{\rho}$ are the left and right unitors in $\cA$,
     \end{itemize}
     such that $(\cL,\underline{\ot}, \one, \underline{\alpha},\underline{\lambda},\underline{\rho})$ is a monoidal category, which is called the underlying monoidal category of $(^{\cA}\cL, \ot,I,\alpha,\lambda,\rho)$.
\end{definition}
\begin{definition}
    Let $^{\cA}\cL$, $^{\cB}\cM$ be two enriched monoidal categories, an enriched monoidal functor $F:\ ^{\cA}\cL\rightarrow\ ^{\cB}\cM$ consists of the following data
    \begin{itemize}
        \item an enriched functor $F:\ ^{\cA}\cL\rightarrow\ ^{\cB}\cM$;
        \item an enriched isomorphism $F^2: \ot\circ (F\times F)\Rightarrow F\circ \otimes $, such that $\hat{F^2}$ is the monoidal structure of $\hat{F}$;
        \item an enriched isomorphism $F^0: I_\cM\Rightarrow F\circ I_\cL$ such that $\hat{F^0}$ is the monoidal structure of $\hat{F}$;
    \end{itemize}
    such that 
    \begin{itemize}
        \item $\hat{F}:\cA\rightarrow \cB$ is a braided monoidal functor 
        \item $(\underline{F},\underline{F^2},\underline{F^0}):\cL\rightarrow \cM$ is a monoidal functor between the underlying monoidal categories.
    \end{itemize}
\end{definition}
\begin{definition}
    Let $F,G:\ ^\cA\cL\rightarrow\ ^\cB\cM$ be two enriched monoidal functors between two enriched monoidal categories, an enriched monoidal natural transformation $\xi:F\Rightarrow G$ is an enriched natural transformation such that the underlying natural transformation $\underline{\xi}:\underline{F}\Rightarrow \underline{G}$ is monoidal.
\end{definition}

\begin{definition}
    Let $\cA$ be a braided monoidal category, $\cL$ be a monoidal category, and $H:\cA\rightarrow \cL$ is a monoidal functor. A central functor structure of $H$ is a braided monoidal functor $\varphi:\cA\rightarrow Z_1(\cL)$ such that
    \begin{equation*}
        (\cA\xrightarrow{\varphi}Z_1(\cL)\xrightarrow{\mathrm{forget}}\cL) \cong H
    \end{equation*}
    as monoidal functors.
\end{definition}
A central functor $H:\cA\rightarrow \cL$ equips $\cL$ a monoidal left $\cA$ module structure, given by 
\begin{equation}
\begin{split}
    \odot: \cA\times \cL&\rightarrow \cL\\
    (a,x)&\mapsto a\odot x:= \varphi(a)\otimes x.
\end{split}
\end{equation}
Conversely, if $\cL$ is a monoidal left $\cA$ module with action $\odot$, then $\odot \one:\cA\rightarrow \cL$ has a canonical central structure,
\begin{equation}
    x\otimes (a\odot \one)\cong (\one_\cA\odot x) \otimes (a\odot \one)\cong (\one_\cA\otimes_\cA a)\odot (x\otimes \one)\cong (a\ot_\cA \one_\cA)\odot(\one\otimes x)\cong (a\odot \one)\otimes x,
\end{equation}
and these two constructions are inverse of each other.

\begin{proposition}[Proposition 4.17 in \cite{Kong_Yuan_2024_EnrichedMonoidalCat}]
\label{prp.MonoidalCC}
Given a braided monoidal category $\cA$ and a left  $\cA$-module $\cL$. Let ${}^\cA\cL$ be the enriched category given by the canonical construction. There is a one-to-one correspondence between the monoidal structures on ${}^\cA\cL$ and the monoidal structures on the left $\overline{\cA}$-oplax module  $\cL$. 
\end{proposition}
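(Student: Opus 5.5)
The plan is to reduce everything to the dictionary between enriched functors on canonical constructions and (op)lax module functors, and then read off what the extra monoidal data become. The first step is a translation lemma. Let ${}^{\cA}\cL$ and ${}^{\cB}\cM$ be canonical constructions, let $\hat F:\cA\to\cB$ be a monoidal functor, and let $F:\mathrm{Ob}(\cL)\to\mathrm{Ob}(\cM)$ be a map on objects. I will show that enriched functor structures $F_{x,y}:\hat F([x,y])\to[Fx,Fy]$ correspond bijectively to families
\[
\beta_{a,x}:F(a\odot x)\to \hat F(a)\odot F(x)
\]
making $F$ an \emph{oplax} module functor along $\hat F$.

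The direction from $F_{x,y}$ to $\beta$ goes through the internal-hom adjunction. Given $F_{x,y}$, I take the evaluation $\ev_{x,a\odot x}$ together with the unit $a\to[x,a\odot x]$, and transport them. This produces a map $\hat F(a)\to[Fx,F(a\odot x)]$. Since the internal hom is representable only on one side, that map alone does not directly yield $\beta$ in the direction displayed. I will instead use the counterpart $F(a\odot x)\to\ldots$ obtained from the underlying functor $\underline F$, which requires first upgrading $F$ to a genuine functor on underlying categories. For the converse, I take the mate of the composite
\[
\hat F([x,y])\odot Fx \xrightarrow{\beta^{-1}\text{ or }\beta} F([x,y]\odot x)\xrightarrow{F(\ev)} Fy.
\]
The compatibility diagrams for composition and identities then translate exactly into the pentagon and unit axioms for $\beta$. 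I expect this lemma to be essentially the content of Theorem~\ref{thm.cc_ECat}, specialized from equivalences to arbitrary background changes, so I will prove it by the same adjunction bookkeeping.

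The second step applies the lemma to the tensor product $\otimes:{}^\cA\cL\times{}^\cA\cL\to{}^\cA\cL$. Here the background functor is $\hat\otimes=\otimes_\cA$ with the monoidal structure built from $c_{a_2,a_1'}$, and the source is the canonical construction of the $\cA\times\cA$-module $\cL\times\cL$. The lemma then identifies the enriched tensor with an ordinary functor $\underline\otimes:\cL\times\cL\to\cL$ together with oplax maps
\[
(a\odot x)\otimes(b\odot y)\to (a\otimes b)\odot(x\otimes y).
\]
The unit enriched functor $I:\ast\to{}^\cA\cL$ similarly gives the object $\one$. Next I will translate the enriched natural isomorphisms $\alpha,\lambda,\rho$, whose background components are the associator and unitors of $\cA$. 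Using the description of $\xi_x$ as an element of $\cA(\one,[\,\cdot,\cdot\,])$, these correspond precisely to natural isomorphisms of the underlying functors that are compatible with the oplax module maps, namely module natural isomorphisms. The monoidal-category axioms on the underlying data transfer verbatim. Assembling these pieces gives exactly a monoidal structure on $\cL$ as an oplax module, and the construction is manifestly invertible, since each piece was a bijection.

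The step I expect to be the main obstacle is explaining why the acting category is $\overline{\cA}$ rather than $\cA$. The associativity of the oplax maps, composed for three factors, involves the monoidal structure of $\otimes_\cA$. The convention above inserts $c_{a_2,a_1'}$, whereas the monoidal-module axiom for the action on $\cL$ uses the crossing the other way round, so the two match only after replacing $c$ by the reverse braiding $c^{-1}_{a_1',a_2}$. My first attempt will be to write out the hexagon-type compatibility for $(a\odot x)\otimes(b\odot y)\otimes(c\odot z)$ explicitly and compare it with the definition of a monoidal $\overline{\cA}$-module. I will also carefully track whether $\beta$ is oplax or lax, since that direction is forced by which side of the internal-hom adjunction is representable.
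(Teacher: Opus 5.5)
Your overall strategy (translating the enriched tensor, unit, associator and unitors through the correspondence between enriched functors on canonical constructions and module functors) is the natural one. However, the key lemma has its arrows the wrong way round, and this is a genuine error, not a convention.

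An enriched functor along $\hat F$ corresponds to a \emph{lax} module functor, with structure maps $\beta_{a,x}\colon \hat F(a)\odot F(x)\to F(a\odot x)$. Here $\beta_{a,x}$ is the mate of $\hat F(a)\xrightarrow{\hat F(\mathrm{coev})}\hat F([x,a\odot x])\xrightarrow{F_{x,a\odot x}}[Fx,F(a\odot x)]$, which is exactly the map you construct and then set aside. Conversely, $F_{x,y}$ is recovered as the mate of $F(\mathrm{ev}_{x,y})\circ\beta_{[x,y],x}$, with no inverse needed.

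Nothing in the data of an enriched functor produces a morphism \emph{out of} $F(a\odot x)$ into $\hat F(a)\odot F(x)$. Your patch via $\underline F$ cannot supply one either: $\underline F$ only transports morphisms $\mathbbm{1}\to[x,y]$ of $\mathcal{L}$, and $\hat F(a)\odot F(x)$ is in general not of the form $F(y)$.

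The bijection you claim is in fact false for general $\mathcal{A}$. Over $(\mathbf{Set},\times)$, the functor $F(x)=x\sqcup\{\ast\}$ is an enriched endofunctor of ${}^{\mathbf{Set}}\mathbf{Set}$ along the identity. Yet there is no map $F(\emptyset\times 1)=\{\ast\}\to\emptyset\times F(1)=\emptyset$, so $F$ admits no oplax structure at all. In the fusion setting of the body of the paper, lax module functors are automatically strong, which hides the problem. The proposition, however, is stated for an arbitrary braided monoidal $\mathcal{A}$, where the non-invertibility of these maps is exactly what ``oplax'' records.

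The same error propagates to your second step. Applied to $\otimes$ along $\otimes_{\mathcal{A}}$, the correct dictionary gives maps $(a\otimes b)\odot(x\otimes y)\to(a\odot x)\otimes(b\odot y)$, not $(a\odot x)\otimes(b\odot y)\to(a\otimes b)\odot(x\otimes y)$. Viewed as structure on the action functor $\odot\colon\mathcal{A}\times\mathcal{L}\to\mathcal{L}$, the maps $\odot\bigl((a,x)\otimes(b,y)\bigr)\to\odot(a,x)\otimes\odot(b,y)$ form an \emph{oplax monoidal} structure. That is what ``oplax module'' refers to in the statement. You read ``oplax'' as describing the module-functor structure of $\otimes$, which is actually lax.

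With your arrows you would be describing a lax monoidal action, which is a different structure. The converse half of your lemma would also need $\beta^{-1}$, which need not exist.

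Once the lemma is restated for lax module functors, the rest of your plan goes through: the unit, $\alpha,\lambda,\rho$ as module natural isomorphisms lying over the associator and unitors of $\mathcal{A}$, and the transfer of the coherence axioms. The three-factor computation comparing $c_{a_2,a_1'}$ with the module axiom, which is where $\overline{\mathcal{A}}$ comes from, has to be redone with the arrows in the correct direction.
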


\section{Orientation conventions}\label{app:orientation}
In this appendix, we collect orientation conventions in this paper and compare them to those in the literature.

There are three orientation choices that affect the formulas in the main text. 
The first is whether module categories are left or right module categories. 
The second is the orientation of the one-dimensional chain used in the DHR construction. 
The third is the direction in which topological defects are fused. 
None of these choices changes the physical content, but they change whether one writes a category, its reverse, or its opposite. 
The convention translations used below are: EGNO's left-module statements are converted by a left/right module conversion; Jones' bulk DHR identification is converted by a chain orientation reversal; the source $(\CMdual)^{\rev}$ of $\ReaDHR$ records the monoidal reverse coming from placing module functors on the left in the standard action; and the source $\MQ^\op$ of $\ReaBCond$ records the opposite category forced by contravariance.

\subsection{Right module categories}\label{app:orientation:right-modules}

Throughout this paper, a module category means a right module category unless otherwise stated. Thus the boundary local quantum charge category is a right $\C$-module category $\M$, and the fusion spin chain with boundary has local algebras
\[
    \Loc_{[0,j]}^{\bdy}=\M(l\otimes x^j,l\otimes x^j).
\]
This is the convention used in Section~\ref{BFusionSpinChain}. Many references, including standard categorical references such as \cite{EGNO_2015}, more often state analogous results for left module categories. Passing between the two conventions is a left/right module conversion: it reverses the side on which the tensor product acts and may insert $\rev$ in formulas involving dual categories of module endofunctors.

For example, if $\M\simeq {}_W\C$ is written as a category of right $W$-modules internal to $\C$, then the dual category used in this paper satisfies
\[
    \CMdual\simeq \WCW,
\]
whereas the corresponding statement in a left-module convention is usually written with an additional reverse. This is the convention translation used when comparing $\ReaDHR$ with the functor $G$ constructed from $\WCW$; the additional $\rev$ in the source of the realization functor
\[
    \ReaDHR:(\CMdual)^{\rev}\longrightarrow \DHR(\Loc_\bullet^\bdy)
\]
in Theorem~\ref{thm:intro-boundary-symtft} comes instead from the monoidal reverse in the standard action.

\subsection{Boundary conditions and bulk topological defects}\label{app:orientation:defects}

The realization functor for boundary conditions has source $\MQ^\op$:
\[
    \ReaBCond: \MQ^\op\longrightarrow \BCond.
\]
This is not only a matter of notation. As explained after Definition~\ref{Def. The realization functor for boundary conditions}, the opposite category is needed because the construction sends a right $Q$-module $K$ to the left Hilbert module obtained from morphisms $\M(K,l\otimes x^n)$. It is also the convention compatible with the action of the boundary SymTFT on boundary conditions in Theorem~\ref{thm:intro-bulk-boundary-action}.

For bulk topological defects, the underlying linear category obtained from the same Q-system model is $(\QCQ)^\op$; see Remark~\ref{rmk:TopDef}. To make the fusion direction compatible with the enriched-center convention of \cite{Kong_Yuan_2024_EnrichedMonoidalCat}, we use the monoidal category
\[
    \mathrm{TopDef}\simeq (\QCQ)^{\rev,\op}.
\]
Equivalently, defects are fused from right to left in our convention. If one chooses the opposite spatial orientation, one may instead write $(\QCQ)^\op$ for the same underlying defect category with the other fusion direction.

This differs from the convention in \cite{Kong_2022_1DEnrichedCat,Xu_2024_1DPhaseAbelianSym,LanZhou_2024_QuantumCurrent}, where the bulk defect category is written as $(\QCQ)^\rev$. The discrepancy comes from the choice of time orientation. After reversing time and using rigidity, $(\QCQ)^{\rev,\op}$ and $\QCQ$ give the same physical defect data.

\subsection{The DHR category}\label{app:orientation:dhr}

Jones' DHR construction for fusion spin chains identifies the DHR category with the Drinfeld center of the input category \cite{Jones_2024_DHR}:
\[
    \DHR(\Loc_\bullet)\simeq Z_1(\C).
\]
Our standard action in Section~\ref{StandardAction} uses the opposite orientation of the one-dimensional chain relative to that convention. Equivalently, one may regard our input category for the oriented chain as $\C^\rev$. Hence the bulk SymTFT appearing in this paper is
\[
    \DHR(\Loc_\bullet^\bulk)\simeq Z_1(\C^\rev)\simeq \overline{Z_1(\C)}.
\]
Here the bar denotes the same braided category with the braiding reversed. This is the reason why Theorem~\ref{thm:SymTFT-bulk-bdy} identifies the bulk action on the boundary DHR category with the canonical action of $Z_1(\C^\rev)$ on $(\CMdual)^{\rev}$.

The same convention also explains the formula for the standard boundary action. We define finite pieces by
\[
    \Std_n(L)=\M(l\otimes x^n, L\otimes l\otimes x^n),
\]
and the monoidal structure maps an elementary tensor as
\[
    \Std_n(L_1) \boxtimes_{\Loc^{\bdy}_n} \Std_n(L_2) \ni \xi\otimes \eta \longmapsto (1_{L_2} \otimes \xi)\circ \eta \in \Std_n(L_2\otimes L_1).
\]
Thus the product in the source is reversed, and the standard action is a functor out of $(\CMdual)^{\rev}$. This differs from \cite{Jones_2024_DHR}, where the corresponding module functor is placed on the other side.

\subsection{Enriched categories}\label{app:orientation:enriched}

The enriched-category part of the paper follows the conventions of \cite{Kong_Yuan_2024_EnrichedMonoidalCat}, reviewed in Appendix~\ref{app.enriched_cat}. In particular, the canonical construction uses left module categories over the background category. Since the microscopic boundary SymTFT in this paper is $(\CMdual)^{\rev}$ acting on $\MQ^\op$, the boundary is written as the enriched category
\[
    {}^{(\CMdual)^\rev}\MQ^\op.
\]
The enriched center theorem then gives
\[
    Z_0\bigl({}^{(\CMdual)^\rev}\MQ^\op\bigr)
    \simeq {}^{Z_1(\C^\rev)}(\QCQ)^{\rev,\op},
\]
which is the categorical form of the bulk side used in Equation~\eqref{eq:bulk enriched cat vs category theory}.

Therefore the final bulk-boundary statement in Theorem~\ref{thm:bulk-boundary-correspondence} is the same enriched-center statement as in \cite{Kong_2022_1DEnrichedCat,Xu_2024_1DPhaseAbelianSym,LanZhou_2024_QuantumCurrent}, after translating their convention to ours by reversing the orientation of the chain and the corresponding time direction.

\bibliographystyle{hyperalpha}
\bibliography{bib}

\end{document}

%% file: diagrams/gns-isometry-0479.tex
\begin{tikzcd}
	{A_i \times \mathcal{H}_{\rho_i}} &&&& {A_j \times \mathcal{H}_{\rho_j}} \\
	\\
	\\
	{\mathcal{H}_{\rho_i}} &&&& {\mathcal{H}_{\rho_j}}
	\arrow["{\phi_{ij} \times \tilde \phi_{ij}}", from=1-1, to=1-5]
	\arrow["{\text{module action}}"', from=1-1, to=4-1]
	\arrow["{\text{module action}}"', from=1-5, to=4-5]
	\arrow["{\tilde\phi_{ij}}", from=4-1, to=4-5]
\end{tikzcd}

%% file: diagrams/standard-action-0788.tex
\tikzset{every picture/.style={line width=0.75pt}} %set default line width to 0.75pt

\begin{tikzpicture}[x=0.75pt,y=0.75pt,yscale=-1,xscale=1]
%uncomment if require: \path (0,221); %set diagram left start at 0, and has height of 221

%Shape: Rectangle [id:dp06652131263546035]
\draw   (249.87,116.2) -- (337.27,116.2) -- (337.27,132.6) -- (249.87,132.6) -- cycle ;
%Straight Lines [id:da1798551463574921]
\draw    (294.75,116.67) -- (294.1,63.25) ;
%Straight Lines [id:da6109264495920503]
\draw    (328.5,115.95) -- (327.85,62.53) ;
%Straight Lines [id:da4637150683721014]
\draw    (276.2,116.1) -- (275.55,62.68) ;
%Straight Lines [id:da9484826695563863]
\draw    (295.25,172.67) -- (294.6,133.01) ;
%Straight Lines [id:da16714836432732505]
\draw    (329,172.14) -- (328.35,132.48) ;
%Straight Lines [id:da631110613972884]
\draw    (276.7,172.24) -- (276.05,132.58) ;
%Straight Lines [id:da9725051477470585]
\draw    (256.5,115.95) -- (255.85,62.53) ;
%Shape: Rectangle [id:dp2714526057613075]
\draw   (438.37,116.7) -- (525.77,116.7) -- (525.77,133.1) -- (438.37,133.1) -- cycle ;
%Straight Lines [id:da679099932309548]
\draw    (486.25,117.17) -- (485.6,60.99) ;
%Straight Lines [id:da158786505363671]
\draw    (520,116.42) -- (519.35,60.24) ;
%Straight Lines [id:da8026534704509869]
\draw    (467.7,116.57) -- (467.05,60.39) ;
%Straight Lines [id:da6695590132941431]
\draw    (486.75,173.17) -- (486.1,133.51) ;
%Straight Lines [id:da7086642195693698]
\draw    (520.5,172.64) -- (519.85,132.98) ;
%Straight Lines [id:da6966728822187357]
\draw    (468.2,172.74) -- (467.55,133.08) ;
%Shape: Rectangle [id:dp314110992908067]
\draw   (437.32,77.84) -- (456.17,77.84) -- (456.17,95.84) -- (437.32,95.84) -- cycle ;
%Straight Lines [id:da290588365181799]
\draw    (446.83,116.85) -- (446.75,96.34) ;
%Straight Lines [id:da27743133028626343]
\draw    (447.17,77.84) -- (447.25,60.84) ;

% Text Node
\draw (37.33,104.73) node [anchor=north west][inner sep=0.75pt]  [font=\large]  {$\Std_n \left( L_{1}\xrightarrow{f} L_{2}\right) :$};
% Text Node
\draw (290.27,119.8) node [anchor=north west][inner sep=0.75pt]    {$a$};
% Text Node
\draw (272,42.8) node [anchor=north west][inner sep=0.75pt]    {$l$};
% Text Node
\draw (289.63,44.94) node [anchor=north west][inner sep=0.75pt]  [font=\normalsize]  {$x$};
% Text Node
\draw (322.7,44.71) node [anchor=north west][inner sep=0.75pt]  [font=\normalsize]  {$x$};
% Text Node
\draw (304.23,76.38) node [anchor=north west][inner sep=0.75pt]    {$...$};
% Text Node
\draw (361.5,116.23) node [anchor=north west][inner sep=0.75pt]  [font=\Large]  {$\longmapsto $};
% Text Node
\draw (273.5,174.87) node [anchor=north west][inner sep=0.75pt]    {$l$};
% Text Node
\draw (291.13,176.49) node [anchor=north west][inner sep=0.75pt]  [font=\normalsize]  {$x$};
% Text Node
\draw (324.2,176.32) node [anchor=north west][inner sep=0.75pt]  [font=\normalsize]  {$x$};
% Text Node
\draw (304.73,140.8) node [anchor=north west][inner sep=0.75pt]    {$...$};
% Text Node
\draw (250.7,41.06) node [anchor=north west][inner sep=0.75pt]  [font=\normalsize]  {$L_{1}$};
% Text Node
\draw (478.77,120.3) node [anchor=north west][inner sep=0.75pt]    {$a$};
% Text Node
\draw (463.5,39.88) node [anchor=north west][inner sep=0.75pt]    {$l$};
% Text Node
\draw (481.13,41.92) node [anchor=north west][inner sep=0.75pt]  [font=\normalsize]  {$x$};
% Text Node
\draw (514.2,41.68) node [anchor=north west][inner sep=0.75pt]  [font=\normalsize]  {$x$};
% Text Node
\draw (495.73,75.02) node [anchor=north west][inner sep=0.75pt]    {$...$};
% Text Node
\draw (465,175.37) node [anchor=north west][inner sep=0.75pt]    {$l$};
% Text Node
\draw (482.63,176.99) node [anchor=north west][inner sep=0.75pt]  [font=\normalsize]  {$x$};
% Text Node
\draw (515.7,176.82) node [anchor=north west][inner sep=0.75pt]  [font=\normalsize]  {$x$};
% Text Node
\draw (496.23,141.3) node [anchor=north west][inner sep=0.75pt]    {$...$};
% Text Node
\draw (432.03,101.06) node [anchor=north west][inner sep=0.75pt]  [font=\footnotesize]  {$L_{1}$};
% Text Node
\draw (441.5,81.88) node [anchor=north west][inner sep=0.75pt]  [font=\footnotesize]  {$f$};
% Text Node
\draw (440.2,39.56) node [anchor=north west][inner sep=0.75pt]  [font=\normalsize]  {$L_{2}$};

\end{tikzpicture}

%% file: diagrams/bimodule-action-0917.tex
\begin{tikzpicture}[x=0.75pt,y=0.75pt,yscale=-1,xscale=1]
%uncomment if require: \path (0,201); %set diagram left start at 0, and has height of 201

%Shape: Rectangle [id:dp9074665999381528]
\draw   (349.34,91.34) -- (492,91.34) -- (492,110.33) -- (349.34,110.33) -- cycle ;
%Straight Lines [id:da7789749645119602]
\draw    (418.09,62.16) -- (418.17,91.11) ;
%Straight Lines [id:da3864559295570723]
\draw    (461.73,61.71) -- (461.21,91.5) ;
%Straight Lines [id:da3241043912138082]
\draw    (371.54,19.3) -- (370.64,91.31) ;
%Straight Lines [id:da37157664080206365]
\draw    (396.05,110.45) -- (396.05,135.27) ;
%Straight Lines [id:da0447781519636099]
\draw    (438.2,110.27) -- (438.2,135.09) ;
%Shape: Rectangle [id:dp42309482469493087]
\draw   (407.33,45.49) -- (470.1,45.49) -- (470.1,61.71) -- (407.33,61.71) -- cycle ;
%Shape: Rectangle [id:dp8538870962933545]
\draw   (385.66,135.27) -- (448.43,135.27) -- (448.43,151.79) -- (385.66,151.79) -- cycle ;
%Straight Lines [id:da9017218114882871]
\draw    (417.72,18.62) -- (417.72,45.1) ;
%Straight Lines [id:da016223614998364888]
\draw    (460.76,19.01) -- (460.76,45.49) ;
%Straight Lines [id:da1007488857421921]
\draw    (395.15,152.29) -- (395.15,174.64) ;
%Straight Lines [id:da3892078382319647]
\draw    (438.65,152.05) -- (438.65,174.4) ;

% Text Node
\draw (406.71,70.01) node [anchor=north west][inner sep=0.75pt]    {$l$};
% Text Node
\draw (442.22,68.67) node [anchor=north west][inner sep=0.75pt]    {$x^{n}$};
% Text Node
\draw (414.18,94.05) node [anchor=north west][inner sep=0.75pt]    {$\xi $};
% Text Node
\draw (434.06,49.16) node [anchor=north west][inner sep=0.75pt]    {$a$};
% Text Node
\draw (411.49,136.72) node [anchor=north west][inner sep=0.75pt]    {$b$};
% Text Node
\draw (406.71,17.61) node [anchor=north west][inner sep=0.75pt]    {$l$};
% Text Node
\draw (442.22,16.28) node [anchor=north west][inner sep=0.75pt]    {$x^{n}$};
% Text Node
\draw (355.38,16.27) node [anchor=north west][inner sep=0.75pt]    {$L$};
% Text Node
\draw (197.08,90.38) node [anchor=north west][inner sep=0.75pt]  [font=\large]  {$a\triangleright \xi \triangleleft b\quad :=$};
% Text Node
\draw (384.09,163.69) node [anchor=north west][inner sep=0.75pt]    {$l$};
% Text Node
\draw (419.6,162.41) node [anchor=north west][inner sep=0.75pt]    {$x^{n}$};
% Text Node
\draw (384.09,119.18) node [anchor=north west][inner sep=0.75pt]    {$l$};
% Text Node
\draw (419.6,117.89) node [anchor=north west][inner sep=0.75pt]    {$x^{n}$};

\end{tikzpicture}

%% file: diagrams/inner-product-0978.tex
\begin{tikzpicture}[x=0.75pt,y=0.75pt,yscale=-1,xscale=1]
%uncomment if require: \path (0,176); %set diagram left start at 0, and has height of 176

%Shape: Rectangle [id:dp14335840236409747]
\draw   (368.67,119.8) -- (511.33,119.8) -- (511.33,138.79) -- (368.67,138.79) -- cycle ;
%Straight Lines [id:da7332498577275243]
\draw    (437.42,90.62) -- (437.5,119.57) ;
%Straight Lines [id:da0539773148893149]
\draw    (481.06,90.17) -- (480.55,119.96) ;
%Straight Lines [id:da2802250982727017]
\draw    (415.38,138.91) -- (415.38,163.73) ;
%Straight Lines [id:da34212171044870177]
\draw    (457.53,138.73) -- (457.53,163.55) ;
%Shape: Rectangle [id:dp22605286961958537]
\draw   (367.34,70.8) -- (510,70.8) -- (510,89.79) -- (367.34,89.79) -- cycle ;
%Straight Lines [id:da8640138267356017]
\draw    (415.39,43.75) -- (415.39,70.23) ;
%Straight Lines [id:da04345023131590997]
\draw    (458.43,44.14) -- (458.43,70.62) ;
%Straight Lines [id:da48576577599848647]
\draw    (388.76,90.95) -- (388.84,119.91) ;

% Text Node
\draw (250.33,93.57) node [anchor=north west][inner sep=0.75pt]  [font=\large]  {$\langle \xi |\eta \rangle \quad :=$};
% Text Node
\draw (426.04,98.47) node [anchor=north west][inner sep=0.75pt]    {$l$};
% Text Node
\draw (461.55,97.13) node [anchor=north west][inner sep=0.75pt]    {$x^{n}$};
% Text Node
\draw (433.51,122.51) node [anchor=north west][inner sep=0.75pt]    {$\eta $};
% Text Node
\draw (403.43,147.63) node [anchor=north west][inner sep=0.75pt]    {$l$};
% Text Node
\draw (438.94,146.35) node [anchor=north west][inner sep=0.75pt]    {$x^{n}$};
% Text Node
\draw (432.18,73.51) node [anchor=north west][inner sep=0.75pt]    {$\xi ^{*}$};
% Text Node
\draw (404.37,42.74) node [anchor=north west][inner sep=0.75pt]    {$l$};
% Text Node
\draw (439.88,41.4) node [anchor=north west][inner sep=0.75pt]    {$x^{n}$};
% Text Node
\draw (376.37,98.8) node [anchor=north west][inner sep=0.75pt]    {$L$};

\end{tikzpicture}

%% file: diagrams/screening-map-1293.tex
\begin{tikzpicture}[x=0.75pt,y=0.75pt,yscale=-1,xscale=1]
%uncomment if require: \path (0,300); %set diagram left start at 0, and has height of 300

%Straight Lines [id:da2328714574174171]
\draw    (391.8,146.2) -- (390.48,223.11) -- (389.8,261.4) ;
%Straight Lines [id:da43926872313551724]
\draw    (432.2,146.6) -- (390.96,185.61) ;
%Straight Lines [id:da6231031801341773]
\draw    (461.4,146.6) -- (391.25,207.4) ;
%Straight Lines [id:da08478019777347334]
\draw    (477.45,156.1) -- (390.48,230.11) ;
%Shape: Triangle [id:dp9861883559964102]
\draw  [fill={rgb, 255:red, 255; green, 255; blue, 255 }  ,fill opacity=1 ] (425.71,177.35) -- (433.36,176.08) -- (428.68,170.54) -- cycle ;
%Shape: Triangle [id:dp538706821816028]
\draw  [fill={rgb, 255:red, 255; green, 255; blue, 255 }  ,fill opacity=1 ] (410.05,167.87) -- (417.71,166.6) -- (413.02,161.06) -- cycle ;
%Straight Lines [id:da6908254855787712]
\draw    (476.05,125) -- (390.6,39.8) ;
%Straight Lines [id:da07295020809953157]
\draw    (458,129.4) -- (391,64.5) ;
%Straight Lines [id:da03734009375966241]
\draw    (426,130) -- (391.5,90.5) ;
%Shape: Triangle [id:dp24013956887821852]
\draw  [fill={rgb, 255:red, 255; green, 255; blue, 255 }  ,fill opacity=1 ] (419.12,91.6) -- (421.49,98.99) -- (426.3,93.55) -- cycle ;
%Shape: Triangle [id:dp8292661749861381]
\draw  [fill={rgb, 255:red, 255; green, 255; blue, 255 }  ,fill opacity=1 ] (407.06,108.08) -- (408.83,115.64) -- (414.05,110.6) -- cycle ;
%Shape: Rectangle [id:dp1109698265327479]
\draw   (308.33,129.8) -- (468.6,129.8) -- (468.6,146.2) -- (308.33,146.2) -- cycle ;
%Straight Lines [id:da998212525592724]
\draw    (390.6,15) -- (391.05,107.8) -- (391.2,129.6) ;
%Curve Lines [id:da7681744254688603]
\draw    (476.05,125) .. controls (475.67,124.53) and (481.4,129.8) .. (485,135.8) .. controls (488.6,141.8) and (484.2,150.2) .. (477.45,156.1) ;
%Shape: Triangle [id:dp06955065406182648]
\draw  [fill={rgb, 255:red, 255; green, 255; blue, 255 }  ,fill opacity=1 ] (394.95,114.03) -- (391.05,107.8) -- (387.09,113.89) -- cycle ;
%Shape: Triangle [id:dp015154498967470342]
\draw  [fill={rgb, 255:red, 255; green, 255; blue, 255 }  ,fill opacity=1 ] (395.59,157.92) -- (387.83,158.06) -- (391.66,164.22) -- cycle ;
%Straight Lines [id:da8361609384448809]
\draw    (343.53,147) -- (391,193) ;
%Straight Lines [id:da5532105579454207]
\draw    (315.29,147.33) -- (390.5,217) ;
%Straight Lines [id:da32723112231359863]
\draw    (294.21,154.8) -- (390,240.5) ;
%Shape: Triangle [id:dp6913554943865754]
\draw  [fill={rgb, 255:red, 255; green, 255; blue, 255 }  ,fill opacity=1 ] (348.03,177.75) -- (340.37,176.48) -- (345.06,170.94) -- cycle ;
%Shape: Triangle [id:dp08091769111075886]
\draw  [fill={rgb, 255:red, 255; green, 255; blue, 255 }  ,fill opacity=1 ] (364.68,167.27) -- (357.03,166) -- (361.71,160.46) -- cycle ;
%Straight Lines [id:da884211944233997]
\draw    (294.68,125.4) -- (391,28.5) ;
%Straight Lines [id:da9080969240966476]
\draw    (311.73,129.8) -- (390.5,51) ;
%Straight Lines [id:da510247761611698]
\draw    (338.93,129.6) -- (390.5,79.5) ;
%Shape: Triangle [id:dp8241929751392575]
\draw  [fill={rgb, 255:red, 255; green, 255; blue, 255 }  ,fill opacity=1 ] (350.61,91) -- (348.24,98.39) -- (343.44,92.95) -- cycle ;
%Shape: Triangle [id:dp17376527744225168]
\draw  [fill={rgb, 255:red, 255; green, 255; blue, 255 }  ,fill opacity=1 ] (363.68,105.48) -- (361.9,113.04) -- (356.68,108) -- cycle ;
%Shape: Arc [id:dp553626198299961]
\draw  [draw opacity=0] (294.21,154.8) .. controls (290.22,151.24) and (287.71,146.07) .. (287.71,140.3) .. controls (287.71,134.32) and (290.42,128.96) .. (294.68,125.4) -- (307.13,140.3) -- cycle ; \draw   (294.21,154.8) .. controls (290.22,151.24) and (287.71,146.07) .. (287.71,140.3) .. controls (287.71,134.32) and (290.42,128.96) .. (294.68,125.4) ;

% Text Node
\draw (376.2,251) node [anchor=north west][inner sep=0.75pt]  [font=\footnotesize]  {$Q$};
% Text Node
\draw (379.7,166.2) node [anchor=north west][inner sep=0.75pt]  [font=\scriptsize]  {$m ^{\dagger }$};
% Text Node
\draw (378.9,214.9) node [anchor=north west][inner sep=0.75pt]  [font=\scriptsize]  {$m ^{\dagger }$};
% Text Node
\draw (378.48,190.51) node [anchor=north west][inner sep=0.75pt]  [font=\scriptsize]  {$m ^{\dagger }$};
% Text Node
\draw (391.6,24.5) node [anchor=north west][inner sep=0.75pt]  [font=\scriptsize]  {$m $};
% Text Node
\draw (392.3,51.7) node [anchor=north west][inner sep=0.75pt]  [font=\scriptsize]  {$m $};
% Text Node
\draw (392.2,76.3) node [anchor=north west][inner sep=0.75pt]  [font=\scriptsize]  {$m $};
% Text Node
\draw (401.2,177.4) node [anchor=north west][inner sep=0.75pt]    {$...$};
% Text Node
\draw (406.4,98.4) node [anchor=north west][inner sep=0.75pt]    {$...$};
% Text Node
\draw (488.5,128.3) node [anchor=north west][inner sep=0.75pt]  [font=\scriptsize]  {$Q$};
% Text Node
\draw (275.3,127.5) node [anchor=north west][inner sep=0.75pt]  [font=\scriptsize]  {$Q$};
% Text Node
\draw (384.93,135.4) node [anchor=north west][inner sep=0.75pt]    {$a$};
% Text Node
\draw (356.53,176.8) node [anchor=north east][inner sep=0.75pt]  [xscale=-1]  {$...$};
% Text Node
\draw (346.33,99.8) node [anchor=north east][inner sep=0.75pt]  [xscale=-1]  {$...$};
% Text Node
\draw (377.1,5.9) node [anchor=north west][inner sep=0.75pt]  [font=\footnotesize]  {$Q$};
% Text Node
\draw (180.5,128.9) node [anchor=north west][inner sep=0.75pt]  [font=\large]  {$\mathrm{scr}( a) \quad :=$};
% Text Node
\draw (364.3,74) node [anchor=north west][inner sep=0.75pt]  [font=\scriptsize]  {$Q$};
% Text Node
\draw (407.3,72) node [anchor=north west][inner sep=0.75pt]  [font=\scriptsize]  {$Q$};
% Text Node
\draw (324.8,115.5) node [anchor=north west][inner sep=0.75pt]  [font=\footnotesize]  {$x$};
% Text Node
\draw (436.8,117.5) node [anchor=north west][inner sep=0.75pt]  [font=\footnotesize]  {$x$};
% Text Node
\draw (351.3,117) node [anchor=north west][inner sep=0.75pt]  [font=\footnotesize]  {$x$};
% Text Node
\draw (407.3,118.5) node [anchor=north west][inner sep=0.75pt]  [font=\footnotesize]  {$x$};
% Text Node
\draw (394.3,88.5) node [anchor=north west][inner sep=0.75pt]  [font=\scriptsize]  {$Q$};
% Text Node
\draw (370.8,95.5) node [anchor=north west][inner sep=0.75pt]  [font=\scriptsize]  {$Q$};
% Text Node
\draw (323.8,149.5) node [anchor=north west][inner sep=0.75pt]  [font=\footnotesize]  {$x$};
% Text Node
\draw (442.8,149.5) node [anchor=north west][inner sep=0.75pt]  [font=\footnotesize]  {$x$};
% Text Node
\draw (352.3,149) node [anchor=north west][inner sep=0.75pt]  [font=\footnotesize]  {$x$};
% Text Node
\draw (414.3,149.5) node [anchor=north west][inner sep=0.75pt]  [font=\footnotesize]  {$x$};
% Text Node
\draw (396.3,163.5) node [anchor=north west][inner sep=0.75pt]  [font=\scriptsize]  {$Q$};
% Text Node
\draw (404.8,192) node [anchor=north west][inner sep=0.75pt]  [font=\scriptsize]  {$Q$};
% Text Node
\draw (367.3,162) node [anchor=north west][inner sep=0.75pt]  [font=\scriptsize]  {$Q$};
% Text Node
\draw (356.3,193.5) node [anchor=north west][inner sep=0.75pt]  [font=\scriptsize]  {$Q$};

\end{tikzpicture}

%% file: diagrams/frobenius-module-1436.tex
\begin{tikzpicture}[x=0.75pt,y=0.75pt,yscale=-1,xscale=1]
%uncomment if require: \path (0,313); %set diagram left start at 0, and has height of 313

%Straight Lines [id:da7438025785744227]
\draw    (81.8,165.1) -- (80.48,242.01) -- (79.8,280.3) ;
%Straight Lines [id:da8669102427179995]
\draw    (122.2,165.5) -- (80.96,204.51) ;
%Straight Lines [id:da49192713075396577]
\draw    (151.4,165.5) -- (81.25,226.3) ;
%Straight Lines [id:da8288446422997828]
\draw    (167.45,175) -- (80.48,249.01) ;
%Shape: Triangle [id:dp574843806834555]
\draw  [fill={rgb, 255:red, 255; green, 255; blue, 255 }  ,fill opacity=1 ] (115.71,196.25) -- (123.36,194.98) -- (118.68,189.44) -- cycle ;
%Shape: Triangle [id:dp3820973315020433]
\draw  [fill={rgb, 255:red, 255; green, 255; blue, 255 }  ,fill opacity=1 ] (100.05,186.77) -- (107.71,185.5) -- (103.02,179.96) -- cycle ;
%Straight Lines [id:da10519033123223387]
\draw    (166.05,143.9) -- (80.6,58.7) ;
%Straight Lines [id:da9164973276295902]
\draw    (148,148.3) -- (81,83.4) ;
%Straight Lines [id:da9348150340999962]
\draw    (116,148.9) -- (81.5,109.4) ;
%Shape: Triangle [id:dp15800712757608593]
\draw  [fill={rgb, 255:red, 255; green, 255; blue, 255 }  ,fill opacity=1 ] (109.12,110.5) -- (111.49,117.89) -- (116.3,112.45) -- cycle ;
%Shape: Triangle [id:dp5061166768196215]
\draw  [fill={rgb, 255:red, 255; green, 255; blue, 255 }  ,fill opacity=1 ] (97.06,126.98) -- (98.83,134.54) -- (104.05,129.5) -- cycle ;
%Shape: Rectangle [id:dp19123877918262588]
\draw   (-1.67,148.7) -- (158.6,148.7) -- (158.6,165.1) -- (-1.67,165.1) -- cycle ;
%Straight Lines [id:da9328884177633282]
\draw    (80.6,33.9) -- (81.05,126.7) -- (81.2,148.5) ;
%Curve Lines [id:da7097294366465914]
\draw    (166.05,143.9) .. controls (165.67,143.43) and (171.4,148.7) .. (175,154.7) .. controls (178.6,160.7) and (174.2,169.1) .. (167.45,175) ;
%Shape: Triangle [id:dp11026381308157374]
\draw  [fill={rgb, 255:red, 255; green, 255; blue, 255 }  ,fill opacity=1 ] (84.95,132.93) -- (81.05,126.7) -- (77.09,132.79) -- cycle ;
%Shape: Triangle [id:dp4944301939721133]
\draw  [fill={rgb, 255:red, 255; green, 255; blue, 255 }  ,fill opacity=1 ] (85.59,176.82) -- (77.83,176.96) -- (81.66,183.12) -- cycle ;
%Straight Lines [id:da0821254640288992]
\draw    (33.53,165.9) -- (81,211.9) ;
%Straight Lines [id:da5648330439980862]
\draw    (5.29,166.23) -- (80.5,235.9) ;
%Straight Lines [id:da24340227069735787]
\draw    (-15.79,173.7) -- (80,259.4) ;
%Shape: Triangle [id:dp1227586352794594]
\draw  [fill={rgb, 255:red, 255; green, 255; blue, 255 }  ,fill opacity=1 ] (38.03,196.65) -- (30.37,195.38) -- (35.06,189.84) -- cycle ;
%Shape: Triangle [id:dp5596698606875515]
\draw  [fill={rgb, 255:red, 255; green, 255; blue, 255 }  ,fill opacity=1 ] (54.68,186.17) -- (47.03,184.9) -- (51.71,179.36) -- cycle ;
%Straight Lines [id:da21759608019557453]
\draw    (-15.32,144.3) -- (81,47.4) ;
%Straight Lines [id:da9243839063735605]
\draw    (1.73,148.7) -- (80.5,69.9) ;
%Straight Lines [id:da8279438759272657]
\draw    (28.93,148.5) -- (80.5,98.4) ;
%Shape: Triangle [id:dp21480081241052795]
\draw  [fill={rgb, 255:red, 255; green, 255; blue, 255 }  ,fill opacity=1 ] (40.61,109.9) -- (38.24,117.29) -- (33.44,111.85) -- cycle ;
%Shape: Triangle [id:dp9360116149487699]
\draw  [fill={rgb, 255:red, 255; green, 255; blue, 255 }  ,fill opacity=1 ] (53.68,124.38) -- (51.9,131.94) -- (46.68,126.9) -- cycle ;
%Shape: Arc [id:dp218294799674829]
\draw  [draw opacity=0] (-15.79,173.7) .. controls (-19.78,170.14) and (-22.29,164.97) .. (-22.29,159.2) .. controls (-22.29,153.22) and (-19.58,147.86) .. (-15.32,144.3) -- (-2.87,159.2) -- cycle ; \draw   (-15.79,173.7) .. controls (-19.78,170.14) and (-22.29,164.97) .. (-22.29,159.2) .. controls (-22.29,153.22) and (-19.58,147.86) .. (-15.32,144.3) ;
%Straight Lines [id:da5382153714416286]
\draw    (80.5,270) -- (110,280) ;
%Straight Lines [id:da34349920898617714]
\draw    (360.5,170.7) -- (359.18,247.61) -- (358.5,285.9) ;
%Straight Lines [id:da7605181656941171]
\draw    (400.9,171.1) -- (359.66,210.11) ;
%Straight Lines [id:da07883075925239769]
\draw    (430.1,171.1) -- (359.95,231.9) ;
%Straight Lines [id:da5516525263656364]
\draw    (446.15,180.6) -- (359.18,254.61) ;
%Shape: Triangle [id:dp6001345975534604]
\draw  [fill={rgb, 255:red, 255; green, 255; blue, 255 }  ,fill opacity=1 ] (394.41,201.85) -- (402.06,200.58) -- (397.38,195.04) -- cycle ;
%Shape: Triangle [id:dp08727978631654354]
\draw  [fill={rgb, 255:red, 255; green, 255; blue, 255 }  ,fill opacity=1 ] (378.75,192.37) -- (386.41,191.1) -- (381.72,185.56) -- cycle ;
%Straight Lines [id:da060221009529701]
\draw    (444.75,149.5) -- (359.3,64.3) ;
%Straight Lines [id:da10392247211669692]
\draw    (426.7,153.9) -- (359.7,89) ;
%Straight Lines [id:da9085567162198843]
\draw    (394.7,154.5) -- (360.2,115) ;
%Shape: Triangle [id:dp04098949046020661]
\draw  [fill={rgb, 255:red, 255; green, 255; blue, 255 }  ,fill opacity=1 ] (387.82,116.1) -- (390.19,123.49) -- (395,118.05) -- cycle ;
%Shape: Triangle [id:dp8262251259792709]
\draw  [fill={rgb, 255:red, 255; green, 255; blue, 255 }  ,fill opacity=1 ] (375.76,132.58) -- (377.53,140.14) -- (382.75,135.1) -- cycle ;
%Shape: Rectangle [id:dp976393081741011]
\draw   (277.03,154.3) -- (437.3,154.3) -- (437.3,170.7) -- (277.03,170.7) -- cycle ;
%Straight Lines [id:da8715615530467257]
\draw    (359.3,39.5) -- (359.75,132.3) -- (359.9,154.1) ;
%Curve Lines [id:da2671149216033609]
\draw    (444.75,149.5) .. controls (444.37,149.03) and (450.1,154.3) .. (453.7,160.3) .. controls (457.3,166.3) and (452.9,174.7) .. (446.15,180.6) ;
%Shape: Triangle [id:dp49565039813688716]
\draw  [fill={rgb, 255:red, 255; green, 255; blue, 255 }  ,fill opacity=1 ] (363.65,138.53) -- (359.75,132.3) -- (355.79,138.39) -- cycle ;
%Shape: Triangle [id:dp8553079811607096]
\draw  [fill={rgb, 255:red, 255; green, 255; blue, 255 }  ,fill opacity=1 ] (364.29,182.42) -- (356.53,182.56) -- (360.36,188.72) -- cycle ;
%Straight Lines [id:da23378394559278493]
\draw    (312.23,171.5) -- (359.7,217.5) ;
%Straight Lines [id:da1357942512032928]
\draw    (283.99,171.83) -- (359.2,241.5) ;
%Straight Lines [id:da9376723993939036]
\draw    (262.91,179.3) -- (358.7,265) ;
%Shape: Triangle [id:dp5158679825983292]
\draw  [fill={rgb, 255:red, 255; green, 255; blue, 255 }  ,fill opacity=1 ] (316.73,202.25) -- (309.07,200.98) -- (313.76,195.44) -- cycle ;
%Shape: Triangle [id:dp23619558263205098]
\draw  [fill={rgb, 255:red, 255; green, 255; blue, 255 }  ,fill opacity=1 ] (333.38,191.77) -- (325.73,190.5) -- (330.41,184.96) -- cycle ;
%Straight Lines [id:da3909749739412438]
\draw    (263.38,149.9) -- (359.7,53) ;
%Straight Lines [id:da30888637914335604]
\draw    (280.43,154.3) -- (359.2,75.5) ;
%Straight Lines [id:da861444935343907]
\draw    (307.63,154.1) -- (359.2,104) ;
%Shape: Triangle [id:dp39148818013894093]
\draw  [fill={rgb, 255:red, 255; green, 255; blue, 255 }  ,fill opacity=1 ] (319.31,115.5) -- (316.94,122.89) -- (312.14,117.45) -- cycle ;
%Shape: Triangle [id:dp2759335307289804]
\draw  [fill={rgb, 255:red, 255; green, 255; blue, 255 }  ,fill opacity=1 ] (332.38,129.98) -- (330.6,137.54) -- (325.38,132.5) -- cycle ;
%Shape: Arc [id:dp9017769114876162]
\draw  [draw opacity=0] (262.91,179.3) .. controls (258.92,175.74) and (256.41,170.57) .. (256.41,164.8) .. controls (256.41,158.82) and (259.12,153.46) .. (263.38,149.9) -- (275.83,164.8) -- cycle ; \draw   (262.91,179.3) .. controls (258.92,175.74) and (256.41,170.57) .. (256.41,164.8) .. controls (256.41,158.82) and (259.12,153.46) .. (263.38,149.9) ;
%Straight Lines [id:da8786509683406637]
\draw    (643.5,173.2) -- (642.18,250.11) -- (641.5,288.4) ;
%Straight Lines [id:da5100946896828316]
\draw    (683.9,173.6) -- (642.66,212.61) ;
%Straight Lines [id:da4826650777421515]
\draw    (713.1,173.6) -- (642.95,234.4) ;
%Straight Lines [id:da7274108239239004]
\draw    (729.15,183.1) -- (642.18,257.11) ;
%Shape: Triangle [id:dp6229775700396352]
\draw  [fill={rgb, 255:red, 255; green, 255; blue, 255 }  ,fill opacity=1 ] (677.41,204.35) -- (685.06,203.08) -- (680.38,197.54) -- cycle ;
%Shape: Triangle [id:dp814624914769608]
\draw  [fill={rgb, 255:red, 255; green, 255; blue, 255 }  ,fill opacity=1 ] (661.75,194.87) -- (669.41,193.6) -- (664.72,188.06) -- cycle ;
%Straight Lines [id:da0992068016894927]
\draw    (727.75,152) -- (642.3,66.8) ;
%Straight Lines [id:da6602379529202727]
\draw    (709.7,156.4) -- (642.7,91.5) ;
%Straight Lines [id:da5699610716129291]
\draw    (677.7,157) -- (643.2,117.5) ;
%Shape: Triangle [id:dp8940782565676124]
\draw  [fill={rgb, 255:red, 255; green, 255; blue, 255 }  ,fill opacity=1 ] (670.82,118.6) -- (673.19,125.99) -- (678,120.55) -- cycle ;
%Shape: Triangle [id:dp10539850230625059]
\draw  [fill={rgb, 255:red, 255; green, 255; blue, 255 }  ,fill opacity=1 ] (658.76,135.08) -- (660.53,142.64) -- (665.75,137.6) -- cycle ;
%Shape: Rectangle [id:dp3469027445265428]
\draw   (560.03,156.8) -- (720.3,156.8) -- (720.3,173.2) -- (560.03,173.2) -- cycle ;
%Straight Lines [id:da4979139857977024]
\draw    (641.5,25.5) -- (642.75,134.8) -- (642.9,156.6) ;
%Curve Lines [id:da5783421487605731]
\draw    (727.75,152) .. controls (727.37,151.53) and (733.1,156.8) .. (736.7,162.8) .. controls (740.3,168.8) and (735.9,177.2) .. (729.15,183.1) ;
%Shape: Triangle [id:dp6702037308778169]
\draw  [fill={rgb, 255:red, 255; green, 255; blue, 255 }  ,fill opacity=1 ] (646.65,141.03) -- (642.75,134.8) -- (638.79,140.89) -- cycle ;
%Shape: Triangle [id:dp8501196178217094]
\draw  [fill={rgb, 255:red, 255; green, 255; blue, 255 }  ,fill opacity=1 ] (647.29,184.92) -- (639.53,185.06) -- (643.36,191.22) -- cycle ;
%Straight Lines [id:da8964340144563354]
\draw    (595.23,174) -- (642.7,220) ;
%Straight Lines [id:da40538482697650025]
\draw    (566.99,174.33) -- (642.2,244) ;
%Straight Lines [id:da6174595559268967]
\draw    (545.91,181.8) -- (641.7,267.5) ;
%Shape: Triangle [id:dp4718293852644968]
\draw  [fill={rgb, 255:red, 255; green, 255; blue, 255 }  ,fill opacity=1 ] (599.73,204.75) -- (592.07,203.48) -- (596.76,197.94) -- cycle ;
%Shape: Triangle [id:dp5105336153524782]
\draw  [fill={rgb, 255:red, 255; green, 255; blue, 255 }  ,fill opacity=1 ] (616.38,194.27) -- (608.73,193) -- (613.41,187.46) -- cycle ;
%Straight Lines [id:da11328529364205786]
\draw    (546.38,152.4) -- (642,56) ;
%Straight Lines [id:da21916524795865133]
\draw    (563.43,156.8) -- (642.2,78) ;
%Straight Lines [id:da2775169796376221]
\draw    (590.63,156.6) -- (642.2,106.5) ;
%Shape: Triangle [id:dp2819585030705265]
\draw  [fill={rgb, 255:red, 255; green, 255; blue, 255 }  ,fill opacity=1 ] (602.31,118) -- (599.94,125.39) -- (595.14,119.95) -- cycle ;
%Shape: Triangle [id:dp5285455629704218]
\draw  [fill={rgb, 255:red, 255; green, 255; blue, 255 }  ,fill opacity=1 ] (615.38,132.48) -- (613.6,140.04) -- (608.38,135) -- cycle ;
%Shape: Arc [id:dp6162416498856178]
\draw  [draw opacity=0] (545.91,181.8) .. controls (541.92,178.24) and (539.41,173.07) .. (539.41,167.3) .. controls (539.41,161.32) and (542.12,155.96) .. (546.38,152.4) -- (558.83,167.3) -- cycle ; \draw   (545.91,181.8) .. controls (541.92,178.24) and (539.41,173.07) .. (539.41,167.3) .. controls (539.41,161.32) and (542.12,155.96) .. (546.38,152.4) ;
%Curve Lines [id:da24977817976627759]
\draw    (642.5,49) .. controls (696,41) and (820,133) .. (721,276) ;
%Curve Lines [id:da9380704847375879]
\draw    (455,163) .. controls (454,205) and (452,208.5) .. (444,230.5) .. controls (436,252.5) and (432.5,263) .. (423,280.5) ;

% Text Node
\draw (66.2,269.9) node [anchor=north west][inner sep=0.75pt]  [font=\footnotesize]  {$Q$};
% Text Node
\draw (69.7,185.1) node [anchor=north west][inner sep=0.75pt]  [font=\scriptsize]  {$m ^{\dagger }$};
% Text Node
\draw (68.9,233.8) node [anchor=north west][inner sep=0.75pt]  [font=\scriptsize]  {$m ^{\dagger }$};
% Text Node
\draw (68.48,209.41) node [anchor=north west][inner sep=0.75pt]  [font=\scriptsize]  {$m ^{\dagger }$};
% Text Node
\draw (81.6,43.4) node [anchor=north west][inner sep=0.75pt]  [font=\scriptsize]  {$m $};
% Text Node
\draw (82.3,70.6) node [anchor=north west][inner sep=0.75pt]  [font=\scriptsize]  {$m $};
% Text Node
\draw (82.2,95.2) node [anchor=north west][inner sep=0.75pt]  [font=\scriptsize]  {$m $};
% Text Node
\draw (91.2,196.3) node [anchor=north west][inner sep=0.75pt]    {$...$};
% Text Node
\draw (96.4,117.3) node [anchor=north west][inner sep=0.75pt]    {$...$};
% Text Node
\draw (178.5,147.2) node [anchor=north west][inner sep=0.75pt]  [font=\scriptsize]  {$Q$};
% Text Node
\draw (-34.7,146.4) node [anchor=north west][inner sep=0.75pt]  [font=\scriptsize]  {$Q$};
% Text Node
\draw (74.93,154.3) node [anchor=north west][inner sep=0.75pt]    {$a$};
% Text Node
\draw (46.53,195.7) node [anchor=north east][inner sep=0.75pt]  [xscale=-1]  {$...$};
% Text Node
\draw (36.33,118.7) node [anchor=north east][inner sep=0.75pt]  [xscale=-1]  {$...$};
% Text Node
\draw (67.1,24.8) node [anchor=north west][inner sep=0.75pt]  [font=\footnotesize]  {$Q$};
% Text Node
\draw (54.3,92.9) node [anchor=north west][inner sep=0.75pt]  [font=\scriptsize]  {$Q$};
% Text Node
\draw (97.3,90.9) node [anchor=north west][inner sep=0.75pt]  [font=\scriptsize]  {$Q$};
% Text Node
\draw (14.8,134.4) node [anchor=north west][inner sep=0.75pt]  [font=\footnotesize]  {$x$};
% Text Node
\draw (126.8,136.4) node [anchor=north west][inner sep=0.75pt]  [font=\footnotesize]  {$x$};
% Text Node
\draw (41.3,135.9) node [anchor=north west][inner sep=0.75pt]  [font=\footnotesize]  {$x$};
% Text Node
\draw (97.3,137.4) node [anchor=north west][inner sep=0.75pt]  [font=\footnotesize]  {$x$};
% Text Node
\draw (84.3,107.4) node [anchor=north west][inner sep=0.75pt]  [font=\scriptsize]  {$Q$};
% Text Node
\draw (60.8,114.4) node [anchor=north west][inner sep=0.75pt]  [font=\scriptsize]  {$Q$};
% Text Node
\draw (13.8,168.4) node [anchor=north west][inner sep=0.75pt]  [font=\footnotesize]  {$x$};
% Text Node
\draw (132.8,168.4) node [anchor=north west][inner sep=0.75pt]  [font=\footnotesize]  {$x$};
% Text Node
\draw (42.3,167.9) node [anchor=north west][inner sep=0.75pt]  [font=\footnotesize]  {$x$};
% Text Node
\draw (104.3,168.4) node [anchor=north west][inner sep=0.75pt]  [font=\footnotesize]  {$x$};
% Text Node
\draw (86.3,182.4) node [anchor=north west][inner sep=0.75pt]  [font=\scriptsize]  {$Q$};
% Text Node
\draw (94.8,210.9) node [anchor=north west][inner sep=0.75pt]  [font=\scriptsize]  {$Q$};
% Text Node
\draw (57.3,180.9) node [anchor=north west][inner sep=0.75pt]  [font=\scriptsize]  {$Q$};
% Text Node
\draw (46.3,212.4) node [anchor=north west][inner sep=0.75pt]  [font=\scriptsize]  {$Q$};
% Text Node
\draw (113.2,271.4) node [anchor=north west][inner sep=0.75pt]  [font=\footnotesize]  {$Q$};
% Text Node
\draw (205.5,149.4) node [anchor=north west][inner sep=0.75pt]  [font=\large]  {$=$};
% Text Node
\draw (486.5,153.4) node [anchor=north west][inner sep=0.75pt]  [font=\large]  {$=$};
% Text Node
\draw (344.9,275.5) node [anchor=north west][inner sep=0.75pt]  [font=\footnotesize]  {$Q$};
% Text Node
\draw (348.4,190.7) node [anchor=north west][inner sep=0.75pt]  [font=\scriptsize]  {$m ^{\dagger }$};
% Text Node
\draw (347.6,239.4) node [anchor=north west][inner sep=0.75pt]  [font=\scriptsize]  {$m ^{\dagger }$};
% Text Node
\draw (347.18,215.01) node [anchor=north west][inner sep=0.75pt]  [font=\scriptsize]  {$m ^{\dagger }$};
% Text Node
\draw (360.3,49) node [anchor=north west][inner sep=0.75pt]  [font=\scriptsize]  {$m $};
% Text Node
\draw (361,76.2) node [anchor=north west][inner sep=0.75pt]  [font=\scriptsize]  {$m $};
% Text Node
\draw (360.9,100.8) node [anchor=north west][inner sep=0.75pt]  [font=\scriptsize]  {$m $};
% Text Node
\draw (369.9,201.9) node [anchor=north west][inner sep=0.75pt]    {$...$};
% Text Node
\draw (375.1,122.9) node [anchor=north west][inner sep=0.75pt]    {$...$};
% Text Node
\draw (457.2,152.8) node [anchor=north west][inner sep=0.75pt]  [font=\scriptsize]  {$Q$};
% Text Node
\draw (244,152) node [anchor=north west][inner sep=0.75pt]  [font=\scriptsize]  {$Q$};
% Text Node
\draw (353.63,159.9) node [anchor=north west][inner sep=0.75pt]    {$a$};
% Text Node
\draw (325.23,201.3) node [anchor=north east][inner sep=0.75pt]  [xscale=-1]  {$...$};
% Text Node
\draw (315.03,124.3) node [anchor=north east][inner sep=0.75pt]  [xscale=-1]  {$...$};
% Text Node
\draw (345.8,30.4) node [anchor=north west][inner sep=0.75pt]  [font=\footnotesize]  {$Q$};
% Text Node
\draw (333,98.5) node [anchor=north west][inner sep=0.75pt]  [font=\scriptsize]  {$Q$};
% Text Node
\draw (376,96.5) node [anchor=north west][inner sep=0.75pt]  [font=\scriptsize]  {$Q$};
% Text Node
\draw (293.5,140) node [anchor=north west][inner sep=0.75pt]  [font=\footnotesize]  {$x$};
% Text Node
\draw (405.5,142) node [anchor=north west][inner sep=0.75pt]  [font=\footnotesize]  {$x$};
% Text Node
\draw (320,141.5) node [anchor=north west][inner sep=0.75pt]  [font=\footnotesize]  {$x$};
% Text Node
\draw (376,143) node [anchor=north west][inner sep=0.75pt]  [font=\footnotesize]  {$x$};
% Text Node
\draw (363,113) node [anchor=north west][inner sep=0.75pt]  [font=\scriptsize]  {$Q$};
% Text Node
\draw (339.5,120) node [anchor=north west][inner sep=0.75pt]  [font=\scriptsize]  {$Q$};
% Text Node
\draw (292.5,174) node [anchor=north west][inner sep=0.75pt]  [font=\footnotesize]  {$x$};
% Text Node
\draw (411.5,174) node [anchor=north west][inner sep=0.75pt]  [font=\footnotesize]  {$x$};
% Text Node
\draw (321,173.5) node [anchor=north west][inner sep=0.75pt]  [font=\footnotesize]  {$x$};
% Text Node
\draw (383,174) node [anchor=north west][inner sep=0.75pt]  [font=\footnotesize]  {$x$};
% Text Node
\draw (365,188) node [anchor=north west][inner sep=0.75pt]  [font=\scriptsize]  {$Q$};
% Text Node
\draw (373.5,216.5) node [anchor=north west][inner sep=0.75pt]  [font=\scriptsize]  {$Q$};
% Text Node
\draw (336,186.5) node [anchor=north west][inner sep=0.75pt]  [font=\scriptsize]  {$Q$};
% Text Node
\draw (325,218) node [anchor=north west][inner sep=0.75pt]  [font=\scriptsize]  {$Q$};
% Text Node
\draw (405.4,274.5) node [anchor=north west][inner sep=0.75pt]  [font=\footnotesize]  {$Q$};
% Text Node
\draw (627.9,278) node [anchor=north west][inner sep=0.75pt]  [font=\footnotesize]  {$Q$};
% Text Node
\draw (631.4,193.2) node [anchor=north west][inner sep=0.75pt]  [font=\scriptsize]  {$m ^{\dagger }$};
% Text Node
\draw (630.6,241.9) node [anchor=north west][inner sep=0.75pt]  [font=\scriptsize]  {$m ^{\dagger }$};
% Text Node
\draw (630.18,217.51) node [anchor=north west][inner sep=0.75pt]  [font=\scriptsize]  {$m ^{\dagger }$};
% Text Node
\draw (643.3,51.5) node [anchor=north west][inner sep=0.75pt]  [font=\scriptsize]  {$m $};
% Text Node
\draw (644,78.7) node [anchor=north west][inner sep=0.75pt]  [font=\scriptsize]  {$m $};
% Text Node
\draw (643.9,103.3) node [anchor=north west][inner sep=0.75pt]  [font=\scriptsize]  {$m $};
% Text Node
\draw (652.9,204.4) node [anchor=north west][inner sep=0.75pt]    {$...$};
% Text Node
\draw (658.1,125.4) node [anchor=north west][inner sep=0.75pt]    {$...$};
% Text Node
\draw (740.2,155.3) node [anchor=north west][inner sep=0.75pt]  [font=\scriptsize]  {$Q$};
% Text Node
\draw (527,154.5) node [anchor=north west][inner sep=0.75pt]  [font=\scriptsize]  {$Q$};
% Text Node
\draw (636.63,162.4) node [anchor=north west][inner sep=0.75pt]    {$a$};
% Text Node
\draw (608.23,203.8) node [anchor=north east][inner sep=0.75pt]  [xscale=-1]  {$...$};
% Text Node
\draw (598.03,126.8) node [anchor=north east][inner sep=0.75pt]  [xscale=-1]  {$...$};
% Text Node
\draw (626.3,27.4) node [anchor=north west][inner sep=0.75pt]  [font=\footnotesize]  {$Q$};
% Text Node
\draw (616,101) node [anchor=north west][inner sep=0.75pt]  [font=\scriptsize]  {$Q$};
% Text Node
\draw (659,99) node [anchor=north west][inner sep=0.75pt]  [font=\scriptsize]  {$Q$};
% Text Node
\draw (576.5,142.5) node [anchor=north west][inner sep=0.75pt]  [font=\footnotesize]  {$x$};
% Text Node
\draw (688.5,144.5) node [anchor=north west][inner sep=0.75pt]  [font=\footnotesize]  {$x$};
% Text Node
\draw (603,144) node [anchor=north west][inner sep=0.75pt]  [font=\footnotesize]  {$x$};
% Text Node
\draw (659,145.5) node [anchor=north west][inner sep=0.75pt]  [font=\footnotesize]  {$x$};
% Text Node
\draw (646,115.5) node [anchor=north west][inner sep=0.75pt]  [font=\scriptsize]  {$Q$};
% Text Node
\draw (622.5,122.5) node [anchor=north west][inner sep=0.75pt]  [font=\scriptsize]  {$Q$};
% Text Node
\draw (575.5,176.5) node [anchor=north west][inner sep=0.75pt]  [font=\footnotesize]  {$x$};
% Text Node
\draw (694.5,176.5) node [anchor=north west][inner sep=0.75pt]  [font=\footnotesize]  {$x$};
% Text Node
\draw (604,176) node [anchor=north west][inner sep=0.75pt]  [font=\footnotesize]  {$x$};
% Text Node
\draw (666,176.5) node [anchor=north west][inner sep=0.75pt]  [font=\footnotesize]  {$x$};
% Text Node
\draw (648,190.5) node [anchor=north west][inner sep=0.75pt]  [font=\scriptsize]  {$Q$};
% Text Node
\draw (656.5,219) node [anchor=north west][inner sep=0.75pt]  [font=\scriptsize]  {$Q$};
% Text Node
\draw (619,189) node [anchor=north west][inner sep=0.75pt]  [font=\scriptsize]  {$Q$};
% Text Node
\draw (608,220.5) node [anchor=north west][inner sep=0.75pt]  [font=\scriptsize]  {$Q$};
% Text Node
\draw (700.23,276.95) node [anchor=north west][inner sep=0.75pt]  [font=\footnotesize]  {$Q$};

\end{tikzpicture}

%% file: diagrams/boundary-screening-1906.tex
\begin{tikzpicture}[x=0.75pt,y=0.75pt,yscale=-1,xscale=1,baseline = (current bounding box.center)]
%uncomment if require: \path (0,300); %set diagram left start at 0, and has height of 300
%Straight Lines [id:da9319577591460301]
\draw    (391.8,146.2) -- (390.48,223.11) -- (389.8,261.4) ;
%Straight Lines [id:da10578108900526628]
\draw    (432.2,146.6) -- (390.96,185.61) ;
%Straight Lines [id:da9996663344556211]
\draw    (461.4,146.6) -- (391.25,207.4) ;
%Straight Lines [id:da8129698097338967]
\draw    (477.45,156.1) -- (390.48,230.11) ;
%Shape: Triangle [id:dp8096233458101415]
\draw  [fill={rgb, 255:red, 255; green, 255; blue, 255 }  ,fill opacity=1 ] (422.71,179.35) -- (430.36,178.08) -- (425.68,172.54) -- cycle ;
%Shape: Triangle [id:dp9568537543027305]
\draw  [fill={rgb, 255:red, 255; green, 255; blue, 255 }  ,fill opacity=1 ] (408.05,169.87) -- (415.71,168.6) -- (411.02,163.06) -- cycle ;
%Shape: Triangle [id:dp265802028762005]
\draw  [fill={rgb, 255:red, 255; green, 255; blue, 255 }  ,fill opacity=1 ] (439.07,188.64) -- (446.69,187.12) -- (441.82,181.73) -- cycle ;
%Straight Lines [id:da3702088750347281]
\draw    (476.05,125) -- (390.6,39.8) ;
%Straight Lines [id:da07656625481273005]
\draw    (459,129.4) -- (390.85,61.8) ;
%Straight Lines [id:da5758233210582552]
\draw    (431.8,130.2) -- (390.45,88.6) ;
%Shape: Triangle [id:dp3514734228093983]
\draw  [fill={rgb, 255:red, 255; green, 255; blue, 255 }  ,fill opacity=1 ] (420.12,91.6) -- (422.49,98.99) -- (427.3,93.55) -- cycle ;
%Shape: Triangle [id:dp6581571945513973]
\draw  [fill={rgb, 255:red, 255; green, 255; blue, 255 }  ,fill opacity=1 ] (407.06,105.08) -- (408.83,112.64) -- (414.05,107.6) -- cycle ;
%Shape: Triangle [id:dp09412773996299817]
\draw  [fill={rgb, 255:red, 255; green, 255; blue, 255 }  ,fill opacity=1 ] (431.72,79.8) -- (433.35,87.39) -- (438.66,82.44) -- cycle ;
%Shape: Rectangle [id:dp16996778191236817]
\draw   (381.2,129.8) -- (468.6,129.8) -- (468.6,146.2) -- (381.2,146.2) -- cycle ;
%Straight Lines [id:da6699941119725045]
\draw    (390.6,15) -- (391.05,107.8) -- (391.2,129.6) ;
%Curve Lines [id:da4861166342443547]
\draw    (476.05,125) .. controls (475.67,124.53) and (481.4,129.8) .. (485,135.8) .. controls (488.6,141.8) and (484.2,150.2) .. (477.45,156.1) ;
%Shape: Triangle [id:dp5066005345853508]
\draw  [fill={rgb, 255:red, 255; green, 255; blue, 255 }  ,fill opacity=1 ] (394.95,114.03) -- (391.05,107.8) -- (387.09,113.89) -- cycle ;
%Shape: Triangle [id:dp3642657227691165]
\draw  [fill={rgb, 255:red, 255; green, 255; blue, 255 }  ,fill opacity=1 ] (395.59,157.92) -- (387.83,158.06) -- (391.66,164.22) -- cycle ;
% Text Node
\draw (476.4,111) node [anchor=north west][inner sep=0.75pt]  [font=\footnotesize]  {$x$};
% Text Node
\draw (374.2,244) node [anchor=north west][inner sep=0.75pt]  [font=\footnotesize]  {$K$};
% Text Node
\draw (375.2,175.2) node [anchor=north west][inner sep=0.75pt]  [font=\scriptsize]  {$\mu ^{\dagger }$};
% Text Node
\draw (376.4,202.4) node [anchor=north west][inner sep=0.75pt]  [font=\scriptsize]  {$\mu ^{\dagger }$};
% Text Node
\draw (376,224) node [anchor=north west][inner sep=0.75pt]  [font=\scriptsize]  {$\mu ^{\dagger }$};
% Text Node
\draw (375.6,38) node [anchor=north west][inner sep=0.75pt]  [font=\scriptsize]  {$\mu $};
% Text Node
\draw (376.8,59.2) node [anchor=north west][inner sep=0.75pt]  [font=\scriptsize]  {$\mu $};
% Text Node
\draw (375.2,84.8) node [anchor=north west][inner sep=0.75pt]  [font=\scriptsize]  {$\mu $};
% Text Node
\draw (401.2,177.4) node [anchor=north west][inner sep=0.75pt]    {$...$};
% Text Node
\draw (399.4,61.4) node [anchor=north west][inner sep=0.75pt]    {$...$};
% Text Node
\draw (416,51.8) node [anchor=north west][inner sep=0.75pt]  [font=\scriptsize]  {$Q$};
% Text Node
\draw (414.8,209) node [anchor=north west][inner sep=0.75pt]  [font=\scriptsize]  {$Q$};
% Text Node
\draw (373.8,15.2) node [anchor=north west][inner sep=0.75pt]  [font=\footnotesize]  {$K$};
% Text Node
\draw (421.6,134.4) node [anchor=north west][inner sep=0.75pt]    {$a$};
\end{tikzpicture}

%% file: diagrams/gns-kernel-2056.tex
\begin{tikzpicture}[x=0.75pt,y=0.75pt,yscale=-1,xscale=1]
%uncomment if require: \path (0,291); %set diagram left start at 0, and has height of 291

%Straight Lines [id:da9817425026044349]
\draw    (413.47,124.27) -- (412.15,201.18) -- (411.47,239.47) ;
%Straight Lines [id:da45473579280331167]
\draw    (453.87,124.67) -- (412.63,163.67) ;
%Straight Lines [id:da11698480364579966]
\draw    (483.07,124.67) -- (412.92,185.47) ;
%Straight Lines [id:da9974172924840617]
\draw    (495.82,135.9) -- (412.15,208.18) ;
%Shape: Triangle [id:dp18236216241499426]
\draw  [fill={rgb, 255:red, 255; green, 255; blue, 255 }  ,fill opacity=1 ] (444.37,157.42) -- (452.03,156.15) -- (447.34,150.6) -- cycle ;
%Shape: Triangle [id:dp10295438160616444]
\draw  [fill={rgb, 255:red, 255; green, 255; blue, 255 }  ,fill opacity=1 ] (429.72,147.94) -- (437.37,146.67) -- (432.69,141.13) -- cycle ;
%Shape: Triangle [id:dp6509166634377024]
\draw  [fill={rgb, 255:red, 255; green, 255; blue, 255 }  ,fill opacity=1 ] (460.74,166.7) -- (468.35,165.18) -- (463.49,159.8) -- cycle ;
%Shape: Rectangle [id:dp6059827937972057]
\draw   (402.87,107.87) -- (490.27,107.87) -- (490.27,124.27) -- (402.87,124.27) -- cycle ;
%Shape: Triangle [id:dp949059298887954]
\draw  [fill={rgb, 255:red, 255; green, 255; blue, 255 }  ,fill opacity=1 ] (417.25,135.98) -- (409.49,136.13) -- (413.32,142.29) -- cycle ;
%Shape: Rectangle [id:dp5652161581973146]
\draw   (175.87,121.73) -- (263.27,121.73) -- (263.27,138.13) -- (175.87,138.13) -- cycle ;
%Straight Lines [id:da304701764466891]
\draw    (212.92,122.34) -- (212.27,74.5) ;
%Straight Lines [id:da5054818770230927]
\draw    (253.67,121.7) -- (253.02,73.86) ;
%Straight Lines [id:da18023756328744478]
\draw    (185.87,121.82) -- (185.22,73.99) ;
%Straight Lines [id:da8600612937357188]
\draw    (214.58,185.74) -- (213.93,139.06) ;
%Straight Lines [id:da8913996977558774]
\draw    (254.33,184.74) -- (253.68,138.06) ;
%Straight Lines [id:da8776962575158472]
\draw    (186.53,184.86) -- (185.88,138.18) ;
%Straight Lines [id:da43106585207052284]
\draw    (440.25,108.34) -- (439.6,60.5) ;
%Straight Lines [id:da3483200837023911]
\draw    (481,107.7) -- (480.35,59.86) ;
%Straight Lines [id:da8724323257346835]
\draw    (413.2,107.82) -- (412.55,59.99) ;
%Straight Lines [id:da002476885307968968]
\draw    (503.23,112.51) -- (503.15,59.9) ;
%Shape: Arc [id:dp13386445932995616]
\draw  [draw opacity=0] (503.23,112.51) .. controls (503.36,121.24) and (500.74,129.52) .. (495.82,135.9) -- (467.81,109.41) -- cycle ; \draw   (503.23,112.51) .. controls (503.36,121.24) and (500.74,129.52) .. (495.82,135.9) ;

% Text Node
\draw (396.87,228.07) node [anchor=north west][inner sep=0.75pt]  [font=\footnotesize]  {$K$};
% Text Node
\draw (396.87,153.27) node [anchor=north west][inner sep=0.75pt]  [font=\scriptsize]  {$\mu ^{\dagger }$};
% Text Node
\draw (398.07,180.47) node [anchor=north west][inner sep=0.75pt]  [font=\scriptsize]  {$\mu ^{\dagger }$};
% Text Node
\draw (397.67,202.07) node [anchor=north west][inner sep=0.75pt]  [font=\scriptsize]  {$\mu ^{\dagger }$};
% Text Node
\draw (422.87,154.47) node [anchor=north west][inner sep=0.75pt]    {$...$};
% Text Node
\draw (436.47,187.07) node [anchor=north west][inner sep=0.75pt]  [font=\scriptsize]  {$Q$};
% Text Node
\draw (442.27,109.47) node [anchor=north west][inner sep=0.75pt]    {$a$};
% Text Node
\draw (215.27,123.33) node [anchor=north west][inner sep=0.75pt]    {$a$};
% Text Node
\draw (108,117.4) node [anchor=north west][inner sep=0.75pt]    {$\mu^* :$};
% Text Node
\draw (309.33,118.07) node [anchor=north west][inner sep=0.75pt]  [font=\Large]  {$\longmapsto $};
% Text Node
\draw (182.33,189.07) node [anchor=north west][inner sep=0.75pt]    {$l$};
% Text Node
\draw (209.47,187.6) node [anchor=north west][inner sep=0.75pt]  [font=\normalsize]  {$x$};
% Text Node
\draw (248.53,187.4) node [anchor=north west][inner sep=0.75pt]  [font=\normalsize]  {$x$};
% Text Node
\draw (181.67,55.4) node [anchor=north west][inner sep=0.75pt]    {$l$};
% Text Node
\draw (207.8,54.93) node [anchor=north west][inner sep=0.75pt]  [font=\normalsize]  {$x$};
% Text Node
\draw (247.87,55.4) node [anchor=north west][inner sep=0.75pt]  [font=\normalsize]  {$x$};
% Text Node
\draw (224.4,85.47) node [anchor=north west][inner sep=0.75pt]    {$...$};
% Text Node
\draw (225.07,154.13) node [anchor=north west][inner sep=0.75pt]    {$...$};
% Text Node
\draw (409,41.4) node [anchor=north west][inner sep=0.75pt]    {$l$};
% Text Node
\draw (435.13,40.93) node [anchor=north west][inner sep=0.75pt]  [font=\normalsize]  {$x$};
% Text Node
\draw (475.2,40.73) node [anchor=north west][inner sep=0.75pt]  [font=\normalsize]  {$x$};
% Text Node
\draw (451.73,71.47) node [anchor=north west][inner sep=0.75pt]    {$...$};
% Text Node
\draw (498.2,40.73) node [anchor=north west][inner sep=0.75pt]  [font=\normalsize]  {$x$};

\end{tikzpicture}

%% file: diagrams/reabcond-functor-1362.tex
\begin{tikzpicture}[
    x=1cm, % was 2cm; this was making the picture too wide
    y=1cm,
    strand/.style={line width=0.75pt},
    box/.style={draw, inner sep=0pt},
    morphism/.style={draw, minimum size=0.48cm, inner sep=0pt},
    label/.style={font=\footnotesize, inner sep=1pt},
    title/.style={font=\normalsize, inner sep=1pt}
]

% height of the rectangular box
\pgfmathsetmacro{\boxTop}{1+9/20}

\node[title, anchor=east] at (-1.35,1.15)
    {$\mathrm{ReaBCond}_{n}\left(K_{1}\xrightarrow{f}K_{2}\right):$};

\begin{scope}[shift={(0,0)}]
    \draw[box] (-1.1,1.0) rectangle (1.1,\boxTop);

    \draw[strand] (-0.8,\boxTop) -- (-0.8,2.55);
    \draw[strand] (0,\boxTop) -- (0,2.55);
    \draw[strand] (0.8,\boxTop) -- (0.8,2.55);

    \draw[strand] (0,1.0) -- (0,0.1);

    \node[label] at (0,1.23) {$a$};
    \node[label] at (-0.8,2.72) {$l$};
    \node[label] at (0,2.72) {$x$};
    \node[label] at (0.8,2.72) {$x$};
    \node[label] at (0.4,2.0) {$\cdots$};
    \node[label] at (0,-0.1) {$K_{2}$};
\end{scope}

\node[font=\Large] at (2.2,1.25) {$\longmapsto$};

\begin{scope}[shift={(4.4,0)}]
    \draw[box] (-1.1,1.0) rectangle (1.1,\boxTop);

    \draw[strand] (-0.8,\boxTop) -- (-0.8,2.55);
    \draw[strand] (0,\boxTop) -- (0,2.55);
    \draw[strand] (0.8,\boxTop) -- (0.8,2.55);

    \draw[strand] (0,1.0) -- (0,0.35);
    \node[morphism] at (0,0.1) {$f$};
    \draw[strand] (0,-0.14) -- (0,-0.75);

    \node[label] at (0,1.23) {$a$};
    \node[label] at (-0.8,2.72) {$l$};
    \node[label] at (0,2.72) {$x$};
    \node[label] at (0.8,2.72) {$x$};
    \node[label] at (0.4,2.0) {$\cdots$};
    \node[label] at (0,-0.95) {$K_{1}$};
\end{scope}

\end{tikzpicture}

%% file: diagrams/partial-screening-2281.tex
\begin{tikzpicture}[x=0.75pt,y=0.75pt,yscale=-1,xscale=1]
%uncomment if require: \path (0,324); %set diagram left start at 0, and has height of 324

%Shape: Rectangle [id:dp038696415000486484]
\draw   (136.87,158.23) -- (277.75,158.23) -- (277.75,174.38) -- (136.87,174.38) -- cycle ;
%Straight Lines [id:da547051538467689]
\draw    (165.92,158.84) -- (165.27,111) ;
%Straight Lines [id:da7834762081217445]
\draw    (205.67,158.2) -- (205.02,110.36) ;
%Straight Lines [id:da7193073763654981]
\draw    (146.87,158.32) -- (146.22,110.49) ;
%Straight Lines [id:da5492564873013105]
\draw    (167.58,222.24) -- (166.93,175.56) ;
%Straight Lines [id:da7684976944941739]
\draw    (206.33,221.24) -- (205.68,174.56) ;
%Straight Lines [id:da14988886483700414]
\draw    (147.53,221.36) -- (146.88,174.68) ;
%Straight Lines [id:da7033276838765374]
\draw    (224.92,158.84) -- (224.27,111) ;
%Straight Lines [id:da3358996823838052]
\draw    (264.67,158.2) -- (264.02,110.36) ;
%Straight Lines [id:da14446824833738237]
\draw    (226.58,222.24) -- (225.93,175.56) ;
%Straight Lines [id:da8781694380282821]
\draw    (265.33,221.24) -- (264.68,174.56) ;
%Straight Lines [id:da12343909770158645]
\draw    (500.8,176.77) -- (499.48,253.69) -- (498.8,291.97) ;
%Straight Lines [id:da08609577456842288]
\draw    (541.2,177.17) -- (499.96,216.18) ;
%Straight Lines [id:da4580194378212571]
\draw    (570.4,177.17) -- (500.25,237.97) ;
%Straight Lines [id:da39177775797899805]
\draw    (586.45,186.67) -- (499.48,260.69) ;
%Shape: Triangle [id:dp5580769332832513]
\draw  [fill={rgb, 255:red, 255; green, 255; blue, 255 }  ,fill opacity=1 ] (531.71,209.93) -- (539.36,208.65) -- (534.68,203.11) -- cycle ;
%Shape: Triangle [id:dp29022656368536337]
\draw  [fill={rgb, 255:red, 255; green, 255; blue, 255 }  ,fill opacity=1 ] (517.05,200.45) -- (524.71,199.18) -- (520.02,193.64) -- cycle ;
%Shape: Triangle [id:dp26784593385727296]
\draw  [fill={rgb, 255:red, 255; green, 255; blue, 255 }  ,fill opacity=1 ] (548.07,219.21) -- (555.69,217.69) -- (550.82,212.31) -- cycle ;
%Straight Lines [id:da13967799094847377]
\draw    (585.05,155.57) -- (499.6,70.38) ;
%Straight Lines [id:da7641279602215851]
\draw    (568,159.98) -- (499.85,92.38) ;
%Straight Lines [id:da8824085322891522]
\draw    (540.8,160.77) -- (499.45,119.17) ;
%Shape: Triangle [id:dp6848866413038792]
\draw  [fill={rgb, 255:red, 255; green, 255; blue, 255 }  ,fill opacity=1 ] (529.12,122.18) -- (531.49,129.57) -- (536.3,124.12) -- cycle ;
%Shape: Triangle [id:dp3846958458101022]
\draw  [fill={rgb, 255:red, 255; green, 255; blue, 255 }  ,fill opacity=1 ] (516.06,135.66) -- (517.83,143.21) -- (523.05,138.17) -- cycle ;
%Shape: Triangle [id:dp93786761029164]
\draw  [fill={rgb, 255:red, 255; green, 255; blue, 255 }  ,fill opacity=1 ] (540.72,110.37) -- (542.35,117.96) -- (547.66,113.02) -- cycle ;
%Shape: Rectangle [id:dp2590823679231101]
\draw   (391.25,160.38) -- (577.6,160.38) -- (577.6,176.77) -- (391.25,176.77) -- cycle ;
%Straight Lines [id:da8429872942196789]
\draw    (499.6,45.57) -- (500.05,138.38) -- (500.2,160.18) ;
%Curve Lines [id:da8547955502777654]
\draw    (585.05,155.57) .. controls (584.67,155.1) and (590.4,160.38) .. (594,166.38) .. controls (597.6,172.38) and (593.2,180.77) .. (586.45,186.67) ;
%Shape: Triangle [id:dp6969792921310897]
\draw  [fill={rgb, 255:red, 255; green, 255; blue, 255 }  ,fill opacity=1 ] (503.95,144.61) -- (500.05,138.38) -- (496.09,144.46) -- cycle ;
%Shape: Triangle [id:dp2205259078347982]
\draw  [fill={rgb, 255:red, 255; green, 255; blue, 255 }  ,fill opacity=1 ] (504.59,188.49) -- (496.83,188.63) -- (500.66,194.8) -- cycle ;
%Straight Lines [id:da34724862648314303]
\draw    (423.92,160.38) -- (423.27,60.69) ;
%Straight Lines [id:da5486449644838334]
\draw    (463.25,160.88) -- (463.02,59.36) ;
%Straight Lines [id:da039643083030675985]
\draw    (404.75,160.38) -- (404.22,59.63) ;
%Straight Lines [id:da8446098857448882]
\draw    (425.08,263.3) -- (424.43,178.38) ;
%Straight Lines [id:da28184992707621725]
\draw    (463.75,262.88) -- (463.18,176.56) ;
%Straight Lines [id:da27036336589584664]
\draw    (404.25,262.38) -- (404.38,176.79) ;
%Shape: Triangle [id:dp08670033656188558]
\draw  [fill={rgb, 255:red, 255; green, 255; blue, 255 }  ,fill opacity=1 ] (503.23,274.13) -- (499.01,267.9) -- (495.57,274.29) -- cycle ;
%Shape: Triangle [id:dp589784484605331]
\draw  [fill={rgb, 255:red, 255; green, 255; blue, 255 }  ,fill opacity=1 ] (499.56,64.35) -- (495.5,57.38) -- (503.59,57.32) -- cycle ;

% Text Node
\draw (200.77,161.83) node [anchor=north west][inner sep=0.75pt]    {$a$};
% Text Node
\draw (55,147.9) node [anchor=north west][inner sep=0.75pt]  [font=\Large]  {$\mathrm{scr}_{n} :$};
% Text Node
\draw (311.83,152.07) node [anchor=north west][inner sep=0.75pt]  [font=\Large]  {$\longmapsto $};
% Text Node
\draw (143.33,224.57) node [anchor=north west][inner sep=0.75pt]    {$l$};
% Text Node
\draw (142.67,91.9) node [anchor=north west][inner sep=0.75pt]    {$l$};
% Text Node
\draw (160.3,77.93) node [anchor=north west][inner sep=0.75pt]  [font=\normalsize]  {$\overbrace{x\ \ \ \ \ \ \ x}^{n}$};
% Text Node
\draw (177.4,121.97) node [anchor=north west][inner sep=0.75pt]    {$...$};
% Text Node
\draw (178.07,190.63) node [anchor=north west][inner sep=0.75pt]    {$...$};
% Text Node
\draw (222.47,227.1) node [anchor=north west][inner sep=0.75pt]  [font=\normalsize]  {$x$};
% Text Node
\draw (260.53,226.9) node [anchor=north west][inner sep=0.75pt]  [font=\normalsize]  {$x$};
% Text Node
\draw (219.8,94.43) node [anchor=north west][inner sep=0.75pt]  [font=\normalsize]  {$x$};
% Text Node
\draw (258.87,94.9) node [anchor=north west][inner sep=0.75pt]  [font=\normalsize]  {$x$};
% Text Node
\draw (236.4,121.97) node [anchor=north west][inner sep=0.75pt]    {$...$};
% Text Node
\draw (237.07,190.63) node [anchor=north west][inner sep=0.75pt]    {$...$};
% Text Node
\draw (585.4,141.57) node [anchor=north west][inner sep=0.75pt]  [font=\footnotesize]  {$x$};
% Text Node
\draw (487.87,204.78) node [anchor=north west][inner sep=0.75pt]  [font=\scriptsize]  {$\mu $};
% Text Node
\draw (488.4,229.98) node [anchor=north west][inner sep=0.75pt]  [font=\scriptsize]  {$\mu $};
% Text Node
\draw (488,252.57) node [anchor=north west][inner sep=0.75pt]  [font=\scriptsize]  {$\mu $};
% Text Node
\draw (487.6,67.24) node [anchor=north west][inner sep=0.75pt]  [font=\scriptsize]  {$\mu $};
% Text Node
\draw (487.8,89.77) node [anchor=north west][inner sep=0.75pt]  [font=\scriptsize]  {$\mu $};
% Text Node
\draw (487.2,115.37) node [anchor=north west][inner sep=0.75pt]  [font=\scriptsize]  {$\mu $};
% Text Node
\draw (510.2,209.97) node [anchor=north west][inner sep=0.75pt]    {$...$};
% Text Node
\draw (515.4,101.98) node [anchor=north west][inner sep=0.75pt]    {$...$};
% Text Node
\draw (525,83.37) node [anchor=north west][inner sep=0.75pt]  [font=\scriptsize]  {$Q$};
% Text Node
\draw (523.8,239.57) node [anchor=north west][inner sep=0.75pt]  [font=\scriptsize]  {$Q$};
% Text Node
\draw (471.6,164.98) node [anchor=north west][inner sep=0.75pt]    {$a$};
% Text Node
\draw (400.67,40.9) node [anchor=north west][inner sep=0.75pt]    {$l$};
% Text Node
\draw (435.4,70.97) node [anchor=north west][inner sep=0.75pt]    {$...$};
% Text Node
\draw (400.83,268.58) node [anchor=north west][inner sep=0.75pt]    {$l$};
% Text Node
\draw (435.57,212.03) node [anchor=north west][inner sep=0.75pt]    {$...$};
% Text Node
\draw (485.03,279.55) node [anchor=north west][inner sep=0.75pt]  [font=\normalsize]  {$x$};
% Text Node
\draw (484.87,41.4) node [anchor=north west][inner sep=0.75pt]  [font=\normalsize]  {$x$};
% Text Node
\draw (160.8,227.43) node [anchor=north west][inner sep=0.75pt]  [font=\normalsize]  {$\underbrace{x\ \ \ \ \ \ \ x}_{n}$};
% Text Node
\draw (418.8,270.43) node [anchor=north west][inner sep=0.75pt]  [font=\normalsize]  {$\underbrace{x\ \ \ \ \ \ \ x}_{n}$};
% Text Node
\draw (419.8,27.93) node [anchor=north west][inner sep=0.75pt]  [font=\normalsize]  {$\overbrace{x\ \ \ \ \ \ \ x}^{n}$};

\end{tikzpicture}

%% file: diagrams/module-morphism-2555.tex
\begin{tikzpicture}[x=0.75pt,y=0.75pt,yscale=-1,xscale=1,baseline = (current bounding box.center)]
%uncomment if require: \path (0,300); %set diagram left start at 0, and has height of 300
%Straight Lines [id:da7352763592572981]
\draw    (185.8,157.53) -- (184.48,234.45) -- (183.8,272.73) ;
%Straight Lines [id:da4381052155843851]
\draw    (226.2,157.93) -- (184.96,196.94) ;
%Straight Lines [id:da327308490811242]
\draw    (255.4,157.93) -- (185.25,218.73) ;
%Straight Lines [id:da4558025109086812]
\draw    (271.45,167.43) -- (184.48,241.45) ;
%Shape: Triangle [id:dp683939765530144]
\draw  [fill={rgb, 255:red, 255; green, 255; blue, 255 }  ,fill opacity=1 ] (216.71,190.68) -- (224.36,189.41) -- (219.68,183.87) -- cycle ;
%Shape: Triangle [id:dp6746306324914118]
\draw  [fill={rgb, 255:red, 255; green, 255; blue, 255 }  ,fill opacity=1 ] (202.05,181.21) -- (209.71,179.94) -- (205.02,174.4) -- cycle ;
%Shape: Triangle [id:dp5348065473022808]
\draw  [fill={rgb, 255:red, 255; green, 255; blue, 255 }  ,fill opacity=1 ] (233.07,199.97) -- (240.69,198.45) -- (235.82,193.06) -- cycle ;
%Straight Lines [id:da21408076286098732]
\draw    (270.05,136.33) -- (184.6,51.13) ;
%Straight Lines [id:da8772144191532312]
\draw    (253,140.73) -- (184.85,73.13) ;
%Straight Lines [id:da1634863018534074]
\draw    (225.8,141.53) -- (184.45,99.93) ;
%Shape: Triangle [id:dp6058572298781101]
\draw  [fill={rgb, 255:red, 255; green, 255; blue, 255 }  ,fill opacity=1 ] (214.12,102.93) -- (216.49,110.32) -- (221.3,104.88) -- cycle ;
%Shape: Triangle [id:dp22421612920739697]
\draw  [fill={rgb, 255:red, 255; green, 255; blue, 255 }  ,fill opacity=1 ] (201.06,116.42) -- (202.83,123.97) -- (208.05,118.93) -- cycle ;
%Shape: Triangle [id:dp1579722863607239]
\draw  [fill={rgb, 255:red, 255; green, 255; blue, 255 }  ,fill opacity=1 ] (225.72,91.13) -- (227.35,98.72) -- (232.66,93.78) -- cycle ;
%Shape: Rectangle [id:dp9799492607315357]
\draw   (175.2,141.13) -- (262.6,141.13) -- (262.6,157.53) -- (175.2,157.53) -- cycle ;
%Straight Lines [id:da32449895079911606]
\draw    (184.6,26.33) -- (185.05,119.14) -- (185.2,140.93) ;
%Curve Lines [id:da8501073720147855]
\draw    (270.05,136.33) .. controls (269.67,135.86) and (275.4,141.13) .. (279,147.13) .. controls (282.6,153.13) and (278.2,161.53) .. (271.45,167.43) ;
%Shape: Triangle [id:dp5433000565796224]
\draw  [fill={rgb, 255:red, 255; green, 255; blue, 255 }  ,fill opacity=1 ] (188.95,125.36) -- (185.05,119.14) -- (181.09,125.22) -- cycle ;
%Shape: Triangle [id:dp2944296141664239]
\draw  [fill={rgb, 255:red, 255; green, 255; blue, 255 }  ,fill opacity=1 ] (189.59,169.25) -- (181.83,169.39) -- (185.66,175.56) -- cycle ;
% Text Node
\draw (270.4,122.33) node [anchor=north west][inner sep=0.75pt]  [font=\footnotesize]  {$x$};
% Text Node
\draw (167.2,258.33) node [anchor=north west][inner sep=0.75pt]  [font=\footnotesize]  {$K_{2}$};
% Text Node
\draw (169.87,186.53) node [anchor=north west][inner sep=0.75pt]  [font=\scriptsize]  {$\mu _{2}^{\dagger }$};
% Text Node
\draw (170.4,213.73) node [anchor=north west][inner sep=0.75pt]  [font=\scriptsize]  {$\mu _{2}^{\dagger }$};
% Text Node
\draw (170,235.33) node [anchor=north west][inner sep=0.75pt]  [font=\scriptsize]  {$\mu _{2}^{\dagger }$};
% Text Node
\draw (169.6,48) node [anchor=north west][inner sep=0.75pt]  [font=\scriptsize]  {$\mu _{1}$};
% Text Node
\draw (169.8,70.53) node [anchor=north west][inner sep=0.75pt]  [font=\scriptsize]  {$\mu _{1}$};
% Text Node
\draw (169.2,96.13) node [anchor=north west][inner sep=0.75pt]  [font=\scriptsize]  {$\mu _{1}$};
% Text Node
\draw (195.2,190.73) node [anchor=north west][inner sep=0.75pt]    {$...$};
% Text Node
\draw (200.4,82.73) node [anchor=north west][inner sep=0.75pt]    {$...$};
% Text Node
\draw (210,64.13) node [anchor=north west][inner sep=0.75pt]  [font=\scriptsize]  {$Q$};
% Text Node
\draw (208.8,220.33) node [anchor=north west][inner sep=0.75pt]  [font=\scriptsize]  {$Q$};
% Text Node
\draw (167.8,26.53) node [anchor=north west][inner sep=0.75pt]  [font=\footnotesize]  {$K_{1}$};
% Text Node
\draw (215.6,145.73) node [anchor=north west][inner sep=0.75pt]    {$a$};
\end{tikzpicture}

%% file: diagrams/edge-mode-2810.tex
\begin{tikzpicture}[
    x=1cm,
    y=1cm,
    pairbox/.style={
        draw,
        ellipse,
        minimum width=1.25cm,
        minimum height=0.62cm,
        inner sep=0pt
    },
    redstring/.style={
        red!80!black,
        dashed,
        dash pattern=on 4pt off 4pt,
        line width=0.9pt,
        line cap=round
    },
    lab/.style={
        font=\small,
        inner sep=1pt
    }
]

% parameters
\def\dx{1.55}      % horizontal distance
\def\ytop{1.85}    % top height
\def\xin{0.25}     % left/right label offset inside ellipse
\def\ybelow{-0.62} % label below ellipse

% centers
\foreach \i in {0,...,4} {
    \coordinate (C\i) at ({\i*\dx},0);
}

% ellipses
\foreach \i in {0,...,4} {
    \node[pairbox] at (C\i) {};
}

% inside labels: replace a_i,b_i by your symbols
\node[lab] at ($(C0)+(-\xin,0)$) {$\bullet$};
\node[lab] at ($(C0)+(\xin,0)$)  {$\bullet$};

\node[lab] at ($(C1)+(-\xin,0)$) {$\bullet$};
\node[lab] at ($(C1)+(\xin,0)$)  {$\bullet$};

\node[lab] at ($(C2)+(-\xin,0)$) {$\bullet$};
\node[lab] at ($(C2)+(\xin,0)$)  {$\bullet$};

\node[lab] at ($(C3)+(-\xin,0)$) {$\bullet$};
\node[lab] at ($(C3)+(\xin,0)$)  {$\bullet$};

\node[lab] at ($(C4)+(-\xin,0)$) {$\bullet$};
\node[lab] at ($(C4)+(\xin,0)$)  {$\bullet$};

% bottom labels
\node[lab] at ($(C0)+(-\xin,\ybelow)$) {$N$};
\node[lab] at ($(C0)+(\xin,\ybelow)$)  {$M^{*}$};

\foreach \i in {1,...,4} {
    \node[lab] at ($(C\i)+(-\xin,\ybelow)$) {$M$};
    \node[lab] at ($(C\i)+(\xin,\ybelow)$)  {$M^{*}$};
}

% red dashed vertical line
\draw[redstring]
    ($(C0)+(-\xin,0)$)
    --
    ($(C0)+(-\xin,\ytop)$);

% red dashed caps between adjacent ellipses
\foreach \i [evaluate=\i as \j using int(\i+1)] in {0,...,3} {

    % lower cup
    \draw[redstring]
        ($(C\i)+(\xin,0)$)
        .. controls
        ($(C\i)+(\xin,1.05)$)
        and
        ($(C\j)+(-\xin,1.05)$)
        ..
        ($(C\j)+(-\xin,0)$);

    % upper cap
    \draw[redstring]
        ($(C\i)+(\xin,\ytop)$)
        .. controls
        ($(C\i)+(\xin,0.85)$)
        and
        ($(C\j)+(-\xin,0.85)$)
        ..
        ($(C\j)+(-\xin,\ytop)$);
}

\end{tikzpicture}

%% file: diagrams/bulk-boundary-action-3473.tex
\begin{tikzcd}
	{Z_1(\C^\rev)} &&&& {(\CMdual)^{\rev}} \\
	\\
	\\
	{\DHR(\Loc^{\bulk}_{\bullet})} &&&& {\DHR(\Loc^{\bdy}_{\bullet})}
	\arrow["{\text{the canonical action}, K } ", from=1-1, to=1-5]
	\arrow["{\ReaDHR^{\bulk}}"', from=1-1, to=4-1]
	\arrow["{\ReaDHR^{\bdy}}", from=1-5, to=4-5]
	\arrow["U", from=4-1, to=4-5]
\end{tikzcd}

%% file: diagrams/canonical-action-3492.tex
\begin{tikzcd}
K_{z,\sigma}(m\otimes y)=(m\otimes y)\otimes z \arrow[r, "\alpha"] \arrow[d, "\beta_{m,y}"'] & m\otimes (y\otimes z) \arrow[d, "\mathrm{id}_m\otimes \sigma_{z,y}"] \\
K_{z,\sigma}(m)\otimes y=(m\otimes z)\otimes y \arrow[r, "\alpha"'] & m\otimes (z\otimes y)
\end{tikzcd}

%% file: diagrams/bending-isomorphism-3635.tex
\tikzset{every picture/.style={line width=0.75pt}} %set default line width to 0.75pt

\begin{tikzpicture}[x=0.75pt,y=0.75pt,yscale=-1,xscale=1]
%uncomment if require: \path (0,331); %set diagram left start at 0, and has height of 331

%Shape: Rectangle [id:dp2292328179477332]
\draw   (53.87,191.2) -- (141.27,191.2) -- (141.27,207.6) -- (53.87,207.6) -- cycle ;
%Straight Lines [id:da292567828925994]
\draw    (91.25,191.67) -- (90.6,152.01) ;
%Straight Lines [id:da055215715408650334]
\draw    (132,191.14) -- (131.35,151.48) ;
%Straight Lines [id:da4097016292599651]
\draw    (64.2,191.24) -- (63.55,151.58) ;
%Straight Lines [id:da17464771029702963]
\draw    (94.3,258.52) -- (94.67,207.73) ;
%Shape: Rectangle [id:dp07941473248567854]
\draw   (53.87,135.2) -- (141.27,135.2) -- (141.27,151.6) -- (53.87,151.6) -- cycle ;
%Shape: Rectangle [id:dp2204427637909876]
\draw   (286.87,190.7) -- (374.27,190.7) -- (374.27,207.1) -- (286.87,207.1) -- cycle ;
%Straight Lines [id:da20672672582080365]
\draw    (324.25,191.17) -- (323.6,151.51) ;
%Straight Lines [id:da0718567138390439]
\draw    (365,190.64) -- (364.35,150.98) ;
%Straight Lines [id:da49438692121257477]
\draw    (297.2,190.74) -- (296.55,151.08) ;
%Straight Lines [id:da14516070718313667]
\draw    (327.3,258.02) -- (327.67,207.23) ;
%Shape: Rectangle [id:dp44396625595859074]
\draw   (286.87,134.7) -- (374.27,134.7) -- (374.27,151.1) -- (286.87,151.1) -- cycle ;
%Straight Lines [id:da1725384029945778]
\draw    (271.25,256.34) -- (270.59,99.06) ;
%Shape: Rectangle [id:dp34108102221854597]
\draw   (497.37,189.2) -- (584.77,189.2) -- (584.77,205.6) -- (497.37,205.6) -- cycle ;
%Straight Lines [id:da4495560377246216]
\draw    (534.75,189.67) -- (534.1,150.01) ;
%Straight Lines [id:da48436324289991695]
\draw    (575.5,189.14) -- (574.85,149.48) ;
%Straight Lines [id:da2947202648247895]
\draw    (507.7,189.24) -- (507.05,149.58) ;
%Straight Lines [id:da06004974254191309]
\draw    (523.8,256.52) -- (524.17,205.73) ;
%Shape: Rectangle [id:dp05148677120431966]
\draw   (497.37,133.2) -- (584.77,133.2) -- (584.77,149.6) -- (497.37,149.6) -- cycle ;
%Straight Lines [id:da628983335929163]
\draw    (533.25,133.67) -- (532.6,80.25) ;
%Straight Lines [id:da06438929093028367]
\draw    (573.5,133.45) -- (572.85,80.03) ;
%Straight Lines [id:da7194810622730917]
\draw    (507.7,133.1) -- (507.05,79.68) ;
%Straight Lines [id:da4247332297512577]
\draw    (561.3,256.52) -- (561.67,205.73) ;
%Straight Lines [id:da6289304952156267]
\draw    (100.75,135.17) -- (100.1,81.75) ;
%Straight Lines [id:da8582931012720256]
\draw    (134.5,134.45) -- (133.85,81.03) ;
%Straight Lines [id:da6226054926608796]
\draw    (82.2,134.6) -- (81.55,81.18) ;
%Straight Lines [id:da39482636625100564]
\draw    (62.5,134.45) -- (61.85,81.03) ;
%Straight Lines [id:da5417737639389525]
\draw    (335.25,134.67) -- (334.6,81.25) ;
%Straight Lines [id:da19053618032919817]
\draw    (369,133.95) -- (368.35,80.53) ;
%Straight Lines [id:da1236104415437389]
\draw    (316.7,134.1) -- (316.05,80.68) ;
%Straight Lines [id:da7169039508609485]
\draw    (297,133.95) -- (296.75,98.84) ;
%Shape: Arc [id:dp19631955146966396]
\draw  [draw opacity=0] (270.59,99.06) .. controls (270.59,99.01) and (270.59,98.96) .. (270.59,98.91) .. controls (270.59,91.69) and (276.44,85.83) .. (283.67,85.83) .. controls (290.87,85.83) and (296.71,91.65) .. (296.75,98.84) -- (283.67,98.91) -- cycle ; \draw   (270.59,99.06) .. controls (270.59,99.01) and (270.59,98.96) .. (270.59,98.91) .. controls (270.59,91.69) and (276.44,85.83) .. (283.67,85.83) .. controls (290.87,85.83) and (296.71,91.65) .. (296.75,98.84) ;

% Text Node
\draw (76.95,236.22) node [anchor=north west][inner sep=0.75pt]  [font=\normalsize]  {$K$};
% Text Node
\draw (53,157.87) node [anchor=north west][inner sep=0.75pt]    {$l$};
% Text Node
\draw (79.13,159.49) node [anchor=north west][inner sep=0.75pt]  [font=\footnotesize]  {$x$};
% Text Node
\draw (119.2,159.32) node [anchor=north west][inner sep=0.75pt]  [font=\footnotesize]  {$x$};
% Text Node
\draw (102.73,162.8) node [anchor=north west][inner sep=0.75pt]    {$...$};
% Text Node
\draw (179.5,152.23) node [anchor=north west][inner sep=0.75pt]  [font=\Large]  {$\longmapsto $};
% Text Node
\draw (309.95,235.72) node [anchor=north west][inner sep=0.75pt]  [font=\normalsize]  {$K$};
% Text Node
\draw (286,157.37) node [anchor=north west][inner sep=0.75pt]    {$l$};
% Text Node
\draw (312.13,158.99) node [anchor=north west][inner sep=0.75pt]  [font=\footnotesize]  {$x$};
% Text Node
\draw (352.2,158.82) node [anchor=north west][inner sep=0.75pt]  [font=\footnotesize]  {$x$};
% Text Node
\draw (335.73,162.3) node [anchor=north west][inner sep=0.75pt]    {$...$};
% Text Node
\draw (249.2,234.06) node [anchor=north west][inner sep=0.75pt]  [font=\normalsize]  {$L^{\lor }$};
% Text Node
\draw (419,147.23) node [anchor=north west][inner sep=0.75pt]  [font=\huge]  {$=$};
% Text Node
\draw (502.45,234.22) node [anchor=north west][inner sep=0.75pt]  [font=\normalsize]  {$L^{\lor }$};
% Text Node
\draw (496.5,155.87) node [anchor=north west][inner sep=0.75pt]    {$l$};
% Text Node
\draw (522.63,157.49) node [anchor=north west][inner sep=0.75pt]  [font=\footnotesize]  {$x$};
% Text Node
\draw (562.7,157.32) node [anchor=north west][inner sep=0.75pt]  [font=\footnotesize]  {$x$};
% Text Node
\draw (546.23,160.8) node [anchor=north west][inner sep=0.75pt]    {$...$};
% Text Node
\draw (503.5,59.8) node [anchor=north west][inner sep=0.75pt]    {$l$};
% Text Node
\draw (528.13,61.94) node [anchor=north west][inner sep=0.75pt]  [font=\normalsize]  {$x$};
% Text Node
\draw (567.7,62.21) node [anchor=north west][inner sep=0.75pt]  [font=\normalsize]  {$x$};
% Text Node
\draw (549.73,96.38) node [anchor=north west][inner sep=0.75pt]    {$...$};
% Text Node
\draw (543.95,234.22) node [anchor=north west][inner sep=0.75pt]  [font=\normalsize]  {$K$};
% Text Node
\draw (78,61.3) node [anchor=north west][inner sep=0.75pt]    {$l$};
% Text Node
\draw (95.63,63.44) node [anchor=north west][inner sep=0.75pt]  [font=\normalsize]  {$x$};
% Text Node
\draw (128.7,63.21) node [anchor=north west][inner sep=0.75pt]  [font=\normalsize]  {$x$};
% Text Node
\draw (110.23,94.88) node [anchor=north west][inner sep=0.75pt]    {$...$};
% Text Node
\draw (56.7,59.56) node [anchor=north west][inner sep=0.75pt]  [font=\normalsize]  {$L$};
% Text Node
\draw (312.5,60.8) node [anchor=north west][inner sep=0.75pt]    {$l$};
% Text Node
\draw (330.13,62.94) node [anchor=north west][inner sep=0.75pt]  [font=\normalsize]  {$x$};
% Text Node
\draw (363.2,62.71) node [anchor=north west][inner sep=0.75pt]  [font=\normalsize]  {$x$};
% Text Node
\draw (344.73,94.38) node [anchor=north west][inner sep=0.75pt]    {$...$};
\end{tikzpicture}

%% file: diagrams/module-equivalence-3819.tex
\begin{tikzcd}
	{(\CMdual)^{\rev} \times \MQ^{\op}} && {(\CMdual)^{\op} \times \MQ^{\op}} && {\MQ^{\op}} \\
	\\
	\\
	{\DHR(\Loc^{\bdy}_\bullet) \times \BCond} &&&& \BCond
	\arrow["{\text{Bending}}", from=1-1, to=1-3]
	\arrow["{\ReaDHR \times \ReaBCond}", from=1-1, to=4-1]
	\arrow["{\text{Canonical action}}", from=1-3, to=1-5]
	\arrow["\ReaBCond", from=1-5, to=4-5]
	\arrow["{\boxtimes_{\Loc^{\bdy}}}", from=4-1, to=4-5]
    \arrow[phantom, from=1-1, to=4-1, "\rotatebox{90}{$\sim$}"', xshift=-1.5mm]
    \arrow[phantom, from=1-5, to=4-5, "\rotatebox{90}{$\sim$}"', xshift=-1.5mm]
\end{tikzcd}

%% file: diagrams/phase-description-3843.tex
\begin{tikzpicture}[x=0.75pt,y=0.75pt,yscale=-1,xscale=1]
%uncomment if require: \path (0,300); %set diagram left start at 0, and has height of 300

%Shape: Rectangle [id:dp8035033636374661]
\draw  [draw opacity=0][fill={rgb, 255:red, 155; green, 155; blue, 155 }  ,fill opacity=0.28 ] (187,98.33) -- (378.33,98.33) -- (378.33,210.69) -- (187,210.69) -- cycle ;
%Straight Lines [id:da8331466086229126]
\draw [color={rgb, 255:red, 128; green, 128; blue, 128 }  ,draw opacity=1 ][line width=2.25]    (187,98.33) -- (188,203.02) ;
%Straight Lines [id:da12321469494819381]
\draw [line width=2.25]    (195,210.69) -- (378.33,210.69) ;
%Shape: Square [id:dp03242459059933889]
\draw  [fill={rgb, 255:red, 255; green, 255; blue, 255 }  ,fill opacity=1 ] (180.67,203.35) -- (195.49,203.35) -- (195.49,218.18) -- (180.67,218.18) -- cycle ;

% Text Node
\draw (134,70.4) node [anchor=north west][inner sep=0.75pt]    {$\mathrm{SymTFT}_{\mathrm{bdy}} \simeq (\CMdual)^{\rev}$};
% Text Node
\draw (290.67,219.07) node [anchor=north west][inner sep=0.75pt]    {$\mathrm{TopDef} \simeq (\QCQ)^{\rev,\op}$};
% Text Node
\draw (107.33,222.07) node [anchor=north west][inner sep=0.75pt]    {$\BCond \simeq \mathcal{M}^{\op}_{Q}$};
% Text Node
\draw (212.67,147.73) node [anchor=north west][inner sep=0.75pt]    {$\mathrm{SymTFT}_{\mathrm{bulk}} \simeq Z_1(\mathcal{C}^\rev)$};

\end{tikzpicture}

%% file: diagrams/enriched-associativity-3974.tex
\begin{tikzcd}
	{^{\cA}\cL(y,z)\ot ^{\cA}\cL(x,y) \ot ^{\cA} \cL(w,x)} && {^{\cA}\cL(y,z)\ot ^{\cA}\cL(w,y)} \\
	{^{\cA}\cL(x,z)\ot ^{\cA} \cL(w,x)} && {^{\cA}\cL(w,z)}
	\arrow["{1\otimes \circ}", from=1-1, to=1-3]
	\arrow["{\circ \otimes 1}"', from=1-1, to=2-1]
	\arrow["\circ", from=1-3, to=2-3]
	\arrow["\circ"', from=2-1, to=2-3]
\end{tikzcd}

%% file: diagrams/enriched-identity-3986.tex
\begin{tikzcd}
	{\mathbbm{1}_{\mathcal{A}} \otimes {^{\mathcal{A}}\mathcal{L}(x,y)}} & & {^{\mathcal{A}}\mathcal{L}(x,y)} & & {{^{\mathcal{A}}\mathcal{L}(x,y)} \otimes \mathbbm{1}_{\mathcal{A}}} \\
	{{^{\mathcal{A}}\mathcal{L}(y,y)} \otimes {^{\mathcal{A}}\mathcal{L}(x,y)}} & & & & {{^{\mathcal{A}}\mathcal{L}(x,y)} \otimes {^{\mathcal{A}}\mathcal{L}(x,x)}}
	\arrow[from=1-1, to=1-3]
	\arrow[from=1-5, to=1-3]
	\arrow["{1_y\otimes 1}"', from=1-1, to=2-1]
	\arrow["{1\otimes 1_x}", from=1-5, to=2-5]
	\arrow["\circ", from=2-1, to=1-3]
	\arrow["\circ"', from=2-5, to=1-3]
\end{tikzcd}

%% file: diagrams/enriched-functor-4044.tex
\begin{tikzcd}
	{\mathbbm{1}_\cB} && {\hat{F}(\mathbbm{1}_\cA)} \\
	{^{\cB}\cM(F(x),F(x))} && {\hat{F}(^{\cA}\cL(x,x))}
	\arrow[from=1-1, to=1-3]
	\arrow["{1_{F(x)}}"', from=1-1, to=2-1]
	\arrow["{\hat{F}(1_x)}", from=1-3, to=2-3]
	\arrow["{F_{x,x}}", from=2-3, to=2-1]
\end{tikzcd}

%% file: diagrams/functor-identity-4056.tex
\begin{tikzcd}
	{\hat{F}({}^{\mathcal{A}}\mathcal{L}(y,z))\otimes \hat{F}({}^{\mathcal{A}}\mathcal{L}(x,y))} & {\hat{F}({}^{\mathcal{A}}\mathcal{L}(y,z)\otimes {}^{\mathcal{A}}\mathcal{L}(x,y))} & {\hat{F}({}^{\mathcal{A}}\mathcal{L}(x,z))} \\
	{{}^{\mathcal{B}}\mathcal{M}(F(y),F(z))\otimes {}^{\mathcal{B}}\mathcal{M}(F(x),F(y))} & & {{}^{\mathcal{B}}\mathcal{M}(F(x),F(z))}
	\arrow[from=1-1, to=1-2]
	\arrow["{\hat{F}(\circ)}", from=1-2, to=1-3]
	\arrow["{F_{y,z}\otimes F_{x,y}}"', from=1-1, to=2-1]
	\arrow["{F_{x,z}}"', from=1-3, to=2-3]
	\arrow["\circ"', from=2-1, to=2-3].
\end{tikzcd}

%% file: diagrams/underlying-functor-4070.tex
\begin{tikzcd}
	{\mathbbm{1}_{\mathcal{B}}} & {\hat{F}(\mathbbm{1}_{\mathcal{A}})} & {\hat{F}({}^{\mathcal{A}}\mathcal{L}(x,y))} & {{}^{\mathcal{B}}\mathcal{M}(F(x),F(y))}
	\arrow[from=1-1, to=1-2]
	\arrow["{\hat{F}(f)}", from=1-2, to=1-3]
	\arrow["{F_{x,y}}", from=1-3, to=1-4]
	\arrow["{\underline{F}(f)}"', bend right, from=1-1, to=1-4].
\end{tikzcd}

%% file: diagrams/enriched-transformation-4096.tex
\begin{tikzcd}
	{\hat{F}(^{\cA}\cL(x,y))} && {\mathbbm{1}_\cB \ot \hat{F}(^{\cA}\cL(x,y))} && {^{\cB}\cM(F(y),G(y))\ot\ ^{\cB}\cM(F(x),F(y))} \\
	&&&& {^{\cB}\cM(F(x),G(y))} \\
	{\hat{G}(^{\cA}\cL(x,y))} && {\hat{G}(^{\cA}\cL(x,y))\ot \mathbbm{1}_\cB} && {^{\cB}\cM(G(x),G(y))\ot\ ^{\cB}\cM(F(x),G(x))}
	\arrow[from=1-1, to=1-3]
	\arrow["{\hat{\xi}}"', from=1-1, to=3-1]
	\arrow["{\xi_y\ot F_{x,y}}", from=1-3, to=1-5]
	\arrow["\circ", from=1-5, to=2-5]
	\arrow[from=3-1, to=3-3]
	\arrow["{G_{x,y}\ot \xi_x}"', from=3-3, to=3-5]
	\arrow["\circ"', from=3-5, to=2-5].
\end{tikzcd}

%% file: bib.bib
@article{Penneys_Add_2025,
  author        = {{Jones}, Corey and {Naaijkens}, Pieter and {Penneys}, David},
  title         = {{Holography for bulk-boundary local topological order}},
  journal       = {arXiv e-prints},
  year          = 2025,
  month         = jun,
  eid           = {arXiv:2506.19969},
  pages         = {arXiv:2506.19969},
  doi           = {10.48550/arXiv.2506.19969},
  archiveprefix = {arXiv},
  eprint        = {2506.19969},
  primaryclass  = {math-ph},
  adsurl        = {https://ui.adsabs.harvard.edu/abs/2025arXiv250619969J}
}

@article{Chen_Wen_2010_LocalUnitary,
   title={Local unitary transformation, long-range quantum entanglement, wave function renormalization, and topological order},
   volume={82},
   ISSN={1550-235X},
   url={http://dx.doi.org/10.1103/PhysRevB.82.155138},
   DOI={10.1103/physrevb.82.155138},
   eprint={1004.3835},
   archivePrefix={arXiv},
   primaryClass={cond-mat.str-el},
   number={15},
   journal={Physical Review B},
   publisher={American Physical Society (APS)},
   author={Chen, Xie and Gu, Zheng-Cheng and Wen, Xiao-Gang},
   year={2010},
   month=oct }

@ARTICLE{EB1,
       author = {{Shi}, Bowen and {Kato}, Kohtaro and {Kim}, Isaac H.},
        title = "{Fusion rules from entanglement}",
      journal = {Annals of Physics},
         year = 2020,
        month = jul,
       volume = {418},
          eid = {168164},
        pages = {168164},
          doi = {10.1016/j.aop.2020.168164},
archivePrefix = {arXiv},
       eprint = {1906.09376},
 primaryClass = {cond-mat.str-el},
       adsurl = {https://ui.adsabs.harvard.edu/abs/2020AnPhy.41868164S}
}

@ARTICLE{Jones_Naaijkens_Penneys_Wallick_Izumi_2025,
       author = {{Jones}, Corey and {Naaijkens}, Pieter and {Penneys}, David and {Wallick}, Daniel and {Izumi}, Masaki},
        title = "{Local topological order and boundary algebras}",
      journal = {Forum of Mathematics, Sigma},
         year = 2025,
        month = aug,
       volume = {13},
          eid = {e135},
        pages = {e135},
          doi = {10.1017/fms.2025.16},
archivePrefix = {arXiv},
       eprint = {2307.12552},
 primaryClass = {math-ph},
      adsurl = {https://ui.adsabs.harvard.edu/abs/2023arXiv230712552J}
}

@ARTICLE{EB2,
       author = {{Huang}, Jin-Long and {McGreevy}, John and {Shi}, Bowen},
        title = "{Knots and entanglement}",
      journal = {SciPost Physics},
         year = 2023,
        month = jun,
       volume = {14},
       number = {6},
          eid = {141},
        pages = {141},
          doi = {10.21468/SciPostPhys.14.6.141},
archivePrefix = {arXiv},
       eprint = {2112.08398},
 primaryClass = {hep-th},
       adsurl = {https://ui.adsabs.harvard.edu/abs/2023ScPP...14..141H}
}

@book{Murphy90,
title = {C*-Algebras and Operator Theory},
publisher = {Academic Press},
address = {San Diego},
year = {1990},
isbn = {978-0-08-092496-0},
doi = {https://doi.org/10.1016/B978-0-08-092496-0.50004-1},
url = {https://www.sciencedirect.com/science/article/pii/B9780080924960500041},
author = {Gerard J. Murphy}
}

@misc{Gaiotto_Johnson-Freyd_2025_CondensationsInHigherCats,
      title={Condensations in higher categories}, 
      author={Davide Gaiotto and Theo Johnson-Freyd},
      year={2025},
      eprint={1905.09566},
      archivePrefix={arXiv},
      primaryClass={math.CT},
      url={https://arxiv.org/abs/1905.09566}, 
}

@misc{Etingof_2000_IsocategoricalGroups,
      title={Isocategorical groups}, 
      author={Pavel Etingof and Shlomo Gelaki},
      year={2000},
      eprint={math/0007196},
      archivePrefix={arXiv},
      primaryClass={math.QA},
      url={https://arxiv.org/abs/math/0007196}, 
}

@article{Chen_Gu_Wen_2011_SPTMPS,
   title={Classification of gapped symmetric phases in one-dimensional spin systems},
   volume={83},
   ISSN={1550-235X},
   url={http://dx.doi.org/10.1103/PhysRevB.83.035107},
   DOI={10.1103/physrevb.83.035107},
   eprint={1008.3745},
   archivePrefix={arXiv},
   primaryClass={cond-mat.str-el},
   number={3},
   journal={Physical Review B},
   publisher={American Physical Society (APS)},
   author={Chen, Xie and Gu, Zheng-Cheng and Wen, Xiao-Gang},
   year={2011},
   month=jan }

@article{Cirac_2011_1DPhaseMPS,
   title={Classifying quantum phases using matrix product states and projected entangled pair states},
   volume={84},
   ISSN={1550-235X},
   url={http://dx.doi.org/10.1103/PhysRevB.84.165139},
   DOI={10.1103/physrevb.84.165139},
   eprint={1010.3732},
   archivePrefix={arXiv},
   primaryClass={cond-mat.str-el},
   number={16},
   journal={Physical Review B},
   publisher={American Physical Society (APS)},
   author={Schuch, Norbert and Pérez-García, David and Cirac, Ignacio},
   year={2011},
   month=oct }

@article{Kapustin_Sopenko_2021_SPTClassification,
   title={A classification of invertible phases of bosonic quantum lattice systems in one dimension},
   volume={62},
   ISSN={1089-7658},
   url={http://dx.doi.org/10.1063/5.0055996},
   DOI={10.1063/5.0055996},
   eprint={2012.15491},
   archivePrefix={arXiv},
   primaryClass={quant-ph},
   number={8},
   journal={Journal of Mathematical Physics},
   publisher={AIP Publishing},
   author={Kapustin, Anton and Sopenko, Nikita and Yang, Bowen},
   year={2021},
   month=aug }

@article{Naaijkens_2011_ToricCode,
   title={Localized endomorphisms in {K}itaev's toric code on the plane},
   volume={23},
   ISSN={1793-6659},
   url={http://dx.doi.org/10.1142/S0129055X1100431X},
   DOI={10.1142/s0129055x1100431x},
   eprint={1012.3857},
   archivePrefix={arXiv},
   primaryClass={math-ph},
   number={04},
   journal={Reviews in Mathematical Physics},
   publisher={World Scientific Pub Co Pte Ltd},
   author={Naaijkens, Pieter},
   year={2011},
   month=may, pages={347--373} }

@article{Naaijkens_Cha_2019_StabilityCharges,
  title={On the Stability of Charges in Infinite Quantum Spin Systems},
  volume={373},
  ISSN={1432-0916},
  url={http://dx.doi.org/10.1007/s00220-019-03630-1},
  DOI={10.1007/s00220-019-03630-1},
  eprint={1804.03203},
  archivePrefix={arXiv},
  primaryClass={math-ph},
  number={1},
  journal={Communications in Mathematical Physics},
  publisher={Springer Science and Business Media LLC},
  author={Cha, Matthew and Naaijkens, Pieter and Nachtergaele, Bruno},
  year={2019},
  month=dec, pages={219--264} }

@article{Naaijkens_2025_SectorsQD,
   title={The Category of Anyon Sectors for Non-Abelian Quantum Double Models},
   volume={407},
   url={https://doi.org/10.1007/s00220-025-05492-2},
   doi={10.1007/s00220-025-05492-2},
   eprint={2503.15611},
   archivePrefix={arXiv},
   primaryClass={math-ph},
   number={1},
   journal={Communications in Mathematical Physics},
   author={Bols, Alex and Hamdan, Mahdie and Naaijkens, Pieter and Vadnerkar, Siddharth},
   year={2025}
}

@incollection{ogata_2021_classificationgappedgroundstate,
   title={Classification of gapped ground state phases in quantum spin systems},
   booktitle={International Congress of Mathematicians},
   pages={4142--4161},
   doi={10.4171/icm2022/29},
   eprint={2110.04675},
   archivePrefix={arXiv},
   primaryClass={math-ph},
   url={https://doi.org/10.4171/icm2022/29},
   author={Ogata, Yoshiko},
   year={2023}
}

@article{Ogata_2022_BraidedCStarTensorCat,
   title={A derivation of braided {C}*-tensor categories from gapped ground states satisfying the approximate Haag duality},
   volume={63},
   ISSN={1089-7658},
   url={http://dx.doi.org/10.1063/5.0061785},
   DOI={10.1063/5.0061785},
   eprint={2106.15741},
   archivePrefix={arXiv},
   primaryClass={math-ph},
   number={1},
   journal={Journal of Mathematical Physics},
   publisher={AIP Publishing},
   author={Ogata, Yoshiko},
   year={2022},
   month=jan }

@article{Ogata_2021_ClassificationSPT,
   title={Classification of symmetry protected topological phases in quantum spin chains},
   volume={2020},
   url={https://doi.org/10.4310/cdm.2020.v2020.n1.a2},
   doi={10.4310/cdm.2020.v2020.n1.a2},
   eprint={2110.04671},
   archivePrefix={arXiv},
   primaryClass={math-ph},
   number={1},
   journal={Current Developments in Mathematics},
   author={Ogata, Yoshiko},
   year={2020},
   pages={41--104}
}

@misc{Ogata_2024_SET,
      title={Anyonic symmetry fractionalization in SET phases}, 
      author={Jose Garre Rubio and Yoshiko Ogata},
      year={2024},
      eprint={2411.01210},
      archivePrefix={arXiv},
      primaryClass={math-ph},
      url={https://arxiv.org/abs/2411.01210}, 
}

@misc{Kawagoe_Vadnerkar_2025_SET,
      title={An operator algebraic approach to symmetry defects and fractionalization}, 
      author={Kyle Kawagoe and Siddharth Vadnerkar and Daniel Wallick},
      year={2025},
      eprint={2410.23380},
      archivePrefix={arXiv},
      primaryClass={math-ph},
      url={https://arxiv.org/abs/2410.23380}, 
}

@article{Ogata_2025_MixedState,
   title={Mixed State Topological Order: Operator Algebraic Approach},
   volume={406},
   url={https://doi.org/10.1007/s00220-025-05475-3},
   doi={10.1007/s00220-025-05475-3},
   eprint={2501.02398},
   archivePrefix={arXiv},
   primaryClass={math-ph},
   number={12},
   journal={Communications in Mathematical Physics},
   author={Ogata, Yoshiko},
   year={2025}
}

@misc{Kong_2015_BbyBulk,
      title={Boundary-bulk relation for topological orders as the functor mapping higher categories to their centers}, 
      author={Liang Kong and Xiao-Gang Wen and Hao Zheng},
      year={2015},
      eprint={1502.01690},
      archivePrefix={arXiv},
      primaryClass={cond-mat.str-el},
      url={https://arxiv.org/abs/1502.01690}, 
}

@article{Kong_2017_BdyBulk,
    author = "Kong, Liang and Wen, Xiao-Gang and Zheng, Hao",
    title = "{Boundary-bulk relation in topological orders}",
    eprint = "1702.00673",
    archivePrefix = "arXiv",
    primaryClass = "cond-mat.str-el",
    doi = "10.1016/j.nuclphysb.2017.06.023",
    journal = "Nucl. Phys. B",
    volume = "922",
    pages = "62--76",
    year = "2017"
}

@article{Kong_2022_QL1,
   title={Categories of quantum liquids {I}},
   volume={2022},
   ISSN={1029-8479},
   url={http://dx.doi.org/10.1007/JHEP08(2022)070},
   DOI={10.1007/jhep08(2022)070},
   eprint={2011.02859},
   archivePrefix={arXiv},
   primaryClass={hep-th},
   number={8},
   journal={Journal of High Energy Physics},
   publisher={Springer Science and Business Media LLC},
   author={Kong, Liang and Zheng, Hao},
   year={2022},
   month=aug }

@article{Kong_2024_QL2,
   title={Categories of Quantum Liquids {II}},
   volume={405},
   ISSN={1432-0916},
   url={http://dx.doi.org/10.1007/s00220-024-05078-4},
   DOI={10.1007/s00220-024-05078-4},
   eprint={2107.03858},
   archivePrefix={arXiv},
   primaryClass={math.CT},
   number={9},
   journal={Communications in Mathematical Physics},
   publisher={Springer Science and Business Media LLC},
   author={Kong, Liang and Zheng, Hao},
   year={2024},
   month=aug }

@article{Johnson_Freyd_2022_OnClassificationTO,
   title={On the Classification of Topological Orders},
   volume={393},
   ISSN={1432-0916},
   url={http://dx.doi.org/10.1007/s00220-022-04380-3},
   DOI={10.1007/s00220-022-04380-3},
   eprint={2003.06663},
   archivePrefix={arXiv},
   primaryClass={math.CT},
   number={2},
   journal={Communications in Mathematical Physics},
   publisher={Springer Science and Business Media LLC},
   author={Johnson-Freyd, Theo},
   year={2022},
   month=apr, pages={989--1033} }

@article{Lan_2024_CatSET,
   title={Category of {SET} orders},
   volume={2024},
   ISSN={1029-8479},
   url={http://dx.doi.org/10.1007/JHEP11(2024)111},
   DOI={10.1007/jhep11(2024)111},
   eprint={2312.15958},
   archivePrefix={arXiv},
   primaryClass={cond-mat.str-el},
   number={11},
   journal={Journal of High Energy Physics},
   publisher={Springer Science and Business Media LLC},
   author={Lan, Tian and Yue, Gen and Wang, Longye},
   year={2024},
   month=nov }

@article{Kong_2020_GaplessEdge,
   title={A mathematical theory of gapless edges of 2d topological orders. Part {I}},
   volume={2020},
   ISSN={1029-8479},
   url={http://dx.doi.org/10.1007/JHEP02(2020)150},
   DOI={10.1007/jhep02(2020)150},
   eprint={1905.04924},
   archivePrefix={arXiv},
   primaryClass={cond-mat.str-el},
   number={2},
   journal={Journal of High Energy Physics},
   publisher={Springer Science and Business Media LLC},
   author={Kong, Liang and Zheng, Hao},
   year={2020},
   month=feb }

@article{Kong_2018_GaplessEnrichedCat,
   title={Gapless edges of 2d topological orders and enriched monoidal categories},
   volume={927},
   ISSN={0550-3213},
   url={http://dx.doi.org/10.1016/j.nuclphysb.2017.12.007},
   DOI={10.1016/j.nuclphysb.2017.12.007},
   eprint={1705.01087},
   archivePrefix={arXiv},
   primaryClass={cond-mat.str-el},
   journal={Nuclear Physics B},
   publisher={Elsevier BV},
   author={Kong, Liang and Zheng, Hao},
   year={2018},
   month=feb, pages={140--165} }

@article{Chen_2020_TopologicalPhaseTransition,
   title={Topological phase transition on the edge of two-dimensional topological order},
   volume={102},
   ISSN={2469-9969},
   url={http://dx.doi.org/10.1103/PhysRevB.102.045139},
   DOI={10.1103/physrevb.102.045139},
   eprint={1903.12334},
   archivePrefix={arXiv},
   primaryClass={cond-mat.str-el},
   number={4},
   journal={Physical Review B},
   publisher={American Physical Society (APS)},
   author={Chen, Wei-Qiang and Jian, Chao-Ming and Kong, Liang and You, Yi-Zhuang and Zheng, Hao},
   year={2020},
   month=jul }

@article{KitaevKong_2012_DomainWall,
   title={Models for Gapped Boundaries and Domain Walls},
   volume={313},
   ISSN={1432-0916},
   url={http://dx.doi.org/10.1007/s00220-012-1500-5},
   DOI={10.1007/s00220-012-1500-5},
   eprint={1104.5047},
   archivePrefix={arXiv},
   primaryClass={cond-mat.str-el},
   number={2},
   journal={Communications in Mathematical Physics},
   publisher={Springer Science and Business Media LLC},
   author={Kitaev, Alexei and Kong, Liang},
   year={2012},
   month=jun, pages={351--373} }

@article{Kong_2022_1DEnrichedCat,
   title={One dimensional gapped quantum phases and enriched fusion categories},
   volume={2022},
   ISSN={1029-8479},
   url={http://dx.doi.org/10.1007/JHEP03(2022)022},
   DOI={10.1007/jhep03(2022)022},
   eprint={2108.08835},
   archivePrefix={arXiv},
   primaryClass={cond-mat.str-el},
   number={3},
   journal={Journal of High Energy Physics},
   publisher={Springer Science and Business Media LLC},
   author={Kong, Liang and Wen, Xiao-Gang and Zheng, Hao},
   year={2022},
   month=mar }

@article{Kong_Yuan_2024_EnrichedMonoidalCat,
  title={Enriched monoidal categories {I}: Centers},
  volume={16},
  ISSN={1664-073X},
  url={http://dx.doi.org/10.4171/qt/217},
  DOI={10.4171/qt/217},
  eprint={2104.03121},
  archivePrefix={arXiv},
  primaryClass={math.CT},
  number={2},
  journal={Quantum Topology},
  publisher={European Mathematical Society - EMS - Publishing House GmbH},
  author={Kong, Liang and Yuan, Wei and Zhang, Zhi-Hao and Zheng, Hao},
  year={2024},
  month=may, pages={343--417} }

@misc{Kong_2022_Invitation,
      title={An invitation to topological orders and category theory}, 
      author={Liang Kong and Zhi-Hao Zhang},
      year={2022},
      eprint={2205.05565},
      archivePrefix={arXiv},
      primaryClass={cond-mat.str-el},
      url={https://arxiv.org/abs/2205.05565}, 
}

@article{Xu_2024_1DPhaseAbelianSym,
   title={Categorical descriptions of one-dimensional gapped phases with Abelian onsite symmetries},
   volume={110},
   ISSN={2469-9969},
   url={http://dx.doi.org/10.1103/PhysRevB.110.155106},
   DOI={10.1103/physrevb.110.155106},
   eprint={2205.09656},
   archivePrefix={arXiv},
   primaryClass={cond-mat.str-el},
   number={15},
   journal={Physical Review B},
   publisher={American Physical Society (APS)},
   author={Xu, Rongge and Zhang, Zhi-Hao},
   year={2024},
   month=oct }

@article{LanZhou_2024_QuantumCurrent,
   title={Quantum current and holographic categorical symmetry},
   volume={16},
   ISSN={2542-4653},
   url={http://dx.doi.org/10.21468/SciPostPhys.16.2.053},
   DOI={10.21468/scipostphys.16.2.053},
   eprint={2305.12917},
   archivePrefix={arXiv},
   primaryClass={cond-mat.str-el},
   number={2},
   journal={SciPost Physics},
   publisher={Stichting SciPost},
   author={Lan, Tian and Zhou, Jing-Ren},
   year={2024},
   month=feb }

@article{Jones_2024_DHR,
   title={{DHR} bimodules of quasi-local algebras and symmetric quantum cellular automata},
   volume={15},
   url={https://doi.org/10.4171/qt/216},
   doi={10.4171/qt/216},
   eprint={2304.00068},
   archivePrefix={arXiv},
   primaryClass={math-ph},
   number={3},
   journal={Quantum Topology},
   author={Jones, Corey},
   year={2024},
   pages={633--686}
}

@misc{Hataishi_2025_DHRAbstractChain,
  title         = {On the structure of {DHR} bimodules of abstract spin chains},
  author        = {Lucas Hataishi and David Jaklitsch and Corey Jones and Makoto Yamashita},
  year          = {2025},
  eprint        = {2504.06094},
  archiveprefix = {arXiv},
  primaryclass  = {math.QA},
  url           = {https://arxiv.org/abs/2504.06094}
}

@article{Chen_2024_QSysCompletion,
   title={Q-system completion for {C*} 2-categories},
   volume={283},
   url={https://doi.org/10.1016/j.jfa.2022.109524},
   doi={10.1016/j.jfa.2022.109524},
   eprint={2105.12010},
   archivePrefix={arXiv},
   primaryClass={math.OA},
   number={3},
   journal={Journal of Functional Analysis},
   author={Chen, Quan and Hernández Palomares, Roberto and Jones, Corey and Penneys, David},
   year={2022},
   pages={109524}
}

@article{Chen_2024_InductiveLimitAF,
   title={K-theoretic Classification of Inductive Limit Actions of Fusion Categories on {AF}-algebras},
   volume={405},
   url={https://doi.org/10.1007/s00220-024-04969-w},
   doi={10.1007/s00220-024-04969-w},
   eprint={2207.11854},
   archivePrefix={arXiv},
   primaryClass={math.OA},
   number={3},
   journal={Communications in Mathematical Physics},
   author={Chen, Quan and Hernández Palomares, Roberto and Jones, Corey},
   year={2024}
}

@article{Ost03,
   title={Module categories, weak Hopf algebras and modular invariants},
   volume={8},
   url={https://doi.org/10.1007/s00031-003-0515-6},
   doi={10.1007/s00031-003-0515-6},
   eprint={math/0111139},
   archivePrefix={arXiv},
   primaryClass={math.QA},
   number={2},
   journal={Transformation Groups},
   author={Ostrik, Victor},
   year={2003},
   pages={177--206}
}

@misc{MYLG25,
      title={Non-invertible SPTs: an on-site realization of (1+1)d anomaly-free fusion category symmetry}, 
      author={Chenqi Meng and Xinping Yang and Tian Lan and Zhengcheng Gu},
      year={2025},
      eprint={2412.20546},
      archivePrefix={arXiv},
      primaryClass={cond-mat.str-el},
      url={https://arxiv.org/abs/2412.20546}, 
}

@book{Naaijkens2017_book,
  author    = {Pieter Naaijkens},
  title     = {Quantum Spin Systems on Infinite Lattices: A Concise Introduction},
  series    = {Lecture Notes in Physics},
  volume    = {933},
  publisher = {Springer},
  year      = {2017},
  doi       = {10.1007/978-3-319-51458-1},
  eprint    = {1311.2717},
  archivePrefix = {arXiv},
  primaryClass  = {math-ph}
}

@book{Effros_1981_DimensionsAC,
    isbn = {0821816977},
    doi = {10.1090/cbms/046},
    url = {https://doi.org/10.1090/cbms/046},
    language = {eng},
    lccn = {81001582},
    publisher = {Published for the Conference Board of the Mathematical Sciences by the American Mathematical Society},
    series = {Regional conference series in mathematics; no. 46},
    title = {Dimensions and C*-algebras},
    year = {1981},
    author = {Effros, Edward G.},
    address = {Providence, R.I},
    booktitle = {Dimensions and C*-algebras},
}

@book{EGNO_2015,
  address   = {Providence, Rhode Island},
  series    = {Mathematical surveys and monographs},
  title     = {Tensor categories},
  isbn      = {978-1-4704-2024-6},
  language  = {en},
  number    = {volume 205},
  publisher = {American Mathematical Society},
  editor    = {Etingof, P. I. and Gelaki, Shlomo and Nikshych, Dmitri and Ostrik, Victor},
  year      = {2015}
}

@book{Blackadar_OperatorAlg_2017,
  address    = {Berlin ; New York},
  title      = {Operator algebras: theory of {C}*-algebras and von {Neumann} algebras},
  isbn       = {978-3-540-28486-4},
  shorttitle = {Operator algebras},
  language   = {en},
  publisher  = {Springer},
  author     = {Blackadar, Bruce},
  year       = {2017},
  note       = {Corrected version}
}

@misc{Wen_lecture_2016,
	title = {Lectures on topological order:  {Long} range entanglement  and topological excitations},
	url = {https://boulderschool.yale.edu/sites/default/files/files/16Boulder.pdf},
	language = {en},
	author = {Wen, Xiao-Gang},
	year = {2016},
}

@article{Giorgetti_2024,
   title={Separable algebras in multitensor {C}$ ^* $-categories are unitarizable},
   volume={9},
   ISSN={2473-6988},
   url={http://dx.doi.org/10.3934/math.2024555},
   DOI={10.3934/math.2024555},
   eprint={2312.12019},
   archivePrefix={arXiv},
   primaryClass={math.OA},
   number={5},
   journal={AIMS Mathematics},
   publisher={American Institute of Mathematical Sciences (AIMS)},
   author={Giorgetti, Luca and Yuan, Wei and Zhao, XuRui},
   year={2024},
   pages={11320--11334} }

@article{Giorgetti_2023,
   title={Realization of rigid {C}$^*$-bicategories as bimodules over type {II}$_1$ von Neumann algebras},
   volume={415},
   ISSN={0001-8708},
   url={http://dx.doi.org/10.1016/j.aim.2023.108886},
   DOI={10.1016/j.aim.2023.108886},
   eprint={2010.01072},
   archivePrefix={arXiv},
   primaryClass={math.CT},
   journal={Advances in Mathematics},
   publisher={Elsevier BV},
   author={Giorgetti, Luca and Yuan, Wei},
   year={2023},
   month=feb, pages={108886} }

@article{Seifnashri_2024,
   title={Cluster State as a Noninvertible Symmetry-Protected Topological Phase},
   volume={133},
   ISSN={1079-7114},
   url={http://dx.doi.org/10.1103/PhysRevLett.133.116601},
   DOI={10.1103/physrevlett.133.116601},
   eprint={2404.01369},
   archivePrefix={arXiv},
   primaryClass={cond-mat.str-el},
   number={11},
   journal={Physical Review Letters},
   publisher={American Physical Society (APS)},
   author={Seifnashri, Sahand and Shao, Shu-Heng},
   year={2024},
   month=sep }

@article{Bra72,
   author={Bratteli, Ola},
   title={Inductive limits of finite dimensional {$C^*$}-algebras},
   journal={Transactions of the American Mathematical Society},
   volume={171},
   pages={195--234},
   year={1972},
   publisher={American Mathematical Society},
   doi={10.1090/S0002-9947-1972-0312282-2}
}

@article{Witten_1988_QFTJones,
    author = "Witten, Edward",
    editor = "Mitra, Asoke N.",
    title = "{Quantum Field Theory and the Jones Polynomial}",
    reportNumber = "IASSNS-HEP-88-33",
    doi = "10.1007/BF01217730",
    journal = "Commun. Math. Phys.",
    volume = "121",
    pages = "351--399",
    year = "1989"
}

@article{Levin_2005,
   title={String-net condensation:A physical mechanism for topological phases},
   volume={71},
   ISSN={1550-235X},
   url={http://dx.doi.org/10.1103/PhysRevB.71.045110},
   DOI={10.1103/physrevb.71.045110},
   eprint={cond-mat/0404617},
   archivePrefix={arXiv},
   primaryClass={cond-mat.str-el},
   number={4},
   journal={Physical Review B},
   publisher={American Physical Society (APS)},
   author={Levin, Michael A. and Wen, Xiao-Gang},
   year={2005},
   month=Jan }

@article{Ji_2020_CategoricalSym,
   title={Categorical symmetry and noninvertible anomaly in symmetry-breaking and topological phase transitions},
   volume={2},
   ISSN={2643-1564},
   url={http://dx.doi.org/10.1103/PhysRevResearch.2.033417},
   DOI={10.1103/physrevresearch.2.033417},
   eprint={1912.13492},
   archivePrefix={arXiv},
   primaryClass={cond-mat.str-el},
   number={3},
   journal={Physical Review Research},
   publisher={American Physical Society (APS)},
   author={Ji, Wenjie and Wen, Xiao-Gang},
   year={2020},
   month=sep }

@article{Kong_2020Classi,
   title={Classification of topological phases with finite internal symmetries in all dimensions},
   volume={2020},
   ISSN={1029-8479},
   url={http://dx.doi.org/10.1007/JHEP09(2020)093},
   DOI={10.1007/jhep09(2020)093},
   eprint={2003.08898},
   archivePrefix={arXiv},
   primaryClass={math-ph},
   number={9},
   journal={Journal of High Energy Physics},
   publisher={Springer Science and Business Media LLC},
   author={Kong, Liang and Lan, Tian and Wen, Xiao-Gang and Zhang, Zhi-Hao and Zheng, Hao},
   year={2020},
   month=sep }

@article{Lichtman_2021,
   title={Bulk anyons as edge symmetries: Boundary phase diagrams of topologically ordered states},
   volume={104},
   ISSN={2469-9969},
   url={http://dx.doi.org/10.1103/PhysRevB.104.075141},
   DOI={10.1103/physrevb.104.075141},
   eprint={2003.04328},
   archivePrefix={arXiv},
   primaryClass={cond-mat.str-el},
   number={7},
   journal={Physical Review B},
   publisher={American Physical Society (APS)},
   author={Lichtman, Tsuf and Thorngren, Ryan and Lindner, Netanel H. and Stern, Ady and Berg, Erez},
   year={2021},
   month=aug }

@article{Chatterjee_2023,
   title={Symmetry as a shadow of topological order and a derivation of topological holographic principle},
   volume={107},
   ISSN={2469-9969},
   url={http://dx.doi.org/10.1103/PhysRevB.107.155136},
   DOI={10.1103/physrevb.107.155136},
   eprint={2203.03596},
   archivePrefix={arXiv},
   primaryClass={cond-mat.str-el},
   number={15},
   journal={Physical Review B},
   publisher={American Physical Society (APS)},
   author={Chatterjee, Arkya and Wen, Xiao-Gang},
   year={2023},
   month=apr }

@article{Moradi_2023,
   title={Topological holography: Towards a unification of Landau and  beyond-Landau physics},
   volume={6},
   ISSN={2666-9366},
   url={http://dx.doi.org/10.21468/SciPostPhysCore.6.4.066},
   DOI={10.21468/scipostphyscore.6.4.066},
   eprint={2207.10712},
   archivePrefix={arXiv},
   primaryClass={cond-mat.str-el},
   number={4},
   journal={SciPost Physics Core},
   publisher={Stichting SciPost},
   author={Moradi, Heidar and Moosavian, Seyed Faroogh and Tiwari, Apoorv},
   year={2023},
   month=oct }

@article{freed2024,
title = "Topological symmetry in quantum field theory",
author = "Freed, Daniel S. and Moore, Gregory W. and Constantin Teleman",
note = "Publisher Copyright: {\textcopyright} 2024 European Mathematical Society.",
year = "2024",
month = oct,
day = "28",
doi = "10.4171/qt/223",
eprint = "2209.07471",
archivePrefix = "arXiv",
primaryClass = "hep-th",
language = "English (US)",
volume = "15",
pages = "779--869",
journal = "Quantum Topology",
issn = "1663-487X",
publisher = "European Mathematical Society Publishing House",
number = "3-4",
}

@article{Bhardwaj2024,
  title = {Categorical Landau Paradigm for Gapped Phases},
  author = {Bhardwaj, Lakshya and Bottini, Lea E. and Pajer, Daniel and Sch\"afer-Nameki, Sakura},
  journal = {Phys. Rev. Lett.},
  volume = {133},
  issue = {16},
  pages = {161601},
  numpages = {6},
  year = {2024},
  month = {Oct},
  publisher = {American Physical Society},
  doi = {10.1103/PhysRevLett.133.161601},
  eprint = {2310.03786},
  archivePrefix = {arXiv},
  primaryClass = {cond-mat.str-el},
  url = {https://link.aps.org/doi/10.1103/PhysRevLett.133.161601}
}

@article{YWL_DefectsIn2+1TO,
  title = {Condensation completion and defects in $2+1\text{D}$ topological orders},
  author = {Yue, Gen and Wang, Longye and Lan, Tian},
  journal = {Phys. Rev. B},
  volume = {112},
  issue = {10},
  pages = {104106},
  numpages = {28},
  year = {2025},
  month = {Sep},
  publisher = {American Physical Society},
  doi = {10.1103/fgnx-t5bj},
  eprint = {2402.19253},
  archivePrefix = {arXiv},
  primaryClass = {cond-mat.str-el},
  url = {https://link.aps.org/doi/10.1103/fgnx-t5bj}
}

@Article{KLW+2005.14178,
  author        = {Kong, Liang and Lan, Tian and Wen, Xiao-Gang and Zhang, Zhi-Hao and Zheng, Hao},
  journal       = {Physical Review Research},
  title         = {Algebraic higher symmetry and categorical symmetry: a holographic and entanglement view of symmetry},
  year          = {2020},
  issn          = {2643-1564},
  month         = oct,
  number        = {4},
  pages         = {043086},
  volume        = {2},
  abbr          = {PRR},
  archiveprefix = {arXiv},
  arxivid       = {2005.14178},
  copyright     = {arXiv.org perpetual, non-exclusive license},
  doi           = {10.1103/physrevresearch.2.043086},
  eprint        = {2005.14178},
  publisher     = {American Physical Society ({APS})},
  selected      = {true},
}

@Article{Lan2412.07198,
  author        = {Lan, Tian},
  journal       = {Communications in Mathematical Physics},
  title         = {Tube Category, Tensor Renormalization and Topological Holography},
  year          = {2025},
  issn          = {1432-0916},
  month         = jul,
  number        = {8},
  pages         = {196},
  volume        = {406},
  archiveprefix = {arXiv},
  copyright     = {arXiv.org perpetual, non-exclusive license},
  creationdate  = {2025-07-16T17:35:11},
  doi           = {10.1007/s00220-025-05383-6},
  eprint        = {2412.07198},
  primaryclass  = {math-ph},
  publisher     = {Springer Science and Business Media LLC},
  selected      = {true},
}

@ARTICLE{2026arXiv260506661Y,
       author = {{Yue}, Gen and {Bai}, Ansi and {Wu}, Linqian and {Lan}, Tian},
        title = "{Pro-Tensor Network}",
      journal = {arXiv e-prints},
         year = 2026,
        month = may,
          eid = {arXiv:2605.06661},
        pages = {arXiv:2605.06661},
          doi = {10.48550/arXiv.2605.06661},
archivePrefix = {arXiv},
       eprint = {2605.06661},
 primaryClass = {cond-mat.str-el},
       adsurl = {https://ui.adsabs.harvard.edu/abs/2026arXiv260506661Y}
}

@ARTICLE{2021Gaiotto,
       author = {{Gaiotto}, Davide and {Kulp}, Justin},
        title = "{Orbifold groupoids}",
      journal = {Journal of High Energy Physics},
         year = 2021,
        month = feb,
       volume = {2021},
       number = {2},
          eid = {132},
        pages = {132},
          doi = {10.1007/JHEP02(2021)132},
archivePrefix = {arXiv},
       eprint = {2008.05960},
 primaryClass = {hep-th},
       adsurl = {https://ui.adsabs.harvard.edu/abs/2021JHEP...02..132G}
}

@ARTICLE{2023SymTFTfromString,
       author = {{Apruzzi}, Fabio and {Bonetti}, Federico and {Garc{\'\i}a Etxebarria}, I{\~n}aki and {Hosseini}, Saghar S. and {Sch{\"a}fer-Nameki}, Sakura},
        title = "{Symmetry TFTs from String Theory}",
      journal = {Communications in Mathematical Physics},
         year = 2023,
        month = aug,
       volume = {402},
       number = {1},
         pages = {895--949},
          doi = {10.1007/s00220-023-04737-2},
archivePrefix = {arXiv},
       eprint = {2112.02092},
 primaryClass = {hep-th},
       adsurl = {https://ui.adsabs.harvard.edu/abs/2023CMaPh.402..895A}
}
